\def\tr{\textcolor{red}}
\numberwithin{equation}{section}
\numberwithin{figure}{section}
\theoremstyle{plain}
\newmdtheoremenv[style=myenvs]{lyxalgorithm}{Algorithm}
\newtheorem{lem}{\protect\lemmaname}
\theoremstyle{plain}
\newtheorem{assumption}{\protect\assumptionname}
\theoremstyle{remark}
\newtheorem{rem}{\protect\remarkname}
\theoremstyle{plain}
\newtheorem{cor}{\protect\corollaryname}
\theoremstyle{plain}
\newtheorem{thm}{\protect\theoremname}
\theoremstyle{definition}
\def\E{\mathrm{E}}
\def\P{\mathrm P}
\def\N{\mathbb N}
\def\R{\mathbb R}
\def\Z{\mathbb Z}
\def\bfA{\mathbf{A}}
\def\be{\boldsymbol{e}}
\def\bF{\boldsymbol{F}}
\def\bG{\boldsymbol{G}}
\def\bI{\mathbf{I}}
\def\bS{\boldsymbol{S}}
\def\bu{\boldsymbol{u}}
\def\bX{\boldsymbol{X}}
\def\bfX{\mathbf{X}}
\def\bY{\boldsymbol{Y}}
\def\bZ{\boldsymbol{Z}}
\def\bbeta{\boldsymbol{\beta}}
\def\bdelta{\boldsymbol{\delta}}
\def\bepsilon{\boldsymbol{\varepsilon}}
\def\bseta{\boldsymbol{\eta}}
\def\bmu{\boldsymbol{\mu}}
\def\bSigma{\boldsymbol{\Sigma}}
\def\bTheta{\boldsymbol{\Theta}}
\def\bUpsilon{\boldsymbol{\Upsilon}}
\DeclareMathOperator{\supp}{supp}
\DeclareMathOperator{\sign}{sign}
\newcommand*{\argmin}{\operatornamewithlimits{argmin}\limits}
\newcommand\normn[1]{\lVert#1\rVert}
\providecommand{\assumptionname}{Assumption}
\providecommand{\corollaryname}{Corollary}
\providecommand{\examplename}{Example}
\providecommand{\lemmaname}{Lemma}
\providecommand{\remarkname}{Remark}
\providecommand{\theoremname}{Theorem}
\begin{document}


\title{Data-Driven Tuning Parameter Selection for High-Dimensional Vector Autoregressions
\thanks{First version: October 2022. We thank Robert Adamek, Peter Boswijk, Abhishek Chakrabortty, Lan Gao, Frank Kleibergen, Bent Nielsen, Anders Rahbek and Mikkel S{\o}lvsten for insightful comments and discussion. We also thank participants at the Aarhus
Workshops in Econometrics I and IV, the  2024 Bristol econometrics study group conference, the 35\textsuperscript{th} (EC)\textsuperscript{2} conference, and seminar participants at University of Amsterdam for suggestions and discussion.
Moreover, we thank Zifan Li for providing details about the simulation
experiment in \citet{wong_lasso_2020}. Pedersen gratefully acknowledges
financial support from Independent Research Fund Denmark (FSE 0133-00162B).}   
}

\author{Anders Bredahl Kock\footnote{University of Oxford  and Aarhus Center for Econometrics (ACE); e-mail:
\tt{anders.kock@economics.ox.ac.uk}.} \and Rasmus S{\o}ndergaard Pedersen\footnote{University of Copenhagen and Danish Finance Institute; e-mail:
\tt{rsp@econ.ku.dk}.} \and Jesper
Riis-Vestergaard S{\o}rensen\footnote{University of Copenhagen and Aarhus Center for Econometrics (ACE); e-mail:
\tt{jrvs@econ.ku.dk}.}}

\maketitle

\begin{abstract}
Lasso-type estimators are routinely used to estimate high-dimensional time series models. The theoretical guarantees established for these estimators typically require the penalty level to be chosen in a suitable fashion often depending on \emph{unknown} population quantities. Furthermore, the resulting estimates and the number of variables retained in the model depend crucially on the chosen penalty level. However, there is currently no theoretically founded guidance for this choice in the context of high-dimensional time series. Instead, one resorts to selecting the penalty level in an ad hoc manner using, e.g., information criteria or cross-validation. We resolve this problem by considering estimation of the perhaps most commonly employed multivariate time series model, the linear vector autoregressive (VAR) model, and propose versions of the Lasso, post-Lasso, and square-root Lasso estimators with penalization chosen in a \emph{fully data-driven} way. The theoretical guarantees that we establish for the resulting estimation and prediction errors match those currently available for methods based on infeasible choices of penalization. We thus provide a first solution for choosing the penalization in high-dimensional time series models.

\end{abstract} 

\medskip\noindent\textbf{Keywords:} High-dimensional time series, vector autoregressive model, $\ell_1$-penalized estimation, data-driven tuning parameter selection.



\section{Introduction}
Multivariate time series play a fundamental role in
many areas of research. The quintessential approach to modelling these is based on the linear
VAR model; see, e.g., \citet{lutkepohl_new_2005}. In order to fully capture the dynamics of the time series at hand and to decrease the risk of omitting variables, one frequently includes many explanatory variables.
These considerations result in a high-dimensional VAR model wherein the number of parameters can exceed the number of observations. Consequently, the parameters can no longer be estimated by least squares, and as an alternative there has been a surge of research on Lasso-type estimators [\citet{tibshirani_regression_1996}]. For instance, assuming independent and identically distributed (i.i.d.)~Gaussian innovations in the VAR, consistency and oracle inequalities for the Lasso
have been studied by \citet{han_transition_2013}, \citet{basu_regularized_2015}, \citet{kock_oracle_2015} and \citet{davis_sparse_2016}, among
others. The assumption of Gaussian innovations has been
relaxed in several papers, including \citet{song_large_2011}, \citet{wong_lasso_2020} and \citet{miao_var_2022}. In
particular, \citet{wong_lasso_2020} derive consistency results for
the Lasso in VAR models under suitable mixing and moment conditions, whereas \citet{miao_var_2022} derive rates of convergence and oracle properties for factor augmented VAR models,
relying on results for weakly dependent processes [in the sense of
\citet{wu_nonlinear_2005}]. \citet{masini_VAR_2022} establish oracle inequalities
in the case of martingale difference innovations that are not necessarily
mixing. \citet{chernozhukov_lasso_2021} consider Lasso-driven inference for time series and spatial models, while \citet{adamek_lasso_2022} consider inference based on the desparsified Lasso when the data-generating process allows near-epoch dependence. Further related papers are \citet{han_direct_2015}, \citet{guo_var_2016} and \citet{wu_performance_2016}. 

To implement Lasso-type estimators one must choose the penalization level
$\lambda$, and the theoretical results listed above are derived under
a suitable choice of this level. However, the ``right'' choice of this tuning parameter typically depends on \emph{unknown} population quantities such as mixing coefficients or other coefficients quantifying
the dependence structure of the data generating process {[}\citet{wong_lasso_2020},
\citet{babii_timeseries_2022}, \citet{masini_VAR_2022}{]}, the population
covariance matrix of the observed variables and innovations {[}\citet{kock_oracle_2015},
\citet{MEDEIROS_highdim_2016}{]}, the population coefficient matrix
{[}\citet{basu_regularized_2015}{]}, or other quantities depending
on the data generating process [\citet{chernozhukov_lasso_2021}, \citet{adamek_lasso_2022}, \citet{miao_var_2022}]. Consequently, despite the large amount of research on penalized estimation of VAR models, there is currently no firm guidance on how to choose~$\lambda$ in practice. Instead, one often resorts to information criteria [e.g.~\citet{kock_oracle_2015}, \citet{MEDEIROS_highdim_2016} and
\citet{masini_VAR_2022}], cross-validation [e.g. \citet{wong_lasso_2020}, \citet{babii_timeseries_2022} and \citet{miao_var_2022}] or other methods without providing theoretical guarantees for these. Crucially, such choices of penalization do not  necessarily satisfy the conditions imposed in the theoretical results. Therefore, strictly speaking, the theoretical guarantees provided are only valid for fortuitous choices of~$\lambda$. In this paper we resolve this problem by proposing a data-driven way of choosing~$\lambda$ along with prediction and estimation error guarantees for the resulting weighted Lasso estimator of the parameters in large VAR models. As upper bounds on the estimation error play a crucial role in establishing the validity of inference based on debiasing in high-dimensional models [see \citet{javanmard_confidence_2014}, \citet{van_de_geer_asymptotically_2014} and \citet{zhang_confidence_2014}], our results open the door for inference in VAR models with a data-driven choice of~$\lambda$.

The penalization algorithm we study originates from \citet{belloni_sparse_2012}, who consider regressions in high dimensions with independent data. There are several challenges in adapting this algorithm to time series data: First, the analysis of \citet{belloni_sparse_2012} relies on the independence as well as certain high-level conditions on the explanatory variables and their relation to the model errors. In VAR models one cannot impose such conditions as their validity is completely determined by the model (which also generates the explanatory variables). Instead, imposing only primitive conditions, we carefully
take into account the inherent dependence in the VAR. Second, as heavy tails are omnipresent in many time series, we allow for the possibility of certain types of heavy-tailed
innovation distributions, that is, so-called sub-Weibull innovations. To accommodate dependence and heavy tails,
we establish a novel maximal inequality for centered sums of dependent sub-Weibull
random variables, that may be of
independent interest. The validity of the penalty loadings (i.e.~Lasso weights) proposed in the algorithm is then established by proving
that these are close to certain infeasible ideal loadings that are
constructed by means of blocking-based self-normalization. In order to verify that
the ideal loadings are well-behaved, we rely on recent moderate deviation
theory for self-normalized block-sums of weakly dependent processes [in the \citet{wu_nonlinear_2005} functional dependence sense]
derived by \citet{chen_self-normalized_2016} and \citet{gao_refined_2022}.

To alleviate the shrinkage bias introduced by the Lasso, one often refits the parameters of variables selected by the Lasso using least squares. The performance of the resulting post-Lasso depends crucially on the Lasso variable selection. As this selection depends on the chosen level of penalization, the post-Lasso itself depends on this tuning parameter. We show that the post-Lasso following the weighted Lasso implied by our data-driven tuning parameter choice obeys the same performance guarantees as the weighted Lasso. Note, however, that these guarantees are established under an additional assumption (not needed for the weighted Lasso) on certain sparse eigenvalues of the population regressor covariance matrix.

Finally, the square-root Lasso (henceforth:~sqrt-Lasso) of~\cite{belloni2011square} is a popular alternative to the Lasso for i.i.d.~data in the absence of conditional heteroskedasticity. Apart from the work of~\cite{sqrtJTSA}, who study a linear model with fixed regressors but dependent error terms, the theoretical performance of the sqrt-Lasso has not been studied for time series and, in particular, it is not clear how to choose its tuning parameter. We resolve this problem by providing a fully data-driven implementation of the sqrt-Lasso with performance guarantees matching those of the other estimators studied in the absence of conditional heteroskedasticity.

\subsection*{Outline}

In Section \ref{sec:Model-and-Estimation} we present the weighted Lasso estimator and describe the data-driven tuning parameter selection. Section \ref{sec:Ass+Res} presents the assumptions and performance guarantees for the weighted Lasso. Sections~\ref{sec:postlasso} and \ref{sec:SqrtLasso} cover the post- and sqrt-Lasso, respectively. Sections \ref{sec:Simulations} and \ref{sec:Empirical-Illustration} contain simulations and an empirical illustration, respectively. Proofs, technical lemmas, and additional simulation output are contained in the supplementary appendices.

\subsection*{Notation}
For $k\in\N$, we write
$[k]:=\{1,\dotsc,k\}$.~For $\bdelta\in\R^{k}$, we denote its $\ell_{r}$-norm, by $\norm{\bdelta}_{\ell_{r}}:=(\sum_{j=1}^k|\delta_j|^r)^{1/r},r\in[1,\infty)$, and $\norm{\bdelta}_{\ell_{\infty}}:=\max_{j\in[k]}|\delta_j|$. For~$\mathcal{S}\subseteq [k]$ non-empty,~$\bdelta_\mathcal{S}\in\R^{|\mathcal{S}|}$ is the subvector of~$\bdelta$ picked out by by~$\mathcal{S}$. When applied to a real matrix $\bfA$, the aforementioned norms are understood as the induced (operator) norms. We use $\bfA_{k:k+l,m:m+n}$ to denote the submatrix picked out by rows $k$ through $k+l$ and columns $m$ through $m+n$. For a square matrix $\bfA\in\mathbb{R}^{k\times k}$,
we denote its spectrum (set of eigenvalues) by~$\Lambda(\bfA)$ and write $\rho(\bfA):=\max\{|\lambda|:\lambda\in\Lambda(\bfA)\}$ for its spectral radius (with
$\left|\cdot\right|$ the complex modulus). If $\bfA\in\R^{k\times k}$ is symmetric, we write $\Lambda_{\min}(\bfA)$
and $\Lambda_{\max}(\bfA)$ for the minimum and maximum eigenvalues, respectively. 

For a random scalar $X$, we denote its $L_r$-norm by $\norm{X}_{r}:=(\mathrm{E}[|X|^{r}])^{1/r}$, $r\in[1,\infty)$, with $\E[\cdot]$ denoting the expectation operator.
For $\alpha\in(0,\infty)$, we define the sub-Weibull($\alpha$) norm of $X$ as $\norm{X}_{\psi_{\alpha}}:=\sup_{r\in[1,\infty)}r^{-1/\alpha}\norm{X}_{r}$. 
A random scalar $X$ is said to be sub-Weibull$(\alpha)$ if $\norm{X}_{\psi_{\alpha}}<\infty$.\footnote{One may show that the space of sub-Weibull$(\alpha)$ random variables is complete with respect to $\left\Vert \cdot \right\Vert _{\psi_{\alpha}}$. Consequently, it holds that the norm $\left\Vert \cdot \right\Vert _{\psi_{\alpha}}$ is countably sub-additive, which we will use repeatedly.}
For a $k$-dimensional random vector $\bX$, we define its joint sub-Weibull($\alpha$)
norm $\Vert \bX\Vert _{\psi_{\alpha}}:=\sup\{\Vert \bu^{\top}\bX\Vert _{\psi_{\alpha}}:\left\Vert \bu\right\Vert _{\ell_{2}}=1\}$
and call $\bX$ jointly sub-Weibull($\alpha$) if $\left\Vert \bX\right\Vert _{\psi_{\alpha}}<\infty$.

For non-random numbers $a_n$ and positive numbers $b_n,n\in\N,$ we write
$a_n=o(1)$ if $a_n\to0$ as~$n\to\infty$, and $a_n\lesssim b_n,$ if the sequence $a_n/b_n$ is
bounded by a constant. For random variables $V_n$ and positive numbers $b_n,$ we write
$V_n\lesssim_{\P} b_n,$ if the sequence $V_n/b_n$ is bounded in probability. We take $n\geqslant 3$ and $p\geqslant 2$ throughout the manuscript and reserve the word ``constant'' for non-random quantities that do not depend on $n$.

\section{Model and Penalization Algorithm\label{sec:Model-and-Estimation}}
We study the $q$\textsuperscript{th}-order $(q\in\N)$ VAR model given by
\begin{equation}
\bY_{t}=\sum_{j=1}^{q}\bTheta_{0j}\bY_{t-j}+\bepsilon_{t},\quad t\in\mathbb{Z},\label{eq:VARvector}
\end{equation}
yielding a stochastic process $\{\bY_t\}_{t\in\Z}$ which is (strictly) stationary under the assumptions in Section~\ref{sec:Ass+Res}. Here~$\bY_{t}:=(Y_{t,1},\dotsc,Y_{t,p})^{\top}$ is a random
vector of length $p$, $\{\bTheta_{0j}\}_{j=1}^q$ are $p\times p$ (unknown)
coefficient matrices, and with  $\bepsilon_{t}:=(\varepsilon_{t,1},\dotsc,\varepsilon_{t,p})^{\top}$, $\{\bepsilon_t\}_{t\in\Z}$
is a sequence of innovations.
Given observations $\{\bY_{t}\}_{t=-(q-1)}^{n}$ from (\ref{eq:VARvector}),
the objective is to estimate $\{\bTheta_{0j}\}_{j=1}^q$, while
allowing (but not requiring) that the number of elements~$p^{2}q$ in $\{\bTheta_{0j}\}_{j=1}^q$
is larger than the (effective) sample size
$n$. Throughout we take $p\geqslant2$ and $n\geqslant3$.\footnote{We consider the VAR model in (\ref{eq:VARvector}), allowing the number
of output variables $p$, the (common) distribution of $\bepsilon_{t}$,
and, hence, that of $\bY_{t}$ to depend on the sample size $n$. That is, we consider an array $\{\{\bY_{t}^{(n)}\}_{t\in\mathbb{Z}}\}_{n\in\N}$ of stochastic processes, each process $\{\bY_{t}^{(n)}\}_{t\in\mathbb{Z}}$ presumed strictly stationary,
in which each $\bY_{t}^{(n)}=(Y_{t,1}^{(n)},\dotsc,Y_{t,p_{n}}^{(n)})^\top,t\in\Z$, is a random
element of $\R^{p_{n}}$. To ease notation, we henceforth suppress the
$n$ superscript.}

The process in (\ref{eq:VARvector}) may be written in companion form,
\begin{equation}
\bZ_{t}=\widetilde{\bTheta}_{0}\bZ_{t-1}+\widetilde{\bepsilon}_{t},\quad t\in\Z,\label{eq:VARcompanion}
\end{equation}
for
\[
\underbrace{\bZ_{t}}_{pq\times1}:=\begin{pmatrix}
\bY_{t}\\
\bY_{t-1}\\
\vdots\\
\bY_{t-(q-1)}
\end{pmatrix},
\quad
\underbrace{\widetilde{\bTheta}_{0}}_{pq\times pq}:=\begin{pmatrix}\bTheta_{01} & \bTheta_{02} & \cdots &  & \bTheta_{0q}\\
\mathbf{I}_{p} & \boldsymbol{0}_{p\times p} & \cdots &  & \boldsymbol{0}_{p\times p}\\
\boldsymbol{0}_{p\times p} & \mathbf{I}_{p} & \boldsymbol{0}_{p\times p} &  & \vdots\\
\vdots & \ddots & \ddots & \ddots \\
\boldsymbol{0}_{p\times p} & \cdots & \boldsymbol{0}_{p\times p} & \mathbf{I}_{p} & \boldsymbol{0}_{p\times p}
\end{pmatrix},
\quad\underbrace{\widetilde{\bepsilon}_{t}}_{pq\times1}:=\begin{pmatrix}\bepsilon_{t}\\
\boldsymbol{0}_{p\times 1}\\
\vdots\\
\boldsymbol{0}_{p\times 1}
\end{pmatrix},
\]
and $\mathbf{I}_p\in\R^{p\times p}$ being the identity matrix. In the special case of $q=1$, we interpret $\widetilde{\bTheta}_{0}$ as $\bTheta_{01}$ and $\widetilde{\bepsilon}_{t}$ as $\bepsilon_{t}$.
For $i\in[p]$, the $i$\textsuperscript{th} row in (\ref{eq:VARvector})
is given by
\begin{equation}
Y_{t,i}=\bZ_{t-1}^{\top}\bbeta_{0i}+\varepsilon_{t,i},\label{eq:VARequation}
\end{equation}
where the vector $\bbeta_{0i}^{\top}$ of length $pq$ is the $i$\textsuperscript{th} row of the companion
matrix $\widetilde{\bTheta}_{0}$. In a low-dimensional setting where~$\bbeta_{0i}^{\top}$ is of fixed length (and short), estimation can be
done by equationwise least squares with
the same $n\times pq$ regressor matrix
$\bfX:=[\bZ_{0}:\cdots:\bZ_{n-1}]^{\top}$ for all $i\in[p]$. However, when $pq>n$, the $pq\times pq$ Gram
matrix $\bfX^{\top}\bfX$ has reduced rank, and Lasso-type estimators have been studied as an alternative under various sparsity assumptions. Nevertheless, as discussed in the introduction, the theoretical guarantees hitherto established for these are valid only for specific tuning parameter choices depending on unknown population quantities. As a result---despite the sensitivity of shrinkage estimators to the tuning parameter choice---one must currently resort to choosing this parameter by methods without theoretical guarantees when implementing shrinkage estimators in time series models. We provide a solution to this problem in the context of a weighted Lasso by proposing an algorithm for tuning parameter selection and  explicitly incorporate the data-driven tuning parameter choice into our theoretical guarantees.

A weighted Lasso estimator satisfies
\begin{equation}
\widehat{\bbeta}_{i}:=\widehat{\bbeta}_{i}(\lambda,\widehat{\bUpsilon}_{i})\in\argmin_{\bbeta\in\R^{pq}}\left\{ \widehat Q_i(\bbeta) +\frac{\lambda}{n}\Vert\widehat{\bUpsilon}_{i}\bbeta\Vert_{\ell_{1}}\right\},\label{eq:LASSOVector}
\end{equation}
where~$\widehat Q_i$ is the sample average squared error loss function
\begin{equation}\label{eq:Qihat}
\widehat Q_i(\bbeta):=\frac{1}{n}\sum_{t=1}^{n}\left(Y_{t,i}-\bZ_{t-1}^{\top}\bbeta\right)^{2},\quad\bbeta\in\R^{pq},
\end{equation}
for $i\in [p]$, $\lambda\in(0,\infty)$ is a penalty
level and each $\widehat{\bUpsilon}_{i}:=\mathrm{diag}(\widehat{\upsilon}_{i,1},\dotsc,\widehat{\upsilon}_{i,pq})$
is a diagonal matrix of data-dependent penalty loadings $\widehat\upsilon_{i,j}\in(0,\infty),j\in[pq]$. Both the penalty level and loadings will be specified in Algorithm \ref{alg:Data-Driven-Penalization} below. Although the minimization problem in (\ref{eq:LASSOVector}) may have multiple solutions, we sometimes refer to such a $\widehat{\bbeta}_{i}$ as \textit{the} Lasso estimator. The results established below apply to any such (measurable) minimizer.\footnote{By the measurable selection theorem 6.7.22~in \citet{pfanzagl},~$\widehat{\bbeta}_{i}$ in~\eqref{eq:LASSOVector} can be chosen measurable (provided the~$\widehat{\bUpsilon}_{i}$ are chosen measurable).}

Note that much of the literature on the
Lasso and its variants employs the same penalty loadings for all regressors (e.g.~$\widehat{\upsilon}_{i,j}\equiv1$).
Equal weighting implicitly treats the regressors as either being on
the same scale or having been brought onto the same scale by some
preliminary transformation of the data, which is then typically abstracted
from in the theoretical analysis. Taking serious the effect of such a preliminary transformation is not trivial as it can alter the already intricate dependence structure of the time series at hand. By incorporating data-dependent loadings, our algorithm is not subject to this caveat.

Adapting \citet[Algorithm A.1]{belloni_sparse_2012} for independent data to our setting,
we consider the following data-driven penalization. Let
\begin{equation}
c:=1.1\quad\text{and}\quad\gamma_{n}=0.1/\ln(\max\{n,pq\})\label{eq:tuning_c_and_gamma}
\end{equation}
and let $K\in\mathbb{N}_{0}$ denote a fixed number of loading updates.\footnote{Although Theorem \ref{thm:Rates-for-LASSO-data-driven-loadings} is valid for all choices of~$c>1$, our simulations (Section \ref{sec:Simulations}) indicate that choices just above~$1$ lead to the best performance in practice. Furthermore, as long as~$c$ is close to~$1$, the performance of Algorithm \ref{alg:Data-Driven-Penalization} is not sensitive to the exact choice. This is documented in Figure~\ref{fig:cchoice} in Section \ref{sec:appsim} of the supplementary appendix, which also shows that the same remarks on the choice of~$c$ are valid for the post- and sqrt-Lasso.~$c=1.1$ was suggested in~\cite{belloni_sparse_2012}.} Following \citet{belloni_sparse_2012}, we fix $K$ at 15 in our simulations and
empirical illustration. To state Algorithm \ref{alg:Data-Driven-Penalization} below, let $\Phi$ be the standard Gaussian cumulative distribution function.
\begin{mdframed}
\begin{lyxalgorithm}
[\textbf{Data-Driven Penalization}]\label{alg:Data-Driven-Penalization}\ \\
\textbf{Initialize}: Specify the penalty level~$\lambda$ in~\eqref{eq:LASSOVector} as
\begin{equation}
\lambda^{\ast}_n := 2c\sqrt{n}\Phi^{-1}\big(1-\gamma_n/(2p^{2}q)\big),\label{eq:PenaltyLevelPractice}
\end{equation}
and specify the initial penalty loadings as 
\begin{equation}
\widehat{\upsilon}_{i,j}^{\left(0\right)} := \sqrt{\frac{1}{n}\sum_{t=1}^{n}Y_{t,i}^{2}Z_{t-1,j}^{2}},\quad i\in[p],\quad j\in[pq].\label{eq:PenaltyLoadingsInitial}
\end{equation}
Use $\lambda^{\ast}_n$ and $\widehat{\bUpsilon}_{i}^{\left(0\right)}:=\mathrm{diag}(\widehat{\upsilon}_{i,1}^{\left(0\right)},\dotsc,\widehat{\upsilon}_{i,pq}^{\left(0\right)})$
to compute a Lasso estimate $\widehat{\bbeta}_{i}^{\left(0\right)}:=\widehat{\bbeta}_{i}(\lambda^{\ast}_n,\widehat{\bUpsilon}_{i}^{\left(0\right)})$
via (\ref{eq:LASSOVector}) for each $i\in[p]$. Store the residuals $\widehat{\varepsilon}_{t,i}^{\left(0\right)}:=Y_{t,i}-\bZ_{t-1}^{\top}\widehat{\bbeta}_{i}^{\left(0\right)},t\in[n],i\in[p]$, and set $k=1$.

\noindent\textbf{Update:}
While $k\leqslant K$, specify the penalty loadings as
\begin{equation}
\widehat{\upsilon}_{i,j}^{\left(k\right)}:=\sqrt{\frac{1}{n}\sum_{t=1}^{n}(\widehat{\varepsilon}_{t,i}^{\left(k-1\right)})^2Z_{t-1,j}^{2}},\quad i\in[p],\quad j\in[pq].\label{eq:PenaltyLoadingsRefined}
\end{equation}
Use $\lambda_n^{\ast}$ and $\widehat{\bUpsilon}_{i}^{\left(k\right)}:=\mathrm{diag}(\widehat{\upsilon}_{i,1}^{\left(k\right)},\dotsc,\widehat{\upsilon}_{i,pq}^{\left(k\right)})$
to compute a Lasso estimate $\widehat{\bbeta}_{i}^{\left(k\right)}:=\widehat{\bbeta}_{i}(\lambda_n^{\ast},\widehat{\bUpsilon}_{i}^{\left(k\right)})$
via (\ref{eq:LASSOVector}) for each $i\in[p]$. Store the residuals $\widehat{\varepsilon}_{t,i}^{\left(k\right)}:=Y_{t,i}-\bZ_{t-1}^{\top}\widehat{\bbeta}_{i}^{\left(k\right)},t\in[n],i\in[p]$,
and increment $k\leftarrow k+1.$
\end{lyxalgorithm}
\end{mdframed}
\noindent Note that Algorithm \ref{alg:Data-Driven-Penalization} does not require any knowledge of the degree of dependence in the process~$\{\bY_{t}\}_{t\in\Z}$ (as quantified by mixing coefficients or other unknown population quantities). This feature of the penalization is in contrast to current recommendations in the literature. 

To discuss error rates for the implied Lasso estimator, let $\left\Vert \cdot\right\Vert _{2,n}$
be the prediction norm
\[
\Vert \bdelta\Vert_{2,n}:=\sqrt{\frac{1}{n}\sum_{t=1}^{n}\left(\bZ^{\top}_{t-1}\bdelta\right)^{2}}
\]
of the vector $\bdelta\in\R^{pq}$,
and let
\begin{equation}
s:=\max_{i\in[p]}\sum_{j=1}^{pq}\textbf{1}(\beta_{0i,j}\neq0),\label{eq:def_s_sparsity}
\end{equation}
denote the sparsity number, defined as the largest cardinality of the
support of $\bbeta_{0i}$ across $i\in[p]$. (Without loss of generality, we
take $s\geqslant 1$.) Then, under the primitive conditions given in Section~\ref{sec:Ass+Res}, any
Lasso estimators $\widehat\bbeta_i:=\widehat\bbeta_{i}(\lambda_n^{\ast},\widehat\bUpsilon_i^{(K)}),i\in[p]$, in (\ref{eq:LASSOVector}) based on the penalty level $\lambda_n^{\ast}$ in (\ref{eq:PenaltyLevelPractice})
and the (final) penalty loadings $\widehat\bUpsilon_i^{(K)}$ arising from Algorithm \ref{alg:Data-Driven-Penalization} satisfy
\begin{align}
\max_{i\in[p]}\Vert\widehat{\bbeta}_{i}-\bbeta_{0i}\Vert_{2,n} & \lesssim_{\mathrm{P}}\sqrt{\frac{s\ln\left(pn\right)}{n}},\label{eq:lasso_rates_1}\\
\max_{i\in[p]}\Vert\widehat{\bbeta}_{i}-\bbeta_{0i}\Vert_{\ell_{1}} & \lesssim_{\mathrm{P}}\sqrt{\frac{s^{2}\ln\left(pn\right)}{n}}\quad\text{and}\label{eq:lasso_rates_2}\\
\max_{i\in[p]}\Vert\widehat{\bbeta}_{i}-\bbeta_{0i}\Vert_{\ell_{2}} & \lesssim_{\mathrm{P}}\sqrt{\frac{s\ln\left(pn\right)}{n}}.\label{eq:lasso_rates_3}
\end{align}
The formal statement, including a set of sufficient conditions, is given
in Theorem \ref{thm:Rates-for-LASSO-data-driven-loadings} below. Importantly, the performance guarantees in~\eqref{eq:lasso_rates_1}--\eqref{eq:lasso_rates_3} match those currently available based on \emph{infeasible} penalty level choices. 

\begin{rem}[Including Intercepts]
In applications one typically includes model intercepts in an unpenalized manner. In Section \ref{subsec:IncludingIntercepts}, we modify Algorithm \ref{alg:Data-Driven-Penalization} for this purpose.\hfill$\diamondsuit$
\end{rem}

\section{Assumptions and Performance Guarantees for Algorithm \ref{alg:Data-Driven-Penalization}}\label{sec:Ass+Res}

For the remainder of the paper we invoke (a subset of) the following assumptions.
\begin{assumption}[\textbf{Innovations}] \label{assu:Innovations} There is a $\overline p\in\N$, a constant $\overline q\in\N_0$, a measurable mapping $\bF:\R^{\overline p\cdot\overline q}\to\R^p$, an i.i.d.~process $\{\bseta_t\}_{t\in\Z}$, each $\bseta_t$ taking values in $\R^{\overline p}$, and constants $a_1,a_2,\alpha\in(0,\infty)$ such that: \begin{inparaenum}[(1)]\item\label{enu:EpsIID} $\bepsilon_t=\bF(\bseta_{t-\overline q},\dotsc,\bseta_t)$ for each $t\in\Z$.
\item\label{enu:EpsSubWeibullBeta} $\Vert \bepsilon_{0}\Vert _{\psi_{\alpha}}\leqslant a_{1}$. \item\label{enu:EigValsOfSigmaEpsBddAwayFromZero} With $\mathcal{F}^{\bseta}_t$ denoting the $\sigma$-field generated by $\{\bseta_u\}_{u=-\infty}^{t}$, for each $t\in\Z$, $\E[\bepsilon_t \mid \mathcal{F}^{\bseta}_{t-1}]=\mathbf{0}_p$ almost surely and $\min_{i\in[p]}\E[\varepsilon_{t,i}^2 \mid \mathcal{F}^{\bseta}_{t-1}]\geqslant a_{2}$ almost surely. 
\end{inparaenum}
\end{assumption}
Assumption \ref{assu:Innovations}.\ref{enu:EpsIID} implies that the innovations $\bepsilon_t$ are causal and strictly stationary but allows them to be $\overline{q}$-dependent. Here $\overline{q}=0$ corresponds to independent errors, as imposed in, e.g., \citet{kock_oracle_2015}, \citet[Example 1]{wong_lasso_2020} and \citet{miao_var_2022}. Assumption \ref{assu:Innovations}.\ref{enu:EpsSubWeibullBeta} states that $\bepsilon_{0}$ belongs to the sub-Weibull family of distributions with tail parameter (at least) $\alpha$, as imposed in \citet{wong_lasso_2020} and \citet{masini_VAR_2022}.
Special cases are the sub-exponential $(\alpha=1)$ and sub-Gaussian
$(\alpha=2)$ families. If $\bepsilon_{0}$
is Gaussian---as imposed in, e.g., \citet{han_transition_2013}, \citet{basu_regularized_2015}, \citet{kock_oracle_2015} and \citet{davis_sparse_2016}---then Assumption \ref{assu:Innovations}.\ref{enu:EpsSubWeibullBeta}
holds if, in addition,~$\Lambda_{\max}(\bSigma_{\bepsilon})$ is bounded from above, with $\bSigma_{\bepsilon}$ being the covariance matrix of~$\bepsilon_{0}$.
Assumption \ref{assu:Innovations}.\ref{enu:EigValsOfSigmaEpsBddAwayFromZero} parallels part of \citet[Assumption A2]{masini_VAR_2022} and  \citet[Assumption A.1(i)]{miao_var_2022}.

\begin{assumption}
[\textbf{Companion Matrix}]\label{assu:Companion}\textbf{ }\begin{inparaenum}[(1)]\item\label{enu:SpectralRadius}~The spectral
radius $\rho(\widetilde{\bTheta}_{0})<1$. 
\item\label{enu:RowNormDecay} There are constants $b_1, b_2\in(0,\infty)$ and $\tau\in(0,1]$ such that
$\max_{i\in[p]}\Vert(\widetilde{\bTheta}_{0}^{h})_{i,1:p}\Vert_{\ell_{2}}\leqslant b_{1}\mathrm{e}^{-b_{2}h^\tau}$
for all $h\in\mathbb{N}$. 
\item\label{enu:lag-order} The lag-order $q$ is constant.
\end{inparaenum}
\end{assumption}

Assumption \ref{assu:Companion}.\ref{enu:SpectralRadius} is standard and combined with Assumption \ref{assu:Innovations} ensures that the process $\{\bY_{t}\}_{t\in\Z}$ in (\ref{eq:VARvector}) [as well as
$\{\bZ_{t}\}_{t\in\Z}$ in (\ref{eq:VARcompanion})] is strictly stationary. In particular,
$\{\bY_{t}\}_{t\in\Z}$
 has the moving
average representation
\[
\bY_{t}=\sum_{\ell=0}^{\infty}(\widetilde{\bTheta}_{0}^{\ell})_{1:p,1:p}\bepsilon_{t-\ell},
\]
where the series converges absolutely almost surely (and in sub-Weibull norm).

Assumption
\ref{assu:Companion}.\ref{enu:RowNormDecay} is milder than \citet[Assumption A1]{masini_VAR_2022}, which imposes an exponential decay on the $\ell_1$-norms of the rows of $(\widetilde{\bTheta}_{0}^{h})_{1:p,1:p}$ as $h$ grows (as opposed to our decay placed on the~$\ell_2$-norms). Likewise, the assumption is also milder than \citet[Assumption A.1(vi)]{miao_var_2022}, which imposes an exponential decay on the spectral norm of $(\widetilde{\bTheta}_{0}^{h})_{1:p,1:p}$ (exceeding the row-wise $\ell_2$-norms).
\footnote{See \Citet[Lemma 4]{masini_VAR_2022} for
further conditions sufficient for Assumption \ref{assu:Companion}.\ref{enu:RowNormDecay}.} 

Assumption \ref{assu:Companion}.\ref{enu:lag-order} states that the lag-order $q$ is constant. This appears to be reasonable for most practical purposes as~$q$ is often thought of as small relative to~$n$.\footnote{This point was also made by \citet{kock_oracle_2015}, although their analysis allows for growing $q$.} Indeed, in our application in Section \ref{sec:Empirical-Illustration} we have~$n=758$ and as a result of monthly sampling~$q=12$ often suffices. We use that $q$ does not depend on $n$ to exploit recent moderate deviation results in \citet{gao_refined_2022} for self-normalized sums based on geometric moment contracting processes. Assumption \ref{assu:Companion} therefore seems hard to relax with currently available methods.

Assumptions \ref{assu:Innovations} and \ref{assu:Companion} suffice to show that the population covariance matrix $\bSigma_{\bZ}:=\E[\bZ_{0}\bZ_{0}^\top]$ of $\bZ_0$ exists (in $\R^{pq\times pq}$).\footnote{See Lemma \ref{lem:ZandEpsZNormBnds} in the appendix for details.} We further assume:
\begin{assumption}
[\textbf{Covariance}]\label{assu:CovarianceZ}
There is a constant $d\in(0,\infty)$ such that
$\Lambda_{\min}(\bSigma_{\bZ})\geqslant d$.
\end{assumption}
Assumption \ref{assu:CovarianceZ} ensures that $\bSigma_{\bZ}$ has full rank and is implied by, e.g., \citet[Assumption A.1]{miao_var_2022}. As demonstrated in \citet[Section 3]{basu_regularized_2015}, Assumption \ref{assu:CovarianceZ} follows from additional (boundedness) conditions on~$\widetilde{\bTheta}_{0}$, although these are not necessary.\footnote{For the purpose of Theorem \ref{thm:Rates-for-LASSO-data-driven-loadings}, we note that Assumption \ref{assu:CovarianceZ} is used in the proof of Lemma \ref{lem:Restricted-Eigenvalue-Bound}. Inspection of its proof reveals that one can replace Assumption \ref{assu:CovarianceZ} by a milder requirement that only certain \textit{restricted} eigenvalues of $\bSigma_{\bZ}$ are bounded away from zero. Restricted eigenvalue conditions appear in, e.g., \citet{kock_oracle_2015} and \citet[DESIGN(3)a]{MEDEIROS_highdim_2016}.}
\begin{assumption}
[\textbf{Growth and Sparsity}]\label{assu:RowSparsity} For  constants $\alpha\in(0,\infty)$  and $\tau\in(\textstyle{\frac{1}{2}},1]$ satisfying respectively Assumptions \ref{assu:Innovations}.\ref{enu:EpsSubWeibullBeta} and \ref{assu:Companion}.\ref{enu:RowNormDecay}, we have $(\ln p)^{C(\alpha,\tau)}=o(n)$ and $s^2(\ln(pn))^{1/\tau+\widetilde{C}(2,\alpha)}=o(n)$, where $s$ is given in (\ref{eq:def_s_sparsity}), $\widetilde{C}(x,\alpha):=\max\{(2x/\alpha),(x+\alpha)/\alpha\}$ and
\[
C(\alpha,\tau):=\max\left\{\frac{\widetilde{C}(4,\alpha)(1+4\tau)}{4\tau-1},\frac{1+4\tau}{4\tau-2},\frac{1+4\tau}{\tau}\right\}.
\]
\end{assumption}
Assumption \ref{assu:RowSparsity}
imposes sparsity on each row $\bbeta_{0i}^\top$ of the horizontally concatenated parameter matrices
$[\bTheta_{01}:\cdots:\bTheta_{0q}]$ as in, e.g., \citet{kock_oracle_2015},
\citet{chernozhukov_lasso_2021}, and \citet{masini_VAR_2022}.~To help digest the growth and sparsity conditions in Assumption \ref{assu:RowSparsity}, note that in the special case of sub-Gaussian innovations $(\alpha=2)$ and exponential decay in the  $\ell_2$-norms of the rows of $(\widetilde{\bTheta}_{0}^{h})_{1:p,1:p}$ as $h$ grows $(\tau=1)$, the exponents equal $C(\alpha,\tau)=6.66\dotsc$ and $1/\tau+\widetilde{C}(2, \alpha)=3$.

We can now state the properties of the weighted Lasso estimator with data-driven tuning parameter selection based on Algorithm~\ref{alg:Data-Driven-Penalization}. 

\begin{thm}[\textbf{Convergence Rates for Lasso with Data-Driven Penalization}]\label{thm:Rates-for-LASSO-data-driven-loadings}
Let Assumptions \ref{assu:Innovations}, \ref{assu:Companion}, \ref{assu:CovarianceZ} and \ref{assu:RowSparsity} hold and fix~$K\in\N_0$. Then Lasso estimators $\widehat\bbeta_i(\lambda_n^\ast,\widehat\bUpsilon_i^{(K)}),i\in[p]$, arising from the data-driven penalization in Algorithm \ref{alg:Data-Driven-Penalization} satisfy the rates in~(\ref{eq:lasso_rates_1})--(\ref{eq:lasso_rates_3}).
\end{thm}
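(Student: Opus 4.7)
The plan is to follow the standard three-ingredient recipe for oracle inequalities of the (weighted) Lasso---combining (i)~a \emph{regularisation event} ensuring that the chosen penalty dominates the score, (ii)~closeness of the data-driven loadings $\widehat{\bUpsilon}_{i}^{(k)}$ to infeasible ideal loadings, and (iii)~a \emph{restricted eigenvalue} (RE) condition on the empirical Gram matrix---and then iterating $K$ times through Algorithm~\ref{alg:Data-Driven-Penalization}.

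First I would introduce the ideal (infeasible) loadings
\[
\upsilon_{i,j}^{\ast}:=\sqrt{\frac{1}{n}\sum_{t=1}^{n}\E[\varepsilon_{t,i}^{2}Z_{t-1,j}^{2}]},\quad i\in[p],\;j\in[pq],
\]
together with the score $S_{i,j}:=\tfrac{1}{n}\sum_{t=1}^{n}Z_{t-1,j}\varepsilon_{t,i}$, and show that with probability tending to one $\lambda_n^{\ast}/\sqrt{n}$ exceeds $2c\max_{i,j}|\sqrt{n}S_{i,j}/\upsilon_{i,j}^{\ast}|$. Since $\lambda_n^{\ast}/\sqrt{n}\asymp\sqrt{\ln(p^{2}q/\gamma_n)}$, a union bound over the $p^{2}q$ coordinates reduces this to a Gaussian-type moderate-deviation bound for each self-normalised pivot $\sqrt{n}S_{i,j}/\upsilon_{i,j}^{\ast}$. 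I would obtain such a bound by blocking the sums---so that the block-sums inherit the geometric moment-contraction property implied by Assumptions~\ref{assu:Innovations} and~\ref{assu:Companion}.\ref{enu:RowNormDecay} through the moving-average representation of $\bY_t$---and then invoking the self-normalised moderate-deviation theorems of \citet{chen_self-normalized_2016} and \citet{gao_refined_2022}. The sub-Weibull tails of~$\bepsilon_t$ together with the growth rates in Assumption~\ref{assu:RowSparsity} are precisely what is needed to make these pivots behave as in the i.i.d.~Gaussian case, uniformly in $(i,j)$.

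Next I would argue that the initial loadings $\widehat{\upsilon}_{i,j}^{(0)}$ defined in~\eqref{eq:PenaltyLoadingsInitial} satisfy $\max_{i,j}|\widehat{\upsilon}_{i,j}^{(0)}/\upsilon_{i,j}^{\ast}-1|=o_{\P}(1)$ and are bounded from above and away from zero by constants with high probability. This is a concentration question for the fourth-moment-type sums $n^{-1}\sum_{t=1}^{n}(Y_{t,i}^{2}Z_{t-1,j}^{2}-\E[Y_{t,i}^{2}Z_{t-1,j}^{2}])$, uniformly in $(i,j)$; the novel maximal inequality for centred sums of dependent sub-Weibull variables announced in the introduction is the workhorse. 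In parallel I would verify an RE condition for $n^{-1}\bfX^{\top}\bfX$ on the relevant $s$-sparse cone, starting from the population bound in Assumption~\ref{assu:CovarianceZ} and controlling the spectral-norm deviation $\opnorm{n^{-1}\bfX^{\top}\bfX-\bSigma_{\bZ}}$ via the same maximal inequality combined with a discretisation argument. Conditionally on the regularisation event, well-conditioned loadings, and the RE bound, the standard convex-duality/cone argument as in the proof of \citet{belloni_sparse_2012} then yields the rates~\eqref{eq:lasso_rates_1}--\eqref{eq:lasso_rates_3} for the $k=0$ iterate, after absorbing the bounded loadings into constants.

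Finally I would propagate the bounds across loading updates by induction on $k=1,\dotsc,K$. Given the rates at step $k-1$, the residual $\widehat{\varepsilon}_{t,i}^{(k-1)}-\varepsilon_{t,i}=-\bZ_{t-1}^{\top}(\widehat{\bbeta}_{i}^{(k-1)}-\bbeta_{0i})$ has prediction norm $\lesssim_{\P}\sqrt{s\ln(pn)/n}$; together with the sparsity condition $s^{2}(\ln(pn))^{1/\tau+\widetilde{C}(2,\alpha)}=o(n)$ and Cauchy--Schwarz, this is enough to give $\max_{i,j}|\widehat{\upsilon}_{i,j}^{(k)}/\upsilon_{i,j}^{\ast}-1|=o_{\P}(1)$ as well, so the oracle inequality applies once more at iteration $k$ with the same rates; iterating up to the fixed $K$ delivers the result. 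The main obstacle, in my view, lies in step (i): transferring the Chen--Shao/Gao moderate deviations from block-sums of geometric moment contracting processes to the actual VAR innovations, uniformly over $p^{2}q$ coordinates and without losing the sharp $\sqrt{\ln(pn)/n}$ factor, requires delicate blocking, truncation and coupling, and is the single place where Assumptions~\ref{assu:Innovations}--\ref{assu:RowSparsity} all bind simultaneously; everything downstream is---modulo the new maximal inequality for dependent sub-Weibull sums---a careful time-series adaptation of the template of \citet{belloni_sparse_2012}.
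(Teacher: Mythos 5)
Your plan identifies all the right components --- a regularisation event dominating the score, closeness of the data-driven loadings to ideal loadings, a restricted-eigenvalue bound via covariance consistency, a Belloni-et-al-style master oracle inequality, and induction over the $K$ updates --- and this mirrors the paper's architecture, including the novel maximal inequality for dependent sub-Weibull sums.

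There is, however, a concrete gap in the deviation-bound step. You define the ideal loadings as the deterministic population quantities $\upsilon_{i,j}^{\ast}=\sqrt{\E[\varepsilon_{0,i}^{2}Z_{-1,j}^{2}]}$ and then propose to control the pivot $\sqrt{n}S_{i,j}/\upsilon_{i,j}^{\ast}$ by the self-normalised moderate-deviation theorems of \citet{chen_self-normalized_2016} and \citet{gao_refined_2022}. But with a non-random normaliser this pivot is not self-normalised, and those theorems do not apply to it: they control ratios whose denominator is the random block-sum normaliser $\sqrt{\sum_{k}(\sum_{t\in H_{n,k}}X_{t})^{2}}$. The paper's key device, absent from your sketch, is to define the ideal penalty loadings as exactly this random quantity --- see~\eqref{eq:PenaltyLoadingsIdeal}--\eqref{eq:PenaltyLoadingsIdeal_p2} --- so that the score $\bS_{n,i}$ in~\eqref{eq:def_score_ideal} is literally the self-normalised block-sum to which Lemma~\ref{lem:TailBoundGMCVector} applies, yielding the sharp $\Phi^{-1}$ threshold in~\eqref{eq:PenaltyLevelPractice} directly. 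The boundedness of the random ideal loadings above and below, and their convergence to your $\upsilon_{i,j}^{\ast}$, are then entirely separate facts (Lemmas~\ref{lem:Ideal-Penalty-Loading-Control} and~\ref{lem:DeltaIdealVanishes}) that feed into the master lemma and the restricted-eigenvalue argument, not into the deviation bound itself. Without this separation, your plan implicitly needs to pass from the random block-sum normaliser to $\upsilon_{i,j}^{\ast}$ inside the moderate-deviation step, which introduces a $(1+o_{\P}(1))$ inflation that must be absorbed by the slack in $c>1$; that can be done, but it is an extra step your sketch does not acknowledge, and the paper's blocking-based ideal loadings exist precisely to avoid it. The remaining steps (RE via the covariance-consistency maximal inequality, propagation through residual-based loadings) are in line with the paper's proof.
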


The estimation and prediction error rates guaranteed by Theorem~\ref{thm:Rates-for-LASSO-data-driven-loadings} match those currently available in the literature building on infeasible penalty parameter choices. Algorithm~\ref{alg:Data-Driven-Penalization} is thus the first data-driven tuning parameter selection method for which theoretical guarantees have been provided in the context of dependent data. Upper bounds on the estimation error are also crucial ingredients in establishing valid inference based on the debiasing/desparsification methodology originating in~\citet{javanmard_confidence_2014}, \citet{van_de_geer_asymptotically_2014} and \citet{zhang_confidence_2014}. This methodology has been used in the context of time series in \citet{adamek_lasso_2022}. Theorem~\ref{thm:Rates-for-LASSO-data-driven-loadings} therefore opens the door to inference in high-dimensional VAR models with data-driven penalty parameter choice. 

In establishing Theorem~\ref{thm:Rates-for-LASSO-data-driven-loadings}, we cannot call upon the large existing body of maximal and concentration inequalities for independent random variables. Instead, we establish a new maximal inequality for sums of dependent random variables (Lemma~\ref{lem:MaximalInequalityFunctionalDependence} in the appendix). This result is then used to show that the penalty loadings of Algorithm~\ref{alg:Data-Driven-Penalization} are close to certain ideal, yet infeasible, blocking-based loadings (Lemma~\ref{lem:AsymptoticValidityDataDrivenPenaltyLoadings}). Note that although these ideal loadings involve blocking, Algorithm~\ref{alg:Data-Driven-Penalization} does not, meaning that no choice of block size is needed in practice. The ideal loadings can, in turn, be used as self-normalizing factors in an application of the recent moderate deviation theorems of \cite{gao_refined_2022} for dependent random variables. Using these results, we show that~$\lambda_n^*$ is a high-probability upper bound on the maximum of such self-normalized sums. Section~\ref{sec:proofsteps} in the appendix explains further steps and challenges involved in establishing Theorem~\ref{thm:Rates-for-LASSO-data-driven-loadings}.

\begin{rem}[Penalty Loading Updates]
The error rates (\ref{eq:lasso_rates_1})--(\ref{eq:lasso_rates_3}) guaranteed by Theorem \ref{thm:Rates-for-LASSO-data-driven-loadings} are valid for any fixed $K\in\N_0$ in Algorithm \ref{alg:Data-Driven-Penalization}. Preliminary and unreported simulations indicate that there are typically finite-sample benefits from updating the loadings (as opposed to no updating, $K=0$). The same exercises indicate that, it suffices to use~$K=5$ updates, as neither the loadings nor the parameter estimates change much for larger~$K$.\hfill$\diamondsuit$
\end{rem}

\section{Performance Guarantees for the Post-Lasso}\label{sec:postlasso}
To alleviate shrinkage bias, the \textit{post-Lasso}~$\widetilde{\bbeta}_{i}(\lambda,\widehat{\bUpsilon}_{i})$, say, uses least squares to refit the coefficients of the variables~$\widehat{T}_i=\widehat{T}_i\del[1]{\widehat{\bbeta}_i(\lambda,\widehat{\bUpsilon}_i)}:=\supp\del[1]{\widehat{\bbeta}_i(\lambda,\widehat{\bUpsilon}_i)}$ selected by a Lasso with generic tuning parameter~$\lambda$ (and generic penalty loadings~$\widehat{\bUpsilon}_i$) in each equation. Here~$\supp(\bm{b})=\cbr[0]{j\in[pq]:b_j\neq 0}$ denotes the support of~$\bm{b}\in\R^{pq}$. Therefore, the not necessarily unique, post-Lasso satisfies\footnote{The post-Lasso~$\widetilde{\bbeta}_{i}$ need not be unique since: 
\begin{enumerate}[(i)]
    \item The first step Lasso estimator need not be unique and the elements of~$\mathrm{argmin}_{\bbeta\in\R^{pq}}\{\widehat Q_i(\bbeta) +(\lambda/n)\Vert\widehat{\bUpsilon}_{i}\bbeta\Vert_{\ell_{1}}\}$
    need not select the same variables, cf.~\cite{tibshirani2013lasso}. As a results, the post-Lasso depends on which Lasso estimator one picks.
    \item Even for a given~$\widehat T_i$ it need not be the case that $\mathrm{argmin}_{\bbeta\in\R^{pq}:\supp(\bbeta)\subseteq \widehat{T}_i}\widehat{Q}_{i}(\bbeta)$
    is unique as~$(\bfX_{\widehat T_i})^\top \bfX_{\widehat T_i}$ may not have full rank.
\end{enumerate}
Nevertheless, just like Theorem~\ref{thm:Rates-for-LASSO-data-driven-loadings}, Theorem~\ref{thm:postLasso} below applies to any post-Lasso estimator and guarantees uniqueness with probability tending to one.}
\begin{equation}\label{eq:PostLassom}
\widetilde{\bbeta}_{i}\in\argmin_{\mathclap{\substack{\bbeta\in\R^{pq}:\\ \supp(\bbeta)\subseteq \widehat{T}_i}}} \widehat{Q}_{i}\del[0]{\bbeta},\qquad i\in[p].    
\end{equation}
All coefficients classified as zero by the Lasso are also set to zero by the post-Lasso, i.e.~$\widetilde{\bbeta}_{i,\widehat{T}_i^c}=\bm{0}_{|\widehat{T}_i^c|}$. In particular,~$\widetilde{\bbeta}_{i}=\bm{0}_{pq}$ if~$\widehat{T}_i=\emptyset$.

Since the set of variables selected by the ``first stage'' Lasso~$\widehat{T}_i=\widehat{T}_i\del[1]{\widehat{\bbeta}_i(\lambda,\widehat{\bUpsilon}_i)}$ depends on the choice of~$\lambda$, the performance of the ``second stage'' post-Lasso also depends on this tuning parameter. Thus, to obtain a data-driven implementation of the post-Lasso with theoretical performance guarantees, one needs a data driven tuning parameter selection for the first stage Lasso with performance guarantees. For the penalty choice~$\lambda_n^*$ and loadings~$\widehat{\bUpsilon}_i^{(K)}$ stemming from Algorithm~\ref{alg:Data-Driven-Penalization} we have provided such guarantees for the first stage Lasso~$\widehat{\bbeta}_i(\lambda_n^*,\widehat{\bUpsilon}_i^{(K)})$ in Theorem~\ref{thm:Rates-for-LASSO-data-driven-loadings} for any fixed~$K\in\N_0$. 

Writing~$\bfX_{I}\in\R^{n\times|I|}$ for the matrix consisting of all columns of~$\bfX$ with column index in~$I\subseteq [pq]$ and letting~$\bm{y}_i=(Y_{1,i},\hdots,Y_{n,i})^\top$, the post-Lasso estimator of the coefficients of the variables retained by the Lasso can be uniquely written as
\begin{align}\label{eq:PLdef}
	\widetilde{\bbeta}_{i,\widehat T_i}
	=
	\del[1]{(\bfX_{\widehat T_i})^\top \bfX_{\widehat T_i}}^{-1}(\bfX_{\widehat T_i})^\top \bm{y}_{i},
\end{align}
if~$\widehat T_i\neq \emptyset$ and~$(\bfX_{\widehat T_i})^\top \bfX_{\widehat T_i}$ has full rank $(|\widehat{T}_i|)$. To state Theorem~\ref{thm:postLasso} below, let $\|\bdelta\|_{\ell_0}:=|\supp(\bdelta)|$ be the $\ell_0$-``norm'' of $\bdelta$, denote
\begin{align}\label{eq:Dm}
	\mathfrak{D}(m):=\cbr[1]{\bdelta\in\R^{pq}:\Vert\bdelta\Vert_{\ell_0}\leqslant m, \Vert\bdelta\Vert_{\ell_2}= 1}\qquad\text{for } m\in[1,\infty),
\end{align}
and define the~\emph{$m$-sparse eigenvalue} of a symmetric~$pq\times pq$ matrix~$\mathbf{A}$ as
\begin{equation}
	\phi_{\max}(m,\mathbf{A}):=\max_{\bdelta\in\mathfrak{D}(m)}\bdelta^\top\mathbf{A}\bdelta.\label{eq:largestmsparseeigenvalue}
\end{equation}
Imposing that $\limsup_{n\to\infty}\phi_{\max}(s\ln(n), \bSigma_{\bZ})<\infty$ in addition to Assumptions~\ref{assu:Innovations}--\ref{assu:RowSparsity}, the post-Lasso following the Lasso implemented with the penalty parameter~$\lambda_n^\ast$ and loadings~$\widehat\bUpsilon_i^{(K)}$ from Algorithm~\ref{alg:Data-Driven-Penalization} obeys the following performance guarantees:
\begin{thm}[\textbf{Convergence Rates for Post-Lasso with Data-Driven Penalization}]\label{thm:postLasso}
	Let Assumptions \ref{assu:Innovations}, \ref{assu:Companion}, \ref{assu:CovarianceZ} and \ref{assu:RowSparsity} hold and fix~$K\in\N_0$. Assume that~$\limsup_{n\to\infty}\phi_{\max}(s\ln(n), \bSigma_{\bZ})<\infty$. Then post-Lasso estimators~$\widetilde{\bbeta}_{i}(\lambda_n^\ast,\widehat\bUpsilon_i^{(K)}),i\in[p]$, from~\eqref{eq:PostLassom} following Lasso estimators $\widehat\bbeta_i(\lambda_n^\ast,\widehat\bUpsilon_i^{(K)}),i\in[p]$, arising from the data-driven penalization in Algorithm \ref{alg:Data-Driven-Penalization}, satisfy the rates in~\eqref{eq:lasso_rates_1}--\eqref{eq:lasso_rates_3}.
	\end{thm}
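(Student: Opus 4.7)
The plan is to adapt the standard post-Lasso argument of~\citet{belloni_sparse_2012} to our time-series setting. It rests on three ingredients: (a)~the Lasso rates from Theorem~\ref{thm:Rates-for-LASSO-data-driven-loadings}; (b)~a sparsity bound $\max_{i\in[p]}|\widehat T_i|\lesssim_{\P} s$; and (c)~empirical sparse eigenvalue bounds on $\widehat{\bSigma}:=\bfX^\top\bfX/n$ transferred from those on $\bSigma_{\bZ}$. For (c), writing $\phi_{\min}(m,\cdot)$ for the analogue of $\phi_{\max}(m,\cdot)$ with $\min$ in place of $\max$, Assumption~\ref{assu:CovarianceZ} yields $\phi_{\min}(m,\bSigma_{\bZ})\geqslant d$, and the theorem's new hypothesis gives $\phi_{\max}(m,\bSigma_{\bZ})\lesssim 1$, uniformly in $m\lesssim s\ln(n)$. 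The transfer to $\widehat{\bSigma}$ reduces to showing $\sup_{\bdelta\in\mathfrak{D}(m)}|\bdelta^\top(\widehat{\bSigma}-\bSigma_{\bZ})\bdelta|=o_{\P}(1)$, which I would obtain from an $\varepsilon$-net over the $\binom{pq}{m}$ $m$-dimensional coordinate subspaces combined with the maximal inequality for sums of functionally dependent sub-Weibull random variables already developed for Theorem~\ref{thm:Rates-for-LASSO-data-driven-loadings} (Lemma~\ref{lem:MaximalInequalityFunctionalDependence}). The growth condition in Assumption~\ref{assu:RowSparsity} is calibrated to absorb the resulting covering-number factor $\log\binom{pq}{m}\lesssim m\ln(pq)$.

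For (b), the KKT conditions of~\eqref{eq:LASSOVector} give $|(\bfX^\top(\bm{y}_i-\bfX\widehat{\bbeta}_i))_j|/n=(\lambda_n^\ast/n)\widehat{\upsilon}_{i,j}^{(K)}$ for every $j\in\widehat T_i$. Squaring, summing over $j\in\widehat T_i$, decomposing the left-hand side into a noise part $|(\bfX^\top\bepsilon_i)_j|/n$ and a bias part $|[\widehat{\bSigma}(\widehat{\bbeta}_i-\bbeta_{0i})]_j|$, bounding the former by $|\widehat T_i|\Vert\bfX^\top\bepsilon_i/n\Vert_{\ell_\infty}^2$ and the latter by $\phi_{\max}(|\widehat T_i|,\widehat{\bSigma})\Vert\widehat{\bbeta}_i-\bbeta_{0i}\Vert_{2,n}^2$, and using that $\lambda_n^\ast\min_j\widehat{\upsilon}_{i,j}^{(K)}/n$ dominates $\Vert\bfX^\top\bepsilon_i/n\Vert_{\ell_\infty}$ with high probability (a property underpinning Theorem~\ref{thm:Rates-for-LASSO-data-driven-loadings}), yields
\[
|\widehat T_i|\lesssim \phi_{\max}(|\widehat T_i|,\widehat{\bSigma})\cdot\frac{\Vert\widehat{\bbeta}_i-\bbeta_{0i}\Vert_{2,n}^2}{(\lambda_n^\ast/n)^2}\lesssim_{\P} \phi_{\max}(|\widehat T_i|,\widehat{\bSigma})\cdot s,
\]
using $\Vert\widehat{\bbeta}_i-\bbeta_{0i}\Vert_{2,n}^2\lesssim_{\P} s\ln(pn)/n$ and $(\lambda_n^\ast/n)^2\asymp\ln(pn)/n$. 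A standard sublinearity argument on $m\mapsto\phi_{\max}(m,\widehat{\bSigma})$ then converts this into $|\widehat T_i|\lesssim_{\P} s$.

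With (a)--(c) in hand, the post-Lasso rates follow. Because $\widehat{\bbeta}_i$ is feasible for~\eqref{eq:PostLassom}, $\widehat Q_i(\widetilde{\bbeta}_i)\leqslant\widehat Q_i(\widehat{\bbeta}_i)$. Expanding using $Y_{t,i}=\bZ_{t-1}^\top\bbeta_{0i}+\varepsilon_{t,i}$ and rearranging gives
\[
\Vert\widetilde{\bbeta}_i-\bbeta_{0i}\Vert_{2,n}^2\leqslant\Vert\widehat{\bbeta}_i-\bbeta_{0i}\Vert_{2,n}^2+\frac{2}{n}\big|\bepsilon_i^\top\bfX(\widetilde{\bbeta}_i-\widehat{\bbeta}_i)\big|.
\]
Since $\widetilde{\bbeta}_i-\widehat{\bbeta}_i$ is supported on $\widehat T_i$ and hence $|\widehat T_i|$-sparse, the cross term is bounded via H\"older by $2\sqrt{|\widehat T_i|}\Vert\bfX^\top\bepsilon_i/n\Vert_{\ell_\infty}\Vert\widetilde{\bbeta}_i-\widehat{\bbeta}_i\Vert_{2,n}/\sqrt{\phi_{\min}(|\widehat T_i|,\widehat{\bSigma})}$. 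Solving the resulting quadratic inequality in $\Vert\widetilde{\bbeta}_i-\bbeta_{0i}\Vert_{2,n}$ and invoking (a)--(c) delivers~\eqref{eq:lasso_rates_1}. The $\ell_2$-bound~\eqref{eq:lasso_rates_3} then follows from $\Vert\widetilde{\bbeta}_i-\bbeta_{0i}\Vert_{\ell_2}\leqslant\Vert\widetilde{\bbeta}_i-\bbeta_{0i}\Vert_{2,n}/\sqrt{\phi_{\min}(|\widehat T_i|+s,\widehat{\bSigma})}$ (the difference being supported on at most $|\widehat T_i|+s\lesssim s$ indices), and~\eqref{eq:lasso_rates_2} from $\Vert\cdot\Vert_{\ell_1}\leqslant\sqrt{|\widehat T_i|+s}\,\Vert\cdot\Vert_{\ell_2}$.

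The main obstacle is step~(c): obtaining sparse eigenvalue bounds on $\widehat{\bSigma}$ that are uniform over $\mathfrak{D}(m)$ for $m$ growing like $s\ln(n)$, under sub-Weibull innovations and functional dependence. This is precisely where the dependent-data maximal inequality already developed for Theorem~\ref{thm:Rates-for-LASSO-data-driven-loadings} is indispensable, and the delicate interplay between the covering-number factor $m\ln(pq)$ and the tail and dependence parameters is what dictates the growth condition in Assumption~\ref{assu:RowSparsity}.
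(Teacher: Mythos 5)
Your architecture matches the paper's: (a) the first-stage Lasso rates, (b) an over-selection bound $\max_{i\in[p]}|\widehat T_i|\lesssim_{\P}s$ obtained from the KKT conditions plus sublinearity of $m\mapsto\phi_{\max}(m,\cdot)$ (this is exactly Lemma~\ref{lem:Lasso_overselection} and Corollary~\ref{cor:Lasso_overselection}, which work with $|\widehat T_i\setminus T_i|$ rather than $|\widehat T_i|$ --- an immaterial difference), and (c) transfer of sparse eigenvalue control from $\bSigma_{\bZ}$ to $\widehat{\bSigma}_{\bZ}$. Two of your routes differ genuinely from the paper's. First, for the final rate you use the basic inequality $\widehat Q_i(\widetilde{\bbeta}_i)\leqslant\widehat Q_i(\widehat{\bbeta}_i)$ and solve a quadratic inequality, whereas the paper (Theorem~\ref{thm:postgeneralestim}) writes out the OLS estimator on $\widehat T_i$ explicitly and bounds the omitted-variable-bias term $(\bfX_{\widehat T_i}^\top\bfX_{\widehat T_i})^{-1}\bfX_{\widehat T_i}^\top\bfX_{\widehat T_i^c\cap T_i}\bbeta_{0i,\widehat T_i^c\cap T_i}$ and the noise term separately. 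Both are standard; yours absorbs the omitted-variable bias into $\Vert\widehat{\bbeta}_i-\bbeta_{0i}\Vert_{2,n}$ and is arguably shorter, while the paper's decomposition is what lets it state Theorem~\ref{thm:postgeneralestim} for an arbitrary first-stage estimator with the stated properties.

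The one place where your route is more fragile than necessary is step (c). You propose uniformity of $|\bdelta^\top(\widehat{\bSigma}_{\bZ}-\bSigma_{\bZ})\bdelta|$ over $\mathfrak{D}(m)$ with $m\asymp s\ln(n)$ via an $\varepsilon$-net over the $\binom{pq}{m}$ coordinate subspaces. Lemma~\ref{lem:MaximalInequalityFunctionalDependence} does not deliver this off the shelf: its union is over the $d_T$ coordinates of a fixed vector process, and repurposing it for a net of cardinality $\exp(Cm\ln(pq))$ requires computing functional dependence measures for the projected processes $t\mapsto(\bdelta^\top\bZ_t)^2$ at every net point and re-checking that the $(\ln(d_TT))^{\overline{\xi}}$ term does not dominate --- none of which is "already developed," and Assumption~\ref{assu:RowSparsity} was not calibrated for it. The paper avoids all of this: uniformity over $\mathfrak{D}(s\ln(n))$ is only ever needed for the \emph{population} matrix (where it is assumed), while for the \emph{empirical} matrix one only needs control on sets of cardinality $O_{\P}(s)$ (the random sets $\widehat T_i$, $\widehat T_i\cup T_i$, and fixed multiples $Cs$ in the sublinearity argument), for which the crude bound $\Vert(\widehat{\bSigma}_{\bZ}-\bSigma_{\bZ})_{I,I}\Vert_{\ell_2}\leqslant|I|\max_{(j,k)}|(\widehat{\bSigma}_{\bZ}-\bSigma_{\bZ})_{j,k}|$ together with the entrywise rate of Lemma~\ref{lem:Covariance-Consistency} and $s^2(\ln(pn))^{1/\tau+\widetilde{C}(2,\alpha)}=o(n)$ suffices. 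Replacing your net argument with this observation closes the only real gap in the proposal.
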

	\emph{When~$\lambda$ is chosen appropriately}, Theorem~\ref{thm:postLasso} provides the same asymptotic guarantees for the post-Lasso as Theorem~\ref{thm:Rates-for-LASSO-data-driven-loadings} does for the first stage Lasso. However, note that Theorem~\ref{thm:postLasso} imposes the assumption~$\limsup_{n\to\infty}\phi_{\max}(s\ln(n), \bSigma_{\bZ})<\infty$, which is \emph{not} imposed for the Lasso in Theorem~\ref{thm:Rates-for-LASSO-data-driven-loadings}. An assumption of this type was employed already for independent data in~\citet*[Corollary 1]{belloni_sparse_2012} in order to establish rates of convergence in estimation and prediction error for the post-Lasso. This additional assumption is used to (i) upper bound  the number of irrelevant variables retained by the Lasso and (ii) to control the omitted variable bias of the second stage least squares estimator (as the first stage Lasso may not retain all relevant variables).
 Thus, although the post-Lasso performs well in many of our simulations (see Section \ref{sec:Simulations}) with~$\lambda_n^*$ as dictated by Algorithm \ref{alg:Data-Driven-Penalization}, it does impose an additional assumption compared to the Lasso. Moreover, the same simulations show that post-Lasso can perform worse than the latter if this additional assumption is not satisfied.

\section{Performance Guarantees for the Sqrt-Lasso}\label{sec:SqrtLasso}
In the context of i.i.d.~data the sqrt-Lasso  was proposed by~\cite{belloni2011square} to construct a shrinkage estimator for which a theoretically justifiable tuning parameter choice does not depend on the (unknown) scale of the error term(s).\footnote{The sqrt-Lasso coincides with the \textit{scaled Lasso} of \citet{sun_scaled_2012} after concentrating the error standard deviation out of their objective function.} For this reason, it has become a popular alternative to the Lasso. We propose a data-driven implementation of the sqrt-Lasso for high-dimensional VAR models, which also accounts for the scales of the regressors~$\bm{Z}_{t-1}$ through penalty loadings~$\widehat{\bUpsilon}_{i}$ not present in the original sqrt-Lasso.\footnote{\citet{belloni2011square} treat the regressors as fixed/conditioned on throughout (\textit{ibid.}, p.~793). This view allows one to normalize their scales to any desired values. In the context of time series such normalization is not innocuous, as regressor pre-processing can substantially alter the dependence structure. Hence, care needs to be taken in handling the regressor scales when implementing the sqrt-Lasso for time series.} In particular, we study the following sqrt-Lasso
\begin{align}\label{eq:srtLasso}
	\dot{\bbeta}_{i}:=\dot{\bbeta}_{i}(\lambda,\widehat{\bUpsilon}_{i})\in\argmin_{\bbeta\in\R^{pq}}\left\{ \del[1]{\widehat Q_i(\bbeta)}^{1/2} +\frac{\lambda}{n}\Vert\widehat{\bUpsilon}_{i}\bbeta\Vert_{\ell_{1}}\right\}
\end{align}
and set the penalty level~$\lambda=\lambda_n^*/2$ [cf.~\eqref{eq:PenaltyLevelPractice}] and loadings $\widehat{\bUpsilon}_{i}=\dot{\bUpsilon}=\mathrm{diag}(\dot\upsilon_1,\hdots,\dot\upsilon_{pq})$ with $\dot\upsilon_j:=(n^{-1}\sum_{t=1}^nZ^2_{t-1,j})^{1/2}$, $j\in[pq]$.\footnote{In contrast to the loadings~$\{\widehat{\bUpsilon}_i\}_{i\in[p]}$ from Algorithm \ref{alg:Data-Driven-Penalization}, the sqrt-Lasso loadings~$\dot{\bUpsilon}$ do not depend on~$i$.}
Crucially, the implementation of the proposed version of the sqrt-Lasso does \emph{not} depend on any population unknowns---it is fully data-driven.
\begin{thm}[\textbf{Convergence Rates for Sqrt-Lasso with Data-Driven Penalization}]\label{thm:Rates-for-sqrtLASSO-data-driven-loadings}
Let Assumptions \ref{assu:Innovations}, \ref{assu:Companion}, \ref{assu:CovarianceZ} and \ref{assu:RowSparsity} hold with $\overline q=0$. Then sqrt-Lasso estimators $\dot\bbeta_i(\lambda_n^\ast/2,\dot\bUpsilon),i\in[p]$, arising from~\eqref{eq:srtLasso} satisfy the rates in~(\ref{eq:lasso_rates_1})--(\ref{eq:lasso_rates_3}).
\end{thm}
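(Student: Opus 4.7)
My plan is to reduce the sqrt-Lasso analysis to the weighted-Lasso apparatus underlying Theorem~\ref{thm:Rates-for-LASSO-data-driven-loadings}. Differentiating $\sqrt{\widehat Q_i(\cdot)}$, the KKT conditions for~\eqref{eq:srtLasso} coincide with those of the weighted Lasso in~\eqref{eq:LASSOVector} at penalty level $\lambda_n^\ast$ and loadings $\sqrt{\widehat Q_i(\dot\bbeta_i)}\,\dot\bUpsilon$. The hypothesis $\overline q=0$ is pivotal: it makes $\{\bepsilon_t\}_{t\in\Z}$ i.i.d.\ and independent of $\{\bZ_{t-1}\}_{t\in\Z}$, so the infeasible ideal loading $\|\varepsilon_{0,i}Z_{0,j}\|_{L_2}$ that underlies Algorithm~\ref{alg:Data-Driven-Penalization} factors as $\sigma_i\|Z_{0,j}\|_{L_2}$ with $\sigma_i^2:=\E[\varepsilon_{0,i}^2]\in[a_2,\|\bepsilon_0\|_{\psi_\alpha}^2]$. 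The sqrt-Lasso loadings~$\dot\upsilon_j$ estimate $\|Z_{0,j}\|_{L_2}$, and the quantity $\widehat\sigma_i:=\sqrt{\widehat Q_i(\bbeta_{0i})}$ estimates $\sigma_i$. The maximal inequality for dependent sub-Weibull sums developed for the Lasso proof then yields the uniform approximations $\max_{j\in[pq]}|\dot\upsilon_j-\|Z_{0,j}\|_{L_2}|=o_{\mathrm P}(1)$ and $\max_{i\in[p]}|\widehat\sigma_i-\sigma_i|=o_{\mathrm P}(1)$, so that in particular $\min_{i\in[p]}\widehat\sigma_i\geqslant\sqrt{a_2}/2$ with probability tending to one.

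\textbf{Score bound and basic inequality.} Combining these approximations with the moderate-deviation argument used to validate~$\lambda_n^\ast$ in Algorithm~\ref{alg:Data-Driven-Penalization}, one obtains on an event $\mathcal E_n$ of probability tending to one
\begin{equation*}
\max_{i\in[p],j\in[pq]}\frac{\absn{n^{-1}\sum_{t=1}^n\varepsilon_{t,i}Z_{t-1,j}}}{\widehat\sigma_i\,\dot\upsilon_j}\leqslant \frac{\lambda_n^\ast}{2cn}.
\end{equation*}
Writing $\dot\bdelta_i:=\dot\bbeta_i-\bbeta_{0i}$, the optimality of~$\dot\bbeta_i$ in~\eqref{eq:srtLasso} at penalty $\lambda_n^\ast/2$ yields the sqrt-Lasso basic inequality
\begin{equation*}
\sqrt{\widehat Q_i(\dot\bbeta_i)}-\widehat\sigma_i \leqslant \frac{\lambda_n^\ast}{2n}\del[0]{\|\dot\bUpsilon\bbeta_{0i}\|_{\ell_1}-\|\dot\bUpsilon\dot\bbeta_i\|_{\ell_1}}.
\end{equation*}
Applying $\sqrt{a}-\sqrt{b}=(a-b)/(\sqrt a+\sqrt b)$ together with the expansion $\widehat Q_i(\dot\bbeta_i)-\widehat\sigma_i^2=\|\dot\bdelta_i\|_{2,n}^2 - S_i(\dot\bdelta_i)$, where $S_i(\bdelta):=(2/n)\sum_{t=1}^n\varepsilon_{t,i}\bZ_{t-1}^\top\bdelta$, and using the score bound to dominate $|S_i(\dot\bdelta_i)|\leqslant (\lambda_n^\ast/(cn))\widehat\sigma_i\|\dot\bUpsilon\dot\bdelta_i\|_{\ell_1}$ on~$\mathcal E_n$, one shows that $\dot\bdelta_i$ lies in an (expanded) restricted cone supported approximately on $\supp(\bbeta_{0i})$. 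Invoking the restricted-eigenvalue bound for the Gram matrix~$n^{-1}\bfX^\top\bfX$ established in the proof of Theorem~\ref{thm:Rates-for-LASSO-data-driven-loadings} (a property of the design that depends only on Assumptions~\ref{assu:Innovations}, \ref{assu:Companion}, \ref{assu:CovarianceZ}, \ref{assu:RowSparsity} and not on the choice of penalty scheme) then delivers~\eqref{eq:lasso_rates_1}--\eqref{eq:lasso_rates_3}.

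\textbf{Main obstacle.} The principal difficulty is the self-referential structure of the sqrt-Lasso: the effective Lasso-equivalent penalty scales with $\sqrt{\widehat Q_i(\dot\bbeta_i)}$, which depends on the solution itself and also sits in the denominator arising from the $\sqrt a-\sqrt b$ identity above. I would handle this via a two-pass argument. First, combining the basic inequality with a crude $\ell_1$ bound obtained by ignoring the prediction-norm contribution and using $\|\dot\bUpsilon\bbeta_{0i}\|_{\ell_1}\leqslant (\max_j\dot\upsilon_j)\|\bbeta_{0i}\|_{\ell_1}$, one sandwiches $\sqrt{\widehat Q_i(\dot\bbeta_i)}$ between positive deterministic constants (say, between $\widehat\sigma_i/2$ and $2\widehat\sigma_i$) uniformly in $i\in[p]$ with probability tending to one. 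Second, feed this sandwich back into the basic-inequality identity to extract the sharp rates. A secondary challenge is rigorously transferring the moderate-deviation-based score bound from the Algorithm~\ref{alg:Data-Driven-Penalization} loadings~$\widehat\upsilon_{i,j}^{(K)}$ to the product $\sigma_i\dot\upsilon_j$; the hypothesis $\overline q=0$ is essential because it enables the factorization of the ideal loading, which is generally unavailable for dependent innovations.
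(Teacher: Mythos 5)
Your proposal is correct in substance and follows essentially the same route as the paper, with one organizational difference worth noting. The paper proves a dedicated ``Master Lemma for Square-Root Lasso'' (Lemma~\ref{lem:sqrtLasso}), which directly manipulates the sqrt-Lasso basic inequality, shows $\dot\bdelta_i$ lies in $\mathcal{R}_{\widehat c_0\widehat\mu_0,T_i}$, and arrives at a quadratic inequality $\|\dot\bdelta_i\|^2_{2,n}\leqslant A\|\dot\bdelta_i\|_{2,n}+B\|\dot\bdelta_i\|^2_{2,n}$ whose second term is absorbed via the explicit qualifier $\lambda\widehat u\widehat\upsilon^0_{\max}\sqrt s/(n\min_i s_{\varepsilon_i}\widehat\kappa_{\widehat c_0\widehat\mu_0})\leqslant 1/\sqrt 2$; your ``two-pass'' idea of first sandwiching $\sqrt{\widehat Q_i(\dot\bbeta_i)}$ around $s_{\varepsilon_i}$ and then refining is the iterative version of the same absorption and would also work, at the cost of an extra pass. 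Your KKT reduction (sqrt-Lasso at $(\lambda_n^\ast/2,\dot\bUpsilon)$ $\equiv$ Lasso at $(\lambda_n^\ast,\sqrt{\widehat Q_i(\dot\bbeta_i)}\dot\bUpsilon)$) is a legitimate alternative framing, though to feed it into Lemma~\ref{lem:MasterLemma} one must verify asymptotic validity of the \emph{implicit} loadings $\sqrt{\widehat Q_i(\dot\bbeta_i)}\dot\upsilon_j$, which again requires the two-sided control of $\sqrt{\widehat Q_i(\dot\bbeta_i)}$, so no work is really saved relative to the paper's self-contained sqrt-Lasso master lemma.

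The remaining ingredients you invoke match the paper's exactly: the $\overline q=0$ factorization $\upsilon^0_{i,j}=\sigma_{\varepsilon_i}(\E[Z^2_{-1,j}])^{1/2}$; the maximal-inequality consequences $\max_j|\dot\upsilon_j^2-\E[Z^2_{-1,j}]|\to_\P 0$ and $\max_i|s_{\varepsilon_i}^2-\sigma_{\varepsilon_i}^2|\to_\P 0$ (the latter by adapting Lemma~\ref{lem:Covariance-Consistency} to $\{\bepsilon_t\}$); the deviation bound $\P(\lambda_n^\ast/n\geqslant c\max_i\|\bS_{n,i}\|_{\ell_\infty})\to 1$ of Lemma~\ref{lem:Deviation-Bound}; and Lemma~\ref{lem:Restricted-Eigenvalue-Bound}. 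One caveat in your sketch: you refer to the ideal loading as $\|\varepsilon_{0,i}Z_{0,j}\|_{L_2}$, whereas Algorithm~\ref{alg:Data-Driven-Penalization} is analysed against the \emph{blocking-based} $\widehat\upsilon^0_{i,j}$ of~\eqref{eq:PenaltyLoadingsIdeal_p2}. These agree asymptotically by Lemma~\ref{lem:DeltaIdealVanishes}, and the paper uses precisely this fact to conclude $\max_{(i,j)}|s_{\varepsilon_i}\dot\upsilon_j/\widehat\upsilon^0_{i,j}-1|\to_\P 0$, so your sketch is consistent with the paper once that bridge is made explicit.
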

The proposed sqrt-Lasso~$\dot\bbeta_i(\lambda_n^\ast/2,\dot\bUpsilon)$ that adjusts to the scales of the regressors~$\bZ_{t-1}$ through the choice of loadings~$\dot{\bUpsilon}$ obeys the same performance bounds as the (post-)Lasso. The condition $\overline q=0$ used in Theorem \ref{thm:Rates-for-sqrtLASSO-data-driven-loadings} implies that the $\{\bepsilon_t\}_{t\in\Z}$ are i.i.d.~and, in particular, conditionally homoskedastic given the past outcomes. Conditional homoskedasticity is a crucial element in the analysis of the sqrt-Lasso for even i.i.d.~observations, cf.~\citet{belloni2011square}. Just like for i.i.d.~data, we shall see in the simulations in Section \ref{sec:Simulations} that the sqrt-Lasso can be inferior to the Lasso in the presence of conditional heteroskedasticity.

\section{Simulations\label{sec:Simulations}}
We next explore the finite-sample behavior of the methods studied in a list of experiments.

\subsection{Designs}
Seven experimental designs (labelled A--G) are considered, all of which have zero intercepts. The number of Monte Carlo replications is $1{,}000$ and the effective sample sizes are $n\in\{100, 200, \dotsc, 1000\}$. We consider system sizes $p\in\{16,32,64,128\}$, the largest of which was chosen to mimic the 127 series used in our empirical illustration (Section \ref{sec:Empirical-Illustration}).

Designs \hyperlink{Design A}{A}, \hyperlink{Design B}{B} and \hyperlink{Design C}{C} below are inspired by \citet[Section 5]{kock_oracle_2015}. For these designs, the innovations $\{\bepsilon_{t}\}_{t\in\Z}$ are i.i.d.~mean zero Gaussian with a diagonal covariance matrix $\bSigma_{\bepsilon}=\sigma_{\varepsilon}^2\mathbf{I}_{p}$ and $\sigma_{\varepsilon}=0.1$. The remaining details of these designs are:
\begin{itemize}
\item\textbf{Design \hypertarget{Design A}{A}:} The data-generating process (DGP) is a VAR(1) with diagonal matrix $\bTheta_{01}=(0.5)\mathbf{I}_{p}$
implying a spectral radius $\rho(\bTheta_{01})=0.5$. This setting
is very row sparse as $s=1.$
\item\textbf{Design \hypertarget{Design B}{B}:} The DGP is a VAR(1) with coefficient matrix having the Toeplitz structure $\bTheta_{01,i,j}=(-1)^{|i-j|}(0.4)^{1+|i-j|},(i,j)\in[p]^2$. Since~$s=p$ this design leads to a violation of exact/strong sparsity, but the entries decay exponentially fast in magnitude as one moves away from the
diagonal. It holds that $\rho(\bTheta_{01})=0.9$.
\item\textbf{Design \hypertarget{Design C}{C}:} The DGP is a VAR(4) with both~$\bTheta_{01}$
and~$\bTheta_{04}$  block-diagonal matrices with diagonal blocks of size $4\times4$ with all entries equal
to $0.15$ and $-0.1,$ respectively. $\bTheta_{02}=\bTheta_{03}=\mathbf{0}_{p\times p}$. Thus, $s=8$ and the spectral radius of the companion matrix is~$\rho(\widetilde{\bTheta}_{0})=0.9.$\footnote{We deviate from the $5\times5$ blocks in \citet[Experiment B, p.~333]{kock_oracle_2015}
to ensure that the block sizes are divisors of~$p$.}
\end{itemize}
 As in Design \hyperlink{Design A}{A}, Designs \hyperlink{Design D}{D}, \hyperlink{Design E}{E} and \hyperlink{Design F}{F} all involve a VAR(1) with $\bTheta_{01}=(0.5)\mathbf{I}_p$ and $\sigma_\varepsilon=0.1$. They differ from Design \hyperlink{Design A}{A} in terms of~$\bSigma_{\bepsilon}$, the heavy-tailedness of the~$\bepsilon_t$, or allowing for conditional heteroskedasticity. Design \hyperlink{Design G}{G} studies a local-to-unity variant of Design \hyperlink{Design A}{A}. The details of Designs \hyperlink{Design D}{D}--\hyperlink{Design G}{G} are as follows:
\begin{itemize}
\item\textbf{Design \hypertarget{Design D}{D}:} The $\{\bepsilon_{t}\}_{t\in\Z}$ are i.i.d.~centered Gaussian with covariance matrix  $\bSigma_{\bepsilon,i,j}=\sigma_{\varepsilon}^2\cdot(0.9)^{\mathbf{1}(i\neq j)},(i,j)\in[p]^2$. This covariance structures implies strong correlation among the regressors. Specifically, the largest $m$-sparse eigenvalue $\phi_{\max}(m,\bSigma_{\bY})$ [see \eqref{eq:largestmsparseeigenvalue}] grows linearly with $m$. Thus, $\lim_{n\to\infty}\phi_{\max}(s\ln(n),\bSigma_{\bY})=\infty$, i.e.~an assumption used to establish performance guarantees for the post-Lasso in Theorem \ref{thm:postLasso} is violated. 
\item\textbf{Design \hypertarget{Design E}{E}:} The 
$\{\bepsilon_{t}\}_{t\in\Z}$ are i.i.d.~with $\boldsymbol{\varepsilon}_t\sim(\sigma_{\varepsilon}/\sqrt{5/3})\cdot t_5(\mathbf{0}_p,[(0.9)^{\mathbf{1}(i\neq j)}])$, where $t_5(\mathbf{0}_p,[(0.9)^{\mathbf{1}(i\neq j)}])$ denotes the multivariate Student distribution with five degrees of freedom, all-zero locations, and scales $(0.9)^{\mathbf{1}(i\neq j)},(i,j)\in[p]^2$. This design conflicts with Assumption \ref{assu:Innovations} in that the innovation tails are heavier than sub-Weibull.
\item\textbf{Design \hypertarget{Design F}{F}:} This design investigates the effect of conditional heteroskedasticity. Specifically, let $\{\bseta_t\}_{t\in\Z}$ be i.i.d.~with $\bseta_t\sim \mathrm{N}(\mathbf{0}_p,\mathbf{I}_p)$. Let $\bepsilon_t = \bSigma_{\bepsilon}^{1/2}(\bseta_{t-1}) \bseta_t$, where $\bSigma_{\bepsilon}(\bseta_{t-1}):=\mathrm{diag}(\{\sigma_{\varepsilon,i}^2(\bseta_{t-1})\}_{i\in[p]})$ with conditional standard deviations taking the form
\[
\sigma_{\varepsilon,i}(\bseta_{t-1}):=\sigma_{\varepsilon}\cdot\begin{cases}
\mathrm{e}^{-1.5|\eta_{t-1,i}|+1.5|\eta_{t-1,i+1}|}, & \text{if}\;i\in[p-1],\\
\mathrm{e}^{-1.5|\eta_{t-1,p}|+1.5|\eta_{t-1,1}|}, & \text{if}\;i=p.
\end{cases}
\]
This design violates Assumption \ref{assu:Innovations} since the $\{\bepsilon_{t}\}_{t\in\Z}$ have conditional variances that can become arbitrarily small/large. Assumption \ref{assu:Innovations} can be enforced by censoring the~$\sigma_{\varepsilon,i}(\bseta_{t-1})$ from above and below by an arbitrarily large (small) number. 
\item\textbf{Design \hypertarget{Design G}{G}:} The $\{\bepsilon_{t}\}_{t\in\Z}$ are as in Design \hyperlink{Design D}{D} but $\bTheta_{01}=(1-5/n)\mathbf{I}_p$. This is a local to unit root design which violates Assumption \ref{assu:Companion} since $\sup_{n\in\N}\max_{i\in[p]}\|(\bTheta_{01}^{h})_{i,1:p}\|_{\ell_{2}}=\sup_{n\in\N}(1-5/n)^{h}=1$ for all $h\in\N$.
\end{itemize}

\subsection{Implementation and Performance Measure}
The Lasso, post-Lasso, and sqrt-Lasso are implemented as follows:
\begin{itemize}
    \item The weighted Lasso  $\widehat\bbeta_i(\lambda_n^\ast,\widehat\bUpsilon_i^{(K)})$ in \eqref{eq:LASSOVector} tuned via Algorithm~\ref{alg:Data-Driven-Penalization}.
    \item The post-Lasso $\check\bbeta_i(\lambda_n^\ast,\check\bUpsilon_i^{(K)})$ in \eqref{eq:postestim} tuned via Algorithm~\ref{alg:Data-Driven-Penalization-with-Refitting}, refitting in every step.\footnote{Theorem~\ref{alg:Data-Driven-Penalization-with-Refitting} provides performance guarantees for this algorithm that are identical to those in Theorem~\ref{thm:postLasso} for the post-Lasso which refits $\widehat\bbeta_i(\lambda_n^\ast,\widehat\bUpsilon_i^{(K)})$ from the final step of Algorithm~\ref{alg:Data-Driven-Penalization} alone.}
    \item The sqrt-Lasso $\dot{\bbeta}_{i}(\lambda_n^\ast/2,\dot{\bUpsilon})$ in \eqref{eq:pensqrtLasso}.
\end{itemize}
We specify $c$ and $\gamma_n$ as in \eqref{eq:tuning_c_and_gamma} and $K=15$.

For all procedures, the intercept is (correctly) enforced to be zero and the number of lags included is the smallest one ensuring that the model is correctly specified (i.e.~four in Design \hyperlink{Design C}{C} and one for all other designs). The number of parameters to be estimated therefore ranges from $256$ in a VAR(1) of size $p=16$ to 65{,}536 in a VAR(4) of size $p=128.$ 

All simulations are carried out in \texttt{R} with user-written functions for each of the three above-mentioned estimators, taking as (basic) inputs the time series data (a matrix of dimensions $(n+q)\times p$) and desired lag length $(q)$.\footnote{Each Lasso problem \eqref{eq:LASSOVector} is solved using \texttt{glmnet}. For the sqrt-Lasso, we downloaded the Matlab\textsuperscript{\textregistered} implementation of the coordinatewise method from Alexandre Belloni's website and translated it to \texttt{R}.}

\subsection{Results}\label{subsec:Results}
We study the maximum row-wise $\ell_{2}$-estimation error $\max_{i\in[p]}\Vert\widehat{\bbeta}_i-\bbeta_{0i}\Vert_{\ell_{2}}$
for which we report the average across the 1,000 Monte Carlo replications relative to that of the weighted Lasso (henceforth: Lasso). Thus, numbers less than one mean a procedure outperforms the Lasso. Figure \ref{fig:SimulationsRelativeErrors} plots the relative errors as a function of the sample size $n$, system size $p$ and design. The raw (non-relative) estimation errors can be found in Figure \ref{fig:SimulationsAbsoluteErrors} of Section \ref{sec:appsim} in the appendix.

\begin{figure}\label{fig:SimulationsRelativeErrors}\caption{Average Estimation Errors relative to Weighted Lasso}
\centering{}\includegraphics[width=.90\textwidth]{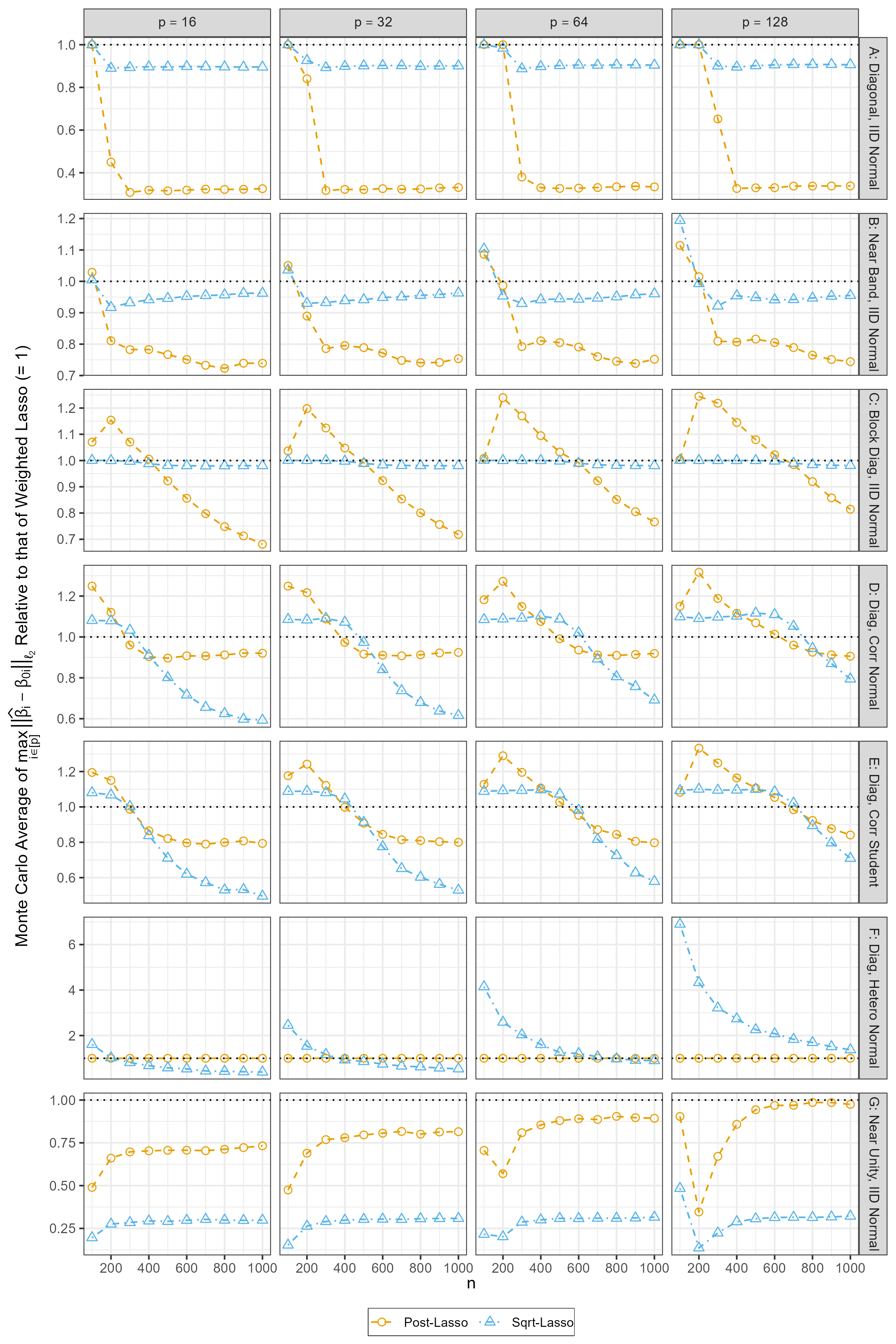}
\end{figure}

In Design \hyperlink{Design A}{A} each equation has only one relevant variable, which is uncorrelated with the irrelevant ones. Thus, the Lasso does well in terms of model selection and, as a result, the post-Lasso works very well here (for all~$p$). Design \hyperlink{Design B}{B} is qualitatively similar, but note that the sqrt- and post-Lasso are less precise than the Lasso for~$p=128$ and~$n$ sufficiently small. In the block-diagonal Design \hyperlink{Design C}{C}, the Lasso and the sqrt-Lasso perform similarly, but for~$p/n$ large the post-Lasso performs worse.

Design \hyperlink{Design D}{D} is challenging in the sense that the single relevant variable in each equation is highly correlated with the irrelevant ones. This makes model selection difficult and explains why the post-Lasso does relatively worse in this setting. The sqrt-Lasso also suffers for~$p/n$ large. Design \hyperlink{Design E}{E} adds heavy tails to Design \hyperlink{Design D}{D}. This leaves the relative estimation errors unaffected.

Design \hyperlink{Design F}{F} confirms that the sqrt-Lasso may suffer substantially under heteroskedasticity (in particular for large~$p$), cf.~the discussion surrounding Theorem \ref{thm:Rates-for-sqrtLASSO-data-driven-loadings}. Finally, Design \hyperlink{Design G}{G} modifies Design \hyperlink{Design A}{A} to a near unit root design. This results in a reversal of the relative performance of the post- and sqrt-Lasso.

\section{Empirical Illustration\label{sec:Empirical-Illustration}}
We apply the methods from the above simulations to forecast a large
set of macroeconomic variables using the Federal Reserve Economic Data (FRED) monthly data (MD)
database. This database is maintained and regularly updated by the Federal
Reserve Bank of St.~Louis and described in detail on Michael W.~McCracken's
website.\footnote{\href{https://research.stlouisfed.org/econ/mccracken/fred-databases/}{https://research.stlouisfed.org/econ/mccracken/fred-databases/}. See also \citet{mccracken_FRED_2016}.} The data is pre-processed in a standard manner using the
Matlab\textsuperscript{\textregistered} code on McCracken's website leaving us with $758$ observations on $p=127$
macroeconomic variables covering March 1959 through April 2022.\footnote{The data pre-processing amounts to carrying out (deterministic) stationarity
inducing transformations (\texttt{prepare\_missing.m}), then removing
outliers (\texttt{remove\_outliers.m}), and finally replacing missing
values with the unconditional average of the corresponding series
(as in the initialization of \texttt{factors\_em.m}).}

\subsection{Forecasting}

In each of the last $120$ months (i.e.~ten years) of the sample we
forecast the $p=127$ variables one month ahead.
Specifically, we estimate VAR$(q)$ models of orders $q\in[12]$
using a rolling estimation window of size $n=758-120-12=626,$ and
create one-month-ahead out-of-sample forecasts as $\widehat{\bY}_{t+1}:=\widehat{\bmu}+\sum_{j=1}^{q}\widehat{\bTheta}_{j}\bY_{t+1-j}$, where $\widehat{\bmu}$ and $\{\widehat{\bTheta}_{j}\}_{j=1}^q$ have not seen $\bY_{t+1}$.
The three methods (weighted Lasso, post-Lasso and sqrt-Lasso) are implemented as in the simulations in Section~\ref{sec:Simulations}, but we now include an unpenalized intercept in each equation to account for non-zero means of the variables.\footnote{See Section \ref{subsec:IncludingIntercepts} for a modification of Algorithm \ref{alg:Data-Driven-Penalization}, which includes unpenalized intercepts.} 
For each method, each $q$,
and each month $t\in\left\{ 639,640,\dotsc,758\right\} $ to be forecast,
we calculate the forecast errors $\widehat{\bY}_{t}-\bY_{t}.$
Due to the different scaling of the variables, we then calculate the inverse-variance-weighted squared 
forecast error (IVWSFE)
\[
\mathrm{IVWSFE}_{t}:=\sum_{i=1}^{p}(\widehat{Y}_{t,i}-Y_{t,i})^{2}/\widehat{\sigma}_{Y_{i}}^{2},
\]
where $\widehat{\sigma}_{Y_{i}}^{2}$ denotes the sample variance
of the $i$\textsuperscript{th} variable (over the entire pre-processed series).

Figure \ref{fig:FRED-MD:-Forecasting-Performance} shows both the average and 95\textsuperscript{th} percentile of the 120 IVWSFEs of the weighted Lasso, post-Lasso and sqrt-Lasso, respectively. To facilitate comparison, these measures are put relative to that of the weighted Lasso with $q=1$. In terms of the \emph{average} IVWSFE error, the sqrt-Lasso does slightly better than both the weighted Lasso and post-Lasso, no matter the choice of lag length $q$, hovering between 96 and 97 pct.~of VAR(1) Lasso. However, when looking at the \emph{95\textsuperscript{th} percentile} IVWSFE, the picture is somewhat reversed, in that the post-Lasso here outperforms the sqrt-Lasso (and weighted Lasso). Thus, these methods cannot be ranked in terms of the quality of their forecasts, in general.\footnote{We also experimented with equation-by-equation ordinary least squares (OLS). With a lag order of one, OLS led to an average IVWSFE 40 percent higher than our weighted Lasso benchmark (for $q=1$). For higher lag orders, the implied design matrix repeatedly fell short of full rank, thus preventing a meaningful comparison.}

\begin{figure}[htb]
\caption{FRED-MD: Forecasting Performance\label{fig:FRED-MD:-Forecasting-Performance}}

\centering{}\includegraphics[width=0.495\textwidth]{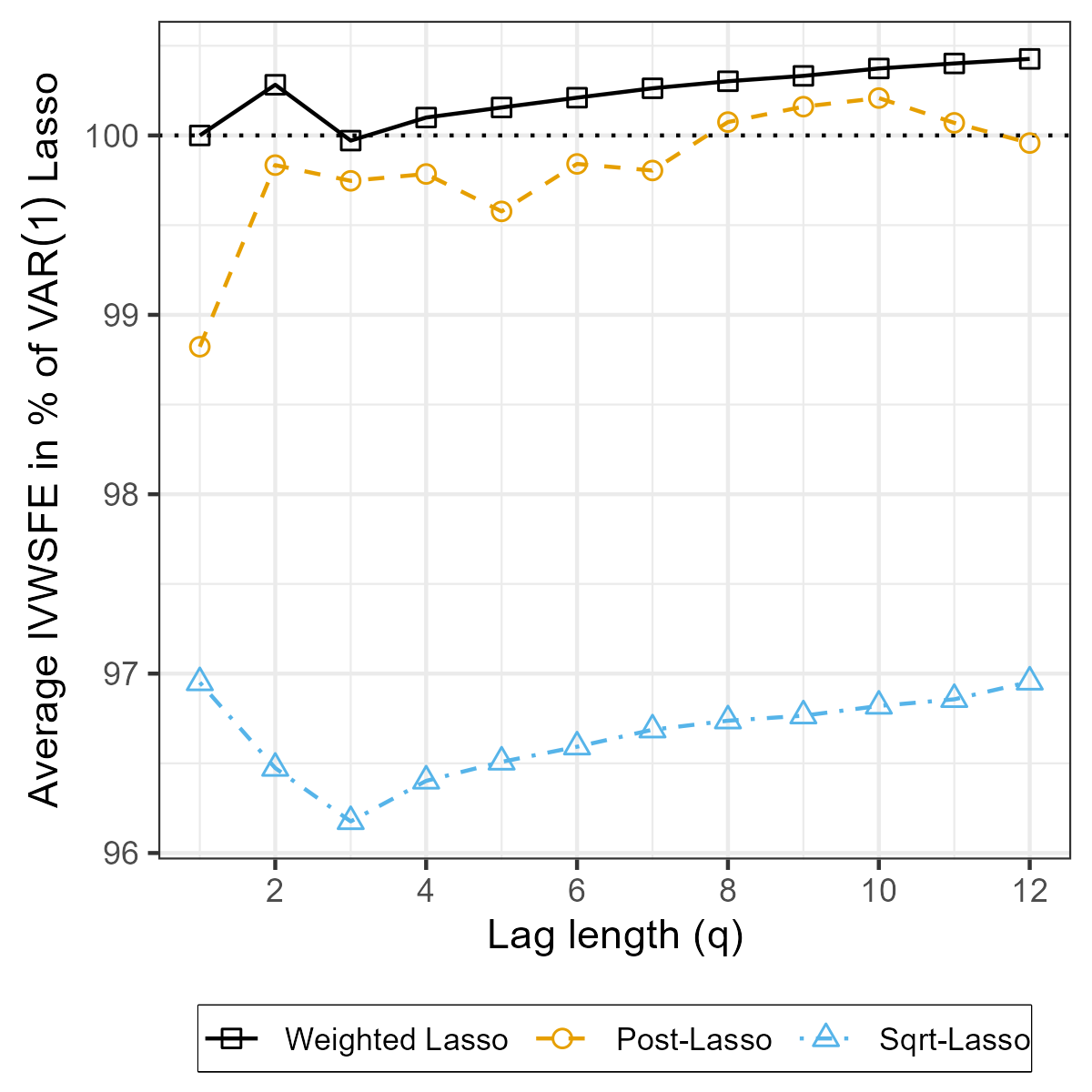}
\includegraphics[width=0.495\textwidth]{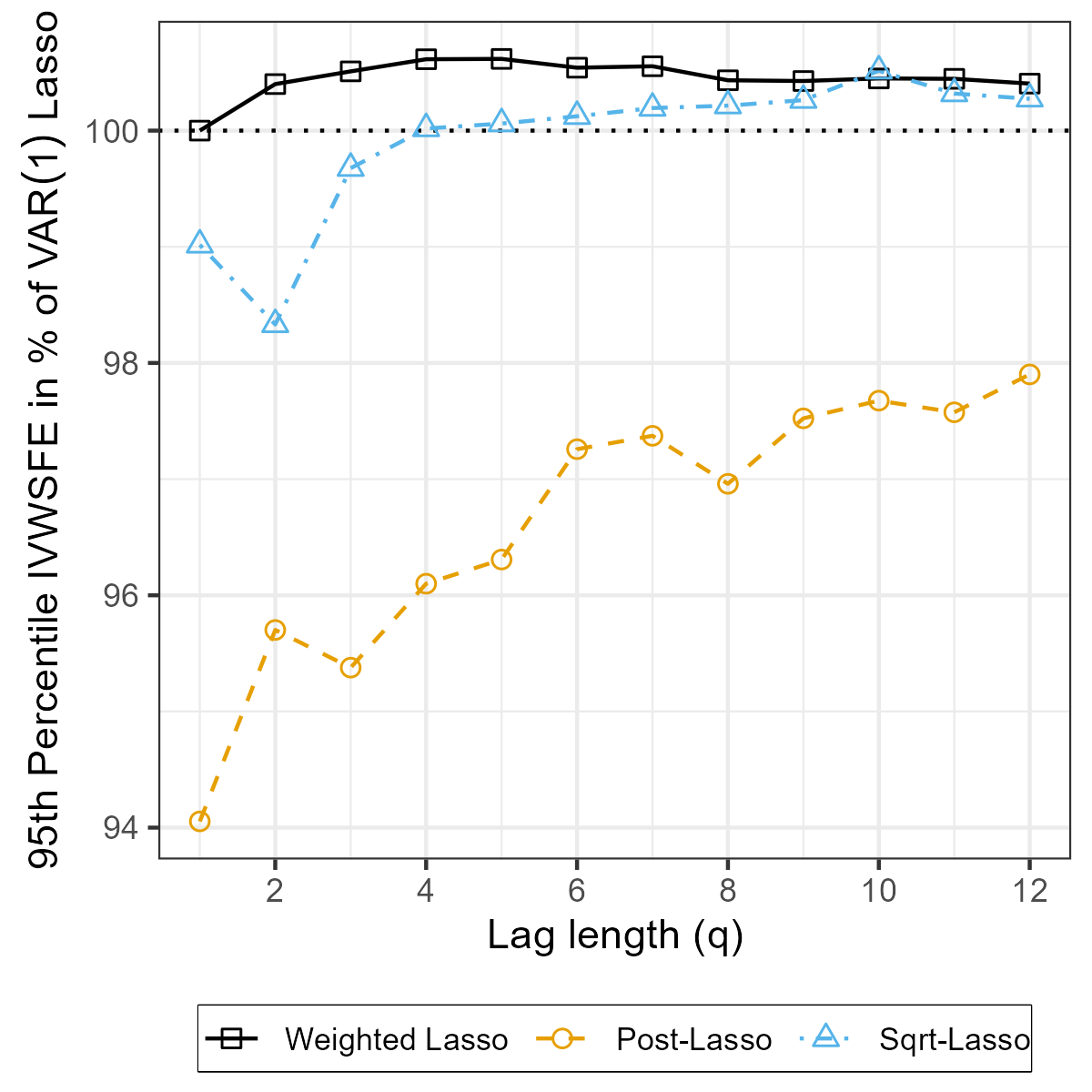}
\end{figure}


\bibliographystyle{apalike}
\bibliography{My_Library} 

\appendix


\newpage{}
\setcounter{page}{1}
\doparttoc 
\faketableofcontents 
\part{Supplementary Appendices} 
\singlespacing 
\parttoc 
\doublespacing 

\newpage

\section{Including Unpenalized Intercepts}\label{subsec:IncludingIntercepts}
Incorporating model intercepts $\bmu_0\in\R^p$, (\ref{eq:VARvector}) becomes
\[
\bY_{t}=\bmu_0+\sum_{j=1}^{q}\bTheta_{0j}\bY_{t-j}+\bepsilon_{t},\quad t\in\Z.
\]
If one is confident that intercepts belong in the model, then they should not be penalized. The resulting Lasso estimator therefore takes the form
\begin{equation}
\big(\widehat{\mu}_i(\lambda,\widehat{\bUpsilon}_{i}),\widehat{\bbeta}_{i}(\lambda,\widehat{\bUpsilon}_{i})\big)\in\argmin_{(\mu,\bbeta)\in\R^{1+pq}}\left\{ \frac{1}{n}\sum_{t=1}^{n}\left(Y_{t,i}-\mu-\bZ_{t-1}^{\top}\bbeta\right)^{2}+\frac{\lambda}{n}\Vert\widehat{\bUpsilon}_{i}\bbeta\Vert_{\ell_{1}}\right\}. \label{eq:LASSOVectorWithIntercept}
\end{equation}
The first-order condition for the intercept reveals the relation
\begin{equation}\label{eq:InterceptSlopeRelation}
\widehat{\mu}_i(\lambda,\widehat{\bUpsilon}_{i})=\overline{Y}_{\! i}-\overline{\bZ}_{\!-1}\widehat{\bbeta}_{i}(\lambda,\widehat{\bUpsilon}_{i}),
\end{equation}
where
\begin{equation}\label{eq:MeanYandZ}
\overline{Y}_{\!i}:=\frac{1}{n}\sum_{t=1}^{n}Y_{t,i}\quad\mathrm{and}\quad\overline{\bZ}_{\!-1}:=\frac{1}{n}\sum_{t=0}^{n-1}{\bZ}_{t}.     
\end{equation}
Concentrating out the intercept in (\ref{eq:LASSOVectorWithIntercept}), we therefore see that
\begin{equation}
\widehat{\bbeta}_{i}(\lambda,\widehat{\bUpsilon}_{i})\in\argmin_{\bbeta\in\R^{pq}}\left\{ \frac{1}{n}\sum_{t=1}^{n}(\ddot{Y}_{t,i}-\ddot{\bZ}_{t-1}^{\top}\bbeta)^{2}+\frac{\lambda}{n}\Vert\widehat{\bUpsilon}_{i}\bbeta\Vert_{\ell_{1}}\right\},
\label{eq:LASSOVectorDemeaned}
\end{equation}
where
\begin{equation}\label{eq:DemeanedYandZ}
\ddot{Y}_{t,i}:=Y_{t,i}-\overline{Y}_{\!i}\quad\mathrm{and}\quad\ddot{\bZ}_{t-1}:=\bZ_{t-1}-\overline{\bZ}_{\!-1}.     
\end{equation}
The problem in (\ref{eq:LASSOVectorDemeaned}) is of the form in (\ref{eq:LASSOVector}) except that we have here replaced the outcome variable and regressors with their demeaned counterparts.
The data-driven penalization algorithm now runs parallel to Algorithm \ref{alg:Data-Driven-Penalization}.
\begin{mdframed}
\begin{lyxalgorithm}
[\textbf{Data-Driven Penalization with Unpenalized Intercepts}]\label{alg:Data-Driven-Penalization-with-Unpenalized-Intercepts}\ \\
\textbf{Initialize}: Specify the penalty level in (\ref{eq:LASSOVectorDemeaned}) as
\begin{equation*}
\lambda^{\ast}_n :=2c\sqrt{n}\Phi^{-1}\big(1-\gamma_n/(2p^{2}q)\big),
\end{equation*}
and specify the initial penalty loadings as 
\begin{equation*}
\widehat{\upsilon}_{i,j}^{\left(0\right)}:=\sqrt{\frac{1}{n}\sum_{t=1}^{n}\ddot{Y}_{t,i}^{2}\ddot Z_{t-1,j}^{2}},\quad i\in[p],\quad j\in[pq].
\end{equation*}
Use $\lambda^{\ast}_n$ and $\widehat{\bUpsilon}_{i}^{\left(0\right)}:=\mathrm{diag}(\widehat{\upsilon}_{i,1}^{\left(0\right)},\dotsc,\widehat{\upsilon}_{i,pq}^{\left(0\right)})$
to compute a Lasso estimate $\widehat{\bbeta}_{i}^{\left(0\right)}:=\widehat{\bbeta}_{i}(\lambda^{\ast}_n,\widehat{\bUpsilon}_{i}^{\left(0\right)})$
via (\ref{eq:LASSOVectorDemeaned}) for each $i\in[p]$. Store the residuals $\widehat{\ddot\varepsilon}_{t,i}^{\left(0\right)}:=\ddot Y_{t,i}-\ddot\bZ_{t-1}^{\top}\widehat{\bbeta}_{i}^{\left(0\right)},t\in[n],i\in[p]$, and set $k=1$.

\noindent\textbf{Update:}
While $k\leqslant K$, specify the penalty loadings as
\begin{equation*}
\widehat{\upsilon}_{i,j}^{\left(k\right)}:=\sqrt{\frac{1}{n}\sum_{t=1}^{n}\widehat{\ddot\varepsilon}_{t,i}^{\left[k-1\right]2}\ddot Z_{t-1,j}^{2}},\quad i\in[p],\quad j\in[pq].
\end{equation*}
Use $\lambda_n^{\ast}$ and $\widehat{\bUpsilon}_{i}^{\left(k\right)}:=\mathrm{diag}(\widehat{\upsilon}_{i,1}^{\left(k\right)},\dotsc,\widehat{\upsilon}_{i,pq}^{\left(k\right)})$
to compute a Lasso estimate as $\widehat{\bbeta}_{i}^{\left(k\right)}:=\widehat{\bbeta}_{i}(\lambda_n^{\ast},\widehat{\bUpsilon}_{i}^{\left(k\right)})$
via (\ref{eq:LASSOVectorDemeaned}) for each $i\in[p]$. Store the residuals $\widehat{\ddot\varepsilon}_{t,i}^{\left(k\right)}:=\ddot Y_{t,i}-\ddot\bZ_{t-1}^{\top}\widehat{\bbeta}_{i}^{\left(k\right)},t\in[n],i\in[p]$,
and increment $k\leftarrow k+1.$
\end{lyxalgorithm}
\end{mdframed}

\noindent Algorithm \ref{alg:Data-Driven-Penalization-with-Unpenalized-Intercepts} yields the (slope) estimates $\widehat\bbeta_i(\lambda_n^\ast,\widehat\bUpsilon_i^{(K)}),i\in[p]$. Using (\ref{eq:InterceptSlopeRelation}) and (\ref{eq:MeanYandZ}), the implied intercept estimates are $\widehat\mu_i(\lambda_n^\ast,\widehat\bUpsilon_i^{(K)})=\overline Y_{\!i}-\overline{\bZ}_{\!-1}^{\top}\widehat\bbeta_i(\lambda_n^\ast,\widehat\bUpsilon_i^{(K)}),i\in[p]$.
Each step in Algorithm \ref{alg:Data-Driven-Penalization-with-Unpenalized-Intercepts} could involve refitting after Lasso selection, thus replacing the Lasso residuals with their post-Lasso counterparts---see Section \ref{sec:OLSforweights} for details.

\section{Auxiliary Results}

\subsection{Functional Dependence Measures for Strictly Stationary Processes\label{subsec:Functional-Dependence-Measures}}

Let $\left\{ \bseta_{t}\right\} _{t\in\Z}$ be independent
and identically distributed $\R^{p_{1}}$-valued random variables $(p_1\in\N)$,
and let $\{\bX_{t}\}_{t\in\Z}$ be $\R^{p_2}$-valued random variables $(p_2\in\N)$ having the
causal representation
\[
\bX_{t}=\bG\del[0]{\dotsc,\bseta_{t-1},\bseta_{t}},\quad t\in\Z,
\]
where $\bG$ is a measurable function of the present and past $\bseta_{t}$.
Strict stationarity of the stochastic process $\{\bX_{t}\}_{t\in\Z}$ follows from
$\bG$ not depending on $t$. Let $\{\bseta_{t}^{\ast}\}_{t\in\Z}$
be an independent copy of $\{\bseta_{t}\}_{t\in\Z}$.
Assuming that the $i$\textsuperscript{th} coordinate $X_{t,i}=G_{i}(\dotsc,\bseta_{t-1},\bseta_{t})$
of $\bX_{t}$ satisfies $\E[|X_{t,i}|^{r}]<\infty$ for some
$r\in[1,\infty)$, we define the functional (or physical) dependence
measure for the coordinate process $\{X_{t,i}\}_{t\in\Z}$ as
\[
\Delta_{r}^{X_{\cdot i}}\left(h\right):=\enVert[1]{X_{t,i}-G_{i}\del[0]{\dotsc,\bseta_{t-h-1}^{\ast},\bseta_{t-h}^{\ast},\bseta_{t-h+1},\dotsc,\bseta_{t}}}_{r}\quad \text{for}\quad h\in\N.
\]
The corresponding expression $\Delta_{r}^{X_{\cdot i}}(0)$ for $h=0$ is at most $2\enVert[0]{X_{0,i}}_r$. It will be convenient to further define $\Delta_{r}^{X_{\cdot i}}(h):=0$ for integers $h<0$. The coordinate process $\{X_{t,i}\}_{t\in\mathbb{Z}}$
is said to be a \emph{geometric moment contraction }[\citet{wu_limit_2004}]
if there are constants $b_{1},b_{2}\in(0,\infty)$ and $\tau\in(0,1]$
such that
\begin{equation}
\Delta_{r}^{X_{\cdot i}}\left(h\right)\leqslant b_{1}\mathrm{e}^{-b_{2}h^{\tau}}\text{ for all }h\in\mathbb{N}.\label{eq:GMCDefn}
\end{equation}

\subsection{Functional Dependence Measure Calculus}
\begin{lem}
\label{lem:Delta_r} Let $\{\left(X_{t,1},X_{t,2}\right)\}_{t\in\Z},$
be a strictly stationary causal process with values in~$\mathbb{R}^2$ and $\mathrm{E}[|X_{0,1}|^{2r}]<\infty$ as well as $\mathrm{E}[|X_{0,2}|^{2r}]<\infty$
for some $r\in[1,\infty)$. Then the following holds:
 \begin{enumerate}
 \item \label{enu:Delta_rFromPairToProduct} The product process $\{X_{t,1}X_{t,2}\}_{t\in\Z}$ is strictly stationary and causal, and its functional dependence measure satisfies
\[
\Delta_{r}^{X_{\cdot1}X_{\cdot2}}\left(h\right)\leqslant\enVert[0] {X_{0,2}}_{2r}\Delta_{2r}^{X_{\cdot1}}\left(h\right)+\enVert[0] {X_{0,1}}_{2r}\Delta_{2r}^{X_{\cdot2}}\left(h\right),\quad h\in\N.
\]
\item \label{enu:Delta_r_sum} With $Y_t := X_{t,1} + \cdots + X_{t-m,1}$ for some $m\in\N$, it holds that $\{Y_t\}_{t\in\Z}$ is strictly stationary and causal, and for $h> m$, we have
\[
\Delta^{Y_{\cdot}}_{r}(h) \leqslant \sum_{i=0}^{m}\Delta^{X_{\cdot 1}}_{r}(h-i).
\]
\item \label{enu:Delta_r_skeleton} With $k\in \N$ let $Z_{t} := X_{tk,1}$. It holds that $\{Z_t\}_{t\in\Z}$ is strictly stationary and causal with
\[
\Delta^{Z_{\cdot}}_{r}(h) = \Delta^{X_{\cdot 1}}_{r}(hk),\quad h\in\N.
\]
\end{enumerate}
\end{lem}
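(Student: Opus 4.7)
The plan is as follows. All three items share the same coupling construction: for each $h\in\N$ and $i\in\{1,2\}$, let $\widetilde{X}_{t,i}^{(h)}:=G_i(\dotsc,\bseta_{t-h-1}^{\ast},\bseta_{t-h}^{\ast},\bseta_{t-h+1},\dotsc,\bseta_t)$ denote the coupled version of $X_{t,i}$. Because $\{\bseta_t^{\ast}\}_{t\in\Z}$ is an independent copy of the i.i.d.~sequence $\{\bseta_t\}_{t\in\Z}$, the process $\{\widetilde{X}_{t,i}^{(h)}\}_{t\in\Z}$ is equal in distribution to $\{X_{t,i}\}_{t\in\Z}$, and in particular $\Vert \widetilde{X}_{t,i}^{(h)}\Vert_{r}=\Vert X_{0,i}\Vert_{r}$ for any admissible $r$. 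Stationarity and causality of the transformed processes in all three items follow directly from those of $\{(X_{t,1},X_{t,2})\}_{t\in\Z}$ and the fact that the transformations are measurable functions of present and past $\bseta_t$.

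For part~(1), I would add and subtract $\widetilde{X}_{t,1}^{(h)}X_{t,2}$ to obtain
\[
X_{t,1}X_{t,2}-\widetilde{X}_{t,1}^{(h)}\widetilde{X}_{t,2}^{(h)}=(X_{t,1}-\widetilde{X}_{t,1}^{(h)})X_{t,2}+\widetilde{X}_{t,1}^{(h)}(X_{t,2}-\widetilde{X}_{t,2}^{(h)}).
\]
The triangle inequality in $L_r$, followed by Hölder's inequality with conjugate exponents $(2,2)$ on each summand, bounds the two terms by $\Vert X_{t,1}-\widetilde{X}_{t,1}^{(h)}\Vert_{2r}\Vert X_{t,2}\Vert_{2r}$ and $\Vert \widetilde{X}_{t,1}^{(h)}\Vert_{2r}\Vert X_{t,2}-\widetilde{X}_{t,2}^{(h)}\Vert_{2r}$, respectively. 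Strict stationarity of the coordinate processes and the distributional equalities noted above convert these into the claimed bound.

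For part~(2), I would exploit that the coupling at lag $h$ replaces $\bseta_s$ for every $s\leqslant t-h$. Since each summand $X_{t-i,1}$ is causal in $\{\bseta_s\}_{s\leqslant t-i}$ and $h>m\geqslant i$, the coupling falls entirely within the past of $X_{t-i,1}$, and the relative lag within $X_{t-i,1}$ equals $(t-i)-(t-h)=h-i\geqslant 1$. Hence $\Vert X_{t-i,1}-\widetilde{X}_{t-i,1}^{(h)}\Vert_r=\Delta_r^{X_{\cdot 1}}(h-i)$, and one application of the triangle inequality to $Y_t-\widetilde{Y}_t^{(h)}=\sum_{i=0}^m(X_{t-i,1}-\widetilde{X}_{t-i,1}^{(h)})$ delivers the result.

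For part~(3), the argument is essentially notational. I would rewrite $Z_t=X_{tk,1}$ as a causal function of the \emph{block} i.i.d.~process $\bm{\zeta}_t:=(\bseta_{(t-1)k+1},\dotsc,\bseta_{tk})$, taking values in $\R^{p_1 k}$. The $h$-lag coupling of $\{Z_t\}$ then replaces $\bm{\zeta}_s$ by independent copies for every $s\leqslant t-h$, which is identical to replacing $\bseta_j$ by independent copies for every $j\leqslant (t-h)k=tk-hk$ inside $X_{tk,1}$. By the very definition of $\Delta_r^{X_{\cdot 1}}$, this produces the quantity $\Delta_r^{X_{\cdot 1}}(hk)$, giving equality (not merely an inequality). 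Across the three parts, I do not expect any genuine obstacle; the only subtlety worth flagging is the strict inequality $h>m$ in part~(2), which is what guarantees that each relative lag $h-i$ is positive, so each summand is bounded by a genuine dependence measure rather than by the crude zero-lag bound $2\Vert X_{0,1}\Vert_r$.
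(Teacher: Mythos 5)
Your proposal is correct and follows essentially the same route as the paper's proof: the same add-and-subtract/Hölder decomposition for the product in Part (1), the same termwise triangle inequality exploiting $h>m$ in Part (2), and the same re-indexing of $Z_t$ as a causal function of the blocked i.i.d.\ innovations $(\bseta_{(t-1)k+1},\dotsc,\bseta_{tk})$ to obtain the exact identity $\Delta_r^{Z_\cdot}(h)=\Delta_r^{X_{\cdot 1}}(hk)$ in Part (3). No gaps.
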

\begin{proof}
Since $\{\left(X_{t,1},X_{t,2}\right)\}_{t\in\Z}$ is a strictly
stationary causal process, it has the representation 
\[
\left(X_{t,1},X_{t,2}\right)=\bG\left(\dotsc,\bseta_{t-1},\bseta_{t}\right),
\]
for some $p_1\in\N$, some i.i.d.~process $\{\bseta_{t}\}_{t\in\Z}$---each $\bseta_t$ being a random element of $\R^{p_1}$---and
some measurable function $\bG$. To show Claim \ref{enu:Delta_rFromPairToProduct}, note that the product process $\{X_{t,1}X_{t,2}\}_{t\in\mathbb{Z}}$
is strictly stationary, causal, and has finite $r$\textsuperscript{th} absolute moment by H{\"o}lder's inequality.
Let $\{\bseta_{t}^{\ast}\}_{t\in\Z}$ be an independent
copy of $\{\bseta_{t}\}_{t\in\Z}$, $h\in\N$
and denote 
\begin{align*}
\del[0]{X_{t,1}',X_{t,2}'} & :=\bG\del[0]{\dotsc,\bseta_{t-h-1}^{\ast},\bseta_{t-h}^{\ast},\bseta_{t-h+1},\dotsc,\bseta_{t}}.
\end{align*}
Then
\begin{align*}
\Delta_{r}^{X_{\cdot1}X_{\cdot2}}\left(h\right) & =\enVert[1]{ X_{t,1}X_{t,2}-X_{t,1}'X_{t,2}+X_{t,1}'X_{t,2}-X_{t,1}'X_{t,2}'}_{r}\\
 & \leqslant\enVert[1]{\del[0]{X_{t,1}-X_{t,1}'}X_{t,2}}_{r}+\enVert[1]{X_{t,1}'\del[0]{X_{t,2}-X_{t,2}'}}_{r}\tag{sub-additivity}\\
 & \leqslant\enVert[0]{X_{t,2}}_{2r}\enVert[0]{X_{t,1}-X_{t,1}'}_{2r}+\enVert[0]{X_{t,1}'}_{2r}\enVert[0]{X_{t,2}-X_{t,2}'}_{2r}\tag{H{\"o}lder}\\
 & =\enVert[0]{X_{0,2}}_{2r}\Delta_{2r}^{X_{\cdot1}}\left(h\right)+\enVert[0]{X_{0,1}}_{2r}\Delta_{2r}^{X_{\cdot2}}\left(h\right).\tag{stationarity}
\end{align*}
To show Claim \ref{enu:Delta_r_sum}, note that 
\[
Y_{t} = G_{1}\left(\dotsc,\bseta_{t-1},\bseta_{t}\right) + \dots + G_{1}\left(\dotsc,\bseta_{t-m-1},\bseta_{t-m}\right), 
\]
which clearly forms a strictly stationary and causal process. Moreover, for $h > m$,
\begin{align*}
    \Delta^{Y_{\cdot}}_{r}(h)  &=  \Big\Vert \Big[ G_{1}\left(\dotsc,\bseta_{t-1},\bseta_{t}\right) + \dots + G_{1}\left(\dotsc,\bseta_{t-m-1},\bseta_{t-m}\right) \Big] \\
    & \qquad -  \Big[ G_1\left(\dotsc,\bseta_{t-h-1}^{\ast},\bseta_{t-h}^{\ast},\bseta_{t-h+1},\dotsc,\bseta_{t}\right) \\
    & \qquad\qquad + \cdots + G_1\left(\dotsc,\bseta_{t-h-1}^{\ast},\bseta_{t-h}^{\ast},\bseta_{t-h+1},\dotsc,\bseta_{t-m}\right) \Big]  \Big\Vert_r \\
    &\leqslant \left\Vert G_{1}\left(\dotsc,\bseta_{t-1},\bseta_{t}\right) - G_1\left(\dotsc,\bseta_{t-h-1}^{\ast},\bseta_{t-h}^{\ast},\bseta_{t-h+1},\dotsc,\bseta_{t}\right)\right\Vert_r \\ & \qquad + \cdots + \left\Vert G_{1}\left(\dotsc,\bseta_{t-m-1},\bseta_{t-m}\right) - G_1\left(\dotsc,\bseta_{t-h-1}^{\ast},\bseta_{t-h}^{\ast},\bseta_{t-h+1},\dotsc,\bseta_{t-m}\right) \right\Vert_r \\
    &= \Delta^{X_{\cdot 1}}_{r}(h) + \Delta^{X_{\cdot 1}}_{r}(h-1) + \dots + \Delta^{X_{\cdot 1}}_{r}(h-m).
\end{align*}
Turning to Claim \ref{enu:Delta_r_skeleton}, denote $\overline{\bseta}_{t} := (\bseta_{(t-1)k+1}^{\top},\dots,\bseta_{tk}^{\top})^{\top}$ (a random element of $\R^{p_1k}$), and define (sub-vector selector) functions $\pi^{(k)}_i : \R^{p_1k} \to \R^{p_1},i\in[k]$, by $\pi^{(k)}_i(\overline{\bseta}_{t}) = \bseta_{(t-1)k+i}$. Then we can represent $Z_t$ as
\begin{align*}
Z_t &=  G_1\left( \dotsc, \bseta_{(t-1)k},\bseta_{(t-1)k+1},\bseta_{(t-1)k+2},\dotsc,\bseta_{tk} \right) \\
    &= G_1\left(\dots, \pi^{(k)}_k(\overline{\bseta}_{t-1}),\pi^{(k)}_1(\overline{\bseta}_{t}),\pi^{(k)}_2(\overline{\bseta}_{t}),\dots,\pi^{(k)}_k(\overline{\bseta}_{t})\right)     =: \overline{G}_1\left( \dotsc, \overline{\bseta}_{t-1}, \overline{\bseta}_{t} \right).
\end{align*}
From the previous display we see that $\{Z_t\}_{t\in\Z}$ is a strictly stationary and causal process given in terms of the i.i.d.~process $\{\overline{\bseta}_{t}\}_{t\in\Z}$---each $\overline{\bseta}_{t}$ being a random element of $\R^{p_1k}$---and the implicitly defined measurable function $\overline{G}_1$. Moreover, denoting $\overline{\bseta}^{\ast}_{t} := ((\bseta^{\ast}_{(t-1)k+1})^{\top},\dots,(\bseta^{\ast}_{tk})^{\top})^{\top}$, for $h \in \N$ we have
\begin{align*}
   \Delta^{Z_{\cdot}}_{r}(h) 
   &= \enVert[2]{\overline{G}_1\left( \dotsc, \overline{\bseta}_{t-1}, \overline{\bseta}_{t} \right)  -\overline{G}_1\left(\dotsc, \overline{\bseta}_{t-h-1}^{\ast},\overline{\bseta}_{t-h}^{\ast},\overline{\bseta}_{t-h+1},\dotsc,\overline{\bseta}_{t}\right)}_r\\
   &=   \Big\Vert G_1\left( \dotsc, \bseta_{(t-1)k-1},\bseta_{(t-1)k},\dotsc,\bseta_{tk-1},\bseta_{tk} \right) \\
   &\qquad - G_1\del[1]{\dotsc, \bseta_{(t-h)k-1}^{\ast} ,\bseta_{(t-h)k}^{\ast} ,\bseta_{(t-h)k+1} ,\dotsc, \bseta_{(t-1)k-1},\bseta_{(t-1)k},\dotsc,\bseta_{tk-1},\bseta_{tk}}   \Big\Vert_r  \\
   &=  \Delta^{X_{\cdot 1}}_{r}(hk).
\end{align*}

\end{proof}

\subsection{Maximal Inequality with Functional Dependence}
We here establish a maximal inequality for an array of $T\in\N$ possibly dependent random vectors of dimension $d_T$ possibly growing with $T$.
\begin{lem}
\label{lem:MaximalInequalityFunctionalDependence} Let $\{\bX_{t}\}_{t\in\mathbb{Z}}$
be a strictly stationary causal process with each $\bX_t$ a random element of $\R^{d_T}$ for some $d_T\in\{2,3,\dotsc\}$,
and $\max_{i\in[d_T]}\Vert X_{0,i}\Vert_{\psi_{\xi}}\leqslant K$
for some constants $\xi,K\in(0,\infty)$. Then for any $r\in[1,\infty)$ and
any $m_T\in[T]$,
\[
\max_{i\in[d_T]}\envert[4]{\sum_{t=1}^{T}\del[1]{X_{t,i}-\E\left[X_{0,i}\right]}} \lesssim_{\P}T d_T^{1/r}\Delta_{r,T}+m_T\sqrt{l_{T}\ln\left(d_T T\right)}+m_T\del[1]{\ln(d_T T)}^{\overline{\xi}}
\]
as $T\to\infty$, where
\[
l_T:=\lceil T/m_T\rceil,\quad\Delta_{r,T}:=\max_{i\in[d_T]}\max_{u\in\{m_T + 1,\dotsc,2m_T\}}\Delta_{r}^{X_{\cdot,i}}(u)
\quad\text{and}\quad \overline{\xi}:=\frac{1}{\xi}+\frac{1}{\min\{\xi,1\}}.
\]

\end{lem}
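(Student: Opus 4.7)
The plan is to decompose the centered sums into blocks of length $m_T$, couple each block to an independent surrogate using the causal/functional representation of $\{\bX_t\}$, and finally apply a Bernstein-type maximal inequality to the independent blocks. First, I would replace $X_{t,i}$ by $Y_{t,i} := X_{t,i} - \mathrm{E}[X_{0,i}]$, which preserves strict stationarity, causality, and (up to a constant) the sub-Weibull$(\xi)$ norm. I then partition $[T]$ into $l_T = \lceil T/m_T\rceil$ consecutive blocks $I_1,\dots,I_{l_T}$ of length at most $m_T$, set $B_{k,i} := \sum_{t\in I_k} Y_{t,i}$, and split by parity: $S_T^{(i)} := \sum_{t=1}^T Y_{t,i} = S_T^{(i,\mathrm{odd})} + S_T^{(i,\mathrm{even})}$. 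By symmetry it suffices to handle the odd part.

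For each odd $k$, introduce a jointly independent copy $\{\bseta_s^{*(k)}\}_{s\in\mathbb{Z}}$ of the innovation process and define
\[
\widetilde{X}_{t,i} := G_i\big(\dots,\bseta^{*(k)}_{(k-2)m_T-1},\bseta^{*(k)}_{(k-2)m_T},\bseta_{(k-2)m_T+1},\dots,\bseta_{t}\big),\qquad t\in I_k,
\]
and $\widetilde{B}_{k,i} := \sum_{t\in I_k}(\widetilde{X}_{t,i} - \mathrm{E}[X_{0,i}])$. Three facts are immediate: (a) $\widetilde{X}_{t,i}$ is equal in distribution to $X_{t,i}$, so its sub-Weibull norm is preserved; (b) for distinct odd $k$, the $\widetilde{B}_{k,i}$ depend on disjoint segments $((k-2)m_T,k m_T]$ of the true innovation process and on disjoint fresh copies, hence are mutually independent; (c) for $t\in I_k$, by the definition of the functional dependence measure, $\|\widetilde{X}_{t,i}-X_{t,i}\|_r \leqslant \Delta_r^{X_{\cdot,i}}(t-(k-2)m_T)\leqslant \Delta_{r,T}$, since $t-(k-2)m_T\in\{m_T+1,\dots,2m_T\}$. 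Summing via Minkowski inside each block and across odd $k$, then adding the analogous bound for even blocks, gives $\|S_T^{(i)} - \widetilde{S}_T^{(i)}\|_r\leqslant T\Delta_{r,T}$. Combining with $\|\max_i|\cdot|\|_r\leqslant d_T^{1/r}\max_i\|\cdot\|_r$ and Markov's inequality controls the coupling error in probability by the first term $T d_T^{1/r}\Delta_{r,T}$.

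It remains to bound $\max_i |\widetilde{S}_T^{(i,\mathrm{odd})}|$, a maximum of $d_T$ sums of independent sub-Weibull$(\xi)$ blocks. Sub-additivity of $\|\cdot\|_{\psi_\xi}$ within each block gives $\|\widetilde{B}_{k,i}\|_{\psi_\xi}\lesssim m_T K$ and hence $\|\widetilde{B}_{k,i}\|_2\lesssim m_T K$. I would truncate each block sum at level $L := C m_T K(\ln(d_T T))^{1/\xi}$: a sub-Weibull tail bound combined with a union bound over $d_T\cdot l_T$ pairs $(k,i)$ shows the truncation is lossless on an event of probability $1-o(1)$, and the centering shift it induces is negligible. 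Applying a Bernstein-type inequality to the truncated, bounded, independent sums (with total variance $V\leqslant l_T(m_T K)^2$) and union-bounding over $i\in[d_T]$ delivers
\[
\max_i |\widetilde{S}_T^{(i,\mathrm{odd})}| \lesssim_{\mathrm{P}} \sqrt{V\ln(d_T T)} + L\ln(d_T T) \lesssim m_T\sqrt{l_T\ln(d_T T)} + m_T(\ln(d_T T))^{1/\xi + 1},
\]
and upper bounding $(\ln(d_T T))^1 \leqslant (\ln(d_T T))^{1/\min(\xi,1)}$ (valid since $\ln(d_T T)\geqslant 1$ and $1/\min(\xi,1)\geqslant 1$) turns the second term into $m_T(\ln(d_T T))^{\overline{\xi}}$. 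The identical treatment of the even-block sum completes the proof.

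The main obstacle is twofold. First, the coupling must be arranged so that the surrogates are \emph{genuinely} jointly independent across same-parity blocks \emph{and} so that the block-wise approximation errors fall exactly in the range $\{m_T+1,\dots,2m_T\}$ appearing in $\Delta_{r,T}$; this forces the one-block gap with the replacement occurring at the end of block $k-2$, and the use of a fresh independent copy of $\{\bseta_s\}$ for each block (rather than one copy shared across blocks, which would re-introduce dependence). Second, tracking the exponent $\overline{\xi}=1/\xi+1/\min(\xi,1)$ through the truncation plus Bernstein step requires careful bookkeeping of the interaction between the truncation level $(\ln(d_T T))^{1/\xi}$ and the sub-exponential Bernstein tail $(\ln(d_T T))^1$, coupled with the observation that $\ln(d_T T)\geqslant 1$ allows the final clean statement uniformly in $\xi\in(0,\infty)$.
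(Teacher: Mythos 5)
Your proof is correct and follows the same overall strategy as the paper's: partition $[T]$ into blocks of length $m_T$, approximate by an $m_T$-dependent process, split by block parity to obtain independent surrogates, and apply concentration for sums of independent sub-Weibull blocks. Two implementation details differ, and it is worth noting what each buys. First, you construct the $m$-dependent approximation via a direct coupling, replacing the innovation history up to time $(k-2)m_T$ with a fresh independent copy $\{\bseta^{*(k)}_s\}$ (one such copy per block) and arguing same-parity independence from the disjointness of the true-innovation windows plus the joint independence of the fresh copies. The paper instead defines $\bY_k^{(m)} := \E\bigl[\bY_k \mid \{\bseta_u\}_{u=(k-2)m+1}^{km}\bigr]$, which is a deterministic function of the finite $\bseta$-window $((k-2)m,km]$; same-parity independence is then automatic with no auxiliary randomness, and the approximation error is obtained by inserting the coupled variable inside a conditional expectation and applying conditional Jensen. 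Both routes produce the identical per-time-point bound $\Delta_r^{X_{\cdot,i}}(t-(k-2)m_T)$ for $t$ in block $k$, with lags landing precisely in $\{m_T+1,\dots,2m_T\}$, so the first term $T d_T^{1/r}\Delta_{r,T}$ comes out the same way. Second, the paper invokes an off-the-shelf concentration inequality for averages of independent sub-Weibull random vectors, which yields the exponent $\overline\xi = 1/\xi + 1/\min\{\xi,1\}$ directly, whereas your hand-rolled truncation-plus-Bernstein argument (truncate at $L\asymp m_T K(\ln(d_T T))^{1/\xi}$, block variance $\lesssim (m_T K)^2$, union-bound over $d_T l_T$ pairs for the truncation and over $d_T$ coordinates for Bernstein) produces the exponent $1/\xi + 1$, which you then loosen to $\overline\xi$ using $\ln(d_T T)\geqslant 1$. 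Since $1/\xi + 1 < \overline\xi$ when $\xi<1$ and the two coincide when $\xi\geqslant 1$, your derivation actually establishes a marginally sharper bound in the heavy-tailed regime; the looser exponent in the lemma is simply the price the paper pays for citing the concentration inequality as a black box rather than re-deriving it.
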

\begin{rem}\label{rem:DecreasingDependence}
If $u\mapsto \Delta_{r}^{X_{\cdot,i}}(u)$ is decreasing for all~$i\in[d_T]$, one
can upper bound $\Delta_{r,T}$ by $\max_{i\in[d_T]}\Delta_{r}^{X_{\cdot,i}}(m_T)$.\hfill$\diamondsuit$
\end{rem}
\begin{proof}
[\sc{Proof of Lemma \ref{lem:MaximalInequalityFunctionalDependence}}]
The strategy for proving Lemma \ref{lem:MaximalInequalityFunctionalDependence} is two-fold. First, we approximate the original (functionally) dependent process~$\{\bX_{t}\}_{t\in\mathbb{Z}}$ with an $m$-dependent process. This idea is classic [see, e.g., \cite{zhang2018gaussian}]. The $m$-dependent process can then be handled using tools designed for independent observations.

First, write~$\bX_{t}=\bG(\dotsc,\bseta_{t-1},\bseta_{t})$, which is possible by strict stationarity and causality.
Fix $m_T\in[T]$ and, thus, $l_T=\lceil T/m_T\rceil$ and abbreviate $m:=m_T$, $l:=l_T$, and~$d:=d_T$. Define
\[
\bY_{k}:=\sum_{t=(k-1)m+1}^{\min\{km,T\}}\bX_{t}\quad\text{and}\quad\bY_{k}^{(m)}:=\E\sbr[1]{\bY_{k}\mid\left\{\bseta_{t}\right\}_{t=(k-2)m+1}^{km}},\quad k\in[l].
\]
Then, per iterated expectations and the triangle inequality,
\begin{align*}
 & \max_{i\in[d]}\envert[4]{\sum_{t=1}^{T}\del[1]{X_{t,i}-\E[X_{0,i}]}}=\max_{i\in[d]}\envert[4]{\sum_{k=1}^{l}\del[1]{Y_{k,i}-\E[Y_{k,i}^{(m)}]}}\\
 & \leqslant\max_{i\in[d]}\envert[4]{\sum_{k=1}^{l}\del[1]{Y_{k,i}-Y_{k,i}^{\left(m\right)}}}+\max_{i\in[d]}\envert[4]{\sum_{k=1}^{l}\del[1]{Y_{k,i}^{\left(m\right)}-\E[Y_{k,i}^{\left(m\right)}]}}=:\mathrm{I}_T+\mathrm{II}_T.
\end{align*}
We bound each of the right-hand side terms, starting with $\mathrm{I}_T.$
Since $\max_{i\in[d]}\Vert X_{0,i}\Vert_{\psi_{\xi}}\leqslant K$, all absolute moments of the $X_{0,i}$ exist. Hence, for any~$r\in[1,\infty)$,
\begin{align*}
\E\left[\mathrm{I}_T\right]
 &\leqslant \del[4]{\E\sbr[4]{\max_{i\in[d]}\envert[4]{\sum_{k=1}^{l}\del[1]{Y_{k,i}-Y_{k,i}^{\left(m\right)}}}^r}}^{1/r}\tag{Jensen}\\
 & \leqslant d^{1/r}\max_{i\in[d]}\enVert[4]{\sum_{k=1}^{l}\del[1]{Y_{k,i}-Y_{k,i}^{\left(m\right)}}}_{r}\tag{max $\leqslant$ sum $\leqslant$ repeated max}\\
 & \leqslant d^{1/r}\max_{i\in[d]}\sum_{k=1}^{l}\enVert[2]{Y_{k,i}-Y_{k,i}^{\left(m\right)}}_{r}\tag{sub-additivity}\\
 & \leqslant d^{1/r}\max_{i\in[d]}\sum_{k=1}^{l}\sum_{t=\left(k-1\right)m+1}^{\min\{km,T\}}\enVert[2]{X_{t,i}-\E\left[X_{t,i}\middle|\left\{ \bseta_{u}\right\} _{u=(k-2)m+1}^{km}\right]}_{r}\tag{sub-additivity}\\
 & \leqslant d^{1/r}\max_{i\in[d]}\sum_{k=1}^{l}\sum_{t=\left(k-1\right)m+1}^{km}\enVert[2]{X_{t,i}-\E\left[X_{t,i}\middle|\left\{ \bseta_{u}\right\} _{u=(k-2)m+1}^{km}\right]}_{r}.\tag{non-negativity}
\end{align*}
Fix $i\in[d],k\in[l]$ and $t\in\{(k-1)m+1,\dotsc,km\}$. Letting $\{\bseta_t^\ast\}_{t\in\Z}$ be an independent copy of $\{\bseta_t\}_{t\in\Z}$, we have
\begin{align*}
&\enVert[2]{X_{t,i}-\mathrm{E}\left[X_{t,i}\middle|\left\{ \bseta_{u}\right\} _{u=(k-2)m+1}^{km}\right]}_{r}^r\\
&= \E\sbr[3]{\envert[2]{X_{t,i}-\mathrm{E}\left[X_{t,i}\middle|\left\{ \bseta_{u}\right\} _{u=(k-2)m+1}^{t}\right]}^r}\tag{causality}\\
&= \E\sbr[4]{\envert[3]{\E\left[X_{t,i}-G_i(\dotsc,\bseta_{(k-2)m-1}^\ast,\bseta_{(k-2)m}^\ast,\bseta_{(k-2)m+1},\dotsc,\bseta_t)\middle|\{\bseta_u\}_{u=-\infty}^{t}\right]}^r}.\tag{causality}
\end{align*}
Applying Jensen's inequality (conditionally) to the right-hand side inner expectation, we get
\begin{align*}
&\E\sbr[4]{\envert[3]{\E\left[X_{t,i}-G_i(\dotsc,\bseta_{(k-2)m-1}^\ast,\bseta_{(k-2)m}^\ast,\bseta_{(k-2)m+1},\dotsc,\bseta_t)\middle|\{\bseta_u\}_{u=-\infty}^{t}\right]}^r}\\
&\leqslant \E\sbr[3]{\envert[2]{X_{t,i}-G_i(\dotsc,\bseta_{(k-2)m-1}^\ast,\bseta_{(k-2)m}^\ast,\bseta_{(k-2)m+1},\dotsc,\bseta_t)}^r}
=\sbr[1]{\Delta_{r}^{X_{\cdot,i}}(t-(k-2)m)}^r.
\end{align*}
Combining the previous two displays and adding up, by a change of variables, we see that
\begin{align*}
\sum_{t=(k-1)m+1}^{km} \enVert[2]{X_{t,i}-\E\left[X_{t,i}\middle|\left\{ \bseta_{u}\right\} _{u=(k-2)m+1}^{km}\right]}_{r}
&\leqslant \sum_{t=(k-1)m+1}^{km}\Delta_{r}^{X_{\cdot,i}}\left(t-(k-2)m\right)\\
&= \sum_{u=m+1}^{2m} \Delta_{r}^{X_{\cdot,i}}\left(u)\right)\\
&\leqslant m \max_{u\in\{m+1,\dotsc,2m\}} \Delta_{r}^{X_{\cdot,i}}\left(u)\right).
\end{align*}
Since the right-hand bound does not depend on $k$, it follows from our calculations that
\begin{align*}
\E[\mathrm{I}_T]
&\leqslant mld^{1/r}\max_{i\in[d]}\max_{u\in\{m+1,\dotsc,2m\}} \Delta_{r}^{X_{\cdot,i}}\left(u\right)=mld^{1/r}\Delta_{r,T}.
\end{align*}
By Markov's inequality and $m l \leqslant 2T$, we get $\mathrm{I}_T\lesssim_{\P}T d^{1/r}\Delta_{r,T}$ as $T\to\infty$.

To bound $\mathrm{II}_T$, observe that
\[
\mathrm{II}_T\leqslant\underbrace{\max_{i\in[d]}\envert[3]{\sum_{k\leqslant l\,\mathrm{odd}}\del[1]{Y_{k,i}^{\left(m\right)}-\E[Y_{k,i}^{(m)}]}}}_{=:\mathrm{II}^{\text{odd}}_T}+\underbrace{\max_{i\in[d]}\envert[3]{\sum_{k\leqslant l\,\mathrm{even}}\del[1]{Y_{k,i}^{\left(m\right)}-\E[Y_{k,i}^{(m)}]}}}_{=:\mathrm{II}^{\text{even}}_T}.
\]
We set up for two applications of \citet[Theorem 3.4]{kuchibhotla_moving_2022}
for the odd and even parts, respectively. We only bound $\mathrm{II}^{\text{odd}}_T$;
the argument for $\mathrm{II}^{\text{even}}_T$ is analogous. 

First, for any $r\in[1,\infty)$, $i\in[d]$ and  $k\in[l]$,
\begin{align*}
\enVert[1]{Y_{k,i}^{\left(m\right)}-\mathrm{E}[Y_{k,i}^{(m)}]}_{r}
\leqslant
2\Vert Y_{k,i}^{\left(m\right)}\Vert_{r}
\leqslant
2\Vert Y_{k,i}\Vert_{r}
\leqslant
2m\Vert X_{0,i}\Vert_{r}
\leqslant 2Kmr^{1/\xi},
\end{align*}
which means that
\[
\enVert[2]{m^{-1}\del[1]{Y_{k,i}^{(m)}-\E[Y_{k,i}^{(m)}]}}_{r}\leqslant 2Kr^{1/\xi}.
\]
From the previous display, an application of \citet[Theorem 1]{vladimirova_sub-weibull_2020} implies the existence of a constant $C_{\xi,K}\in(0,\infty)$, depending only on $\xi$ and $K$, such that
\[
\max_{i\in[d],k\in[l]} \E\left[\exp\left\{\envert[2]{m^{-1}\del[1]{Y_{k,i}^{(m)}-\E[Y_{k,i}^{(m)}]}}^{\xi}\middle/C_{K,\xi}^{\xi}\right\}\right]\leqslant2.
\]
Let $l^{\text{odd}}:=l^{\text{odd}}_T:=|\{k\in[l]:k\,\text{odd}\}|$. The $l^{\text{odd}}$ summands $m^{-1}(\bY_{k}^{(m)}-\E[\bY_{k}^{(m)}])$, are independent random elements of $\R^{d}$. It therefore follows from 
\citet[Theorem 3.4]{kuchibhotla_moving_2022}, with $t$ there set to $\ln(dT)$, that there is a constant $C_{K,\xi}'\in(0,\infty)$, depending only on $\xi$ and $K$, such that with probability
at least $1-3/(dT)$,
\begin{align}
&\max_{i\in[d]}\envert[4]{\frac{1}{l^{\text{odd}}}\sum_{k\leqslant l\,\mathrm{odd}}\frac{Y_{k,i}^{(m)}-\E[Y_{k,i}^{(m)}]}{m}}\notag\\
&\leqslant 7 \sqrt{\frac{\Gamma_{T}\left(\ln(dT)+\ln d\right)}{l^{\text{odd}}}}
+ \frac{C_{K,\xi}'\ln\left(2l^{\text{odd}}\right)^{1/\xi}\left(\ln\left(dT\right)+\ln d\right)^{1/\min\{1,\xi\}}}{l^{\text{odd}}}\label{eq:oddterms},
\end{align}
where we define
\[
\Gamma_{T}:=\max_{i\in[d]}\frac{1}{l^{\text{odd}}}\sum_{k=1}^{l^{\text{odd}}}\E\sbr{\envert[4]{\frac{Y_{k,i}^{(m)}-\E[Y_{k,i}^{(m)}]}{m}}^{2}}.
\]
From previous calculations, setting $r=2$, we know that for all $i\in[d]$ and $k\in[l]$,
\[
\E\sbr{\envert[4]{\frac{Y_{k,i}^{(m)}-\E[Y_{k,i}^{(m)}]}{m}}^{2}}
=\enVert{\frac{Y_{k,i}^{(m)}-\E[Y_{k,i}^{(m)}]}{m}}_{2}^2\leqslant 2^2K^2\cdot 2^{2/\xi},
\]
so $\Gamma_T\leqslant 2^{2+2/\xi}K^2$.
Multiplying both sides of~\eqref{eq:oddterms} by $l^{\text{odd}}$ and $m$, we deduce that for some constant $C_{K,\xi}''\in(0,\infty)$, depending only on $\xi$ and $K$, with probability approaching one as $T\to\infty$,
\[
\max_{i\in[d]}\envert[4]{\sum_{k\leqslant l\,\mathrm{odd}}\del[1]{Y_{k,i}^{(m)}-\E[Y_{k,i}^{(m)}]}}
\leqslant C_{K,\xi}''
\left(
m \sqrt{l\ln(dT)}+ m\left(\ln(dT)\right)^{\overline\xi}
\right).
\]
It follows that $\mathrm{II}_T^{\text{odd}}\lesssim_{\P} m \sqrt{l\ln(d T)}+ m(\ln(d T))^{\overline\xi}$ as $T\to\infty$. Hence, $\mathrm{II}_T\lesssim_{\P} m \sqrt{l\ln(d T)}+ m(\ln(d T))^{\overline\xi}$ as $T\to\infty$.
\end{proof}

\section{Moderate Deviation with Geometric Moment Contraction}
Let $\{X_{t}\}_{t\in\Z}$ be a strictly stationary causal
process with elements $X_t$ taking values in $\R$. Define the sums
\[
S_{k,h}:=\sum_{t=k+1}^{k+h}X_{t},\quad k\in\N_{0},\quad h\in\N,
\]
and abbreviate $S_{n}:=S_{0,n}$. Consider the following assumptions.
\begin{assumption}
\label{assu:MomentConditions} $\E\left[X_{0}\right]=0$ and
$\E\left[X_{0}^{4}\right]<\infty$.
\end{assumption}
\begin{assumption}
\label{assu:UniformNondegeneracy} There is a constant $\omega\in(0,\infty)$, such that $\E[S_{k,h}^{2}]\geqslant\omega^{2}h$
for all $k\in\N_{0}$ and all $h\in\N$.
\end{assumption}
\begin{assumption}
\label{assu:GMC} There are constants $b_{1},b_{2}\in(0,\infty)$ and $\tau\in (0,1]$, such that $\Delta_{4}^{X_{\cdot}}(h)\leqslant b_{1}\mathrm{e}^{-b_{2}h^\tau}$
for all $h\in\N.$
\end{assumption}
Let $n\in\N$ and define the block size $m_n:=\lfloor n^{1/(1+4\tau))}\rfloor$ and number of blocks $l_n:=\lfloor n/m_n\rfloor$, and let $H_{n,k}$ be the $k$\textsuperscript{th} block of (time) indices, i.e.
\begin{equation}\label{eq:HnkDef}
    H_{n,k}:=\left\{(k-1)m_n  + 1, (k-1)m_n + 2,\dotsc, k m_n\right\},\quad k\in[l_n].
\end{equation}
Define the block-normalized sum
\[
T_{n}:=\frac{\sum_{k=1}^{l_n}\sum_{t\in H_{n,k}}X_{t}}{\sqrt{\sum_{k=1}^{l_n}\left(\sum_{t\in H_{n,k}}X_{t}\right)^{2}}}.
\]
The following lemma is a consequence of \citet[Theorem 3.3]{gao_refined_2022}.
\begin{lem}
\label{lem:ModDevBlockNormalized} Let Assumptions \ref{assu:MomentConditions},
\ref{assu:UniformNondegeneracy} and \ref{assu:GMC} hold. Then there are constants $A$ and $g_0$ both in $(0,\infty)$, 
depending only on $b_{1},b_{2},\omega$ and $\tau$, such
that
\[
x\in\left[0,g_{0}n^{\tau/(2(1+4\tau))}\right]\implies\frac{\P\left(\left|T_{n}\right|\geqslant x\right)}{2\left[1-\Phi\left(x\right)\right]}\leqslant1+A\left(\frac{1+x^{2}}{n^{1/(1+4\tau)}}+\frac{1+x^{2}}{n^{\tau/(1+4\tau)}}\right).
\]
\end{lem}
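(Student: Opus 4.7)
The plan is to derive Lemma \ref{lem:ModDevBlockNormalized} as an immediate consequence of Theorem 3.3 of \citet{gao_refined_2022}, which supplies Cramér-type moderate deviation bounds for self-normalized block sums of geometrically moment contracting sequences. The substantive content of the lemma sits in that external theorem; my task is the bookkeeping of verifying its hypotheses under Assumptions \ref{assu:MomentConditions}, \ref{assu:UniformNondegeneracy} and \ref{assu:GMC}, and of matching the block size $m_n=\lfloor n^{1/(1+4\tau)}\rfloor$ to the exponents in the conclusion.

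First, I would identify the block sums $W_{n,k}:=\sum_{t\in H_{n,k}} X_t$ for $k\in[l_n]$, with the blocks $H_{n,k}$ as in \eqref{eq:HnkDef}, so that
\[
T_n \;=\; \frac{\sum_{k=1}^{l_n} W_{n,k}}{\sqrt{\sum_{k=1}^{l_n} W_{n,k}^2}}
\]
is precisely the self-normalized block sum to which Theorem 3.3 of \citet{gao_refined_2022} applies. I would then check the three standing hypotheses of that theorem against our assumptions on $\{X_t\}_{t\in\Z}$: (i) strict stationarity and causality, which are imposed at the outset of the subsection; (ii) the centering $\E[X_0]=0$ and the finite fourth moment $\E[X_0^4]<\infty$, which are exactly Assumption \ref{assu:MomentConditions}; (iii) the uniform non-degeneracy $\E[S_{k,h}^2]\geqslant \omega^2 h$ uniformly in $(k,h)$, which is Assumption \ref{assu:UniformNondegeneracy}; and (iv) the geometric moment contraction in $L^4$ with parameters $(b_1,b_2,\tau)$, which is Assumption \ref{assu:GMC}.

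With the hypotheses verified, I would invoke Theorem 3.3 of \citet{gao_refined_2022} with block size $m_n=\lfloor n^{1/(1+4\tau)}\rfloor$. This specific choice is the one for which the Gao--Wu bound is tight: it optimally trades off the within-block Berry--Esseen-type Gaussian approximation error (which grows with $m_n$) against the between-block residual dependence (which shrinks exponentially in $m_n^\tau$), producing precisely the exponents $n^{1/(1+4\tau)}$ and $n^{\tau/(1+4\tau)}$ in the error terms and the range $[0,g_0 n^{\tau/(2(1+4\tau))}]$ in the conclusion. The constants $A$ and $g_0$ are taken from that theorem and, by inspection of its statement, depend only on $(b_1,b_2,\omega,\tau)$, as claimed.

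The only subtlety—and the one real obstacle—is notational alignment: one must ensure that the parametrization of geometric moment contraction in Assumption \ref{assu:GMC} (the bound $\Delta_4^{X_\cdot}(h)\leqslant b_1 \mathrm{e}^{-b_2 h^\tau}$) is the one used by \citet{gao_refined_2022}, up to at most a harmless rescaling of $(b_1,b_2)$, and likewise that their non-degeneracy hypothesis is the uniform variance lower bound stated in Assumption \ref{assu:UniformNondegeneracy}. Once that translation is made, no further analytic work is required and the lemma follows.
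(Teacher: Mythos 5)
Your proof correctly identifies the external engine — Theorem 3.3 of \citet{gao_refined_2022} — and correctly verifies its hypotheses against Assumptions \ref{assu:MomentConditions}, \ref{assu:UniformNondegeneracy} and \ref{assu:GMC}, with the right block size $m_n=\lfloor n^{1/(1+4\tau)}\rfloor$ and the right reading of the constants' dependencies. But you stop one step short: the Gao--Wu theorem is a \emph{one-sided} moderate-deviation bound, controlling $\P(T_n\geqslant x)/[1-\Phi(x)]$, whereas the lemma asserts a \emph{two-sided} bound for $\P(|T_n|\geqslant x)$ normalized by $2[1-\Phi(x)]$. Invoking the external theorem once and declaring that ``no further analytic work is required'' leaves the absolute value unaccounted for.

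The missing step is a symmetrization argument. The paper observes that if $\{X_t\}_{t\in\Z}$ satisfies Assumptions \ref{assu:MomentConditions}, \ref{assu:UniformNondegeneracy} and \ref{assu:GMC}, then so does $\{-X_t\}_{t\in\Z}$, \emph{with the same constants} $b_1,b_2,\omega,\tau$ (centering and fourth moment are unchanged, $S_{k,h}^2$ is unchanged, and the physical dependence measure is unchanged because it is built from an absolute $L^4$ norm). Applying Theorem 3.3 of \citet{gao_refined_2022} once to $\{X_t\}$ and once to $\{-X_t\}$ then yields the same upper bound for both $\P(T_n\geqslant x)$ and $\P(-T_n\geqslant x)$ on the same $x$-range with the same constants $A,g_0$, and the conclusion follows from $\P(|T_n|\geqslant x)=\P(T_n\geqslant x)+\P(-T_n\geqslant x)$ for $x>0$ (or the union bound). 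Without this doubling, the factor $2$ in the denominator $2[1-\Phi(x)]$ is unexplained, and the bound you would obtain does not match the statement of the lemma.
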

\begin{proof}
If Assumptions \ref{assu:MomentConditions}, \ref{assu:UniformNondegeneracy}
and \ref{assu:GMC} hold for the random process $\{X_{t}\}$, then
they hold for $\{-X_{t}\}$ as well and with the same constants. Applying
\citet[Theorem 3.3]{gao_refined_2022} twice [with their $\alpha=1/(1+4\tau)$], we therefore get that $x\in[0,g_{0}n^{\tau/\sbr[0]{2(1+4\tau)}}]$ implies
\[
\max\cbr[1]{\P\left(T_{n}\geqslant x\right),\P\left(-T_{n}\geqslant x\right)} \leqslant\sbr[1]{1-\Phi\left(x\right)}\sbr[3]{1+A\del[3]{\frac{1+x^{2}}{n^{1/(1+4\tau)}}+\frac{1+x^{2}}{n^{\tau/(1+4\tau)}}}},
\]
for constants $A$ and $g_{0}$ with the claimed dependencies. The
claim then follows from $\P(|T_{n}|\geqslant x)=\P(T_{n}\geqslant x)+\P(-T_{n}\geqslant x)$.
\end{proof}
Next, let $\{\bX_t\}_{t\in\Z}$ be a strictly stationary causal process with elements $\bX_t$ taking values in $\R^{d_n}$, where we take $d_n$ to be at least two. Let $T_{n,j}$
denote the block-normalized sum using the $j$\textsuperscript{th} coordinate process
$\{X_{t,j}\}_{t\in\N}$ of~$\{\bX_t\}_{t\in\Z}$, such that
\begin{align*}
T_{n,j} & =\frac{\sum_{k=1}^{l_n}\sum_{t\in H_{n,k}}X_{t,j}}{\sqrt{\sum_{k=1}^{l_n}\left(\sum_{t\in H_{n,k}}X_{t,j}\right)^{2}}},\quad j\in[d_n].
\end{align*}
We invoke the growth condition:
\begin{assumption}
\label{assu:Growth} $\{\gamma_{n}\}_{n=1}^{\infty}$ is a non-random sequence in $(0,1)$
satisfying $\ln(1/\gamma_n)\lesssim\ln(nd_n)$.
\end{assumption}
\begin{lem}
\label{lem:TailBoundGMCVector} Let all coordinate processes $\{\{ X_{t,j}\}_{t\in\Z}\}_{j=1}^{d_n}$
satisfy Assumptions \ref{assu:MomentConditions}, \ref{assu:UniformNondegeneracy}
and \ref{assu:GMC} for the same constants $b_{1},b_{2},\omega$ and $\tau$, let Assumption \ref{assu:Growth} hold, and suppose that
$\ln(d_n)=o(n^{\tau/(1+4\tau)}).$
Then there is a constant $N\in\N$ such
that
\[
n\geqslant N\implies \P\left(\max_{j\in[d_n]}\left|T_{n,j}\right|\geqslant\Phi^{-1}\del[2]{1-\frac{\gamma_n}{2d_n}}\right)\leqslant 2\gamma_n.
\]
\end{lem}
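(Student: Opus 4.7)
The plan is to reduce the supremum over $j\in[d_n]$ to a univariate tail bound via a union bound, and then to apply Lemma~\ref{lem:ModDevBlockNormalized} coordinatewise at the common threshold $x_n:=\Phi^{-1}\bigl(1-\gamma_n/(2d_n)\bigr)$. By definition of $x_n$ we have $2[1-\Phi(x_n)]=\gamma_n/d_n$, so if one can show that $x_n$ lies in the admissible range in Lemma~\ref{lem:ModDevBlockNormalized} and that the multiplicative correction factor there is at most $2$, then
\[
\P\Bigl(\max_{j\in[d_n]}|T_{n,j}|\geq x_n\Bigr)\;\leq\;d_n\max_{j\in[d_n]}\P(|T_{n,j}|\geq x_n)\;\leq\;d_n\cdot 2[1-\Phi(x_n)]\cdot 2\;=\;2\gamma_n,
\]
which is exactly the desired conclusion. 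Note that since all $d_n$ coordinate processes satisfy Assumptions~\ref{assu:MomentConditions}--\ref{assu:GMC} with the same constants $b_1,b_2,\omega,\tau$, the constants $A$ and $g_0$ produced by Lemma~\ref{lem:ModDevBlockNormalized} are common across $j$, so the preceding inequality does hold uniformly in $j\in[d_n]$ for $n$ sufficiently large.

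What remains is the ``bookkeeping'' step of bounding $x_n$. The plan is to use the standard Mills ratio estimate $\Phi^{-1}(1-u)\leq\sqrt{2\ln(1/u)}$ for $u\in(0,1/2)$, which (in combination with $d_n\geq 2$ and $\gamma_n<1$) gives
\[
x_n\;\leq\;\sqrt{2\ln(2d_n/\gamma_n)}\;\lesssim\;\sqrt{\ln(nd_n)},
\]
where the last step uses $\ln(1/\gamma_n)\lesssim\ln(nd_n)$ from Assumption~\ref{assu:Growth}. Consequently $x_n^2\lesssim\ln(nd_n)$. Since by assumption $\ln(d_n)=o(n^{\tau/(1+4\tau)})$, and $\ln n=o(n^{\tau/(1+4\tau)})$ is automatic, one obtains both $x_n\leq g_0 n^{\tau/(2(1+4\tau))}$ and
\[
A\left(\frac{1+x_n^2}{n^{1/(1+4\tau)}}+\frac{1+x_n^2}{n^{\tau/(1+4\tau)}}\right)\;\leq\;1
\]
for all $n$ beyond some threshold $N\in\N$ (depending only on $b_1,b_2,\omega,\tau$ and the hidden constants above, not on $j$).

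The main ``obstacle'' is really just the bookkeeping of the preceding paragraph: one must simultaneously keep the argument of the moderate deviation result in its polynomial-in-$n$ admissibility window and control the Berry-Esseen-type correction factor, using only the mild growth restrictions in Assumption~\ref{assu:Growth} and $\ln d_n=o(n^{\tau/(1+4\tau)})$. No new probabilistic input is required beyond Lemma~\ref{lem:ModDevBlockNormalized} itself; the rest is a union bound paired with Gaussian quantile asymptotics.
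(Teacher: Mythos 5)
Your proposal is correct and follows essentially the same route as the paper's proof: a union bound over $j\in[d_n]$, coordinatewise application of Lemma~\ref{lem:ModDevBlockNormalized} at $x_n=\Phi^{-1}(1-\gamma_n/(2d_n))$, the Gaussian quantile bound combined with Assumption~\ref{assu:Growth} to get $x_n\lesssim\sqrt{\ln(nd_n)}$, and the condition $\ln(d_n)=o(n^{\tau/(1+4\tau)})$ to place $x_n$ in the admissible window and force the correction factor below $2$ for all $n\geqslant N$. No gaps.
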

\begin{proof}
Under Assumption \ref{assu:Growth}, using $d_n\geqslant2$ and the
Gaussian quantile bound $\Phi^{-1}(1-z)\leqslant\sqrt{2\ln(1/z)},z\in(0,1),$
we get
\begin{equation}
0<\Phi^{-1}\left(1-\gamma_n/\left(2d_n\right)\right)\leqslant\sqrt{2\ln\left(2d_n/\gamma_n\right)}\leqslant C_{1}\sqrt{\ln(nd_n)}\label{eq:GrowthGammaImplication}
\end{equation}
for some constant $C_{1}\in(0,\infty)$. The growth condition $\ln(d_n)=o(n^{\tau/(1+4\tau)})$
equips us with an $N_1\in\N$ such that for the constant $g_{0}\in(0,\infty)$ provided by Lemma \ref{lem:ModDevBlockNormalized},
\[
n\geqslant N_1\implies \Phi^{-1}\left(1-\gamma_n/\left(2d_n\right)\right)\in\sbr{0, g_{0}n^{\tau/\sbr[0]{2\del[0]{1+4\tau}}}}.
\]
For such $n\geqslant N_1$,
\begin{align*}
 & \P\left(\max_{j\in[d_n]}\left|T_{n,j}\right|\geqslant\Phi^{-1}\del[2]{1-\frac{\gamma_n}{2d_n}}\right)\\
 & \leqslant\sum_{j=1}^{d_n}\P\left(\left|T_{n,j}\right|\geqslant\Phi^{-1}\del[2]{1-\frac{\gamma_n}{2d_n}}\right)\tag{union bound}\\
 & \leqslant\sum_{j=1}^{d_n}2\left[1-\Phi\left(\Phi^{-1}\del[2]{1-\frac{\gamma_n}{2d_n}}\right)\right] \\
 & \qquad \qquad \times \left[1+A \left( 1+\left[\Phi^{-1}\del[2]{1-\frac{\gamma_n}{2d_n}}\right]^{2} \right) \left(\frac{1}{n^{1/(1+4\tau)}}+ \frac{1}{n^{\tau/(1+4\tau)}}\right)\right]\tag{Lemma \ref{lem:ModDevBlockNormalized}}\\
 & \leqslant\gamma_n\left[1+A\left(1+C_{1}^{2}\ln(nd_n)\right) \left(\frac{1}{n^{1/(1+4\tau)}}+ \frac{1}{n^{\tau/(1+4\tau)}}\right)\right].
\end{align*}
Again using $\ln(d_n)=o(n^{\tau/(1+4\tau)})$, we see that there is a constant $N_2\in\N$ for which
\[
n\geqslant N_2\implies A\left(1+C_{1}^{2}\ln(nd_n)\right) \left(\frac{1}{n^{1/(1+4\tau)}}+ \frac{1}{n^{\tau/(1+4\tau)}}\right)\leqslant 1.
\]
The claim then follows from setting $N:=\max\{N_1,N_2\}$.
\end{proof}

\section{Main Steps in Proving Theorem~\ref{thm:Rates-for-LASSO-data-driven-loadings}}\label{sec:proofsteps}
To sketch the proof of Theorem~\ref{thm:Rates-for-LASSO-data-driven-loadings}, we start with a ``master
lemma'' that provides non-asymptotic error guarantees for the Lasso
under standard high-level conditions, well-known in the literature.  To state this lemma, let $C\in(0,\infty)$ and $\emptyset\neq T\subseteq[pq]$ and define the \emph{restricted set}
\begin{equation}
\mathcal{R}_{C,T}:=\left\{ \bdelta\in\mathbb{R}^{pq}:\left\Vert \bdelta_{T^{c}}\right\Vert _{\ell_{1}}\leqslant C\left\Vert \bdelta_{T}\right\Vert _{\ell_{1}}\right\},\label{eq:RestrictedSet}
\end{equation}
as well as the \emph{restricted eigenvalue} of the (scaled) Gram matrix $\bfX^\top\bfX/n$
\begin{equation}
\widehat\kappa_{C}^{2}:=\widehat\kappa_{C}^{2}\big(s,\bfX^\top\bfX/n\big):=\min_{\substack{T\subseteq[pq]:\\
|T|\in[s]}
}\inf_{\substack{\bdelta\in\mathcal{R}_{C,T}:\\ \bdelta\neq \mathbf{0}_{pq}}}\frac{\bdelta^\top(\bfX^\top\bfX/n)\bdelta}{\left\Vert \bdelta\right\Vert _{\ell_{2}}^{2}}.\label{eq:RestrictedEigenvalue}
\end{equation}

Similar to \citet{belloni_sparse_2012}, we introduce certain infeasible ideal penalty loadings. 
However, to account for the dependence in the data, the infeasible loadings are based on dividing the data into blocks of carefully chosen size $m_{n}:=n^{1/(1+4\tau)}$, the number of which is
$l_{n}:=n/m_n=n^{4\tau/(1+4\tau)}$.\footnote{For simplicity of notation, we treat $m_n$ as a divisor of $n$, such that both $m_n\in\N$ and $l_n\in\N$.} Such blocking is not necessary for independent data. We stress that the introduction of a block size $m_n$ is solely for the
purpose of the theoretical analysis, as our Algorithm \ref{alg:Data-Driven-Penalization}
does not require any user-chosen block size. Gathering the indices in the $k$\textsuperscript{th} block into $H_{n,k}$ [see \eqref{eq:HnkDef}], we define our blocking-based \emph{ideal penalty loadings}
\begin{equation}
\widehat{\bUpsilon}_{i}^{0}:=\text{diag}\big(\widehat{\upsilon}_{i,1}^{0},\dotsc,\widehat{\upsilon}_{i,pq}^{0}\big),\quad i\in[p],\label{eq:PenaltyLoadingsIdeal}
\end{equation}
with ideal loading $(i,j)$ given by
\begin{equation}
\widehat{\upsilon}_{i,j}^{0}:=\sqrt{\frac{1}{n}\sum_{k=1}^{l_n}\del[4]{\sum_{t\in H_{n,k}}\varepsilon_{t,i}Z_{t-1,j}}^{2}},\quad i\in[p],\quad j\in[pq].\label{eq:PenaltyLoadingsIdeal_p2}
\end{equation}
We define the \textit{score vector} $\bS_{n,i}$ as the (negative) gradient of the loss $\widehat Q_i$ in (\ref{eq:Qihat}) at $\bbeta_{0i}$ upon normalizing by the ideal penalty loadings $\widehat{\bUpsilon}_{i}^{0}$,
\begin{equation}
\bS_{n,i}:=(\widehat{\bUpsilon}_{i}^{0})^{-1}\del[1]{-\nabla \widehat Q_i(\bbeta_{0i})}=\frac{2}{n}\sum_{t=1}^{n}(\widehat{\bUpsilon}_{i}^{0})^{-1}\bZ_{t-1}\varepsilon_{t,i},\quad i\in[p].\label{eq:def_score_ideal}
\end{equation}
The score $\bS_{n,i}$ represents the estimation noise in problem (\ref{eq:LASSOVector}), with the ideal penalty loadings yielding a form of self-normalization of the score. The self-normalization of the $\bS_{n,i}$ allows us to invoke recent moderate deviation theory for sums of dependent random variables developed in \citet{gao_refined_2022}.
Specifically, with
$c$ and $\gamma_n$ given in (\ref{eq:tuning_c_and_gamma}), and the penalty level $\lambda_n^\ast$ given in (\ref{eq:PenaltyLevelPractice}),
we show in Lemma \ref{lem:Deviation-Bound} that $\lambda^\ast_n/n\geqslant c\max_{i\in[p]}\Vert \bS_{n,i}\Vert_{\ell_{\infty}}$
with probability approaching one.

The ideal penalty loadings in (\ref{eq:PenaltyLoadingsIdeal})--(\ref{eq:PenaltyLoadingsIdeal_p2})
are infeasible, since the innovations $\{\bepsilon_{t}\}_{t\in\Z}$ are unobservable.
To overcome this problem, we prove that the loadings arising from Algorithm
\ref{alg:Data-Driven-Penalization} are ``close'' to the ideal loadings.
More precisely, akin to \citet[][p.~2387]{belloni_sparse_2012}, we refer to an
arbitrary collection of penalty loadings $\widehat{\bUpsilon}_{i}=\mathrm{diag}(\widehat{\upsilon}_{i,1},\dotsc,\widehat{\upsilon}_{i,pq}),i\in[p]$,
as \textit{asymptotically valid}, if there are non-negative random scalars $\widehat\ell:=\widehat\ell_n$ and $\widehat{u}:=\widehat u_n,n\in\N$, and a constant $u\in[1,\infty)$, such that
\begin{equation}\label{eq:AsymptoticallyValidPenaltyLoadings}
\widehat{\ell}\widehat{\upsilon}_{i,j}^{0}\leqslant \widehat{\upsilon}_{i,j} \leqslant\widehat{u}\widehat{\upsilon}_{i,j}^{0}\qquad\text{for all}\; (i,j)\in[p]\times[pq],
\end{equation}
with probability approaching one as well as
\begin{equation}\label{eq:AsymptoticallyValidPenaltyLoadings2}
\widehat{\ell} \overset{\mathrm{P}}{\to}1\quad\text{and}\quad\widehat{u}\overset{\mathrm{P}}{\to}u.
\end{equation}

We now present a master lemma that provides prediction and estimation
error guarantees. The proof is adapted from 
\citet[Lemma 6]{belloni_sparse_2012} to a form suitable for our purposes.
Similar statements can be found in \citet{bickel_simultaneous_2009}, \citet{loh_high-dimensional_2012} and \citet{negahban_unified_2012}
among others. 
To state the lemma, denote the smallest and largest ideal penalty loading $\widehat{\upsilon}_{\min}^{0}:=\min_{(i,j)\in[p]\times[pq]}\widehat{\upsilon}_{i,j}^{0}$ and $\widehat{\upsilon}_{\max}^{0}:=\max_{(i,j)\in[p]\times[pq]}\widehat{\upsilon}_{i,j}^{0}$, respectively, and their ratio ~$\widehat{\mu}_{0}:=\widehat{\upsilon}_{\max}^{0}/\widehat{\upsilon}_{\min}^{0}$.
\begin{lem}
[\textbf{Master Lemma}] \label{lem:MasterLemma} With the constant
$c$ given in (\ref{eq:tuning_c_and_gamma}) and the score vectors
$\{\bS_{n,i}\}_{i\in[p]}$ given by (\ref{eq:def_score_ideal}), suppose that the penalty level satisfies $\lambda/n\geqslant c\max_{i\in[p]}\Vert \bS_{n,i}\Vert_{\ell_{\infty}}$,
and that for some random scalars $\widehat\ell$ and $\widehat u$, the penalty loadings $\{\widehat{\bUpsilon}_{i}\}_{i=1}^{p}$ satisfy (\ref{eq:AsymptoticallyValidPenaltyLoadings})
with $1/c<\widehat\ell\leqslant\widehat u$. Then the Lasso estimates $\widehat{\bbeta}_i:=\widehat{\bbeta}_i(\lambda,\widehat\bUpsilon_i),i\in[p]$,
arising from $\lambda$ and $\{\widehat{\bUpsilon}_{i}\}_{i=1}^p$ via (\ref{eq:LASSOVector}) satisfy the error bounds
\begin{align*}
\max_{i\in[p]}\Vert\widehat{\bbeta}_{i}-\bbeta_{0i}\Vert_{2,n} & \leqslant\left(\widehat u+\frac{1}{c}\right)\frac{\lambda\widehat{\upsilon}_{\max}^{0}\sqrt{s}}{n\widehat\kappa_{\widehat c_{0}\widehat{\mu}_{0}}},\\
\max_{i\in[p]}\Vert\widehat{\bbeta}_{i}-\bbeta_{0i}\Vert_{\ell_{1}} & \leqslant\left(\widehat u+\frac{1}{c}\right)\left(1+\widehat c_{0}\widehat{\mu}_{0}\right)\frac{\lambda s\widehat{\upsilon}_{\max}^{0}}{n\widehat \kappa_{\widehat c_{0}\widehat{\mu}_{0}}^{2}}\quad\text{and}\\
\max_{i\in[p]}\Vert\widehat{\bbeta}_{i}-\bbeta_{0i}\Vert_{\ell_{2}} & \leqslant\left(\widehat u+\frac{1}{c}\right)\frac{\lambda\widehat{\upsilon}_{\max}^{0}\sqrt{s}}{n\widehat\kappa_{\widehat c_{0}\widehat{\mu}_{0}}^{2}},
\end{align*}
where $\widehat c_{0}:=(\widehat uc+1)/(\widehat\ell c-1)$, and $\widehat \kappa_{\widehat c_{0}\widehat{\mu}_{0}}$ is defined via (\ref{eq:RestrictedEigenvalue}) with $C=\widehat c_{0}\widehat{\mu}_{0}$.
\end{lem}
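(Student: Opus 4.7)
My plan is to follow the template of \cite{belloni_sparse_2012} for the unweighted/i.i.d.\ case, adapting each step to accommodate the random loadings~$\widehat{\bUpsilon}_i$ bracketed by~$\widehat\ell\widehat{\bUpsilon}_i^0$ and~$\widehat u\widehat{\bUpsilon}_i^0$. Fix~$i\in[p]$ and abbreviate~$\widehat{\bdelta}:=\widehat{\bbeta}_i-\bbeta_{0i}$ and~$T:=T_i:=\supp(\bbeta_{0i})$, so~$|T|\leqslant s$.

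First, I would exploit the quadratic structure of~$\widehat Q_i$. Since~$Y_{t,i}=\bZ_{t-1}^\top\bbeta_{0i}+\varepsilon_{t,i}$, a direct expansion gives
\[
\widehat Q_i(\widehat\bbeta_i)-\widehat Q_i(\bbeta_{0i})=\enVert[0]{\widehat{\bdelta}}_{2,n}^2-(\widehat{\bUpsilon}_i^0\bS_{n,i})^\top\widehat{\bdelta},
\]
where I invoked the identity~$-\nabla\widehat Q_i(\bbeta_{0i})=\widehat{\bUpsilon}_i^0\bS_{n,i}$ from~\eqref{eq:def_score_ideal}. Combining this with the optimality inequality~$\widehat Q_i(\widehat\bbeta_i)+(\lambda/n)\Vert\widehat{\bUpsilon}_i\widehat\bbeta_i\Vert_{\ell_1}\leqslant \widehat Q_i(\bbeta_{0i})+(\lambda/n)\Vert\widehat{\bUpsilon}_i\bbeta_{0i}\Vert_{\ell_1}$ and applying H\"older's inequality to~$(\widehat{\bUpsilon}_i^0\bS_{n,i})^\top\widehat{\bdelta}$ together with the hypothesis~$\Vert\bS_{n,i}\Vert_{\ell_\infty}\leqslant\lambda/(cn)$, I obtain
\[
\enVert[0]{\widehat\bdelta}_{2,n}^2\leqslant\frac{\lambda}{cn}\enVert[0]{\widehat{\bUpsilon}_i^0\widehat\bdelta}_{\ell_1}+\frac{\lambda}{n}\del[1]{\enVert[0]{\widehat{\bUpsilon}_i\bbeta_{0i}}_{\ell_1}-\enVert[0]{\widehat{\bUpsilon}_i\widehat\bbeta_i}_{\ell_1}}.
\]
Splitting the~$\ell_1$-norms along~$T$ and~$T^c$, using~$\bbeta_{0i,T^c}=\mathbf{0}$, and the reverse triangle inequality, the right-hand side is bounded by~$(\lambda/n)\sbr[0]{(1/c+\widehat u)\Vert\widehat{\bUpsilon}_i^0\widehat\bdelta_T\Vert_{\ell_1}+(1/c-\widehat\ell)\Vert\widehat{\bUpsilon}_i^0\widehat\bdelta_{T^c}\Vert_{\ell_1}}$ after using~$\widehat\ell\widehat\upsilon_{i,j}^0\leqslant\widehat\upsilon_{i,j}\leqslant\widehat u\widehat\upsilon_{i,j}^0$.

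Second, because~$\widehat\ell>1/c$, the coefficient on the~$T^c$-term is negative, and non-negativity of~$\Vert\widehat\bdelta\Vert_{2,n}^2$ forces the cone-type inequality~$(\widehat\ell c-1)\Vert\widehat{\bUpsilon}_i^0\widehat\bdelta_{T^c}\Vert_{\ell_1}\leqslant(\widehat u c+1)\Vert\widehat{\bUpsilon}_i^0\widehat\bdelta_T\Vert_{\ell_1}$. Sandwiching by~$\widehat\upsilon_{\min}^0$ and~$\widehat\upsilon_{\max}^0$ converts this into~$\Vert\widehat\bdelta_{T^c}\Vert_{\ell_1}\leqslant\widehat c_0\widehat\mu_0\Vert\widehat\bdelta_T\Vert_{\ell_1}$, i.e.~$\widehat\bdelta\in\mathcal{R}_{\widehat c_0\widehat\mu_0,T}$ as defined in~\eqref{eq:RestrictedSet}. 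The restricted eigenvalue~\eqref{eq:RestrictedEigenvalue} then yields~$\Vert\widehat\bdelta\Vert_{\ell_2}\leqslant\Vert\widehat\bdelta\Vert_{2,n}/\widehat\kappa_{\widehat c_0\widehat\mu_0}$, and in particular~$\Vert\widehat\bdelta_T\Vert_{\ell_2}\leqslant\Vert\widehat\bdelta\Vert_{2,n}/\widehat\kappa_{\widehat c_0\widehat\mu_0}$.

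Third, to derive the three error bounds, I would discard the negative~$T^c$-term on the right-hand side of the displayed inequality and use~$\Vert\widehat{\bUpsilon}_i^0\widehat\bdelta_T\Vert_{\ell_1}\leqslant\widehat\upsilon_{\max}^0\Vert\widehat\bdelta_T\Vert_{\ell_1}\leqslant\widehat\upsilon_{\max}^0\sqrt{s}\Vert\widehat\bdelta_T\Vert_{\ell_2}$ (Cauchy--Schwarz with~$|T|\leqslant s$). This yields
\[
\enVert[0]{\widehat\bdelta}_{2,n}^2\leqslant\del[3]{\widehat u+\frac{1}{c}}\frac{\lambda\widehat\upsilon_{\max}^0\sqrt{s}}{n}\cdot\frac{\enVert[0]{\widehat\bdelta}_{2,n}}{\widehat\kappa_{\widehat c_0\widehat\mu_0}},
\]
and dividing by~$\Vert\widehat\bdelta\Vert_{2,n}$ (handling the trivial case~$\Vert\widehat\bdelta\Vert_{2,n}=0$ separately) gives the prediction-norm bound. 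The~$\ell_2$-bound follows by one more application of the restricted eigenvalue inequality. Finally, the~$\ell_1$-bound is obtained via~$\Vert\widehat\bdelta\Vert_{\ell_1}\leqslant(1+\widehat c_0\widehat\mu_0)\Vert\widehat\bdelta_T\Vert_{\ell_1}\leqslant(1+\widehat c_0\widehat\mu_0)\sqrt{s}\Vert\widehat\bdelta_T\Vert_{\ell_2}$ chained with the already obtained bound on the latter quantity. Taking the maximum over~$i\in[p]$ completes the argument, and the whole chain is purely deterministic on the event that the score-penalty domination and loading-validity hypotheses hold.

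The routine parts are the Taylor expansion and H\"older applications. The only real bookkeeping challenge is tracking how the asymmetric bracketing constants~$\widehat\ell$ and~$\widehat u$ on the loadings propagate through to produce the precise constants~$\widehat c_0=(\widehat uc+1)/(\widehat\ell c-1)$ and~$\widehat u+1/c$ appearing in the bounds; the strict inequality~$\widehat\ell>1/c$ is essential to keep~$\widehat c_0$ finite and to obtain a usable cone inclusion.
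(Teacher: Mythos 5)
Your proposal is correct and follows essentially the same route as the paper's proof: the same quadratic expansion of $\widehat Q_i$ combined with the optimality inequality, the same cone inclusion $\widehat\bdelta\in\mathcal{R}_{\widehat c_0\widehat\mu_0,T_i}$ obtained by dropping the non-negative $\Vert\widehat\bdelta\Vert_{2,n}^2$ term, and the same chain of restricted-eigenvalue and Cauchy--Schwarz steps to deliver the three bounds with the constants $\widehat u+1/c$ and $\widehat c_0\widehat\mu_0$. No gaps.
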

In order to convert the above error guarantees into convergence rate results of the form (\ref{eq:lasso_rates_1})--(\ref{eq:lasso_rates_3}) for Lasso estimators implied by the Algorithm~\ref{alg:Data-Driven-Penalization} penalization, we proceed by establishing:

\begin{enumerate}
\item \label{enu:DeviationBnd} \emph{Deviation bound}: $\P(\lambda^\ast_n/n\geqslant c\max_{i\in[p]}\Vert \bS_{n,i}\Vert _{\ell_{\infty}})\to1$
holds for $\lambda^{\ast}_n$ given in (\ref{eq:PenaltyLevelPractice}) and $\{\bS_{n,i}\}_{i\in[p]}$ in (\ref{eq:def_score_ideal}).
\item\label{enu:IdealPenaltyLoadingCtrl} \emph{Ideal penalty loadings control}: There are constants $\underline{\upsilon}^{0},\overline{\upsilon}^{0}\in(0,\infty)$ such that \newline  $\mathrm{P}\left(\underline{\upsilon}^{0}\leqslant\widehat{\upsilon}_{\min}^{0}\leqslant\widehat{\upsilon}_{\max}^{0}\leqslant\overline{\upsilon}^{0}\right)\to1$. This finding also yields~$\P(\widehat{\mu}_{0}\leqslant\overline\upsilon^0/\underline\upsilon^0)\to1$.
\item\label{enu:RestrictedEigenvalueCtrl} \emph{Restricted eigenvalue bound}: $\P(\widehat\kappa_{\widehat c_{0}\widehat{\mu}_{0}}^{2}\geqslant d/2)\to1$.
\item\label{enu:AsymptoticValidityDataDrivenLoadings} \emph{Asymptotic validity of data-driven penalty loadings:} The (final) data-driven penalty loadings $\{\widehat{\upsilon}^{(K)}_{i,j}\}_{(i,j)\in[p]\times[pq]}$ obtained from Algorithm \ref{alg:Data-Driven-Penalization}
are asymptotically valid in the sense of (\ref{eq:AsymptoticallyValidPenaltyLoadings})--(\ref{eq:AsymptoticallyValidPenaltyLoadings2}). This implies control of the random~$\widehat{\ell}$,~$\widehat{u}$, and thus~$\widehat{c}_0$.
\end{enumerate}

In the following four subsections we formalize Items \ref{enu:DeviationBnd}--\ref{enu:AsymptoticValidityDataDrivenLoadings} in the above enumeration, from which Theorem~\ref{thm:Rates-for-LASSO-data-driven-loadings} will follow.

\subsection*{Item \ref{enu:DeviationBnd}: Deviation Bound\label{subsec:Deviation-Bound}}
Recall~$c$ and~$\gamma_n$ given in (\ref{eq:tuning_c_and_gamma}), the penalty level $\lambda^\ast_n$ in (\ref{eq:PenaltyLevelPractice}) and the scores $\{\bS_{n,i}\}_{i\in[p]}$ based on ideal penalty loadings in (\ref{eq:def_score_ideal}).
\begin{lem}
[\textbf{Deviation Bound}]\label{lem:Deviation-Bound} Let
Assumptions \ref{assu:Innovations}, \ref{assu:Companion} and \ref{assu:CovarianceZ} hold, and let $(\ln p)^{(1+4\tau)/\tau}=o(n)$ for $\tau\in(\textstyle{\frac{1}{2}},1]$ provided by Assumption \ref{assu:Companion}.\ref{enu:RowNormDecay}. Then
\[
\P\del[2]{\lambda^\ast_n/n\geqslant c\max_{i\in[p]}\left\Vert \bS_{n,i}\right\Vert _{\ell_{\infty}}}\to1.
\]
\end{lem}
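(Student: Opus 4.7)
The plan is to reduce the claimed event to a uniform tail bound on self-normalized block-sums and then invoke Lemma~\ref{lem:TailBoundGMCVector}. For each pair $(i,j)\in[p]\times[pq]$, define the scalar process $X_t^{(i,j)}:=\varepsilon_{t,i}Z_{t-1,j}$. Since the paper's proof of Theorem~\ref{thm:Rates-for-LASSO-data-driven-loadings} takes $l_n m_n = n$, the block-normalized statistic of Lemma~\ref{lem:ModDevBlockNormalized} applied to $\{X_t^{(i,j)}\}_{t\in\Z}$ satisfies
\[
T_n^{(i,j)} = \frac{\sum_{t=1}^{n}\varepsilon_{t,i}Z_{t-1,j}}{\sqrt{n}\,\widehat{\upsilon}_{i,j}^{0}}.
\]
With $\lambda_n^\ast = 2c\sqrt{n}\Phi^{-1}(1-\gamma_n/(2p^2q))$ and $\bS_{n,i}$ as defined, the event $\{\lambda_n^\ast/n\geqslant c\max_{i\in[p]}\|\bS_{n,i}\|_{\ell_\infty}\}$ is exactly $\{\max_{(i,j)}|T_n^{(i,j)}|\leqslant\Phi^{-1}(1-\gamma_n/(2p^2q))\}$. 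It therefore suffices to apply Lemma~\ref{lem:TailBoundGMCVector} to the $d_n:=p^2 q$ coordinate processes $\{X_t^{(i,j)}\}_{t\in\Z}$.

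I will then verify Assumptions~\ref{assu:MomentConditions}--\ref{assu:GMC} \emph{uniformly} in $(i,j)$ with constants not depending on $n$. Strict stationarity and causality of $\{X_t^{(i,j)}\}$ follow from those of $\{(\bepsilon_t,\bZ_{t-1})\}$ granted by Assumptions~\ref{assu:Innovations} and \ref{assu:Companion}. Because $Z_{t-1,j}$ is $\mathcal{F}^{\bseta}_{t-1}$-measurable, Assumption~\ref{assu:Innovations}.\ref{enu:EigValsOfSigmaEpsBddAwayFromZero} yields $\E[X_0^{(i,j)}]=0$; H\"{o}lder combined with the sub-Weibull norms of $\varepsilon_{0,i}$ and $Z_{-1,j}$ (the latter furnished by Lemma~\ref{lem:ZandEpsZNormBnds}) gives a uniform finite fourth moment. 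For the uniform non-degeneracy (Assumption~\ref{assu:UniformNondegeneracy}), the same martingale-difference property makes all cross-terms vanish, so
\[
\E[S_{k,h}^{2}] = h\,\E[\varepsilon_{0,i}^{2}Z_{-1,j}^{2}] \geqslant h\,a_2 (\bSigma_\bZ)_{j,j} \geqslant a_2 d \cdot h,
\]
using Assumption~\ref{assu:Innovations}.\ref{enu:EigValsOfSigmaEpsBddAwayFromZero} and Assumption~\ref{assu:CovarianceZ}; hence $\omega^2:=a_2 d$ works for every $(i,j)$.

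The main obstacle is the geometric moment contraction (Assumption~\ref{assu:GMC}). For the innovation factor, $\overline{q}$-dependence from Assumption~\ref{assu:Innovations}.\ref{enu:EpsIID} makes $\Delta_8^{\varepsilon_{\cdot,i}}(h)=0$ for $h>\overline{q}$ and $\leqslant 2\|\varepsilon_{0,i}\|_8$ otherwise, trivially GMC with constants independent of $i$. For the regressor coordinate $\{Z_{\cdot,j}\}$, I will use the moving-average representation of $\bY_t$ together with the row-decay of Assumption~\ref{assu:Companion}.\ref{enu:RowNormDecay} and the sub-Weibull bounds of Lemma~\ref{lem:ZandEpsZNormBnds} to derive $\Delta_8^{Z_{\cdot,j}}(h)\leqslant b_1'\mathrm{e}^{-b_2' h^\tau}$ with constants independent of $j$. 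The product rule Lemma~\ref{lem:Delta_r}.\ref{enu:Delta_rFromPairToProduct}, applied to the pair $(\varepsilon_{t,i}, Z_{t-1,j})$, then yields a uniform GMC bound for $\{X_t^{(i,j)}\}$; the one-step lag in $Z_{t-1,j}$ is innocuous since $\{Z_{\cdot-1,j}\}$ inherits the dependence decay of $\{Z_{\cdot,j}\}$.

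Finally, the growth hypotheses of Lemma~\ref{lem:TailBoundGMCVector} reduce as follows. Because $q$ is constant by Assumption~\ref{assu:Companion}.\ref{enu:lag-order}, $\ln d_n = O(\ln p)$, and the hypothesis $(\ln p)^{(1+4\tau)/\tau}=o(n)$ rearranges to $\ln d_n = o(n^{\tau/(1+4\tau)})$; Assumption~\ref{assu:Growth} is immediate from the definition $\gamma_n=0.1/\ln(\max\{n,pq\})$. Lemma~\ref{lem:TailBoundGMCVector} then delivers $\P\big(\max_{(i,j)}|T_n^{(i,j)}|\geqslant\Phi^{-1}(1-\gamma_n/(2p^2q))\big)\leqslant 2\gamma_n \to 0$, which is the required conclusion.
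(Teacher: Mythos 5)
Your proposal is correct and follows essentially the same route as the paper: both reduce the event to a uniform bound on the self-normalized block statistics $T_n^{(i,j)}$ for the $p^2q$ processes $\{\varepsilon_{t,i}Z_{t-1,j}\}$, verify Assumptions~\ref{assu:MomentConditions}--\ref{assu:GMC} uniformly (with the identical martingale-difference computation giving $\omega^2=a_2d$), and invoke Lemma~\ref{lem:TailBoundGMCVector}. The GMC verification you sketch for $\{Z_{\cdot,j}\}$ and the product process is exactly what the paper packages into Lemma~\ref{lem:OutcomeAndScoreProcessesUniformlyGMC}, which its proof simply cites.
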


 \subsection*{Item \ref{enu:IdealPenaltyLoadingCtrl}: Ideal Penalty Loadings Control \label{subsec:Ideal-Penalty-Loading}}
The following lemma states that the ideal penalty loadings in (\ref{eq:PenaltyLoadingsIdeal_p2}) are bounded from
above and away from zero with probability approaching one. We obtain this control over the ideal penalty
loadings (in Lemma \ref{lem:Ideal-Penalty-Loading-Control}) as well as over the
restricted eigenvalue (in Lemma \ref{lem:Restricted-Eigenvalue-Bound}) 
by establishing a new maximal inequality (Lemma \ref{lem:MaximalInequalityFunctionalDependence})
for sums of dependent sub-Weibull random variables, which may be
of independent interest.
\begin{lem}
[\textbf{Ideal Penalty Loadings Control}]\label{lem:Ideal-Penalty-Loading-Control}
Let Assumptions \ref{assu:Innovations}, \ref{assu:Companion}.\ref{enu:SpectralRadius},  
\ref{assu:Companion}.\ref{enu:RowNormDecay} and \ref{assu:CovarianceZ}
hold, and suppose that $(\ln p)^{C(\alpha,\tau)}=o(n)$ with $C(\alpha,\tau)$ given in Assumption \ref{assu:RowSparsity} for $\alpha$ provided by Assumption \ref{assu:Innovations}.\ref{enu:EpsSubWeibullBeta} and $\tau\in (\textstyle{\frac{1}{2}},1]$ provided by Assumption \ref{assu:Companion}.\ref{enu:RowNormDecay}.
Then there are constants $\underline{\upsilon}^{0},\overline{\upsilon}^{0}\in(0,\infty)$,
depending only on $a_1,a_2,\alpha,b_1, b_2, d$ and $\tau$, such that
\begin{equation}\label{eq:IdealPenaltyLoadingsUnderCtrl}
\mathrm{P}\left(\underline{\upsilon}^{0}\leqslant\widehat{\upsilon}_{\min}^{0}\leqslant\widehat{\upsilon}_{\max}^{0}\leqslant\overline{\upsilon}^{0}\right)\to1,
\end{equation}
and, thus, $\P(\widehat{\mu}_{0}\leqslant\overline\upsilon^0/\underline\upsilon^0)\to1$.
\end{lem}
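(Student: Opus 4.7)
The plan is to write~$(\widehat{\upsilon}_{i,j}^{0})^{2}=n^{-1}\sum_{k=1}^{l_n}W_{k,i,j}^{2}$ with $W_{k,i,j}:=\sum_{t\in H_{n,k}}\varepsilon_{t,i}Z_{t-1,j}$ [cf.~\eqref{eq:PenaltyLoadingsIdeal_p2}], decompose this into its mean plus a centered block-average, bound the mean uniformly in~$(i,j)$ above and away from zero, and show the centered part vanishes uniformly in probability via Lemma~\ref{lem:MaximalInequalityFunctionalDependence}.

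First, under Assumptions~\ref{assu:Innovations} and~\ref{assu:Companion}.\ref{enu:SpectralRadius}, the scalar process~$X_{t,i,j}:=\varepsilon_{t,i}Z_{t-1,j}$ is a martingale difference sequence with respect to~$\{\mathcal{F}^{\bseta}_{t}\}_{t\in\Z}$: the almost surely convergent MA representation of~$\bY_t$ makes~$\bZ_{t-1}$ measurable with respect to~$\mathcal{F}^{\bseta}_{t-1}$, while $\E[\varepsilon_{t,i}\mid\mathcal{F}^{\bseta}_{t-1}]=0$ by Assumption~\ref{assu:Innovations}.\ref{enu:EigValsOfSigmaEpsBddAwayFromZero}. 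Orthogonality at distinct time indices, combined with strict stationarity, gives $\E[W_{k,i,j}^{2}]=m_{n}\E[\varepsilon_{0,i}^{2}Z_{-1,j}^{2}]$ and hence, using $n=l_{n}m_{n}$,
\[
\E[(\widehat{\upsilon}_{i,j}^{0})^{2}]=\E[\varepsilon_{0,i}^{2}Z_{-1,j}^{2}].
\]
A uniform lower bound on this mean follows by conditioning on~$\mathcal{F}^{\bseta}_{-1}$ and combining $\E[\varepsilon_{0,i}^{2}\mid\mathcal{F}^{\bseta}_{-1}]\geqslant a_{2}$ with $\E[Z_{-1,j}^{2}]\geqslant\Lambda_{\min}(\bSigma_{\bZ})\geqslant d$, yielding $\min_{(i,j)}\E[\varepsilon_{0,i}^{2}Z_{-1,j}^{2}]\geqslant a_{2}d$. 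A uniform upper bound follows from Cauchy--Schwarz, $\Vert\varepsilon_{0,i}\Vert_{4}\leqslant 4^{1/\alpha}a_{1}$, and an $(i,j)$-uniform bound on $\Vert Z_{-1,j}\Vert_{4}$ supplied by Lemma~\ref{lem:ZandEpsZNormBnds} via the MA representation and Assumption~\ref{assu:Companion}.\ref{enu:RowNormDecay}.

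The core step is to show $\max_{(i,j)\in[p]\times[pq]}\envert[1]{(\widehat{\upsilon}_{i,j}^{0})^{2}-\E[\varepsilon_{0,i}^{2}Z_{-1,j}^{2}]}=o_{\P}(1)$. I would apply Lemma~\ref{lem:MaximalInequalityFunctionalDependence} to~$\{W_{k,i,j}^{2}\}_{k\in[l_{n}]}$ indexed by~$(i,j)\in[p]\times[pq]$, so $T=l_{n}$ and $d_T=p^{2}q$. The sub-Weibull inputs are built by countable sub-additivity of~$\Vert\cdot\Vert_{\psi_{\alpha}}$: the product~$\varepsilon_{t,i}Z_{t-1,j}$ is sub-Weibull$(\alpha/2)$ with uniformly bounded norm; the block sum~$W_{k,i,j}$ of~$m_{n}$ such terms is sub-Weibull$(\alpha/2)$ with norm of order~$m_{n}$; and squaring gives $\Vert W_{k,i,j}^{2}\Vert_{\psi_{\alpha/4}}\lesssim m_{n}^{2}$. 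The functional-dependence measure~$\Delta^{W^{2}_{\cdot,i,j}}_{r}(h)$ is assembled via Lemma~\ref{lem:Delta_r}: the stretched-exponential decay of $\Delta^{Z_{\cdot,j}}_{r}(h)$ inherited from Assumption~\ref{assu:Companion}.\ref{enu:RowNormDecay} through the MA representation combines, via the product rule, with the strict~$\overline{q}$-dependence of~$\{\varepsilon_{t,i}\}_{t}$ to yield stretched-exponential decay of~$\Delta^{X_{\cdot,i,j}}_{r}(h)$; the sum rule over $t\in H_{n,k}$, the skeleton rule for the subsampled-by-$m_{n}$ index~$k$, and one more application of the product rule then give analogous decay for $\{W_{k,i,j}^{2}\}_{k}$.

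The hard part will be the final bookkeeping. Dividing the conclusion of Lemma~\ref{lem:MaximalInequalityFunctionalDependence} by~$n$, choosing the internal approximating block size in that lemma of order~$l_{n}^{1/(1+4\tau)}$ (mirroring the moderate-deviation block construction of Lemma~\ref{lem:ModDevBlockNormalized}), and letting~$r$ grow polylogarithmically in~$pn$, one must simultaneously drive the dependence, Gaussian-like, and sub-Weibull-tail terms to zero. The balance between the norm inflation~$\lesssim m_{n}^{2}$, the sub-Weibull parameter~$\xi=\alpha/4$ (hence $\overline{\xi}=4/\alpha+\max\{4/\alpha,1\}$), the stretched-exponential decay exponent~$\tau$, and the dimension factor $\ln(p^{2}q\,l_{n})\lesssim\ln(pn)$ is precisely what produces the exponent $C(\alpha,\tau)$ appearing in the growth condition~$(\ln p)^{C(\alpha,\tau)}=o(n)$ of Assumption~\ref{assu:RowSparsity}. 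Once uniform concentration is established, combining it with the deterministic mean bounds gives $a_{2}d/2\leqslant(\widehat{\upsilon}_{i,j}^{0})^{2}\leqslant 2\overline{c}$ simultaneously in~$(i,j)$ with probability tending to one, where~$\overline{c}$ is the uniform upper mean bound. Setting $\underline{\upsilon}^{0}:=\sqrt{a_{2}d/2}$ and $\overline{\upsilon}^{0}:=\sqrt{2\overline{c}}$ and taking square roots yields~\eqref{eq:IdealPenaltyLoadingsUnderCtrl}, and the ratio bound $\P(\widehat{\mu}_{0}\leqslant\overline{\upsilon}^{0}/\underline{\upsilon}^{0})\to 1$ follows immediately.
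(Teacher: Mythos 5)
Your overall strategy is exactly the paper's: write $(\widehat{\upsilon}_{i,j}^{0})^{2}$ as its mean $\E[\varepsilon_{0,i}^{2}Z_{-1,j}^{2}]$ (via the martingale-difference/orthogonality computation) plus a centered block average, bound the mean uniformly above and below by $a_2 d$ and a Cauchy--Schwarz/sub-Weibull constant, and kill the centered part uniformly over $(i,j)$ with the maximal inequality of Lemma~\ref{lem:MaximalInequalityFunctionalDependence}, assembling the required functional-dependence decay through the product/sum/skeleton rules of Lemma~\ref{lem:Delta_r}. This is precisely how the paper proceeds (the concentration step is isolated as Lemma~\ref{lem:DeltaIdealVanishes}), and the structure is sound.

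The one place where you commit to specifics --- the ``final bookkeeping'' --- contains two missteps. First, Lemma~\ref{lem:MaximalInequalityFunctionalDependence} requires a \emph{constant} sub-Weibull bound $K$; applying it to $W_{k,i,j}^{2}$ with $\Vert W_{k,i,j}^{2}\Vert_{\psi_{\alpha/4}}\lesssim m_{n}^{2}$ and dividing by $n$ afterwards does not fit its hypotheses, since all implicit constants in that lemma depend on $K$. You must rescale to $m_{n}^{-2}W_{k,i,j}^{2}$ \emph{before} invoking it (sub-additivity and stationarity give $\Vert m_n^{-2}W_{k,i,j}^{2}\Vert_{r}\leqslant\Vert\varepsilon_{0,i}Z_{-1,j}\Vert_{2r}^{2}\lesssim r^{4/\alpha}$, a constant bound), which is what the paper does; the final normalization then carries a factor $m_n/l_n$ rather than $1/n$, and $r=2$ already suffices --- no polylogarithmic growth of $r$ is needed because the superpolynomially small $\Delta_{r,T}$ absorbs the $d_T^{1/2}$ factor. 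Second, your proposed internal block size $m_T\sim l_{n}^{1/(1+4\tau)}$ is the wrong choice: after blocking by $m_n$ the process $\{W_{k,i,j}\}_k$ is already nearly $1$-dependent (its dependence at lag $2$ decays like $\mathrm{e}^{-Cm_n^{\tau}}$), so $m_T=1$ is both sufficient and necessary. With your choice the Gaussian term, after normalization, is of order $n^{[(1+4\tau)-8\tau^{2}]/(1+4\tau)^{2}}$ up to logarithms, whose exponent is \emph{positive} for $\tau\in(\tfrac12,(1+\sqrt3)/4)$, so the bound would diverge there. With $m_T=1$ the same term becomes $\sqrt{(\ln(pn))/n^{(4\tau-2)/(1+4\tau)}}=o(1)$ under the stated growth condition, and the rest of your argument then goes through as in the paper.
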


\subsection*{Item \ref{enu:RestrictedEigenvalueCtrl}: Restricted Eigenvalue Bound \label{subsec:Restricted-Eigenvalue-Bound}}
A key ingredient in bounding the restricted eigenvalue~$\widehat\kappa_{\widehat c_{0}\widehat{\mu}_{0}}^2$ away from zero
is to prove that the empirical covariance matrix
\[
\widehat{\bSigma}_{\bZ}:=\frac{1}{n}\bfX^\top\bfX=\frac{1}{n}\sum_{t=0}^{n-1}\bZ_{t}\bZ_{t}^\top
\] 
of the regressors $\{\bZ_{t}\}_{t=0}^{n-1}$ is ``close'' to $\bSigma_{\bZ}=\E[\bZ_0 \bZ_0^\top]$. The formal statement is as follows.
\begin{lem}
[\textbf{Covariance Consistency}]\label{lem:Covariance-Consistency}
Let Assumptions \ref{assu:Innovations}.\ref{enu:EpsIID}, \ref{assu:Innovations}.\ref{enu:EpsSubWeibullBeta} and
\ref{assu:Companion} hold,
and suppose that $\left(\ln p\right)^{1/\tau+\widetilde{C}(2,\alpha)}=o(n)$ with $\widetilde{C}(2,\alpha)$ given in Assumption \ref{assu:RowSparsity} for $\alpha$ and $\tau$ provided respectively by Assumptions \ref{assu:Innovations}.\ref{enu:EpsSubWeibullBeta} and \ref{assu:Companion}.\ref{enu:RowNormDecay}.
Then
\[
\max_{(i,j)\in[pq]^2}\envert[1]{(\widehat{\bSigma}_{\bZ})_{i,j}-(\bSigma_{\bZ})_{i,j}}
\lesssim_{\P} \sqrt{\frac{(\ln(pn))^{1/\tau+\widetilde{C}(2,\alpha)}}{n}}=o(1).
\]
\end{lem}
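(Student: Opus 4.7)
My plan is to express $(\widehat{\bSigma}_{\bZ})_{i,j} - (\bSigma_{\bZ})_{i,j} = n^{-1}\sum_{t=0}^{n-1}\bigl(Z_{t,i}Z_{t,j} - \E[Z_{0,i}Z_{0,j}]\bigr)$ and apply the maximal inequality in Lemma \ref{lem:MaximalInequalityFunctionalDependence} to the array of centered products indexed by ordered pairs $(i,j)\in[pq]^2$, treating it as a $d_T$-dimensional process with $d_T = (pq)^2$ and $T = n$. To instantiate the inequality I need (a) a uniform sub-Weibull bound on $Z_{0,i}Z_{0,j}$ and (b) a uniform decay bound on the functional dependence measure $\Delta_r^{Z_{\cdot,i}Z_{\cdot,j}}(h)$.

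For (a), the MA representation $\bY_t = \sum_{\ell=0}^\infty (\widetilde{\bTheta}_0^\ell)_{1:p,1:p}\bepsilon_{t-\ell}$ combined with Assumptions~\ref{assu:Innovations}.\ref{enu:EpsSubWeibullBeta} (joint sub-Weibull$(\alpha)$ innovations) and \ref{assu:Companion}.\ref{enu:RowNormDecay} (summability of the row norms of companion powers) yields $\max_{i\in[pq]}\|Z_{0,i}\|_{\psi_\alpha}\lesssim 1$, whence a product bound for sub-Weibull norms gives $\max_{(i,j)}\|Z_{0,i}Z_{0,j}\|_{\psi_{\alpha/2}}\lesssim 1$. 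Thus the maximal inequality applies with $\xi=\alpha/2$ and $\overline{\xi}=2/\alpha + 1/\min\{\alpha/2,1\}=\widetilde{C}(2,\alpha)$. For (b), a coupling argument that swaps the innovation block $\bseta_{-h}$ for an independent copy and uses the MA representation---tracking that only $\bepsilon_s$ with $s\in\{-h-\overline{q},\dotsc,-h\}$ are affected---combined with the geometric decay in Assumption \ref{assu:Companion}.\ref{enu:RowNormDecay} gives $\Delta_r^{Z_{\cdot,i}}(h)\lesssim \mathrm{e}^{-b_2' h^\tau}$ uniformly in $i$. Lemma \ref{lem:Delta_r}.\ref{enu:Delta_rFromPairToProduct} then transfers the same decay to the products: $\Delta_r^{Z_{\cdot,i}Z_{\cdot,j}}(h)\lesssim \mathrm{e}^{-b_2' h^\tau}$ uniformly in $(i,j)$.

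Choosing the block size $m_T \asymp (\ln(pn))^{1/\tau}$ makes $\mathrm{e}^{-b_2' m_T^\tau}$ fall faster than any polynomial in $pn$, and taking $r\asymp \ln(pn)$ keeps $d_T^{1/r}$ bounded, rendering the bias contribution $T d_T^{1/r}\Delta_{r,T}$ asymptotically negligible. After dividing by $n$, the remaining contributions from Lemma \ref{lem:MaximalInequalityFunctionalDependence} become $\sqrt{m_T\ln(pn)/n}\asymp\sqrt{(\ln(pn))^{1/\tau+1}/n}$ and $m_T(\ln(pn))^{\overline{\xi}}/n \asymp (\ln(pn))^{1/\tau+\widetilde{C}(2,\alpha)}/n$. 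Since $\widetilde{C}(2,\alpha)\geqslant 1$, the Gaussian-tail term is at most of the claimed order $\sqrt{(\ln(pn))^{1/\tau+\widetilde{C}(2,\alpha)}/n}$, and the growth condition $(\ln p)^{1/\tau+\widetilde{C}(2,\alpha)}=o(n)$ forces the second contribution to be of strictly smaller order than the square-rooted rate. This yields the stated bound, and $o(1)$ follows from the same growth condition.

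I expect the principal obstacle to be step (b): propagating the row-wise $\ell_2$-norm decay of $\widetilde{\bTheta}_0^h$ through the infinite MA representation and the $\overline{q}$-dependent innovation structure to a \emph{uniform-in-$i$} geometric bound on $\Delta_r^{Z_{\cdot,i}}(h)$, and then cleanly lifting this to products via Lemma \ref{lem:Delta_r}. A secondary bookkeeping challenge is tracking the sub-Weibull exponent halving $\alpha\to\alpha/2$ induced by passing from coordinates to products, so that $\overline{\xi}$ lines up exactly with $\widetilde{C}(2,\alpha)$ appearing in the target rate and in Assumption \ref{assu:RowSparsity}.
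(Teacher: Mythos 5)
Your proposal is correct and follows essentially the same route as the paper's proof: the same centered-product decomposition, the same application of Lemma~\ref{lem:MaximalInequalityFunctionalDependence} with $\xi=\alpha/2$ (so that $\overline{\xi}=\widetilde{C}(2,\alpha)$ exactly, as you note), the same uniform GMC bounds propagated through the VMA($\infty$) representation via Lemmas~\ref{lem:OutcomeAndScoreProcessesUniformlyGMC} and~\ref{lem:Delta_r}, and the same block size $m_T\asymp(\ln(pn))^{1/\tau}$. The only cosmetic difference is in handling the factor $d_T^{1/r}$: the paper fixes $r=2$ and kills $d_T^{1/2}=pq$ by choosing the constant in $m_T$ so that $\mathrm{e}^{-C_5 m_T^{\tau}}\leqslant(pn)^{-1}$, whereas you take $r\asymp\ln(pn)$ (which also works, provided one tracks the polynomial-in-$r$ growth of the constants in the functional dependence bounds).
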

%
Combining the ideal penalty loading control of Lemma \ref{lem:Ideal-Penalty-Loading-Control}
with Lemma \ref{lem:Covariance-Consistency},
we can bound the restricted eigenvalue away from zero.

\begin{lem}
[\textbf{Restricted Eigenvalue Bound}]\label{lem:Restricted-Eigenvalue-Bound}
Let Assumptions \ref{assu:Innovations}.\ref{enu:EpsIID}, \ref{assu:Innovations}.\ref{enu:EpsSubWeibullBeta},
\ref{assu:Companion} and \ref{assu:CovarianceZ} hold and suppose that $s^2(\ln(pn))^{1/\tau+\widetilde{C}(2,\alpha)}=o(n)$ with
$\widetilde{C}(2,\alpha)$ given in Assumption \ref{assu:RowSparsity} for $\alpha$ and $\tau$ provided respectively by Assumptions \ref{assu:Innovations}.\ref{enu:EpsSubWeibullBeta} and \ref{assu:Companion}.\ref{enu:RowNormDecay}. Consider any collection $\{\widehat\bUpsilon_{i}\}_{i\in[p]}$ of asymptotically valid penalty loadings in the sense of~(\ref{eq:AsymptoticallyValidPenaltyLoadings}) holding with probability approaching one and (\ref{eq:AsymptoticallyValidPenaltyLoadings2}).
Then
\[
\P\del[1]{\widehat\kappa_{\widehat c_{0}\widehat{\mu}_{0}}^{2}\geqslant d/2}\to1,
\]
where~$\widehat c_{0}:=(\widehat uc+1)/(\widehat\ell c-1)$, and $\widehat \kappa_{\widehat c_{0}\widehat{\mu}_{0}}$ is defined via (\ref{eq:RestrictedEigenvalue}) with $C=\widehat c_{0}\widehat{\mu}_{0}$.
\end{lem}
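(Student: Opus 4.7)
The plan is to linearize by writing $\bdelta^{\top}\widehat\bSigma_{\bZ}\bdelta = \bdelta^{\top}\bSigma_{\bZ}\bdelta + \bdelta^{\top}(\widehat\bSigma_{\bZ}-\bSigma_{\bZ})\bdelta$ and to argue that the second term is uniformly negligible over the restricted sets in the definition \eqref{eq:RestrictedEigenvalue} of $\widehat\kappa_{C}^{2}$. Assumption \ref{assu:CovarianceZ} delivers the deterministic population lower bound $\bdelta^{\top}\bSigma_{\bZ}\bdelta\geqslant d\|\bdelta\|_{\ell_{2}}^{2}$ for every $\bdelta\in\R^{pq}$, so the remaining task is to control the empirical perturbation on each restricted cone.

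For the perturbation, the elementary bound $\envert[1]{\bdelta^{\top}\bfA\bdelta}\leqslant \max_{(i,j)\in[pq]^2}|A_{i,j}|\cdot \|\bdelta\|_{\ell_{1}}^{2}$ combined with Lemma \ref{lem:Covariance-Consistency} gives, uniformly in $\bdelta\neq \mathbf{0}_{pq}$,
\[
\frac{\envert[1]{\bdelta^{\top}(\widehat\bSigma_{\bZ}-\bSigma_{\bZ})\bdelta}}{\|\bdelta\|_{\ell_{1}}^{2}} \;\lesssim_{\P}\; \sqrt{\frac{(\ln(pn))^{1/\tau+\widetilde C(2,\alpha)}}{n}}.
\]
For any deterministic $C\in(0,\infty)$, any $T\subseteq[pq]$ with $|T|\leqslant s$, and any $\bdelta\in\mathcal{R}_{C,T}$, the restriction $\|\bdelta_{T^{c}}\|_{\ell_{1}}\leqslant C\|\bdelta_{T}\|_{\ell_{1}}$ together with Cauchy--Schwarz yields
\[
\|\bdelta\|_{\ell_{1}} \;\leqslant\; (1+C)\|\bdelta_{T}\|_{\ell_{1}} \;\leqslant\; (1+C)\sqrt{s}\,\|\bdelta\|_{\ell_{2}}.
\]
Chaining these two displays shows that for every fixed $C$,
\[
\widehat\kappa_{C}^{2} \;\geqslant\; d \;-\; (1+C)^{2}\,s\max_{(i,j)\in[pq]^2}\envert[1]{(\widehat\bSigma_{\bZ})_{i,j}-(\bSigma_{\bZ})_{i,j}},
\]
and the growth condition $s^{2}(\ln(pn))^{1/\tau+\widetilde C(2,\alpha)}=o(n)$ makes the stochastic correction $o_{\P}(1)$.

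It remains to pass from a deterministic $C$ to the random restriction level $C=\widehat c_{0}\widehat\mu_{0}$. Asymptotic validity \eqref{eq:AsymptoticallyValidPenaltyLoadings}--\eqref{eq:AsymptoticallyValidPenaltyLoadings2} gives $\widehat\ell\overset{\mathrm{P}}{\to} 1$ and $\widehat u\overset{\mathrm{P}}{\to} u\in[1,\infty)$; since $c>1$ in \eqref{eq:tuning_c_and_gamma}, one has $\widehat\ell c>1$ eventually and $\widehat c_{0}=(\widehat u c+1)/(\widehat\ell c-1)\overset{\mathrm{P}}{\to}(uc+1)/(c-1)<\infty$. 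Lemma \ref{lem:Ideal-Penalty-Loading-Control} in turn supplies $\P(\widehat\mu_{0}\leqslant \overline\upsilon^{0}/\underline\upsilon^{0})\to 1$, so there exists a deterministic $\overline C\in(0,\infty)$ with $\P(\widehat c_{0}\widehat\mu_{0}\leqslant \overline C)\to 1$. Since $C'\mapsto\mathcal{R}_{C',T}$ is non-decreasing in $C'$ in the set-inclusion sense, $C'\mapsto\widehat\kappa_{C'}^{2}$ is non-increasing, and on the event $\{\widehat c_{0}\widehat\mu_{0}\leqslant \overline C\}$ we have $\widehat\kappa_{\widehat c_{0}\widehat\mu_{0}}^{2}\geqslant \widehat\kappa_{\overline C}^{2}\geqslant d - o_{\P}(1)\geqslant d/2$ with probability tending to one. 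The main obstacle is precisely this transition from a deterministic to a random restriction level; once monotonicity in $C'$ and the stochastic boundedness of $\widehat c_{0}\widehat\mu_{0}$ are in hand, the argument closes, while the covariance perturbation step is routine given Lemma \ref{lem:Covariance-Consistency}.
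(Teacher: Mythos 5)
Your proposal is correct and takes essentially the same approach as the paper's proof: the same linearization $\bdelta^{\top}\widehat\bSigma_{\bZ}\bdelta = \bdelta^{\top}\bSigma_{\bZ}\bdelta + \bdelta^{\top}(\widehat\bSigma_{\bZ}-\bSigma_{\bZ})\bdelta$, the same $\ell_1$-to-$\ell_2$ cone relation $\|\bdelta\|_{\ell_1}\leqslant(1+C)\sqrt{s}\,\|\bdelta\|_{\ell_2}$, the same use of Lemma \ref{lem:Covariance-Consistency} with the growth condition, and the same control of $\widehat{c}_0\widehat{\mu}_0$ via asymptotic validity and Lemma \ref{lem:Ideal-Penalty-Loading-Control}. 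The only stylistic difference is that you pass from a random to a deterministic restriction level $\overline C$ via monotonicity of $C\mapsto\widehat\kappa_C^2$, whereas the paper plugs the random $\widehat c_0\widehat\mu_0$ directly into the $\omega$-by-$\omega$ inequality $\widehat\kappa_{\widehat c_{0}\widehat{\mu}_{0}}^{2}\geqslant d-(1+\widehat c_{0}\widehat{\mu}_{0})^{2}s\widehat{\Delta}$ and then shows the correction term is $o_\P(1)$ — both are valid and come to the same thing.
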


\subsection*{Item \ref{enu:AsymptoticValidityDataDrivenLoadings}: Asymptotic Validity of Data-Driven Penalty Loadings \label{subsec:Validity-of-Data-Driven}}

Lemma \ref{lem:Restricted-Eigenvalue-Bound} applies to any asymptotically valid penalty loadings. The following result states that the data-driven penalty loadings obtained from
Algorithm \ref{alg:Data-Driven-Penalization} satisfy this requirement.

\begin{lem}
[\textbf{Asymptotic Validity of Data-Driven Penalty Loadings}]\label{lem:AsymptoticValidityDataDrivenPenaltyLoadings}
Let Assumptions \ref{assu:Innovations}, \ref{assu:Companion}, \ref{assu:CovarianceZ} and \ref{assu:RowSparsity} hold. Then, for any fixed number of updates $K\in\N_{0}$, the penalty loadings $\{\widehat{\upsilon}_{i,j}^{(K)}\}_{(i,j)\in[p]\times[pq]}$ constructed by Algorithm \ref{alg:Data-Driven-Penalization} are asymptotically valid.
\end{lem}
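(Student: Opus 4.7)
The plan is to prove the lemma by induction on the update index $k\in\{0,1,\dotsc,K\}$, comparing both $(\widehat{\upsilon}_{i,j}^{(k)})^{2}$ and the ideal $(\widehat{\upsilon}_{i,j}^{0})^{2}$ to a common deterministic target $\sigma_{i,j}^{2}:=\E[\varepsilon_{0,i}^{2}Z_{-1,j}^{2}]$ uniformly over $(i,j)\in[p]\times[pq]$. Two preliminary uniform approximations drive everything: (i) $\max_{(i,j)}|n^{-1}\sum_{t}\varepsilon_{t,i}^{2}Z_{t-1,j}^{2}-\sigma_{i,j}^{2}|\overset{\mathrm{P}}{\to}0$ and (ii) $\max_{(i,j)}|(\widehat{\upsilon}_{i,j}^{0})^{2}-n^{-1}\sum_{t}\varepsilon_{t,i}^{2}Z_{t-1,j}^{2}|\overset{\mathrm{P}}{\to}0$. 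Expanding the square of each block sum, (ii) reduces to a uniform bound on the cross term $n^{-1}\sum_{k}\sum_{t\neq s\in H_{n,k}}\varepsilon_{t,i}\varepsilon_{s,i}Z_{t-1,j}Z_{s-1,j}$. For $s<t$ the summand $\varepsilon_{s,i}Z_{t-1,j}Z_{s-1,j}$ is $\mathcal{F}^{\bseta}_{t-1}$-measurable, so by Assumption~\ref{assu:Innovations}.\ref{enu:EigValsOfSigmaEpsBddAwayFromZero} and iterated expectations the summands have mean zero. Both (i) and (ii) are then delivered by Lemma~\ref{lem:MaximalInequalityFunctionalDependence} applied to the $p\cdot pq$-dimensional array indexed by $(i,j)$, with the requisite functional dependence measures and sub-Weibull norms computed via the moving-average representation and Lemma~\ref{lem:Delta_r}, paralleling the bookkeeping in the proof of Lemma~\ref{lem:Ideal-Penalty-Loading-Control}.

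For the base case $k=0$, expanding $Y_{t,i}^{2}=(\bZ_{t-1}^{\top}\bbeta_{0i})^{2}+2(\bZ_{t-1}^{\top}\bbeta_{0i})\varepsilon_{t,i}+\varepsilon_{t,i}^{2}$ splits $(\widehat{\upsilon}_{i,j}^{(0)})^{2}$ into a signal term concentrating uniformly around $\widetilde{\sigma}_{i,j}^{2}:=\E[(\bZ_{0}^{\top}\bbeta_{0i})^{2}Z_{0,j}^{2}]$, a mean-zero cross term controlled by another application of Lemma~\ref{lem:MaximalInequalityFunctionalDependence}, and the noise term from (i). Using $\bZ_{t-1}^{\top}\bbeta_{0i}=Y_{t,i}-\varepsilon_{t,i}$ and Assumptions~\ref{assu:Innovations}--\ref{assu:Companion}, $\max_{i}\Vert\bZ_{0}^{\top}\bbeta_{0i}\Vert_{\psi_{\alpha}}$ is bounded by a constant, which combined with the joint sub-Weibull property of $\bZ_{0}$ gives $\max_{i,j}\widetilde{\sigma}_{i,j}^{2}\leqslant M$ for a finite constant $M$. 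Combining with (i), (ii) and the lower bound $\sigma_{i,j}^{2}\geqslant(\underline{\upsilon}^{0})^{2}$ from Lemma~\ref{lem:Ideal-Penalty-Loading-Control} yields $(\widehat{\upsilon}_{i,j}^{(0)})^{2}/(\widehat{\upsilon}_{i,j}^{0})^{2}=1+\widetilde{\sigma}_{i,j}^{2}/\sigma_{i,j}^{2}+o_{\mathrm{P}}(1)$ uniformly, so one may take $\widehat{\ell}=1-o_{\mathrm{P}}(1)$ and $\widehat{u}\overset{\mathrm{P}}{\to}u^{(0)}:=\sqrt{1+M/(\underline{\upsilon}^{0})^{2}}$.

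For the inductive step $k\geqslant 1$, the induction hypothesis provides asymptotic validity of $\widehat{\bUpsilon}_{i}^{(k-1)}$, which together with Lemmas~\ref{lem:Deviation-Bound}, \ref{lem:Ideal-Penalty-Loading-Control} and \ref{lem:Restricted-Eigenvalue-Bound} verifies the premises of the Master Lemma~\ref{lem:MasterLemma} and hence delivers the rates~\eqref{eq:lasso_rates_1}--\eqref{eq:lasso_rates_3} for $\bdelta_{i}:=\widehat{\bbeta}_{i}^{(k-1)}-\bbeta_{0i}$. Writing $\widehat{\varepsilon}_{t,i}^{(k-1)}=\varepsilon_{t,i}-\bZ_{t-1}^{\top}\bdelta_{i}$ and expanding the square produces, besides $n^{-1}\sum_{t}\varepsilon_{t,i}^{2}Z_{t-1,j}^{2}$, a quadratic term bounded by $\max_{t}Z_{t-1,j}^{2}\cdot\Vert\bdelta_{i}\Vert_{2,n}^{2}$ and a cross term bounded via H\"older by $\max_{t}Z_{t-1,j}^{2}\cdot\Vert\bdelta_{i}\Vert_{\ell_{1}}\cdot\Vert n^{-1}\sum_{t}\varepsilon_{t,i}\bZ_{t-1}\Vert_{\ell_{\infty}}$. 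Combining the sub-Weibull tail bound $\max_{t,j}Z_{t-1,j}^{2}\lesssim_{\mathrm{P}}(\ln(pn))^{2/\alpha}$, the Lasso rates for $\Vert\bdelta_{i}\Vert_{2,n}$ and $\Vert\bdelta_{i}\Vert_{\ell_{1}}$, and a Deviation-Bound-style $O_{\mathrm{P}}(\sqrt{\ln(pn)/n})$ estimate for the $\ell_{\infty}$-score, Assumption~\ref{assu:RowSparsity} forces both remainder terms to $o_{\mathrm{P}}(1)$ uniformly in $(i,j)$. With (i) and (ii) this gives $(\widehat{\upsilon}_{i,j}^{(k)})^{2}=(\widehat{\upsilon}_{i,j}^{0})^{2}+o_{\mathrm{P}}(1)$ uniformly, so both $\widehat{\ell}$ and $\widehat{u}$ converge in probability to $1$, i.e.~$u^{(k)}=1$. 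The main obstacle is part (ii): controlling the block cross-term uniformly across the $p^{2}q$ pairs $(i,j)$. This is a high-dimensional, weakly dependent array of quartic-in-the-data terms whose functional dependence measures (via Lemma~\ref{lem:Delta_r}) and sub-Weibull moments must be tracked carefully before Lemma~\ref{lem:MaximalInequalityFunctionalDependence} yields the required $o_{\mathrm{P}}(1)$ rate under the growth conditions of Assumption~\ref{assu:RowSparsity}.
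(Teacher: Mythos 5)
Your roadmap (iterate over the update index, compare data-driven loadings to the ideal ones through a common deterministic target, invoke Lemma~\ref{lem:MaximalInequalityFunctionalDependence} repeatedly, and feed the intermediate Lasso rates into a residual expansion) is the same as the paper's, which proceeds via a base case, an inductive step for one update, and an iteration remark. However, two concrete steps in your proposal are not correct as stated.

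First, in the inductive step you claim the cross term $n^{-1}\sum_{t}\varepsilon_{t,i}(\bZ_{t-1}^{\top}\bdelta_{i})Z_{t-1,j}^{2}$ is bounded via H\"older by $\max_{t}Z_{t-1,j}^{2}\cdot\Vert\bdelta_{i}\Vert_{\ell_{1}}\cdot\Vert n^{-1}\sum_{t}\varepsilon_{t,i}\bZ_{t-1}\Vert_{\ell_{\infty}}$. This inequality is false: the weight $Z_{t-1,j}^{2}$ cannot be extracted from the \emph{signed} sum without turning the remainder into $\max_{m}n^{-1}\sum_{t}|\varepsilon_{t,i}Z_{t-1,m}|$, whose order is $O_{\mathrm{P}}(1)$, not $O_{\mathrm{P}}(\sqrt{\ln(pn)/n})$; the small rate for $\Vert n^{-1}\sum_{t}\varepsilon_{t,i}\bZ_{t-1}\Vert_{\ell_{\infty}}$ is destroyed once the $Z_{t-1,j}^{2}$ factor is reinserted termwise. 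The paper avoids this by a Cauchy--Schwarz split, bounding the cross term by $(\max_{t,j}|Z_{t-1,j}|)\cdot(\max_{(i,j)}\widetilde{\upsilon}_{i,j}^{0})\cdot\max_{i}\Vert\bdelta_{i}\Vert_{2,n}$, which only uses the prediction-norm rate and a single power of $|Z|$.

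Second, your item (ii) isolates the block cross-term $n^{-1}\sum_{k}\sum_{t\neq s\in H_{n,k}}\varepsilon_{t,i}\varepsilon_{s,i}Z_{t-1,j}Z_{s-1,j}$ for treatment by Lemma~\ref{lem:MaximalInequalityFunctionalDependence}, which you flag as "the main obstacle" without resolving it. This is indeed a genuine gap: the crude sub-Weibull bound on a block of $m_{n}^{2}$ cross-products grows like $m_{n}$, so the constant $K$ in Lemma~\ref{lem:MaximalInequalityFunctionalDependence} is no longer fixed; one would need a Burkholder/Rosenthal-type martingale moment bound (exploiting Assumption~\ref{assu:Innovations}.\ref{enu:EigValsOfSigmaEpsBddAwayFromZero}) and careful tracking of the resulting $r$-dependence to restore a constant $K$, none of which is carried out. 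The paper sidesteps this altogether: since the cross-products have zero mean under Assumption~\ref{assu:Innovations}.\ref{enu:EigValsOfSigmaEpsBddAwayFromZero}, it centers $m_n^{-2}(V_{k,i,j}^{(n)})^{2}$ by its own expectation $m_n^{-1}\E[\varepsilon_{0,i}^{2}Z_{-1,j}^{2}]$ (Lemma~\ref{lem:DeltaIdealVanishes}) and applies the maximal inequality once; the diagonal and off-diagonal parts of the block square are never separated. Your (ii) would then follow from that lemma and your (i) by the triangle inequality, without confronting the cross-term at all.
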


The proof of Lemma \ref{lem:AsymptoticValidityDataDrivenPenaltyLoadings}
is inspired by that of \citet[Lemma 11]{belloni_sparse_2012}. The main idea behind the proof is to show that the penalty loadings of Algorithm~\ref{alg:Data-Driven-Penalization} are close to the blocking-based ideal ones. To this end, we use our maximal inequality for sums of dependent sub-Weibull random variables (Lemma \ref{lem:MaximalInequalityFunctionalDependence}).

\section{Proofs of Lemmas \ref{lem:MasterLemma}--\ref{lem:AsymptoticValidityDataDrivenPenaltyLoadings} and Theorem \ref{thm:Rates-for-LASSO-data-driven-loadings}}
\subsection{Proof of Master Lemma (Lemma \ref{lem:MasterLemma})}
\begin{proof}
[\sc{Proof of Lemma \ref{lem:MasterLemma}}] The proof of the lemma
is similar to that of \citet[Lemma 6]{belloni_sparse_2012},
which, in turn, builds on a strategy attributed to \citet{bickel_simultaneous_2009}.
Let $T_{i}:=\{j\in[pq]:\beta_{0i,j}\neq0\}$. Recall that $\widehat{\bUpsilon}_{i}^{0}$
denotes the diagonal matrix $\mathrm{diag}(\widehat{\upsilon}_{i,1}^{0},\dotsc,\widehat{\upsilon}_{i,pq}^{0})$
of ideal penalty loadings, and let $\widehat{\bdelta}_{i}:=\widehat{\bbeta}_{i}-\bbeta_{0i}$. \emph{In the course of this proof only}, for any~$T\subseteq[pq]$ and~$\bm{v}\in\R^{pq}$, we denote by~$\bm{v}_{T}\in\R^{pq}$ the vector with~$\bm{v}_{T,i}=\bm{v}_i$ if~$i\in T$ and~$\bm{v}_{T,i}=0$ otherwise (not to be confused with the \emph{subvector} picked out by $T$, which lies in $\R^{|T|}$).

Expanding the quadratic part (\ref{eq:Qihat}) of the objective function in (\ref{eq:LASSOVector}), when $\lambda/n\geqslant c\Vert\bS_{i}\Vert_{\ell_{\infty}}$ we get
\begin{align*}
\widehat{Q}_i(\widehat\bbeta_i)-\widehat{Q}_i(\bbeta_{0i})
& =\frac{1}{n}\sum_{t=1}(\bZ_{t-1}^\top\widehat{\bdelta}_{i})^{2}-\frac{2}{n}\sum_{t=1}^{n}\varepsilon_{t,i}\bZ_{t-1}^\top\widehat{\bdelta}_{i}\\
 & =\Vert\widehat{\bdelta}_{i}\Vert_{2,n}^{2}-\underbrace{\left[\frac{2}{n}\sum_{t=1}^{n}(\widehat{\bUpsilon}_{i}^{0})^{-1}\bZ_{t-1}\varepsilon_{t,i}\right]^\top}_{=\bS_{n,i}^\top}(\widehat{\bUpsilon}_{i}^{0}\widehat{\bdelta}_{i})\\
 & \geqslant\Vert\widehat{\bdelta}_{i}\Vert_{2,n}^{2}-\Vert \bS_{n,i}\Vert_{\ell_{\infty}}\Vert\widehat{\bUpsilon}_{i}^{0}\widehat{\bdelta}_{i}\Vert_{\ell_{1}}\tag{H{\"o}lder}\\
 & \geqslant\Vert\widehat{\bdelta}_{i}\Vert_{2,n}^{2}-\frac{\lambda}{cn}\Vert\widehat{\bUpsilon}_{i}^{0}\widehat{\bdelta}_{i}\Vert_{\ell_{1}},
\end{align*}
which we can write as
\begin{equation}\label{eq:QdiffLowerBnd}
\widehat{Q}_i(\widehat\bbeta_i)-\widehat{Q}_i(\bbeta_{0i})\geqslant\Vert\widehat{\bdelta}_{i}\Vert_{2,n}^{2}-\frac{\lambda}{cn}\left(\Vert\widehat{\bUpsilon}_{i}^{0}\widehat{\bdelta}_{iT_{i}}\Vert_{\ell_{1}}+\Vert\widehat{\bUpsilon}_{i}^{0}\widehat{\bdelta}_{iT_{i}^{c}}\Vert_{\ell_{1}}\right).   
\end{equation}
Since $\widehat{\bbeta}_{i}$ is a minimizer of $\R^{pq}\ni\bbeta\mapsto\widehat Q_i(\bbeta)+(\lambda/n)\normn{\widehat\bUpsilon_i\bbeta}_{\ell_1}$, the triangle inequality shows
\begin{align*}
\widehat{Q}_i(\widehat\bbeta_i)-\widehat{Q}_i(\bbeta_{0i}) & \leqslant\frac{\lambda}{n}\left(\Vert\widehat{\bUpsilon}_{i}\bbeta_{0i}\Vert_{\ell_{1}}-\Vert\widehat{\bUpsilon}_{i}\widehat{\bbeta}_{i}\Vert_{\ell_{1}}\right)
  \leqslant\frac{\lambda}{n}\left(\Vert\widehat{\bUpsilon}_{i}\widehat{\bdelta}_{iT_{i}}\Vert_{\ell_{1}}-\Vert\widehat{\bUpsilon}_{i}\widehat{\bdelta}_{iT_{i}^{c}}\Vert_{\ell_{1}}\right),
\end{align*}
so from the penalty loadings $\widehat{\bUpsilon}_{i}$
satisfying (\ref{eq:AsymptoticallyValidPenaltyLoadings}), we arrive at
\begin{equation}\label{eq:QdiffUpperBnd}
\widehat{Q}_i(\widehat\bbeta_i)-\widehat{Q}_i(\bbeta_{0i})\leqslant\frac{\lambda}{n}\left(\widehat u\Vert\widehat{\bUpsilon}_{i}^{0}\widehat{\bdelta}_{iT_{i}}\Vert_{\ell_{1}}-\widehat\ell\Vert\widehat{\bUpsilon}_{i}^{0}\widehat{\bdelta}_{iT_{i}^{c}}\Vert_{\ell_{1}}\right).    
\end{equation}
The two inequalities (\ref{eq:QdiffLowerBnd}) and (\ref{eq:QdiffUpperBnd}) combine to yield
\begin{equation}
\Vert\widehat{\bdelta}_{i}\Vert_{2,n}^{2}+\frac{\lambda}{n}\left(\widehat\ell-\frac{1}{c}\right)\Vert\widehat{\bUpsilon}_{i}^{0}\widehat{\bdelta}_{iT_{i}^{c}}\Vert_{\ell_{1}}\leqslant\frac{\lambda}{n}\left(\widehat u+\frac{1}{c}\right)\Vert\widehat{\bUpsilon}_{i}^{0}\widehat{\bdelta}_{iT_{i}}\Vert_{\ell_{1}}.\label{eq:LASSOProofIntermediateIneq}
\end{equation}
Since $1/c<\widehat\ell\leqslant\widehat u$, we deduce that 
\begin{equation}
\widehat{\upsilon}_{\min}^{0}\Vert\widehat{\bdelta}_{iT_{i}^{c}}\Vert_{\ell_{1}}\leqslant\Vert\widehat{\bUpsilon}_{i}^{0}\widehat{\bdelta}_{iT_{i}^{c}}\Vert_{\ell_{1}}\leqslant\frac{\widehat u+1/c}{\widehat\ell-1/c}\Vert\widehat{\bUpsilon}_{i}^{0}\widehat{\bdelta}_{iT_{i}}\Vert_{\ell_{1}}\leqslant\frac{\widehat u+1/c}{\widehat\ell-1/c}\widehat{\upsilon}_{\max}^{0}\Vert\widehat{\bdelta}_{iT_{i}}\Vert_{\ell_{1}}.\label{eq:deltahat_i_in_ideally_weighted_cone}
\end{equation}
It follows that
\[
\Vert\widehat{\bdelta}_{iT_{i}^{c}}\Vert_{\ell_{1}}\leqslant\frac{\widehat u+1/c}{\widehat\ell-1/c}\cdot\frac{\widehat{\upsilon}_{\max}^{0}}{\widehat{\upsilon}_{\min}^{0}}\Vert\widehat{\bdelta}_{iT_{i}}\Vert_{\ell_{1}}=\widehat c_{0}\widehat{\mu}_{0}\Vert\widehat{\bdelta}_{iT_{i}}\Vert_{\ell_{1}},
\]
i.e.~$\widehat{\bdelta}_{i}$ lies in the restricted set $\mathcal{R}_{\widehat c_{0}\widehat{\mu}_{0},T_{i}}$ defined via (\ref{eq:RestrictedSet}) with $C=\widehat c_0\widehat\mu_0$ and $T=T_i$. From the definition (\ref{eq:RestrictedEigenvalue})
of the restricted eigenvalue we see that $\Vert\widehat{\bdelta}_{i}\Vert_{\ell_{2}}\leqslant\Vert\widehat{\bdelta}_{i}\Vert_{2,n}/\widehat\kappa_{\widehat c_{0}\widehat{\mu}_{0}},$
so by the Cauchy-Schwarz inequality and recalling that
$s=\max_{i\in[p]}|T_{i}|$ [cf.~(\ref{eq:def_s_sparsity})]
\begin{equation}\label{eq:IdeallyWeightedEll1IntermediateBnd}
\Vert\widehat{\bUpsilon}_{i}^{0}\widehat{\bdelta}_{iT_{i}}\Vert_{\ell_{1}} \leqslant\widehat{\upsilon}_{\max}^{0}\Vert\widehat{\bdelta}_{iT_{i}}\Vert_{\ell_{1}}\leqslant\widehat{\upsilon}_{\max}^{0}\sqrt{s}\Vert\widehat{\bdelta}_{iT_{i}}\Vert_{\ell_{2}}\leqslant\widehat{\upsilon}_{\max}^{0}\sqrt{s}\Vert\widehat{\bdelta}_{i}\Vert_{\ell_{2}}\leqslant\frac{\widehat{\upsilon}_{\max}^{0}\sqrt{s}\Vert\widehat{\bdelta}_{i}\Vert_{2,n}}{\widehat\kappa_{\widehat c_{0}\widehat{\mu}_{0}}}.
\end{equation}
Hence, continuing our string of inequalities from (\ref{eq:LASSOProofIntermediateIneq}), we see that
\[
\Vert\widehat{\bdelta}_{i}\Vert_{2,n}^{2}+\underbrace{\frac{\lambda}{n}\left(\widehat\ell-\frac{1}{c}\right)\Vert\widehat{\bUpsilon}_{i}^{0}\widehat{\bdelta}_{iT_{i}^{c}}\Vert_{\ell_{1}}}_{\geqslant0}\leqslant\frac{\lambda}{n}\left(\widehat u+\frac{1}{c}\right)\frac{\widehat{\upsilon}_{\max}^{0}\sqrt{s}\Vert\widehat{\bdelta}_{i}\Vert_{2,n}}{\widehat\kappa_{\widehat c_{0}\widehat{\mu}_{0}}},
\]
from which we arrive at the prediction error bound
\[
\Vert\widehat{\bdelta}_{i}\Vert_{2,n}\leqslant\left(\widehat u+\frac{1}{c}\right)\frac{\lambda\widehat{\upsilon}_{\max}^{0}\sqrt{s}}{n\widehat\kappa_{\widehat c_{0}\widehat{\mu}_{0}}}.
\]
As the right-hand bound is independent of $i$, the uniformity in $i\in[p]$ follows. The $\ell_{2}$ error
bound then follows from the already established $\Vert\widehat{\bdelta}_{i}\Vert_{\ell_{2}}\leqslant\Vert\widehat{\bdelta}_{i}\Vert_{2,n}/\widehat\kappa_{\widehat c_{0}\widehat{\mu}_{0}},$
and the $\ell_{1}$ bound subsequently follows from $\Vert\widehat{\bdelta}_{i}\Vert_{\ell_{1}}\leqslant\left(1+\widehat c_{0}\widehat{\mu}_{0}\right)\Vert\widehat{\bdelta}_{iT_{i}}\Vert_{\ell_{1}}$
and $\Vert\widehat{\bdelta}_{iT_{i}}\Vert_{\ell_{1}}\leqslant\sqrt{s}\Vert\widehat{\bdelta}_{i}\Vert_{\ell_{2}}.$
\end{proof}

\subsection{Proof of Deviation Bound (Lemma \ref{lem:Deviation-Bound})}
We first state and prove some supporting lemmas.
\begin{lem}\label{lem:ZandEpsZNormBnds} 
If Assumptions
\ref{assu:Innovations}.\ref{enu:EpsIID}, \ref{assu:Innovations}.\ref{enu:EpsSubWeibullBeta}, \ref{assu:Companion}.\ref{enu:SpectralRadius} and \ref{assu:Companion}.\ref{enu:RowNormDecay}
hold, then, for $\alpha$ provided by Assumption \ref{assu:Innovations}.\ref{enu:EpsSubWeibullBeta}, the following statements hold.
\begin{enumerate}[(1)]
\item\label{enu:ZiMarginalSubWeibullNormUnifBdd} For all $i\in[pq]$, 
\[
\left\Vert Z_{0,i}\right\Vert_{\psi_{\alpha}}\leqslant \underbrace{a_1b_1 \left(1+ \frac{\Gamma(1/\tau)}{b_2^{1/\tau}\tau} \right)}_{=:C(a_1,b_1,b_2,\tau)},
\]
where $\Gamma$ denotes the gamma function, $\Gamma(z):=\int_{0}^{\infty}u^{z-1}\mathrm{e}^{-u}\dif u$.
\item\label{enu:ZiUnifMomentBnd} For all $r\in[1,\infty)$ and all $i\in[pq]$,
\[
\left\Vert Z_{0,i}\right\Vert _{r}\leqslant C(a_1,b_1,b_2,\tau)r^{1/\alpha}.
\]
\item\label{enu:EpsiZjUnifMomentBnd} For all $r\in[1,\infty)$, all $i\in[pq]$, and all $j\in[pq]$
\[
\left\Vert\varepsilon_{0,i}Z_{-1,j}\right\Vert_r\leqslant C(a_1,b_1,b_2,\tau)a_1(2r)^{2/\alpha}.
\]
\item\label{enu:Zbeta_iSubWeibullNormUnifBdd} For all $i\in[p]$,
\[
\enVert[1]{\bZ_0^\top\bbeta_{0i}}_{\psi_\alpha}\leqslant  \frac{a_1b_1\Gamma(1/\tau)}{b_2^{1/\tau}\tau}.
\]
\end{enumerate}
\end{lem}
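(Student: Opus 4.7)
The plan is to derive all four statements from the moving-average representation $\bY_{t}=\sum_{\ell=0}^{\infty}(\widetilde{\bTheta}_{0}^{\ell})_{1:p,1:p}\bepsilon_{t-\ell}$, which is available under Assumptions~\ref{assu:Innovations}.\ref{enu:EpsIID}, \ref{assu:Innovations}.\ref{enu:EpsSubWeibullBeta} and~\ref{assu:Companion}.\ref{enu:SpectralRadius}, combined with three elementary facts: countable sub-additivity of $\|\cdot\|_{\psi_{\alpha}}$ (noted in the paper's footnote on the sub-Weibull norm), the scaling inequality $\|\bu^{\top}\bepsilon_{0}\|_{\psi_{\alpha}}\leqslant\|\bu\|_{\ell_{2}}\|\bepsilon_{0}\|_{\psi_{\alpha}}\leqslant a_{1}\|\bu\|_{\ell_{2}}$ (obtained by applying the definition of the joint sub-Weibull norm to $\bu/\|\bu\|_{\ell_{2}}$), and the integral estimate
\[
\sum_{\ell=1}^{\infty}\mathrm{e}^{-b_{2}\ell^{\tau}}\leqslant\int_{0}^{\infty}\mathrm{e}^{-b_{2}u^{\tau}}\,\mathrm{d}u=\frac{\Gamma(1/\tau)}{b_{2}^{1/\tau}\tau},
\]
obtained via the substitution $v=b_{2}u^{\tau}$.

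For Claim~\ref{enu:ZiMarginalSubWeibullNormUnifBdd}, I first reduce to bounding $\|Y_{0,j}\|_{\psi_{\alpha}}$ for $j\in[p]$, since by strict stationarity every coordinate $Z_{0,i}$ coincides in distribution with some $Y_{0,j}$. I then expand $Y_{0,j}=\sum_{\ell=0}^{\infty}(\widetilde{\bTheta}_{0}^{\ell})_{j,1:p}\bepsilon_{-\ell}$, apply countable sub-additivity followed by the scaling inequality, and use Assumption~\ref{assu:Companion}.\ref{enu:RowNormDecay} to reach $\|Y_{0,j}\|_{\psi_{\alpha}}\leqslant a_{1}\sum_{\ell=0}^{\infty}\|(\widetilde{\bTheta}_{0}^{\ell})_{j,1:p}\|_{\ell_{2}}$. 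The $\ell=0$ term equals $1$ (that row is a canonical basis vector in $\R^{p}$), which I absorb into $b_{1}$ by taking $b_{1}\geqslant 1$ without loss of generality, and the tail is then estimated via the integral bound above. For Claim~\ref{enu:Zbeta_iSubWeibullNormUnifBdd}, I use that $\bbeta_{0i}^{\top}$ is the $i$\textsuperscript{th} row of $\widetilde{\bTheta}_{0}$, whence $\bZ_{0}^{\top}\bbeta_{0i}=Y_{1,i}-\varepsilon_{1,i}=\sum_{\ell=1}^{\infty}(\widetilde{\bTheta}_{0}^{\ell})_{i,1:p}\bepsilon_{1-\ell}$; applying the same recipe yields the sharper constant because the sum starts at $\ell=1$ and no $\ell=0$ contribution remains.

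Claim~\ref{enu:ZiUnifMomentBnd} follows immediately by combining the defining inequality $\|X\|_{r}\leqslant r^{1/\alpha}\|X\|_{\psi_{\alpha}}$ with Claim~\ref{enu:ZiMarginalSubWeibullNormUnifBdd}. For Claim~\ref{enu:EpsiZjUnifMomentBnd}, I invoke the Cauchy--Schwarz inequality $\|\varepsilon_{0,i}Z_{-1,j}\|_{r}\leqslant\|\varepsilon_{0,i}\|_{2r}\|Z_{-1,j}\|_{2r}$, then bound $\|\varepsilon_{0,i}\|_{2r}\leqslant(2r)^{1/\alpha}\|\varepsilon_{0,i}\|_{\psi_{\alpha}}\leqslant(2r)^{1/\alpha}a_{1}$ (the scaling inequality with $\bu$ the $i$\textsuperscript{th} canonical basis vector converts the joint sub-Weibull bound on $\bepsilon_{0}$ into a marginal one), and bound $\|Z_{-1,j}\|_{2r}$ using Claim~\ref{enu:ZiUnifMomentBnd} together with strict stationarity of $\{\bZ_{t}\}_{t\in\Z}$. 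Multiplying the two factors yields the stated bound.

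There is no substantial obstacle: the lemma is a careful bookkeeping exercise that unpacks the definitions of the marginal and joint sub-Weibull norms. The only minor subtlety I anticipate is the treatment of the $\ell=0$ term in Claim~\ref{enu:ZiMarginalSubWeibullNormUnifBdd}, where the row-norm decay bound in Assumption~\ref{assu:Companion}.\ref{enu:RowNormDecay} is stated only for $h\in\N$; I handle this by replacing $b_{1}$ with $\max\{b_{1},1\}$, which preserves the decay inequality and slots cleanly into the stated constant.
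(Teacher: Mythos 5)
Your proposal is correct and follows essentially the same route as the paper: the VMA($\infty$) representation, countable sub-additivity of $\|\cdot\|_{\psi_\alpha}$ combined with the joint sub-Weibull scaling bound, the integral estimate $\sum_{\ell\geqslant 1}\mathrm{e}^{-b_2\ell^\tau}\leqslant\Gamma(1/\tau)/(b_2^{1/\tau}\tau)$, and H{\"o}lder for the product moment. Your explicit handling of the $\ell=0$ term (absorbing the unit row norm via $b_1\geqslant 1$) makes transparent a step the paper's proof leaves implicit, but the argument is otherwise identical.
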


\begin{proof}[\sc{Proof of Lemma \ref{lem:ZandEpsZNormBnds}}]
Strict stationarity (which follows from Assumptions \ref{assu:Innovations}.\ref{enu:EpsIID}
and \ref{assu:Companion}.\ref{enu:SpectralRadius})\textcolor{red}{{}
}allows us to represent the companion VAR(1) in (\ref{eq:VARcompanion}) as an infinite vector
moving average [VMA$(\infty)$],
\begin{equation}
\bZ_{t}=\sum_{h=0}^{\infty}\widetilde{\bTheta}_{0}^{h}\widetilde{\bepsilon}_{t-h},\quad t\in\Z,\label{eq:VMAInfRepresentation}
\end{equation}
where the series converges absolutely almost surely. Moreover, under Assumption \ref{assu:Companion}.\ref{enu:RowNormDecay} the series converges in $\left\Vert \cdot \right\Vert_{\psi_\alpha}$ norm. To establish Claim \ref{enu:ZiMarginalSubWeibullNormUnifBdd}, since for
$i\in[pq]$, $Z_{t-1,i}=Y_{t-j,k}$ for some $(j,k)\in[q]\times[p]$, for any $\alpha$ provided by Assumption \ref{assu:Innovations}.\ref{enu:EpsSubWeibullBeta}, we have
\begin{align*}
\left\Vert Z_{0,i}\right\Vert _{\psi_{\alpha}}
&\leqslant\max_{j\in[p]}\left\Vert Y_{0,j}\right\Vert _{\psi_{\alpha}}\tag{strict stationarity}\\
&=\max_{j\in[p]}\enVert[4]{\sum_{h=0}^{\infty}(\widetilde{\bTheta}_{0}^{h})_{j,\cdot}\widetilde{\bepsilon}_{0-h}}_{\psi_{\alpha}}\tag{VMA$(\infty)$ representation}\\
&=\max_{j\in[p]}\enVert[4]{\sum_{h=0}^{\infty}(\widetilde{\bTheta}_{0}^{h})_{j,1:p}\bepsilon_{0-h}}_{\psi_{\alpha}}\tag{$\widetilde{\bepsilon}_{p+1:pq}\equiv \mathbf{0}_{p(q-1)\times1}$}\\
&\leqslant\max_{j\in[p]}\sum_{h=0}^{\infty}\big\Vert (\widetilde{\bTheta}_{0}^{h})_{j,1:p}\bepsilon_{0-h}\big\Vert _{\psi_{\alpha}}\tag{countable sub-additivity}\\
&\leqslant\left\Vert \bepsilon_{0}\right\Vert _{\psi_{\alpha}}\max_{j\in[p]}\sum_{h=0}^{\infty}\big\Vert (\widetilde{\bTheta}_{0}^{h})_{j,1:p}\big\Vert_{\ell_{2}}\tag{sub-Weibullness and stationarity}\\
&\leqslant a_1 b_1 \sum_{h=0}^{\infty} \mathrm{e}^{-b_2h^\tau}.\tag{Assumptions \ref{assu:Innovations}.\ref{enu:EpsSubWeibullBeta} and \ref{assu:Companion}.\ref{enu:RowNormDecay}}
\end{align*}
The result now follows by noting that $\tau\in(0,1]$ implies both $\sum_{h=0}^{\infty} \mathrm{e}^{-b_2h^\tau}=1+\sum_{h=1}^{\infty} \mathrm{e}^{-b_2h^\tau}$
\begin{equation}
 \sum_{h=1}^{\infty}\mathrm{e}^{-b_2h^\tau}\leqslant  \int_0^\infty \mathrm{e}^{-b_2x^\tau}\dif x = \frac{\Gamma(1/\tau)}{b_2^{1/\tau}\tau},   \label{eq:bound_integral},
\end{equation}
where the equality follows from a change of variables.
Claim \ref{enu:ZiUnifMomentBnd} is immediate from Claim \ref{enu:ZiMarginalSubWeibullNormUnifBdd} and the definition of a sub-Weibull norm. To establish Claim \ref{enu:EpsiZjUnifMomentBnd}, note that for any $r\in[1,\infty)$, $i\in[p]$ and $j\in[pq]$, using H{\"o}lder's inequality, we get
\begin{align*}
\left\Vert
\varepsilon_{0,i}Z_{-1,j}\right\Vert _{r}
&\leqslant\left\Vert \varepsilon_{0,i}\right\Vert _{2r}\left\Vert Z_{0,j}\right\Vert _{2r}
\leqslant (2r)^{1/\alpha}\left\Vert \varepsilon_{0,i}\right\Vert _{\psi_\alpha}\left\Vert Z_{0,j}\right\Vert _{2r} \\
&\leqslant (2r)^{1/\alpha}\left\Vert \bepsilon_{0}\right\Vert _{\psi_\alpha}\left\Vert Z_{0,j}\right\Vert _{2r} 
\leqslant a_1(2r)^{1/\alpha}\left\Vert Z_{0,j}\right\Vert _{2r}.
\end{align*}
Claim \ref{enu:EpsiZjUnifMomentBnd} now follows from Claim \ref{enu:ZiUnifMomentBnd}.
For Claim \ref{enu:Zbeta_iSubWeibullNormUnifBdd}, since $\bbeta_{0i}^{\top}$ is the $i$\textsuperscript{th} row of $\widetilde\bTheta_0$, letting $\be_i\in\R^{pq}$ be the $i$\textsuperscript{th} unit vector, from (\ref{eq:VMAInfRepresentation}) we see that
\[
\bZ_0^\top\bbeta_{0i}
=\sum_{h=0}^{\infty}\bbeta_{0i}^{\top}\widetilde{\bTheta}_{0}^{h}\widetilde{\bepsilon}_{0-h}=\sum_{h=0}^{\infty}\be_i^\top\widetilde{\bTheta}_{0}^{h+1}\widetilde{\bepsilon}_{0-h}=\sum_{h=0}^{\infty}(\widetilde{\bTheta}_{0}^{h+1})_{i,1:p}\bepsilon_{0-h}.
\]
It follows that
\begin{align*}
\enVert[1]{\bZ_0^\top\bbeta_{0i}}_{\psi_\alpha}
&\leqslant\sum_{h=0}^{\infty}\big\|(\widetilde{\bTheta}_{0}^{h+1})_{i,1:p}\bepsilon_{0}\big\|_{\psi_\alpha}\tag{countable sub-additivity and stationarity}\\
&\leqslant\|\bepsilon_{0}\|_{\psi_\alpha}\sum_{h=0}^{\infty}\big\|(\widetilde{\bTheta}_{0}^{h+1})_{i,1:p}\big\|_{\ell_2}\tag{sub-Weibullness}\\
&\leqslant a_1 b_1\sum_{h=0}^{\infty}\mathrm{e}^{-b_2(h+1)^{\tau}}\tag{Assumptions \ref{assu:Innovations}.\ref{enu:EpsSubWeibullBeta} and \ref{assu:Companion}.\ref{enu:RowNormDecay}},
\end{align*}
which holds uniformly in $i\in[p]$. The result now follows by noting that $\sum_{h=1}^{\infty}\mathrm{e}^{-b_2h^\tau}\leqslant  \Gamma(1/\tau)/(b_2^{1/\tau}\tau)$ for $\tau\in(0,1]$ [using \eqref{eq:bound_integral}].
\end{proof}

\begin{lem}\label{lem:norm_decay}
If Assumption \ref{assu:Companion}.\ref{enu:RowNormDecay} holds, then
\[
\max_{i\in[pq]}\big\Vert(\widetilde{\bTheta}_{0}^{h})_{i,1:p}\big\Vert_{\ell_{2}}\leqslant C_1\mathrm{e}^{-C_2h^\tau}\;\text{for all}\;h\in\N,
\]
with constants $C_1 :=(1+b_1)\mathrm{e}^{b_2q} \in(0,\infty)$ and $C_2 := b_2/q^\tau \in(0,\infty)$. 
\end{lem}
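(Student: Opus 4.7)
My plan is to exploit the block structure of the companion matrix $\widetilde{\bTheta}_0$ to reduce the $i\in[pq]$ bound in the conclusion to the $i\in[p]$ bound in Assumption \ref{assu:Companion}.\ref{enu:RowNormDecay}. The key observation is that rows $p+1,\dotsc,pq$ of $\widetilde{\bTheta}_0$ simply implement the shift operator, so writing $i=(k-1)p+j$ with $k\in[q]$ and $j\in[p]$, an induction on $h$ (or, equivalently, uniqueness of the VMA$(\infty)$ representation in \eqref{eq:VMAInfRepresentation} applied to $Z_{t,i}=Y_{t-(k-1),j}$) gives the identity
\[
(\widetilde{\bTheta}_{0}^{h})_{i,1:p}=
\begin{cases}
\mathbf{0}_{1\times p}, & 0\leqslant h\leqslant k-2,\\
(\widetilde{\bTheta}_{0}^{h-(k-1)})_{j,1:p}, & h\geqslant k-1.
\end{cases}
\]
I expect this identity to do most of the work; verifying it is essentially bookkeeping on the companion form.

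Given the identity, for $h\in\N$ with $h<k-1$ the bound is trivial ($0\leqslant C_1 \mathrm{e}^{-C_2 h^\tau}$). For $h\geqslant k$, the $\ell_2$-norm of the row equals $\|(\widetilde{\bTheta}_0^{h-(k-1)})_{j,1:p}\|_{\ell_2}$, so Assumption \ref{assu:Companion}.\ref{enu:RowNormDecay} with the shifted index~$h-(k-1)\in\N$ yields $b_1 \mathrm{e}^{-b_2 (h-k+1)^\tau}$. To absorb the shift into an exponent in~$h^\tau$, I will use subadditivity of $t\mapsto t^\tau$ on $[0,\infty)$ for $\tau\in(0,1]$:
\[
h^\tau = \big((h-k+1)+(k-1)\big)^\tau \leqslant (h-k+1)^\tau + (k-1)^\tau,
\]
so $(h-k+1)^\tau\geqslant h^\tau-(k-1)^\tau\geqslant h^\tau/q^\tau\cdot q^\tau-(q-1)^\tau$. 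Combined with $(q-1)^\tau\leqslant q^\tau\leqslant q$ (since $\tau\leqslant 1$, $q\geqslant 1$), this produces $b_1\mathrm{e}^{-b_2(h-k+1)^\tau}\leqslant b_1 \mathrm{e}^{b_2 q}\mathrm{e}^{-(b_2/q^\tau)h^\tau}\leqslant C_1 \mathrm{e}^{-C_2 h^\tau}$.

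The remaining edge case is $h=k-1\geqslant 1$, where the row is a unit vector in $\R^p$ and hence has $\ell_2$-norm exactly $1$. Here I need $1\leqslant C_1 \mathrm{e}^{-C_2 h^\tau}=C_1 \mathrm{e}^{-(b_2/q^\tau)(k-1)^\tau}$, which holds because $(k-1)^\tau\leqslant(q-1)^\tau\leqslant q^\tau$ gives $C_1\mathrm{e}^{-C_2 h^\tau}\geqslant (1+b_1)\mathrm{e}^{b_2 q-b_2}\geqslant 1$ for $q\geqslant 1$. The main (minor) obstacle is just managing constants so that a single pair $(C_1,C_2)$ covers both the ``shift absorption'' in the decaying regime and the unit-norm boundary case; the choice $C_1=(1+b_1)\mathrm{e}^{b_2 q}$ and $C_2=b_2/q^\tau$ is essentially tight for this strategy.
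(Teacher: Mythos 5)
Your proof is correct and follows essentially the same route as the paper: both exploit the shift structure of the companion matrix to identify each row of $\widetilde{\bTheta}_0^h$ with a row of a lower power (with the conventions that negative powers vanish and the zeroth power is the identity), and then absorb the resulting index shift into the constants $C_1=(1+b_1)\mathrm{e}^{b_2 q}$ and $C_2=b_2/q^\tau$. The only difference is cosmetic: you absorb the shift via subadditivity of $t\mapsto t^\tau$ and treat the unit-vector boundary case $h=k-1$ separately, whereas the paper splits on $h\leqslant q$ versus $h>q$ and uses the inequality $x-y+1\geqslant x/y$; both yield the same constants.
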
    

\begin{proof}[\sc{Proof of Lemma \ref{lem:norm_decay}}]
For $q=1$ the result is immediate from Assumption \ref{assu:Companion}.\ref{enu:RowNormDecay}. For $q>1$, by construction of the companion matrix~$\widetilde{\bm{\Theta}}_0$, we have $(\widetilde\bTheta_0)_{p+1:pq,1:pq}=[\bI_{q-1}\otimes\bI_{p}:\mathbf{1}_{q-1}\otimes \mathbf{0}_{p\times p}]$ ($\otimes$ being the Kronecker product and $\mathbf{1}_{q-1}\in\R^{q-1}$ a column vector of ones). In particular,
\[
(\widetilde{\bTheta}_{0})_{p+1:2p,1:p} =\mathbf{I}_{p}\quad\text{and}\quad(\widetilde{\bTheta}_{0})_{kp+1:(k+1)p,1:p} =\mathbf{0}_{p\times p},\quad k\in \{2,\dotsc, q-1 \}.
\]
From this structure, we also deduce the recursion
\begin{align*}
(\widetilde{\bTheta}_{0}^{h})_{kp+1:(k+1)p,1:p} & =(\widetilde{\bTheta}_{0}^{h-1})_{(k-1)p+1:kp,1:p},\quad k\in\{1,\dotsc,q-1\},\quad h\in\{2,3,\dotsc\}.
\end{align*}
Hence, for $h\geqslant 1$, it follows that
\[
(\widetilde{\bTheta}_{0}^{h})_{\cdot,1:p}=
\begin{pmatrix}(\widetilde{\bTheta}_{0}^{h})_{1:p,1:p}\\
(\widetilde{\bTheta}_{0}^{h-1})_{1:p,1:p}\\
(\widetilde{\bTheta}_{0}^{h-2})_{1:p,1:p}\\
\vdots\\
(\widetilde{\bTheta}_{0}^{h-(q-2)})_{1:p,1:p}\\
(\widetilde{\bTheta}_{0}^{h-(q-1)})_{1:p,1:p}
\end{pmatrix},
\]
with the conventions that $(\widetilde{\bTheta}_{0}^{j})_{1:p,1:p}=\mathbf{0}_{p\times p}$ for integers $j<0$ and $(\widetilde{\bTheta}_{0}^{0})_{1:p,1:p}=\mathbf{I}_{p}$.
Using Assumption \ref{assu:Companion}.\ref{enu:RowNormDecay}, for $h\geqslant 1$ we therefore get
\begin{align*}
\max_{i\in[pq]}\big\Vert(\widetilde{\bTheta}_{0}^{h})_{i,1:p}\big\Vert_{\ell_{2}} & =\max_{j\in \{ h-(q-1),\dotsc,h \}}\max_{i\in[p]}\big\Vert(\widetilde{\bTheta}_{0}^{j})_{i,1:p}\big\Vert_{\ell_{2}}\\
 & \leqslant \max_{j\in \{ h-(q-1),\dotsc,h \}} \left\{ b_{1}\mathrm{e}^{-b_{2}j^\tau} \mathbf{1}(j>0) +\mathbf{1}(j=0)  \right\}.
\end{align*}
We next go case by case. On one hand, if $h \leqslant q$, then
\begin{align*}
    \max_{j\in \{ h-(q-1),\dotsc,h \}} \left\{ b_{1}\mathrm{e}^{-b_{2}j^\tau} \mathbf{1}(j>0) +\mathbf{1}(j=0)  \right\} \leqslant b_1 +1 \leqslant (1+b_1)\mathrm{e}^{b_2(q-h)}\leqslant (1+b_1)\mathrm{e}^{b_2q}\mathrm{e}^{-b_2h^{\tau}}.
\end{align*}
On the other hand, if $h > q$, then $h-(q-1)>0$ and, thus,
\begin{align*}
     \max_{j\in \{ h-(q-1),\dotsc,h \}} \left\{ b_{1}\mathrm{e}^{-b_{2}j^\tau} \mathbf{1}(j>0) +\mathbf{1}(j=0)  \right\} =  b_{1}\mathrm{e}^{-b_{2}\sbr[0]{h-(q-1)}^\tau}.
\end{align*}
For $\tau\in(0,1]$, using that $x\geqslant y\geqslant1$ implies $x-y+1\geqslant x/y$,\footnote{To establish this inequality, note that for $x\geqslant y\geqslant 1$, the desired $x-y+1\geqslant x/y$ is equivalent to $-y^2+(1+x)y-y\geqslant 0$. The left-hand side quadratic in $y$ is inverse U-shaped with roots equal to one and $x$.} we get $[h-(q-1)]^\tau \geqslant (h/q)^\tau$ and, thus, 
\begin{align*}
   \max_{j\in \{ h-(q-1),\dotsc,h \}} \left\{ b_{1}\mathrm{e}^{-b_{2}j^\tau} \mathbf{1}(j>0) +\mathbf{1}(j=0)  \right\} & \leqslant (1+b_1)\mathrm{e}^{-(b_2/q^\tau)h^\tau}. 
\end{align*}
Combining the bounds in the cases $h\leqslant q$ and $h > q$, we see that for any $\tau\in(0,1]$
\begin{align*}
\max_{i\in[pq]}\big\Vert(\widetilde{\bTheta}_{0}^{h})_{i,1:p}\big\Vert_{\ell_{2}}
&\leqslant
\mathbf{1}(h\leqslant q)(1+b_1)\mathrm{e}^{b_2q}\mathrm{e}^{-b_2h^{\tau}}+
\mathbf{1}(h>q)(1+b_1)\mathrm{e}^{-(b_2/q^\tau)h^\tau}\\
&\leqslant (1+b_1)\mathrm{e}^{b_2q} \mathrm{e}^{-(b_2/q^\tau)h^\tau}.
\end{align*}

\end{proof}

\begin{lem}\label{lem:OutcomeAndScoreProcessesUniformlyGMC} 
Let Assumptions
\ref{assu:Innovations}.\ref{enu:EpsIID}, \ref{assu:Innovations}.\ref{enu:EpsSubWeibullBeta} and
\ref{assu:Companion}
hold. Then for any $r\in[1,\infty)$, there are constants $C_1,C_2,C_3,C_4\in(0,\infty)$,
depending only on $a_1,\alpha,b_1,b_2,q,\overline{q},\tau$ and $r$, such that
\begin{enumerate}[(1)]
\item\label{enu:ZiUnifGMC} $\max_{i\in[pq]}\Delta_{r}^{Z_{\cdot,i}}(h)\leqslant C_1\mathrm{e}^{-C_3h^\tau}$ for all $h\in\N$, 
and
\item\label{enu:EpsiZjUnifGMC} $\max_{(i,j)\in[p] \times [pq]}\Delta_{r}^{\varepsilon_{\cdot,i}Z_{\cdot-1,j}}(h)\leqslant C_2\mathrm{e}^{-C_4h^\tau}$ for all $h\in\N$.
\end{enumerate}
\end{lem}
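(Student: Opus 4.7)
I will prove Part (1) directly from the VMA$(\infty)$ representation of~$\{\bZ_t\}_{t\in\Z}$ used in the proof of Lemma~\ref{lem:ZandEpsZNormBnds}, exploiting the $\overline{q}$-dependence of~$\{\bepsilon_t\}_{t\in\Z}$ on the i.i.d.~innovations~$\{\bseta_t\}_{t\in\Z}$ (Assumption~\ref{assu:Innovations}.\ref{enu:EpsIID}) to truncate the coupling difference. Part (2) will then follow from the product rule of Lemma~\ref{lem:Delta_r}.\ref{enu:Delta_rFromPairToProduct} applied to the strictly stationary, causal pair process~$\{(\varepsilon_{t,i}, Z_{t-1,j})\}_{t\in\Z}$, combined with the uniform moment bounds from Lemma~\ref{lem:ZandEpsZNormBnds} and Part (1) (modulo a one-step index shift).

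\textbf{Part (1).} Let~$\{\bseta_u^{*}\}_{u\in\Z}$ be an independent copy of~$\{\bseta_u\}_{u\in\Z}$ and~$Z_{t,i}^{*}$ the $*$-version replacing $\bseta_u$ by~$\bseta_u^{*}$ for every~$u\leqslant t-h$. The VMA representation $Z_{t,i}=\sum_{\ell\geqslant 0}(\widetilde{\bTheta}_0^\ell)_{i,1:p}\bepsilon_{t-\ell}$ gives $Z_{t,i}-Z_{t,i}^{*}=\sum_{\ell\geqslant 0}(\widetilde{\bTheta}_0^\ell)_{i,1:p}(\bepsilon_{t-\ell}-\bepsilon_{t-\ell}^{*})$, and since $\bepsilon_{t-\ell}=\bF(\bseta_{t-\ell-\overline{q}},\dotsc,\bseta_{t-\ell})$, the $\ell$\textsuperscript{th} summand vanishes whenever~$\ell<h-\overline{q}$. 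Using the joint sub-Weibull vector norm inequality $\|\bu^\top\bepsilon_0\|_{\psi_\alpha}\leqslant\|\bu\|_{\ell_2}\|\bepsilon_0\|_{\psi_\alpha}$, countable sub-additivity of~$\|\cdot\|_{\psi_\alpha}$, and the conversion~$\|\cdot\|_r\leqslant r^{1/\alpha}\|\cdot\|_{\psi_\alpha}$, each remaining term has $L_r$-norm at most~$2a_1 r^{1/\alpha}\|(\widetilde{\bTheta}_0^\ell)_{i,1:p}\|_{\ell_2}$. Lemma~\ref{lem:norm_decay} then controls $\max_{i\in[pq]}\|(\widetilde{\bTheta}_0^\ell)_{i,1:p}\|_{\ell_2}\lesssim \mathrm{e}^{-C\ell^\tau}$ for constants depending on~$b_1,b_2,q,\tau$. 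The stretched-exponential tail bound $\sum_{\ell\geqslant N}\mathrm{e}^{-C\ell^\tau}\leqslant \mathrm{e}^{-(C/2)N^\tau}\sum_{\ell\geqslant 0}\mathrm{e}^{-(C/2)\ell^\tau}$ (valid since $\tau>0$) applied with~$N=\max\{0,h-\overline{q}\}$, combined with $(h-\overline{q})^\tau\geqslant 2^{-\tau}h^\tau$ for~$h\geqslant 2\overline{q}$ and~$\tau\in(0,1]$, yields the claimed bound for large~$h$. The finitely many small~$h$ are absorbed into the constants by using the crude~$\Delta_r^{Z_{\cdot,i}}(h)\leqslant 2\|Z_{0,i}\|_r\leqslant Cr^{1/\alpha}$ from Lemma~\ref{lem:ZandEpsZNormBnds}.\ref{enu:ZiUnifMomentBnd}.

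\textbf{Part (2) and main obstacle.} Strict stationarity and causality of~$\{(\varepsilon_{t,i}, Z_{t-1,j})\}_{t\in\Z}$ in~$\{\bseta_u\}_{u\leqslant t}$ are immediate, so Lemma~\ref{lem:Delta_r}.\ref{enu:Delta_rFromPairToProduct} gives
\begin{align*}
\Delta_r^{\varepsilon_{\cdot,i}Z_{\cdot-1,j}}(h)\leqslant\|Z_{-1,j}\|_{2r}\,\Delta_{2r}^{\varepsilon_{\cdot,i}}(h)+\|\varepsilon_{0,i}\|_{2r}\,\Delta_{2r}^{Z_{\cdot-1,j}}(h).
\end{align*}
The moment factors are bounded uniformly in~$i,j$ via Lemma~\ref{lem:ZandEpsZNormBnds}.\ref{enu:ZiUnifMomentBnd} and Assumption~\ref{assu:Innovations}.\ref{enu:EpsSubWeibullBeta}. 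The $\overline{q}$-dependence of~$\bepsilon_t$ on~$\bseta$ implies $\Delta_{2r}^{\varepsilon_{\cdot,i}}(h)=0$ for $h>\overline{q}$, with a finite bound for the remaining finite set of~$h$. For the time-shifted process $W_t:=Z_{t-1,j}$, unwinding the definition shows $\Delta_{2r}^{W_\cdot}(h)=\Delta_{2r}^{Z_{\cdot,j}}(h-1)$, to which Part~(1) applies. The main bookkeeping obstacle is ensuring that the one-unit shift and the finitely many exceptional small~$h$ preserve the stretched-exponential rate in~$h^\tau$; this is handled by~$(h-1)^\tau\geqslant 2^{-\tau}h^\tau$ for~$h\geqslant 2$ and~$\tau\in(0,1]$, with the exceptional small~$h$ absorbed into the constants~$C_2,C_4$. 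Verifying the stated constant dependencies on~$a_1,\alpha,b_1,b_2,q,\overline{q},\tau,r$ reduces to tracking the preceding bounds.
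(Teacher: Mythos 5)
Your proposal is correct and follows essentially the same route as the paper's proof: the VMA$(\infty)$ representation with the coupling difference truncated at $\ell\geqslant h-\overline{q}$, sub-Weibull and Lemma~\ref{lem:norm_decay} bounds on each surviving term, the $(a+b)^\tau\geqslant(a^\tau+b^\tau)/2$ tail-sum trick, and then Lemma~\ref{lem:Delta_r}.\ref{enu:Delta_rFromPairToProduct} plus the one-step index shift and the vanishing of $\Delta_{2r}^{\varepsilon_{\cdot,i}}(h)$ for $h>\overline{q}$ to get Part~(2). Your unified handling via $N=\max\{0,h-\overline{q}\}$ with small $h$ absorbed into the constants is just a compact packaging of the paper's explicit case split into $\overline{q}=0$, $h\leqslant\overline{q}$, and $h>\overline{q}$.
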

\begin{proof}[\sc{Proof of Lemma \ref{lem:OutcomeAndScoreProcessesUniformlyGMC}}]
Fix $r\in[1,\infty)$ and $i\in[p]$. Under Assumptions
\ref{assu:Innovations}.\ref{enu:EpsIID}, \ref{assu:Innovations}.\ref{enu:EpsSubWeibullBeta},
\ref{assu:Companion}.\ref{enu:SpectralRadius} and
\ref{assu:Companion}.\ref{enu:RowNormDecay}
the $r$\textsuperscript{th} absolute moments of both $Z_{t-1,i}$ and $\varepsilon_{t,i}Z_{t-1,j}$ are finite (cf. Lemma \ref{lem:ZandEpsZNormBnds}, Parts \ref{enu:ZiUnifMomentBnd} and \ref{enu:EpsiZjUnifMomentBnd}). Thus, $\Delta_{r}^{Z_{\cdot,i}}(h)$ and $\Delta_{r}^{\varepsilon_{\cdot,i}Z_{\cdot-1,j}}(h)$
are well defined for all $h\in\N$. Defining
\[
\bG\left(\dotsc,\bseta_{t-1},\bseta_{t}\right):=\sum_{\ell=0}^{\infty}(\widetilde{\bTheta}_{0}^{\ell})_{\cdot,1:p}\bepsilon_{t-\ell}=\sum_{\ell=0}^{\infty}(\widetilde{\bTheta}_{0}^{\ell})_{\cdot,1:p}\bF\left(\bseta_{t-\overline{q}-\ell},\dotsc,\bseta_{t-1-\ell},\bseta_{t-\ell}\right),
\]
with $\bF:\R^{\overline{p}\cdot\overline{q}}\to\R^{p}$ and $\{\bseta_t\}_{t\in\Z}$ provided by Assumption \ref{assu:Innovations}.\ref{enu:EpsIID}, one has the causal representation $Z_{t,i}=G_{i}(\dotsc,\bseta_{t-1},\bseta_{t})$ for~$i\in[pq]$, cf. (\ref{eq:VMAInfRepresentation}). To establish Claim \ref{enu:ZiUnifGMC}, let $\{\bseta_t^{\ast}\}_{t\in\Z}$ be an independent copy of $\{\bseta_t\}_{t\in\Z}$ and let $\bepsilon_{t}^{\ast}:=\bF\del[1]{\bseta_{t-\overline{q}}^{\ast},\dotsc,\bseta_{t-1}^{\ast},\bseta_{t}^{\ast}}$. First, by Lemma \ref{lem:norm_decay} there are constants $\overline{C}_1,\overline{C}_2\in(0,\infty)$, depending only on~$b_1,b_2,q$ and~$\tau$, such that for any $m\in\N_0$
\begin{align}
& \enVert[4]{\sum_{\ell=m}^{\infty}(\widetilde{\bTheta}_{0}^{\ell})_{i,1:p}\left(\bepsilon_{t-\ell}-\bepsilon_{t-\ell}^{\ast}\right)}_r \notag\\
& \leqslant\sum_{\ell=m}^{\infty}\big\Vert (\widetilde{\bTheta}_{0}^{\ell})_{i,1:p}\left(\bepsilon_{t-\ell}-\bepsilon_{t-\ell}^{\ast}\right)\big\Vert_r\tag{countable sub-additivity}\notag\\
& \leqslant\sum_{\ell=m}^{\infty}r^{1/\alpha}\big\Vert (\widetilde{\bTheta}_{0}^{\ell})_{i,1:p}\left(\bepsilon_{t-\ell}-\bepsilon_{t-\ell}^{\ast}\right)\big\Vert _{\psi_{\alpha}}\tag{definition of sub-Weibull norm}\notag\\
& \leqslant r^{1/\alpha}\left\Vert \bepsilon_{0}-\bepsilon_{0}^{\ast}\right\Vert _{\psi_{\alpha}}\sum_{\ell=m}^{\infty}\big\Vert (\widetilde{\bTheta}_{0}^{\ell})_{i,1:p}\big\Vert _{\ell_{2}}\tag{joint sub-Weibullness and stationarity}\notag\\
& \leqslant 2a_{1}r^{1/\alpha}\overline{C}_1\sum_{\ell=m}^{\infty} \mathrm{e}^{-\overline{C}_2\ell^\tau}\tag{sub-additivity of~$\enVert[0]{\cdot}_{\psi_\alpha}$ and Lemma \ref{lem:norm_decay}}\notag\\
& \leqslant 2a_{1}r^{1/\alpha}\overline{C}_1 \sum_{\ell=0}^{\infty}\mathrm{e}^{-\overline{C}_2(\ell+m)^\tau} \notag  \\
& \leqslant 2a_{1}r^{1/\alpha}\overline{C}_1 \sum_{\ell=0}^{\infty}\mathrm{e}^{-(\overline{C}_2/2)(\ell^\tau+m^\tau)} \tag{$(a+b)^\tau \geqslant (a^\tau + b^\tau)/2, \quad a,b\geqslant0,\quad \tau\in(0,1]$}\\
&\leqslant\underbrace{2a_{1}r^{1/\alpha}\overline{C}_1\del[3]{1 + \frac{\Gamma(1/\tau)}{(\overline{C}_2/2)^{1/\tau}\tau}}  }_{=:\overline{C}_3}\mathrm{e}^{-(\overline{C}_2/2)m^\tau}\notag \\
&\leqslant \overline{C}_3 \mathrm{e}^{-C_3 m^\tau}\label{eq:auxrnorm}, 
\end{align}
where the penultimate inequality follows as in \eqref{eq:bound_integral},
and we abbreviate $C_3 := \overline{C}_2/[2+(2\overline q)^\tau]$. (The additional $(2\overline q)^\tau$ in the denominator will be convenient for later consolidation of cases.) Fix $h\in\N$. We consider the cases (i) $\overline q=0$ and (ii) $\overline q\in\N$ in turn.

\textit{Case (i):} If $\overline q=0$, then the innovations [$\bepsilon_t=\bF(\bseta_t)$] are i.i.d., and \eqref{eq:auxrnorm} with $m=h$ shows that
\[
\Delta_{r}^{Z_{\cdot,i}}(h)
=\enVert[4]{\sum_{\ell=h}^{\infty}(\widetilde{\bTheta}_{0}^{\ell})_{i,1:p}\left(\bepsilon_{t-\ell}-\bepsilon_{t-\ell}^{\ast}\right)}_r
\leqslant \overline{C}_3\mathrm{e}^{-C_3h^\tau}.
\]
This bound has the desired form, is uniform over $i\in[pq]$, and involves constants with the claimed dependencies.

\textit{Case (ii):} If $\overline q\in\N$, then we further split into the cases (ii.a) $h\leqslant \overline q$ and (ii.b) $h>\overline q$.

\textit{Case (ii.a):} If $h\leqslant\overline{q}$, then
\[
    \Delta_{r}^{Z_{\cdot,i}}\left(h\right) \leqslant \enVert[4]{\sum_{\ell=\overline{q}}^{\infty}(\widetilde{\bTheta}_{0}^{\ell})_{i,1:p}\del[1]{\bepsilon_{t-\ell}-\bepsilon_{t-\ell}^{\ast}}}_r + \enVert[4]{\sum_{\ell=0}^{\overline{q}-1}(\widetilde{\bTheta}_{0}^{\ell})_{i,1:p}\del[1]{\bepsilon_{t-\ell}-\bepsilon_{t-\ell}^{\dagger}}}_r,
\]
where we use the shorthand
\[
\bepsilon_{t-\ell}^{\dagger}:=
\begin{cases}
\bF\del[1]{\bseta_{t-\overline{q}-\ell}^{\ast},\dotsc,\bseta_{t-\ell}^{\ast}},\quad &\text{if}\;h\leqslant \ell\\
\bF\del[1]{\bseta_{t-\overline{q}-\ell}^{\ast}, \dotsc,\bseta_{t-h}^{\ast},\bseta_{t-h+1}\dotsc,\bseta_{t-\ell}}, &\text{if}\;h> \ell.
\end{cases}
\]
By arguments similar to those yielding \eqref{eq:auxrnorm}, we arrive at 
\begin{align*}
\enVert[4]{\sum_{\ell=0}^{\overline{q}-1}(\widetilde{\bTheta}_{0}^{\ell})_{i,1:p}\left(\bepsilon_{t-\ell}-\bepsilon_{t-\ell}^{\dagger}\right)}_r
&\leqslant 2a_{1}r^{1/\alpha}\overline{C}_1\sum_{\ell=0}^{\overline{q}-1} \mathrm{e}^{-\overline{C}_2\ell^\tau}\\
&\leqslant \del[4]{\underbrace{2a_{1}r^{1/\alpha}\overline{C}_1\mathrm{e}^{C_3\overline{q}^\tau}\sum_{\ell=0}^{\overline{q}-1} \mathrm{e}^{-\overline{C}_2\ell^\tau} }_{=:\overline{C}_4}}\mathrm{e}^{-C_3h^\tau}.\tag{$h\leqslant \overline q$}
\end{align*}
In addition, from~\eqref{eq:auxrnorm} with~$m=\overline{q}$, and again using~$h\leqslant \overline{q}$, we get
\begin{align*}
    \enVert[4]{\sum_{\ell=\overline{q}}^{\infty}(\widetilde{\bTheta}_{0}^{\ell})_{i,1:p}\left(\bepsilon_{t-\ell}-\bepsilon_{t-\ell}^{\ast}\right)}_r \leqslant \overline{C}_3\mathrm{e}^{-C_3\overline{q}^\tau} \leqslant \overline{C}_3\mathrm{e}^{-C_3h^\tau}. 
\end{align*}
We conclude that for $h\leqslant\overline{q}$,
$
    \Delta_{r}^{Z_{\cdot,i}}(h) \leqslant  (\overline{C}_3+\overline{C}_4) \mathrm{e}^{-C_3h^\tau}.  
$

\textit{Case (ii.b):} If $h > \overline{q}$, we bound the physical dependence measure as follows
\begin{align*}
    \Delta_{r}^{Z_{\cdot,i}}(h)  
    &\leqslant\enVert[4]{\sum_{\ell=h}^{\infty}(\widetilde{\bTheta}_{0}^{\ell})_{i,1:p}\del[1]{\bepsilon_{t-\ell}-\bepsilon_{t-\ell}^{\ast}}}_r  
     + \enVert[4]{\sum_{\ell=h - \overline{q}}^{h-1}(\widetilde{\bTheta}_{0}^{\ell})_{i,1:p}\del[1]{\bepsilon_{t-\ell}-\bepsilon_{t-\ell}^{\dagger}}}_r
\end{align*}
and consider each right-hand side term in turn. From \eqref{eq:auxrnorm} with $m=h$, we see that
\[
\enVert[4]{\sum_{\ell=h}^{\infty}(\widetilde{\bTheta}_{0}^{\ell})_{i,1:p}\del[1]{\bepsilon_{t-\ell}-\bepsilon_{t-\ell}^{\ast}}}_r
\leqslant \overline{C}_3\mathrm{e}^{-C_3 h^\tau}.
\]
And by arguments similar to those yielding \eqref{eq:auxrnorm}, we arrive at
\begin{align*}
\enVert[4]{\sum_{\ell=h - \overline{q}}^{h-1}(\widetilde{\bTheta}_{0}^{\ell})_{i,1:p}\del[1]{\bepsilon_{t-\ell}-\bepsilon_{t-\ell}^{\dagger}}}_r
& \leqslant 2a_{1}r^{1/\alpha}\overline{C}_1\sum_{\ell=h-\overline{q}}^{h-1} \mathrm{e}^{-\overline{C}_2\ell^\tau} \\
& \leqslant 2a_{1}r^{1/\alpha}\overline{C}_1\sum_{\ell=0}^{\overline{q}-1} \mathrm{e}^{-\overline{C}_2(\ell+h-\overline{q})^\tau}\tag{change of variable}\\
& \leqslant2a_{1}\overline{q}r^{1/\alpha}\overline{C}_1 \mathrm{e}^{-\overline{C}_2(h-\overline{q})^\tau}\tag{monotonicity}\\
& \leqslant 2a_1\overline{q}r^{1/\alpha}\overline{C}_1\mathrm{e}^{-[\overline{C}_2/(\overline{q}+1)^\tau]h^\tau} \\
& \leqslant 2a_1\overline{q}r^{1/\alpha}\overline{C}_1\mathrm{e}^{-[\overline{C}_2/(2\overline{q})^\tau]h^\tau}\\
 & \leqslant 2a_1\overline{q}r^{1/\alpha}\overline{C}_1 \mathrm{e}^{-C_3 h^\tau},\tag{$C_3= \overline{C}_2/[2+(2\overline{q})^\tau]$}
\end{align*}
where we use that $x\geqslant y\geqslant 1$ implies $x-(y-1)\geqslant x/y$ (established as part of the proof of Lemma \ref{lem:norm_decay}) for $x=h$ and $y=\overline q + 1$ to deduce that $h-\overline q\geqslant h/(\overline q + 1)$. Since $\overline q\geqslant 1$, the latter term is further bounded from below by $h/(2\overline q)$. Gathering terms, we see that
\begin{align*}
    \Delta_{r}^{Z_{\cdot,i}}(h)  
    &\leqslant \underbrace{\left(\overline{C}_3 + 2a_1\overline{q}r^{1/\alpha}\overline{C}_1 \right)}_{=:\overline{C}_5} \mathrm{e}^{-C_3h^\tau},
\end{align*}
Consolidating the different cases, we conclude that Claim \ref{enu:ZiUnifGMC} holds with $C_1 := \max\{\overline{C}_3+\overline{C}_4,\overline{C}_5\}$, which has the claimed dependencies.

To establish Claim \ref{enu:EpsiZjUnifGMC}, fix~$i\in[p]$ and $j\in[pq]$. Noting that that $\varepsilon_{t,i}=F_i(\bseta_{t-\overline{q}},\dotsc,\bseta_{t})$ and $Z_{t-1,j}=G_j(\dotsc,\bseta_{t-2},\bseta_{t-1})$, we have that the joint process $\{(\varepsilon_{t,i},Z_{t-1,j})\}$ is strictly stationary and causal. Since each $\{\varepsilon_{t,i}\}_{t\in\Z},i\in[p]$, is $\overline{q}$-dependent, we have $\Delta_{r}^{\varepsilon_{\cdot,i}}(h)=0$ for $h>\overline q$. For $h\leqslant \overline q$, by now familiar calculations, we see that
\begin{align*}
\Delta_{r}^{\varepsilon_{\cdot,i}}(h)=\enVert[1]{\varepsilon_{t,i}-\varepsilon_{t,i}^\dagger}_r
&\leqslant 2\enVert[0]{\varepsilon_{0,i}}_r
\leqslant 2 r^{1/\alpha} \enVert[0]{\bepsilon_{0}}_{\psi_\alpha}
\leqslant 2 r^{1/\alpha} a_1 \\
&\leqslant \underbrace{2 r^{1/\alpha} a_1 \mathrm{e}^{C_3 \overline q^\tau}}_{=:\overline{C}_6} \mathrm{e}^{-C_3 h^\tau}.\tag{$h\leqslant \overline q$}
\end{align*}
Conclude that $\max_{i\in[p]}\Delta_{r}^{\varepsilon_{\cdot,i}}(h)\leqslant \overline{C}_6\mathrm{e}^{-C_3 h^\tau}$ for all $h\in\N$.
Turning to the product processes, for any $h\in\N$ we have
\begin{align*}
    \Delta_{r}^{\varepsilon_{\cdot,i}Z_{\cdot-1,j}}\left(h\right) & \leqslant \left\Vert \varepsilon_{0,i} \right\Vert_{2r} \Delta_{2r}^{Z_{\cdot-1,j}}(h) + \left\Vert Z_{0,j}  \right\Vert_{2r}\Delta_{2r}^{\varepsilon_{\cdot,i}}(h)  \tag{Lemma \ref{lem:Delta_r}, Part \ref{enu:Delta_rFromPairToProduct}}\\
    & \leqslant \left\Vert \varepsilon_{0,i} \right\Vert_{2r}\Delta_{2r}^{Z_{\cdot,j}}(h-1) + \left\Vert Z_{0,j}  \right\Vert_{2r}\Delta_{2r}^{\varepsilon_{\cdot,i}}(h) \\
    & \leqslant \left\Vert \varepsilon_{0,i} \right\Vert_{2r}C_1'\mathrm{e}^{-C_3(h-1)^\tau}+\left\Vert Z_{0,j}  \right\Vert_{2r}\overline{C}_6'\mathrm{e}^{-C_3h^{\tau}}\tag{Claim  \ref{enu:ZiUnifGMC} } \\
    &\leqslant \del[2]{\underbrace{ (2r)^{1/\alpha}a_1 C_1' + C(a_1,b_1,b_2,\tau)(2r)^{1/\alpha}\overline{C}_6'  }_{=:\overline{C}_7}} \mathrm{e}^{-C_3(h-1)^\tau},
\end{align*}
where the final inequality follows from Assumption~\ref{assu:Innovations}.\ref{enu:EpsSubWeibullBeta} and Part \ref{enu:ZiUnifMomentBnd} of Lemma \ref{lem:ZandEpsZNormBnds}.
In the previous display, the constant $C(a_1,b_1,b_2,\tau)$ stems from Part \ref{enu:ZiUnifMomentBnd} of Lemma \ref{lem:ZandEpsZNormBnds}, and $C_1'$ and $\overline{C}_6'$ refer to the constants $C_1$ and $\overline{C}_6$, respectively, with $r$ replaced by $2r$. (Note that $C_3=\overline{C}_2/[2+(2\overline q)^\tau]$ does not actually depend on $r$.)
For $h=1$, the final bound in the previous display shows $\Delta_{r}^{\varepsilon_{\cdot,i}Z_{\cdot-1,j}}(h)\leqslant \overline{C}_7\mathrm{e}^{C_3}\mathrm{e}^{-C_3h^{\tau}}$, while for $h\geqslant 2$, we use $h-1\geqslant h/2$ to further bound from above as follows $\Delta_{r}^{\varepsilon_{\cdot,i}Z_{\cdot-1,j}}(h) \leqslant \overline{C}_7  \mathrm{e}^{-(C_3/2^{\tau})h^\tau}$. Choosing $C_2:= \overline{C}_7\mathrm{e}^{C_3}$ and $C_4:= C_3/2^\tau$, we conclude that $\max_{(i,j)\in[p]\times[pq]}\Delta_{r}^{\varepsilon_{\cdot,i}Z_{\cdot-1,j}}(h)\leqslant C_2\mathrm{e}^{-C_4 h^\tau}$ for all $h\in\N$. As these constants have the claimed dependencies, Claim \ref{enu:EpsiZjUnifGMC} follows.
\end{proof}

Based on these supporting lemmas, we are now able to give the proof of Lemma \ref{lem:Deviation-Bound}.
\begin{proof}
[\textsc{Proof of Lemma \ref{lem:Deviation-Bound}}] We set up for
an application of Lemma \ref{lem:TailBoundGMCVector}. Consider the
stochastic processes $\{\{\varepsilon_{t,i}Z_{t-1,j}\}_{t\in\Z}\},(i,j)\in[p]\times[pq]$, of which there are a total of $p^2q(=d_n)$. Since $\{\bZ_t\}_{t\in\Z}$ admits the VMA$(\infty)$ representation in (\ref{eq:VMAInfRepresentation}), each process $\{\{\varepsilon_{t,i}Z_{t-1,j}\}_{t\in\Z}\},(i,j)\in[p]\times[pq]$, is both strictly stationary and causal with elements taking values in $\R$.
Setting $r=4$, Part \ref{enu:EpsiZjUnifMomentBnd} of Lemma \ref{lem:ZandEpsZNormBnds} and Part \ref{enu:EpsiZjUnifGMC} of Lemma \ref{lem:OutcomeAndScoreProcessesUniformlyGMC} respectively show that there are constants $C,C',C''\in(0,\infty)$, depending only on $a_1,\alpha,b_1,b_2,q,\overline{q}$ and $\tau$, such that
both $\max_{(i,j)\in[p]\times[pq]}\Vert \varepsilon_{0,i}Z_{-1,j}\Vert_{4}\leqslant C$
and $\max_{i\in[p],j\in[pq]}\Delta_{4}^{\varepsilon_{\cdot,i}Z_{\cdot-1,j}}(h)\leqslant C'\mathrm{e}^{-C''h^\tau}$ for all $h\in\N$. By Assumption \ref{assu:Innovations}.\ref{enu:EigValsOfSigmaEpsBddAwayFromZero} the processes $\{\{\varepsilon_{t,i}Z_{t-1,j}\}_{t\in\Z}\},(i,j)\in[p]\times[pq]$ have mean zero.
It follows that Assumptions \ref{assu:MomentConditions} and
\ref{assu:GMC} hold (with $b_{1}$ and $b_2$ there equal to the constants $C'$ and $C''$ respectively). For
Assumption \ref{assu:UniformNondegeneracy}, note that for any $i\in[p]$, $j\in[pq]$, $k\in\N_0$ and $h\in\N$,
\begin{align*}
\E\left[\del[4]{\sum_{t=k+1}^{k+h}\varepsilon_{t,i}Z_{t-1,j}}^{2}\right] & =\sum_{t=k+1}^{k+h}\E\sbr[1]{\varepsilon_{t,i}^{2}Z_{t-1,j}^{2}}\tag{$\E[\varepsilon_{t}\mid\mathcal{F}_{t-1}^{\bseta}]=0$ a.s.~and $Z_{t-1} \in \mathcal{F}_{t-1}^{\bseta}$}\\
 & = h \E\sbr[1]{\varepsilon_{0,i}^{2}Z_{-1,j}^{2}}\tag{stationarity}\\
 & = h\E\sbr[2]{\E\sbr[1]{\varepsilon_{0,i}^{2}\mid\mathcal{F}_{-1}^{\bseta}}Z_{-1,j}^{2}}\tag{$Z_{-1} \in \mathcal{F}_{-1}^{\bseta}$}\\
& \geqslant h\E\sbr[1]{a_2 Z_{-1,j}^{2}} \tag{Assumption \ref{assu:Innovations}.\ref{enu:EigValsOfSigmaEpsBddAwayFromZero}}\\
 & \geqslant a_{2} h \Lambda_{\min}(\bSigma_{\bZ})\\
 & \geqslant a_{2} d h, \tag{Assumption \ref{assu:CovarianceZ}}
\end{align*}
implying that Assumption \ref{assu:UniformNondegeneracy} is satisfied for the constant $\omega= (a_{2}d)^{1/2}\in(0,\infty)$.
The derived constants depend on neither $i$ nor $j$. Choosing the sequence $\gamma_n$ in Assumption \ref{assu:Growth} to be the non-random sequence $\gamma_n$ in (\ref{eq:tuning_c_and_gamma}), we have both $\ln(1/\gamma_n)\lesssim\ln(pqn)$ and $\gamma_n\to0$. Since $\gamma_n$ lies in $(0,1)$ for large enough $n$, Lemma \ref{lem:TailBoundGMCVector} now shows that there is a constant $N\in\N$ such that
\[
n\geqslant N\implies \P\del{\max_{(i,j)\in[p]\times[pq]}\frac{\envert{\sum_{t=1}^n\varepsilon_{t,i}Z_{t-1,j}}}{\sqrt{\sum_{k=1}^l\del{\sum_{t\in H_{n,k}}\varepsilon_{t,i}Z_{t-1,j}}^2}}\geqslant \Phi^{-1}\del{1-\frac{\gamma_n}{2p^2q}}}\leqslant 2\gamma_n,
\]
Scaling both sides of the inequality inside the probability by $2c\sqrt n=2c(1/n)/(1/\sqrt n)$, the previous display becomes
\[
n\geqslant N\implies \P\del{c\max_{i\in[p]}\enVert[0]{\bS_{n,i}}_{\ell_\infty}\geqslant \frac{\lambda_n^\ast}{n}}\leqslant 2\gamma_n.
\]
from which the asserted claim follows.
\end{proof}

\subsection{Proof of Ideal Penalty Loadings Control (Lemma \ref{lem:Ideal-Penalty-Loading-Control})}
The proof of Lemma \ref{lem:Ideal-Penalty-Loading-Control} essentially follows from the following result.
\begin{lem}\label{lem:DeltaIdealVanishes}
Let Assumptions \ref{assu:Innovations}.\ref{enu:EpsIID}, \ref{assu:Innovations}.\ref{enu:EpsSubWeibullBeta}, \ref{assu:Companion}.\ref{enu:SpectralRadius} and
\ref{assu:Companion}.\ref{enu:RowNormDecay} hold and suppose that $(\ln p)^{C(\alpha,\tau)}=o(n)$ with $C(\alpha,\tau)$ given in Assumption \ref{assu:RowSparsity} for $\alpha$ provided by Assumption \ref{assu:Innovations}.\ref{enu:EpsSubWeibullBeta} and $\tau\in(\textstyle{\frac{1}{2}},1]$ provided by Assumption \ref{assu:Companion}.\ref{enu:RowNormDecay}. Let $\upsilon_{i,j}^{0}:=(\E[\varepsilon_{0,i}^{2}Z_{-1,j}^{2}])^{1/2}$ for $i\in[p]$ and $j\in[pq]$, and define
\begin{equation}\label{eq:DeltaIdealDefn}
\widehat\Delta^0:=\max_{(i,j)\in[p]\times[pq]}\envert[1]{(\widehat{\upsilon}_{i,j}^{0})^2-(\upsilon_{i,j}^{0})^2},
\end{equation}
where $\widehat{\upsilon}_{i,j}^{0}$ is given in \eqref{eq:PenaltyLoadingsIdeal_p2}. Then $\widehat\Delta^0\to_\P0$.
\end{lem}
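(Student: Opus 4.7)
My plan is to show first that $\E[(\widehat\upsilon_{i,j}^0)^2]=(\upsilon_{i,j}^0)^2$ exactly, and then control the centered deviations uniformly in $(i,j)$ by splitting into a ``diagonal'' and an ``off-diagonal'' contribution. For the mean, write $V_{k,i,j}:=\sum_{t\in H_{n,k}}\varepsilon_{t,i}Z_{t-1,j}$ so $(\widehat\upsilon_{i,j}^0)^2=n^{-1}\sum_{k=1}^{l_n}V_{k,i,j}^2$, expand
\[
V_{k,i,j}^2=\sum_{t,s\in H_{n,k}}\varepsilon_{t,i}Z_{t-1,j}\varepsilon_{s,i}Z_{s-1,j},
\]
and use that $\E[\varepsilon_{t,i}\mid \mathcal F^{\bseta}_{t-1}]=0$ while $Z_{t-1,j}\varepsilon_{s,i}Z_{s-1,j}\in \mathcal F^{\bseta}_{t-1}$ for $s<t$ (Assumption \ref{assu:Innovations}.\ref{enu:EigValsOfSigmaEpsBddAwayFromZero}); this kills all cross-terms $t\neq s$ in expectation, leaving only $\sum_{t\in H_{n,k}}\E[\varepsilon_{t,i}^2 Z_{t-1,j}^2]$. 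Summing over $k$ and using $l_n m_n = n$ gives $\E[(\widehat\upsilon_{i,j}^0)^2]=(\upsilon_{i,j}^0)^2$. Consequently, the decomposition
\[
(\widehat\upsilon_{i,j}^0)^2-(\upsilon_{i,j}^0)^2 = A_n(i,j)+B_n(i,j),
\]
with $A_n(i,j):=n^{-1}\sum_{t=1}^n\{\varepsilon_{t,i}^2 Z_{t-1,j}^2-\E[\varepsilon_{0,i}^2 Z_{-1,j}^2]\}$ and $B_n(i,j):=n^{-1}\sum_{k=1}^{l_n}W_{k,i,j}$ for $W_{k,i,j}:=V_{k,i,j}^2-\sum_{t\in H_{n,k}}\varepsilon_{t,i}^2 Z_{t-1,j}^2$, isolates a centered empirical average from a pure off-diagonal residual.

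For the diagonal term $A_n$, I would apply the maximal inequality of Lemma \ref{lem:MaximalInequalityFunctionalDependence} to the $p^2q$-dimensional (strictly stationary, causal) process $\{(\varepsilon_{t,i}^2 Z_{t-1,j}^2)_{(i,j)}\}_{t\in\Z}$. The marginal sub-Weibull order is $\alpha/4$, obtained from Lemma \ref{lem:ZandEpsZNormBnds}.\ref{enu:ZiMarginalSubWeibullNormUnifBdd} combined with the standard identities $\|XY\|_{\psi_{\alpha/2}}\lesssim \|X\|_{\psi_\alpha}\|Y\|_{\psi_\alpha}$ and $\|X^2\|_{\psi_{\alpha/2}}\le \|X\|_{\psi_\alpha}^2$. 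The functional dependence measure $\Delta_r^{\varepsilon_{\cdot,i}^2 Z_{\cdot-1,j}^2}(h)$ inherits an $\mathrm{e}^{-c h^\tau}$ decay from Lemma \ref{lem:Delta_r}.\ref{enu:Delta_rFromPairToProduct} applied twice, together with Lemma \ref{lem:OutcomeAndScoreProcessesUniformlyGMC}.\ref{enu:EpsiZjUnifGMC}. Choosing $r$ and the inequality's block size as suitable polynomial powers of $n$ to balance the three terms on its right-hand side, the exponential decay of $\Delta_{r,n}$ suppresses the first term, while the other two collapse to $n^{-1}(\sqrt{m\ln(p^2qn)}+m(\ln(p^2qn))^{\overline\xi})$ with $\overline\xi=4/\alpha+1/\min\{\alpha/4,1\}$; under the assumed growth $(\ln p)^{C(\alpha,\tau)}=o(n)$, this is $o_\P(1)$.

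For the off-diagonal term $B_n$, the same conditional-mean calculation used in Step 1 shows that $\{W_{k,i,j}\}_k$ is a martingale difference sequence with respect to $\mathcal G_k:=\mathcal F^{\bseta}_{km_n}$. Orthogonality gives $\E[B_n(i,j)^2]=n^{-2}\sum_{k=1}^{l_n}\E[W_{k,i,j}^2]$; a combinatorial $L_r$ bound on $V_{k,i,j}^{2}$ using the MDS structure (only pairings contribute at leading order) and the uniform sub-Weibull$(\alpha/2)$ norm of $\varepsilon_{0,i}Z_{-1,j}$ (Lemma \ref{lem:ZandEpsZNormBnds}) yields $\|W_{k,i,j}\|_r\lesssim m_n r^{4/\alpha}$ for any $r\geq 2$. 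A Burkholder/Rosenthal-type bound across the $l_n$ martingale differences then gives $\|B_n(i,j)\|_r\lesssim \sqrt{m_n/n}\cdot r^{4/\alpha}$, and a union bound over the $p^2q$ pairs produces $\max_{i,j}|B_n(i,j)|\lesssim_\P \sqrt{m_n/n}\,(\ln(p^2q))^{4/\alpha}$. Since $m_n/n=n^{-4\tau/(1+4\tau)}$, this vanishes under the same growth condition.

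The main obstacle I anticipate is the bookkeeping in Step 3: optimizing the choice of $r$ and the inequality's block size against each other so that the resulting requirement on $\ln p$ collapses precisely to the three-way maximum $C(\alpha,\tau)$ in Assumption \ref{assu:RowSparsity}. The three candidate exponents in $C(\alpha,\tau)$ appear to correspond exactly to (i) equating the $d_T^{1/r}\Delta_{r,T}$ term with the rate-dominating ones (yielding the $\widetilde C(4,\alpha)(1+4\tau)/(4\tau-1)$ contribution), (ii) balancing the Bernstein-Gaussian $m\sqrt{l\ln d}$ term (yielding $(1+4\tau)/(4\tau-2)$), and (iii) the polynomial $(\ln d)^{\overline\xi}$ tail (yielding $(1+4\tau)/\tau$); verifying that the same choice of $(r,m)$ makes each of these $o(n)$ simultaneously is where the bulk of the technical work lies. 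The off-diagonal Step 4 is conceptually cleaner, its only subtlety being the fourth-moment combinatorics ensured by the MDS property.
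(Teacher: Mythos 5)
Your proposal is correct in outline but takes a genuinely different route from the paper. The paper does not split into diagonal and off-diagonal parts: after using the same mean identity $\E[(V_{0,i,j}^{(n)})^{2}]=m_{n}\E[\varepsilon_{0,i}^{2}Z_{-1,j}^{2}]$ that you derive in Step 1 (which, like your argument, relies on the conditional-mean-zero part of Assumption \ref{assu:Innovations}.\ref{enu:EigValsOfSigmaEpsBddAwayFromZero}, not formally among the lemma's stated hypotheses), it writes $\widehat\Delta^{0}=\tfrac{m_{n}}{l_{n}}\max_{i,j}\lvert\sum_{k}(m_{n}^{-2}(V_{k,i,j}^{(n)})^{2}-\E[m_n^{-2}(V_{0,i,j}^{(n)})^{2}])\rvert$ and applies Lemma \ref{lem:MaximalInequalityFunctionalDependence} once to the blocked process $\{m_{n}^{-2}(V_{k,i,j}^{(n)})^{2}\}_{k}$ with $T=l_{n}$ and $m_{T}=1$, the skeleton calculus of Lemma \ref{lem:Delta_r} supplying $\Delta_{2}^{(V_{\cdot,i,j}^{(n)})^{2}}(2)\lesssim m_{n}^{2}\mathrm{e}^{-Cm_{n}^{\tau}}$ and plain sub-additivity ($\|m_{n}^{-1}V_{k}\|_{2r}\leqslant\|\varepsilon_{0,i}Z_{-1,j}\|_{2r}$) supplying the $\psi_{\alpha/4}$ control. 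This is also why your attribution of the exponents in $C(\alpha,\tau)$ is off: $(1+4\tau)/(4\tau-2)$ and $\widetilde C(4,\alpha)(1+4\tau)/(4\tau-1)$ arise from the $m_{n}/l_{n}$ prefactor, i.e., from normalizing $V_{k}$ by $m_{n}$ rather than $\sqrt{m_{n}}$, not from your Step 3, which (as the paper's own treatment of the unblocked $\widetilde\Delta^{0}$ inside the proof of Lemma \ref{lem:AsymptoticValidityDataDrivenPenaltyLoadings} shows) only needs $(\ln p)^{1/\tau+\widetilde C(4,\alpha)}=o(n)$. What your decomposition buys is the sharper $\sqrt{m_{n}}$-normalization of the off-diagonal part via the martingale structure; what it costs is that Step 4 then rests on martingale Rosenthal/Burkholder inequalities with explicit $r$-dependence for sub-Weibull$(\alpha/2)$ increments whose index may lie below one. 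Tracked carefully these give $\|W_{k,i,j}\|_{r}\lesssim m_{n}r^{1+4/\alpha}$ (an extra factor of $r$ relative to your claim) and ultimately $\max_{i,j}|B_{n}(i,j)|\lesssim_{\P}\sqrt{\ln(pn)/l_{n}}+(\ln(pn))^{2+4/\alpha}/l_{n}$, which one can check still vanishes under $(\ln p)^{C(\alpha,\tau)}=o(n)$ for every $\alpha$ and $\tau\in(\tfrac12,1]$ — so your route closes, but the required moment inequalities are not off-the-shelf and are exactly what the paper's cruder normalization is designed to sidestep; making them precise is where the real work in your Step 4 lies, rather than in the pairing combinatorics you flag.
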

\begin{proof}[\sc{Proof of Lemma \ref{lem:DeltaIdealVanishes}}]
We set up for an application of Lemma \ref{lem:MaximalInequalityFunctionalDependence}. Observe first that
\begin{align*}
\widehat\Delta^0 &= \max_{(i,j)\in[p]\times[pq]}\envert[4]{\frac{1}{n}\sum_{k=1}^{l_n}\del[4]{\sum_{t\in H_{n,k}}\varepsilon_{t,i}Z_{t-1,j}}^{2}-\E\sbr[1]{\varepsilon_{0,i}^{2}Z_{-1,j}^{2}}}\\
 & =m_n\max_{(i,j)\in[p]\times[pq]}\envert[4]{\frac{1}{l_n}\sum_{k=1}^{l_n}\sbr[4]{\frac{1}{m_n^{2}}\del[4]{\sum_{t\in H_{n,k}}\varepsilon_{t,i}Z_{t-1,j}}^{2}-\frac{1}{m_n}\E\sbr[1]{\varepsilon_{0,i}^{2}Z_{-1,j}^{2}}}}\\
 & =\frac{m_n}{l_n}\max_{(i,j)\in[p]\times[pq]}\envert[4]{\sum_{k=1}^{l_n}\left(m_n^{-2}(V_{k,i,j}^{(n)})^{2}-\E\sbr[1]{m_n^{-2}(V_{0,i,j}^{(n)})^{2}}\right)},
\end{align*}
where we have defined
\begin{equation}
V_{k,i,j}^{(n)}:=\sum_{t\in H_{n,k}}\varepsilon_{t,i}Z_{t-1,j}\label{eq:def_V_k_i_j} ,\quad i\in[p],\quad j\in[pq],\quad k\in[l_n].
\end{equation}
Setting up for an application of Lemma \ref{lem:Delta_r}, denote
\begin{equation}
 Y_{t,i,j}:= \sum_{\ell=0}^{m_n-1}\varepsilon_{t-\ell,i}Z_{t-1-\ell,j}, ,\quad i\in[p],\quad j\in[pq].  \label{eq:def_Y_t_i_j}
\end{equation}
Arguing as in the proof of Lemma \ref{lem:OutcomeAndScoreProcessesUniformlyGMC}, we get the causal representations $\bepsilon_{t}=\bF(\bseta_{t-\overline{q}},\dotsc,\bseta_{t})$ and $\bZ_{t-1}=\bG(\dotsc,\bseta_{t-2},\bseta_{t-1})$, respectively. Hence, the joint process $\{(\bepsilon_t^\top,\bZ_{t-1}^\top)^\top\}_{t\in\Z}$ is strictly stationary and causal. The right-hand side of \eqref{eq:def_Y_t_i_j} forms an $\R$-valued mapping $\overline{G}_{i,j}^{(n)}$ for which $Y_{t,i,j}=\overline{G}_{i,j}^{(n)}(\dotsc,\bseta_{t-1},\bseta_{t} )$, so the process $\{Y_{t,i,j} \}_{t\in\Z}$ is strictly stationary and causal. Gathering the mappings $\{\overline{G}_{i,j}^{(n)} : i\in [p], j\in [pq] \}$ into an $\R^{p\times pq}$-valued mapping $\overline{\bG}^{(n)}$, such that $\{Y_{t,i,j} : i\in [p],j\in [pq] \} = \overline{\bG}^{(n)}(\dotsc,\bseta_{t-1},\bseta_{t})$, we see that $\{Y_{t,i,j} : i\in [p],j\in [pq] \}_{t\in\Z}$ is strictly stationary and causal. Noting that 
\begin{equation}\label{eq:def_Y_n_t_i_j}    
V_{k,i,j}^{(n)}=Y_{km_n,i,j}, \quad k \in \Z,
\end{equation}
we have that each skeleton process $\{V_{k,i,j}^{(n)}\}_{k\in\Z},i\in[p],j\in[pq]$, is strictly stationary and causal by Part \ref{enu:Delta_r_skeleton} of Lemma \ref{lem:Delta_r}. As each of these processes arise from the same i.i.d.~sequence $\{\bseta_t\}_{t\in\Z}$, we conclude that their joint process $\{V_{k,i,j}^{(n)}:i\in[p],j\in[pq]\}_{k\in\Z}$ is strictly stationary and causal as well.

Next, for any $i\in[p],j\in[pq],r\in[1,\infty)$ and $k\in[l_n]$,
\begin{align*}
\enVert[1]{m_n^{-2}(V^{(n)}_{k,i,j})^{2}}_{r} &=m_n^{-2}\enVert[4]{\del[4]{\sum_{t\in H_{n,k}}\varepsilon_{t,i}Z_{t-1,j}}^{2}}_{r}
 =m_n^{-2}\left\Vert \sum_{t=1}^{m_n}\varepsilon_{t,i}Z_{t-1,j}\right\Vert _{2r}^{2}\\
 & \leqslant\left\Vert \varepsilon_{0,i}Z_{-1,j}\right\Vert _{2r}^{2}\tag{sub-additivity and stationarity}\\
 & \leqslant C r^{4/\alpha}\tag{Lemma \ref{lem:ZandEpsZNormBnds}, Part \ref{enu:EpsiZjUnifMomentBnd}}
\end{align*}
for some constant $C\in(0,\infty)$ depending only on $a_1,\alpha,b_1,b_2$ and $\tau$. The previous display implies that
\[
\max_{(i,j)\in[p]\times[pq]}\left\Vert m_n^{-2}(V^{(n)}_{k,i,j})^{2}\right\Vert_{\psi_{\alpha/4}}\leqslant C,
\]
showing that the sub-Weibullness requirement of Lemma \ref{lem:MaximalInequalityFunctionalDependence} is satisfied with the constants $(K,\xi)$ there equal to $(C,\alpha/4)$. 
Invoking Lemma \ref{lem:MaximalInequalityFunctionalDependence} for the vectorization of $\{m_n^{-2}(V^{(n)}_{k,i,j})^2:i\in[p],j\in[pq]\}_{k\in\Z}$ with $r=2$,~$T=l_n$,~$d_T=p^2q$ and $m_T=1$ (so that $l_T=l_n$), we see that, as $n\to\infty$, 
\begin{align*}
 \widehat\Delta^0&=\frac{m_n}{l_n}\max_{(i,j)\in[p]\times[pq]}\envert[4]{\sum_{k=1}^{l_n}\left(m_n^{-2}(V^{(n)}_{k,i,j})^{2}-\E\sbr[1]{m_n^{-2}(V_{0,i,j}^{(n)})^{2}}\right)}\\
 & \lesssim_{\P}\frac{m_n}{l_n}\left(l_n\left(p^{2}q\right)^{1/2}\Delta_{2,n}+\sqrt{l_n\ln\left(p^{2}ql_n\right)}+\left(\ln\left(p^{2}ql_n\right)\right)^{\widetilde{C}(4,\alpha)}\right)\\
 & =m_n\left(p^{2}q\right)^{1/2}\Delta_{2,n}+\frac{m_n}{l_n}\sqrt{l_n\ln\left(p^{2}ql_n\right)}+\frac{m_n}{l_n}\left(\ln\left(p^{2}ql_n\right)\right)^{\widetilde{C}(4,\alpha)}\\
 & =:\mathrm{I}_n+\mathrm{II}_n+\mathrm{III}_n,
\end{align*}
with $\widetilde{C}(4,\alpha) = \max\{8/\alpha, (4+\alpha)/\alpha\} = 1/(\alpha/4) + 1/\min\{\alpha/4,1\}$ and 
\[
\Delta_{2,n}:=\max_{(i,j)\in[p]\times[pq]}\Delta_{2}^{m_n^{-2}(V^{(n)}_{\cdot,i,j})^{2}}\left(2\right).
\]
We handle each of the right-hand side terms in turn, starting with $\mathrm{II}_n$ and $\mathrm{III}_n$.
Recalling that $m_n=n^{1/(1+4\tau)}$, $l_n=n^{4\tau/(1+4\tau)}$, and $(\ln p)^{(1+4\tau)/(4\tau-2)}=o(n)$ with $\tau\in(\textstyle{\frac{1}{2}},1]$, for some constant $C\in(0,\infty)$, depending only on $q$, we get 
\[
\mathrm{II}_n\leqslant Cn^{\frac{1-4\tau}{1+4\tau}}\sbr{n^{\frac{4\tau}{1+4\tau}}(\ln p+\ln n)}^{1/2}=C\sqrt{\frac{\ln p+\ln n}{n^{(4\tau-2)/(1+4\tau)}}}=o(1).
\]
Recalling also that $(\ln p)^{\widetilde{C}(4,\alpha)(1+4\tau)/(4\tau-1)}=o(n)$, for some constant $C'\in(0,\infty)$ depending only on $\alpha$ and $q$, we see that
\[
\mathrm{III}_n\leqslant C'n^{\frac{1-4\tau}{1+4\tau}}\left(\ln p + \ln n\right)^{\widetilde{C}(4,\alpha)}=C'\left(\frac{\ln p+\ln n}{n^{(4\tau-1)/\sbr[0]{\widetilde{C}(4,\alpha)(1+4\tau)}}}\right)^{\widetilde{C}(4,\alpha)}=o(1).
\]
Since the lag-order $q$ is constant (Assumption \ref{assu:Companion}.\ref{enu:lag-order}),  $\mathrm{I}_n=o(1)$ is equivalent to $m_n p\Delta_{2,n}=o(1)$. To this end, with $Y_{t,i,k} $ given in \eqref{eq:def_Y_t_i_j}, note that for $h > m_n$ and any $r\in[1,\infty)$, we have
\begin{align*}
    \Delta_{r}^{Y_{\cdot,i,j}}(h) & \leqslant \sum_{\ell =0}^{m_n-1}\Delta_{r}^{\varepsilon_{\cdot,i}Z_{\cdot-1,j}}\left(h - \ell\right) \tag{Lemma \ref{lem:Delta_r}, Part \ref{enu:Delta_r_sum}} \\
    & \leqslant \sum_{\ell =0}^{m_n-1} C_2\mathrm{e}^{-C_4(h-\ell)^{\tau}} \tag{Lemma \ref{lem:OutcomeAndScoreProcessesUniformlyGMC}, Part \ref{enu:EpsiZjUnifGMC}}\\
    & \leqslant C_2 m_n \mathrm{e}^{-C_4(h-m_n)^{\tau}},\tag{monotonicity}
\end{align*}
for constants $C_2,C_4\in(0,\infty)$,
depending only on $a_1,\alpha,b_1,b_2,q,\overline{q},\tau$ and $r$.
Hence, using \eqref{eq:def_Y_n_t_i_j}, Part \ref{enu:Delta_r_skeleton} of Lemma \ref{lem:Delta_r} shows that for any $r\in[1,\infty)$ both
\begin{equation}
 \Delta_{r}^{V^{(n)}_{\cdot,i,j}}(2) = \Delta_{r}^{Y_{\cdot,i,j}}(2m_n) \leqslant C_2 m_n \mathrm{e}^{-C_4m_n^{\tau}}\label{eq:Delta_Y_n_t_i_j_bound}
\end{equation}
and, for some constant $C\in(0,\infty)$ depending only on $a_1,\alpha,b_1,b_2$ and $\tau$,
\begin{align*}
 \Delta_{r}^{(V^{(n)}_{\cdot,i,j})^2}(2) & \leqslant 2\enVert[1]{V_{0,i,j}^{(n)}}_{2r}\Delta_{2r}^{V^{(n)}_{\cdot,i,j}}(2) \tag{Lemma \ref{lem:Delta_r}, Part \ref{enu:Delta_rFromPairToProduct}} \\
 & \leqslant  2m_n \left\Vert \varepsilon_{0,i}Z_{-1,j} \right\Vert_{2r} \Delta_{2r}^{V^{(n)}_{\cdot,i,j}}(2) \tag{sub-additivity and stationarity} \\
 & \leqslant 2 m_n Cr^{2/\alpha} \Delta_{2r}^{V^{(n)}_{\cdot,i,j}}(2)  \tag{Lemma \ref{lem:ZandEpsZNormBnds}, Part \ref{enu:EpsiZjUnifMomentBnd}} \\
 & \leqslant 2 Cr^{2/\alpha} C_2 m_n^2 \mathrm{e}^{-C_4m_n^{\tau}} \tag{Equation \eqref{eq:Delta_Y_n_t_i_j_bound}}.
\end{align*}
As the latter bound holds uniformly over $(i,j)\in[p]\times[pq]$, setting $r=2$, we see that for some constant $C''\in(0,\infty)$, depending only on $a_1,\alpha,b_1,b_2,q,\overline{q}$ and $\tau$,
\[
\Delta_{2,n}
= \max_{(i,j)\in[p]\times[pq]}\Delta_{2}^{m_n^{-2}(V^{(n)}_{\cdot,i,j})^{2}}\left(2\right)
= m_n^{-2} \max_{(i,j)\in[p]\times[pq]}\Delta_{2}^{(V^{(n)}_{\cdot,i,j})^{2}}\left(2\right)
\leqslant C''\mathrm{e}^{-C_4m_n^{\tau}}.
\]
The previous display shows that $m_n p\Delta_{2,n}\lesssim m_n p \mathrm{e}^{-C_4m_n^\tau}=\mathrm{e}^{-C_4m_n^\tau +\ln(p)+\ln(m_n)}$. As $m_n=n^{1/(1+4\tau)}$, the right-hand side of this inequality is $o(1)$, provided that $(\ln p)^{(1+4\tau)/\tau}=o(n)$, which holds by hypothesis. We conclude that also $\mathrm{I}_n=o(1)$.
\end{proof}

\begin{proof}
[\sc{Proof of Lemma \ref{lem:Ideal-Penalty-Loading-Control}}] Recall that $\upsilon_{i,j}^{0}=(\E[\varepsilon_{0,i}^{2}Z_{-1,j}^{2}])^{1/2}$ for $i\in[p]$ and $j\in[pq]$. Using Assumptions  \ref{assu:Innovations} and \ref{assu:CovarianceZ}, as in the proof of Lemma  \ref{lem:Deviation-Bound} we have $\min_{i\in[p],j\in[pq]}(\upsilon_{i,j}^{0})^2\geqslant a_2d > 0$, so the $\{(\upsilon_{i,j}^{0})^2\}_{i\in[p],j\in[pq]}$ are bounded away from zero by the constant $a_2d=:C$. Part \ref{enu:EpsiZjUnifMomentBnd} of Lemma \ref{lem:ZandEpsZNormBnds} (with $r=2$) shows that the $\{(\upsilon_{i,j}^{0})^2\}_{i\in[p],j\in[pq]}$ are bounded from above by a constant $C'\in(0,\infty)$ depending only on $a_1,\alpha,b_1,b_2$ and $\tau$. Recalling $\widehat\Delta^0$ in (\ref{eq:DeltaIdealDefn}), it follows that
\begin{align*}
    \min_{i\in[p],j\in[pq]}(\widehat{\upsilon}_{i,j}^{0})^2
    \geqslant 
    \min_{i\in[p],j\in[pq]}(\upsilon_{i,j}^{0})^2-\widehat\Delta^0
    \quad\text{and}\quad
 \max_{i\in[p],j\in[pq]}(\widehat{\upsilon}_{i,j}^{0})^2
    \leqslant 
    \max_{i\in[p],j\in[pq]}(\upsilon_{i,j}^{0})^2+\widehat\Delta^0.
\end{align*}
Lemma \ref{lem:DeltaIdealVanishes} shows that $\widehat\Delta\to_\P0$. Hence, (\ref{eq:IdealPenaltyLoadingsUnderCtrl}) is satisfied with, e.g., $\underline{\upsilon}^0:=\sqrt{C/2}$ and $\overline{\upsilon}^0:=\sqrt{2C'}$.
\end{proof}

\subsection{Proof of Restricted Eigenvalue Bound (Lemma \ref{lem:Restricted-Eigenvalue-Bound})}
We first prove Lemma \ref{lem:Covariance-Consistency} (covariance consistency), and then use it to prove Lemma \ref{lem:Restricted-Eigenvalue-Bound}.
\begin{proof}
[\sc{Proof of Lemma \ref{lem:Covariance-Consistency}}] We set up
for an application of Lemma \ref{lem:MaximalInequalityFunctionalDependence}.
Assumptions \ref{assu:Innovations}.\ref{enu:EpsIID} and \ref{assu:Companion}.\ref{enu:SpectralRadius}
yield the VMA$\ensuremath{(\infty)}$ representation (\ref{eq:VMAInfRepresentation}),
so $\{\bZ_{t}\}_{t\in\mathbb{Z}}$ is strictly stationary and causal. These properties
are inherited by the process $\{\bZ_{t}\bZ_{t}^\top\}_{t\in\Z}$
with elements taking values in $\R^{pq\times pq}$. Part \ref{enu:ZiUnifGMC} of Lemma \ref{lem:ZandEpsZNormBnds} shows that, for $\alpha$ provided by Assumption \ref{assu:Innovations}.\ref{enu:EpsSubWeibullBeta},
\[
\max_{i\in[pq]}\left\Vert Z_{0,i}\right\Vert _{\psi_{\alpha}}\leqslant C_1,
\]
for a constant $C_1$ depending only on $a_1,b_1,b_2$ and $\tau$. Combined with the equivalent definitions of sub-Weibullness in \citet[Theorem 1]{vladimirova_sub-weibull_2020} and the bound in \citet[Proposition D.2]{kuchibhotla_moving_2022}, the previous display shows that for some constant $C_2\in(0,\infty)$, depending only on $a_1,\alpha,b_1,b_2$ and $\tau$,
\[
\max_{(i,j)\in[pq]^2}\left\Vert Z_{0,i}Z_{0,j}\right\Vert_{\psi_{\alpha/2}}\leqslant C_{2}.
\]
Part \ref{enu:ZiUnifMomentBnd} of Lemma \ref{lem:ZandEpsZNormBnds} (with $r=4$) shows that
\[
\max_{i\in[pq]}\left\Vert Z_{0,i}\right\Vert _{4}\leqslant 4^{1/\alpha}C_1 =:C_3,
\]
and Part \ref{enu:ZiUnifGMC} of Lemma \ref{lem:OutcomeAndScoreProcessesUniformlyGMC} (with $r=4$) shows that there is are constants $C_4,C_5\in(0,\infty)$, depending only on $a_1,\alpha,b_1,b_2,q,\overline{q}$ and $\tau$, such that
\[
\max_{i\in[pq]}\Delta_{4}^{Z_{\cdot,i}}\left(h\right)\leqslant C_4\mathrm{e}^{-C_{5}h^\tau},\quad h\in\N.
\]
It follows from Part \ref{enu:Delta_rFromPairToProduct} of Lemma \ref{lem:Delta_r} (with $r=2$) that
\[
\max_{(i,j)\in[pq]^2}\Delta_{2}^{Z_{\cdot,i}Z_{\cdot,j}}\left(h\right)\leqslant2C_{3}C_{4}\mathrm{e}^{-C_{5}h^\tau},\quad h\in\N.
\]
From the previous display and monotonicity, we see that for any choice of $m_n\in[n]$,
\[
\max_{(i,j)\in[pq]^2}\max_{u\in\{m_n+1,\dotsc,2m_n\}}\Delta_{2}^{Z_{\cdot,i}Z_{\cdot,j}}\left(u\right)\leqslant 2C_3C_4\mathrm{e}^{-C_{5}m_n^\tau}.
\]
By hypothesis we have $(\ln p)^{1/\tau+\widetilde{C}(2,\alpha)}=o(n)$, which implies that $(\ln p)^{1/\tau}=o(n)$, so that we  eventually have $\lceil(\ln(pn))^{1/\tau}/C_5^{1/\tau}\rceil\in[n]$.
Applying Lemma \ref{lem:MaximalInequalityFunctionalDependence} for the vectorization of $\{\bZ_t\bZ_t^\top\}_{t\in\Z}$
with $\xi=\alpha/2$, $K=C_{2}$, $r=2$,~$T=n$,~$d_T=p^2q^2$ and $m_T=\lceil(\ln(pn))^{1/\tau}/C_5^{1/\tau}\rceil$, we see that, as $n\to\infty$,
\begin{align*}
&\max_{(i,j)\in[pq]^2}\envert[1]{(\widehat{\bSigma}_{\bZ})_{i,j}-(\bSigma_{\bZ})_{i,j}} \\
& \overset{d}{=}\frac{1}{n}\max_{(i,j)\in[pq]^2}\envert[4]{\sum_{t=1}^{n}\left(Z_{t,i}Z_{t,j}-\E\left[Z_{0,i}Z_{0,j}\right]\right)}\tag{strict stationarity}\\
&\lesssim_\P \frac{1}{n}\Big(n[(pq)^2]^{1/2}\mathrm{e}^{-C_{5}m_T^\tau}+m_T\sqrt{l_T\ln((pq)^2 n)}+m_T(\ln((pq)^2 n))^{\widetilde{C}(2,\alpha)}\Big)\tag{Lemma \ref{lem:MaximalInequalityFunctionalDependence}}\\
&\lesssim \frac{1}{n} + \sqrt{\frac{(\ln(pn))^{1/\tau+1}}{n}} + \frac{(\ln(pn))^{1/\tau+\widetilde{C}(2,\alpha)}}{n}\tag{Assumption \ref{assu:Companion}.\ref{enu:lag-order} and choice of $m_T$}\\
&\lesssim \sqrt{\frac{(\ln(pn))^{1/\tau+\widetilde{C}(2,\alpha)}}{n}} + \frac{(\ln(pn))^{1/\tau+\widetilde{C}(2,\alpha)}}{n}=o(1),
\end{align*}
where the final $\lesssim$ stems from $\widetilde{C}(2,\alpha)\geqslant 1$.
\end{proof}

\begin{proof}
[\sc{Proof of Lemma \ref{lem:Restricted-Eigenvalue-Bound}}] Let
$T\subseteq[pq]$ and $\bdelta\in\R^{pq}$
satisfy $|T|\in[s]$ and $\Vert\bdelta_{T^{c}}\Vert_{\ell_{1}}\leqslant \widehat c_{0}\widehat{\mu}_{0}\Vert\bdelta_{T}\Vert_{\ell_{1}}$,
where we define $\widehat c_0=(\widehat u c + 1)/(\widehat\ell c -1)$ for some random sequences $\widehat\ell$ and $\widehat u$ provided by asymptotic validity (\ref{eq:AsymptoticallyValidPenaltyLoadings})--(\ref{eq:AsymptoticallyValidPenaltyLoadings2}) of the penalty loadings $\{\widehat\bUpsilon_i\}_{i=1}^p$. Thus,
\[
\Vert\bdelta\Vert_{\ell_{1}}=\Vert\bdelta_T\Vert_{\ell_{1}}+\Vert\bdelta_{T^c}\Vert_{\ell_{1}}\leqslant(1+\widehat c_{0}\widehat{\mu}_{0})\Vert\bdelta_{T}\Vert_{\ell_{1}}\leqslant(1+\widehat c_{0}\widehat{\mu}_{0})\sqrt{s}\Vert\bdelta\Vert_{\ell_{2}},
\]
where the final inequality stems from the Cauchy-Schwarz inequality. Abbreviating $\widehat{\Delta}:=\max_{(i,j)\in[pq]^2}|(\widehat{\bSigma}_{\bZ})_{i,j}-(\bSigma_{\bZ})_{i,j}|$, per the previous display
we get
\begin{align*}
\bdelta^\top\bSigma_{\bZ}\bdelta-\bdelta^\top\widehat{\bSigma}_{\bZ}\bdelta=\bdelta^\top(\bSigma_{\bZ}-\widehat{\bSigma}_{\bZ})\bdelta & \leqslant|\bdelta^\top(\widehat{\bSigma}_{\bZ}-\bSigma_{Z})\bdelta|\\
 & \leqslant\big\Vert(\widehat{\bSigma}_{\bZ}-\bSigma_{\bZ})\bdelta\big\Vert_{\ell_{\infty}}\left\Vert \bdelta\right\Vert _{\ell_{1}}\tag{H{\"o}lder}\\
 & \leqslant\widehat{\Delta}\left\Vert \bdelta\right\Vert _{\ell_{1}}^{2}\tag{H{\"o}lder}\\
 & \leqslant(1+\widehat c_{0}\widehat{\mu}_{0})^{2}s\Vert\bdelta\Vert_{\ell_{2}}^{2}\widehat{\Delta}.
\end{align*}
For~$\bdelta\neq\mathbf{0}_{pq}$ the previous display rearranges to
\[
\frac{\bdelta^\top\widehat{\bSigma}_{\bZ}\bdelta}{\left\Vert \bdelta\right\Vert _{\ell_{2}}^{2}}\geqslant\frac{\bdelta^\top\bSigma_{\bZ}\bdelta}{\left\Vert \bdelta\right\Vert _{\ell_{2}}^{2}}-(1+\widehat c_{0}\widehat{\mu}_{0})^{2}s\widehat{\Delta}.
\]
Since $\bdelta$ lies in the restricted set $\mathcal{R}_{\widehat c_0\widehat \mu_0,T}$, it follows from the definition \eqref{eq:RestrictedEigenvalue} that 
\begin{align*}
\widehat\kappa_{\widehat c_{0}\widehat{\mu}_{0}}^{2} & \geqslant\min_{\substack{T\subseteq[pq]:\\ |T|\in[s]}}\inf_{\substack{\bdelta\in\mathcal{R}_{\widehat c_{0}\widehat{\mu}_{0},T}:\\\bdelta\neq \mathbf{0}_{pq}}}\frac{\bdelta^\top\bSigma_{\bZ}\bdelta}{\left\Vert \bdelta\right\Vert _{\ell_{2}}^{2}}-(1+\widehat c_{0}\widehat{\mu}_{0})^{2}s\widehat{\Delta}
\geqslant 
\Lambda_{\min}(\bSigma_{\bZ})-(1+\widehat c_{0}\widehat{\mu}_{0})^{2}s\widehat{\Delta}.
\end{align*}
As the (unrestricted) eigenvalue $\Lambda_{\min}(\bSigma_{\bZ})$ is bounded away from zero by $d$ (Assumption \ref{assu:CovarianceZ}), to establish the claim it suffices to show that~$(1+\widehat c_{0}\widehat{\mu}_{0})^{2}s\widehat{\Delta}\to_{\P}0$.
To this end, note that asymptotic validity of the penalty loadings $\{\widehat{\bUpsilon}_{i}\}_{i\in[p]}$
implies that $\widehat\ell\to_{\mathrm{P}}1$ and $\widehat u\to_{\P}u$ for some constant $u\in[1,\infty)$.
It follows that $\widehat c_{0}=(\widehat u c+1)/(\widehat\ell c-1)\to_{\P}(uc+1)/(c-1)\in[1,\infty)$. Lemma \ref{lem:Ideal-Penalty-Loading-Control}
establishes $\widehat{\mu}_{0}\lesssim_{\P}1$, which implies that
$\widehat c_{0}\widehat{\mu}_{0}\lesssim_{\P}1$ as well. The growth condition $s^2(\ln(pn))^{1/\tau+\widetilde{C}(2,\alpha)}=o(n)$ implies $(\ln(pn))^{1/\tau+\widetilde{C}(2,\alpha)}=o(n)$, so that by Lemma \ref{lem:Covariance-Consistency} we obtain
\[
\widehat{\Delta}\lesssim_{\P}\sqrt{\frac{(\ln(pn))^{1/\tau+\widetilde{C}(2,\alpha)}}{n}}.
\]
Now $s^2(\ln(pn))^{1/\tau+\widetilde{C}(2,\alpha)}=o(n)$ yields $(1+\widehat c_{0}\widehat{\mu}_{0})^{2}s\widehat{\Delta}\to_{\P}0$, as desired.
\end{proof}

\subsection{Proof of Asymptotic Validity of Penalty Loadings from Algorithm \ref{alg:Data-Driven-Penalization} (Lemma \ref{lem:AsymptoticValidityDataDrivenPenaltyLoadings})}
We first state and proof a lemma giving convergence rates for the Lasso based on any asymptotically valid penalty loadings. We subsequently leverage this intermediate result to prove the asymptotic validity of the data-driven penalty loadings coming out of Algorithm \ref{alg:Data-Driven-Penalization}.
\begin{lem}[\textbf{Convergence Rates for Lasso with Asymptotically Valid Loadings}]\label{lem:Rates-for-LASSO-any-asymptotically-valid-loadings} Let Assumptions \ref{assu:Innovations}, \ref{assu:Companion}, \ref{assu:CovarianceZ} and \ref{assu:RowSparsity} hold, and let $\{\widehat\bUpsilon_i\}_{i\in[p]}$ be any collection of asymptotically valid penalty loadings. Then Lasso estimates $\widehat\bbeta_i:=\widehat\bbeta_i(\lambda^\ast_n,\widehat\bUpsilon_i),i\in[p]$, based on the penalty level~$\lambda_n^\ast$ in (\ref{eq:PenaltyLevelPractice}) and the penalty loadings $\{\widehat\bUpsilon_i\}_{i\in[p]}$ satisfy (\ref{eq:lasso_rates_1})--(\ref{eq:lasso_rates_3}).
\end{lem}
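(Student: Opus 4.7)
[\textsc{Proof proposal for Lemma \ref{lem:Rates-for-LASSO-any-asymptotically-valid-loadings}}]
The plan is to invoke the Master Lemma (Lemma~\ref{lem:MasterLemma}) with $\lambda=\lambda_n^\ast$ and the given~$\{\widehat\bUpsilon_i\}_{i\in[p]}$, then convert the resulting non-asymptotic bounds into the rates~(\ref{eq:lasso_rates_1})--(\ref{eq:lasso_rates_3}) by controlling each factor appearing on the right-hand sides. Before applying the Master Lemma, I would verify its two high-level hypotheses. First, the deviation bound $\lambda_n^\ast/n\geqslant c\max_{i\in[p]}\normn{\bS_{n,i}}_{\ell_\infty}$ holds with probability tending to one by Lemma~\ref{lem:Deviation-Bound}; note that the condition $(\ln p)^{(1+4\tau)/\tau}=o(n)$ required there is implied by Assumption~\ref{assu:RowSparsity} since $(1+4\tau)/\tau\leqslant C(\alpha,\tau)$. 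Second, asymptotic validity gives sequences $\widehat\ell\to_{\mathrm{P}}1$ and $\widehat u\to_{\mathrm{P}}u\in[1,\infty)$ satisfying~\eqref{eq:AsymptoticallyValidPenaltyLoadings}, so with $c=1.1>1$ one has $1/c<1\leqslant u$, and hence $1/c<\widehat\ell\leqslant \widehat u$ with probability approaching one.

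Next, I would control the quantities appearing in the Master Lemma bounds. Lemma~\ref{lem:Ideal-Penalty-Loading-Control} provides constants $\underline\upsilon^0,\overline\upsilon^0\in(0,\infty)$ such that, with probability tending to one, $\widehat\upsilon_{\max}^0\leqslant\overline\upsilon^0$ and $\widehat\mu_0\leqslant\overline\upsilon^0/\underline\upsilon^0$. Together with the previous paragraph, this yields $\widehat c_0=(\widehat uc+1)/(\widehat\ell c-1)\lesssim_{\mathrm{P}}1$ and $\widehat c_0\widehat\mu_0\lesssim_{\mathrm{P}}1$, which in turn---combined with Lemma~\ref{lem:Restricted-Eigenvalue-Bound}---gives $\widehat\kappa_{\widehat c_0\widehat\mu_0}^{2}\geqslant d/2$ with probability tending to one. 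The sparsity condition in Assumption~\ref{assu:RowSparsity} ensures that the hypothesis $s^2(\ln(pn))^{1/\tau+\widetilde C(2,\alpha)}=o(n)$ of Lemma~\ref{lem:Restricted-Eigenvalue-Bound} is met.

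Finally, I would bound~$\lambda_n^\ast/n$. Using the Gaussian quantile estimate $\Phi^{-1}(1-z)\leqslant\sqrt{2\ln(1/z)}$ for $z\in(0,1)$, together with $\gamma_n=0.1/\ln(\max\{n,pq\})$, one obtains
\[
\frac{\lambda_n^\ast}{n}=\frac{2c\,\Phi^{-1}\!\left(1-\gamma_n/(2p^2q)\right)}{\sqrt n}\leqslant \frac{2c\sqrt{2\ln(2p^2q/\gamma_n)}}{\sqrt n}\lesssim \sqrt{\frac{\ln(pn)}{n}},
\]
where the last step uses that $q$ is constant (Assumption~\ref{assu:Companion}.\ref{enu:lag-order}) and $\ln(1/\gamma_n)\lesssim \ln(pn)$. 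Plugging these ingredients into the three conclusions of the Master Lemma yields
\[
\max_{i\in[p]}\pnormn{\widehat\bbeta_i-\bbeta_{0i}}\lesssim_{\mathrm{P}}\sqrt{\frac{s\ln(pn)}{n}},\quad \max_{i\in[p]}\normn{\widehat\bbeta_i-\bbeta_{0i}}_{\ell_2}\lesssim_{\mathrm{P}}\sqrt{\frac{s\ln(pn)}{n}},
\]
and $\max_{i\in[p]}\normn{\widehat\bbeta_i-\bbeta_{0i}}_{\ell_1}\lesssim_{\mathrm{P}}s\sqrt{\ln(pn)/n}=\sqrt{s^2\ln(pn)/n}$, which are exactly~(\ref{eq:lasso_rates_1})--(\ref{eq:lasso_rates_3}).

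There is no real obstacle remaining once the four auxiliary lemmas are in hand; the proof is essentially a bookkeeping exercise that chains them together. The only subtle point is checking that the growth and sparsity condition in Assumption~\ref{assu:RowSparsity} simultaneously implies the (slightly different) conditions required by Lemmas~\ref{lem:Deviation-Bound}, \ref{lem:Ideal-Penalty-Loading-Control}, \ref{lem:Covariance-Consistency} and \ref{lem:Restricted-Eigenvalue-Bound}, which follows from inspection of the definition of $C(\alpha,\tau)$.
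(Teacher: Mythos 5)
Your proposal is correct and follows essentially the same route as the paper's own proof: apply the Master Lemma, verify its hypotheses via Lemma~\ref{lem:Deviation-Bound}, Lemma~\ref{lem:Ideal-Penalty-Loading-Control}, asymptotic validity, and Lemma~\ref{lem:Restricted-Eigenvalue-Bound}, then absorb the resulting constants using $\lambda_n^\ast\lesssim\sqrt{n\ln(pn)}$. Your extra remark verifying that $(1+4\tau)/\tau\leqslant C(\alpha,\tau)$ (which holds trivially since it is the third term in the max defining $C(\alpha,\tau)$) is not spelled out in the paper but is a useful sanity check.
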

\begin{proof}[\sc{Proof of Lemma \ref{lem:Rates-for-LASSO-any-asymptotically-valid-loadings}}] 
Lemma \ref{lem:Deviation-Bound} shows~$\P(\lambda^\ast_n/n\geqslant c\max_{i\in[p]}\Vert \bS_{n,i}\Vert _{\ell_{\infty}})\to1$, and by Lemma \ref{lem:Ideal-Penalty-Loading-Control} there are constants $(\underline{\upsilon}^{0},\overline{\upsilon}^{0})\in(0,\infty)^2$, depending only on $a_1,a_2,\alpha,b_1, b_2, d$ and $\tau$, such that
$\P(\underline{\upsilon}^{0}\leqslant\widehat{\upsilon}_{\min}^{0}\leqslant\widehat{\upsilon}_{\max}^{0}\leqslant\overline{\upsilon}^{0})\to1$ as well as $\P(\widehat{\mu}_{0}\leqslant\overline\upsilon^0/\underline\upsilon^0)\to1$. Since the penalty loadings $\{\widehat\bUpsilon_i\}_{i\in[p]}$ are asymptotically valid, there are non-negative random variables $\widehat\ell:=\widehat\ell_n$ and $\widehat u:=\widehat u_n,n\in\N$, such that (\ref{eq:AsymptoticallyValidPenaltyLoadings}) holds with probability approaching one and (\ref{eq:AsymptoticallyValidPenaltyLoadings2}) holds. For any choice of such random variables, Lemma \ref{lem:Restricted-Eigenvalue-Bound} provides the restricted eigenvalue bound $\P(\widehat\kappa^2_{\widehat c_0\widehat\mu_0}\geqslant d/2)\to1$, where $\widehat c_{0}=(\widehat uc+1)/(\widehat\ell c-1)$, and $\widehat \kappa_{\widehat c_{0}\widehat{\mu}_{0}}^{2}$ is defined via (\ref{eq:RestrictedEigenvalue}) with $C=\widehat c_{0}\widehat{\mu}_{0}$.
Thus, it follows from the master lemma (Lemma \ref{lem:MasterLemma}) that
\begin{align*}
\max_{i\in[p]}\Vert\widehat{\bbeta}_{i}-\bbeta_{0i}\Vert_{2,n} & \lesssim_\P\frac{\lambda_n^\ast\sqrt{s}}{n},\\
\max_{i\in[p]}\Vert\widehat{\bbeta}_{i}-\bbeta_{0i}\Vert_{\ell_{1}} & \lesssim_\P\frac{\lambda_n^\ast s}{n}\quad\text{and}\\
\max_{i\in[p]}\Vert\widehat{\bbeta}_{i}-\bbeta_{0i}\Vert_{\ell_{2}} & \lesssim_\P\frac{\lambda_n^\ast\sqrt{s}}{n},
\end{align*}
and (\ref{eq:lasso_rates_1})--(\ref{eq:lasso_rates_3}) follow as $\Phi^{-1}(1-z)\leqslant\sqrt{2\ln(1/z)},z\in(0,1)$, implies $\lambda_n^\ast\lesssim \sqrt{n\ln(pn)}$.
\end{proof}

\begin{proof}
[\sc{Proof of Lemma \ref{lem:AsymptoticValidityDataDrivenPenaltyLoadings}}]
The proof is divided into three steps. Step 1 establishes asymptotic validity of the \textit{initial} penalty loadings $\{\widehat{\upsilon}_{i,j}^{(0)}\}_{(i,j)\in[p]\times[pq]}$, while Step 2 proves asymptotic validity of the \textit{once updated} penalty loadings $\{\widehat{\upsilon}_{i,j}^{(1)}\}_{(i,j)\in[p]\times[pq]}$.~Step 3 observes that the argument in Step 2 can be iterated a finite number of times to arrive at the asymptotic validity of the $K$ \textit{times updated}
loadings $\{\widehat{\upsilon}_{i,j}^{(K)}\}_{(i,j)\in[p]\times[pq]}$ for any fixed $K\in\N_0$.

\textbf{Step 1:} To establish asymptotic validity of the \textit{initial} penalty loadings, let
\[
\upsilon_{i,j}^{(0)}:=\del[1]{\E\sbr[1]{(\widehat{\upsilon}_{i,j}^{(0)})^2}}^{1/2}=\del[1]{\E\sbr[1]{\del[1]{Y_{0,i}^{2}Z_{-1,j}^{2}}}}^{1/2},\quad (i,j)\in[p]\times[pq],
\]
and 
\[
\upsilon_{i,j}^{0}:=\del[1]{\E\sbr[1]{(\widehat{\upsilon}_{i,j}^{0})^2}}^{1/2}=\del[1]{\E\sbr[1]{\del[1]{\varepsilon_{0,i}^{2}Z_{-1,j}^{2}}}}^{1/2},\quad (i,j)\in[p]\times[pq].
\]
Since $\E[\bepsilon_{0} |\mathcal{F}_{-1}^{\bseta}] = \mathbf{0}_p$ almost surely (Assumption \ref{assu:Innovations}.\ref{enu:EigValsOfSigmaEpsBddAwayFromZero}) and $\bZ_{-1} \in \mathcal{F}_{-1}^{\bseta}$, we have
\[
\E[Y_{0,i}^2Z_{-1,j}^2]=\E[(\bZ_{-1}^\top\bbeta_{0i}+\varepsilon_{0,i})^{2}Z_{-1,j}^{2}]=\E[(\bZ_{-1}^\top\bbeta_{0i})^{2}Z_{-1,j}^{2}]+\E[\varepsilon_{0,i}^{2}Z_{-1,j}^{2}],
\]
for $(i,j)\in[p]\times[pq]$, implying that
\begin{equation}\label{eq:EInitialUpsSqReltoEIdealUpsSqEquality}
(\upsilon_{i,j}^{(0)})^2=\underbrace{\E[(\bZ_{0}^\top\bbeta_{0i})^{2}Z_{0,j}^{2}]}_{\geqslant0}+(\upsilon_{i,j}^{0})^2.
\end{equation}
Parts \ref{enu:ZiUnifMomentBnd} and \ref{enu:Zbeta_iSubWeibullNormUnifBdd} of Lemma \ref{lem:ZandEpsZNormBnds} together with the Cauchy-Schwarz inequality show that, under Assumptions \ref{assu:Innovations}.\ref{enu:EpsIID}, \ref{assu:Innovations}.\ref{enu:EpsSubWeibullBeta},
\ref{assu:Companion}.\ref{enu:SpectralRadius} and \ref{assu:Companion}.\ref{enu:RowNormDecay}, there is a constant $C_1\in(0,\infty)$, depending only on $a_1,\alpha,b_1,b_2$ and $\tau$, such that
\[
\max_{i\in[p],j\in[pq]}\E[(\bZ_{0}^\top\bbeta_{0i})^{2}Z_{0,j}^{2}]
\leqslant C_1.
\]
It follows from (\ref{eq:EInitialUpsSqReltoEIdealUpsSqEquality}) that for any $(i,j)\in[p]\times[pq]$,
\begin{align}\label{eq:EInitialUpsSqReltoEIdealUpsSqInequalities}
(\upsilon_{i,j}^0)^2\leqslant(\upsilon_{i,j}^{(0)})^2\leqslant C_1 + (\upsilon_{i,j}^0)^2.
\end{align}
Lemma \ref{lem:Ideal-Penalty-Loading-Control} shows that there is a constant $\underline\upsilon^0\in(0,\infty)$, depending only on $a_1,a_2,\alpha,b_1, b_2, d$ and $\tau$, such that $\P(\widehat\upsilon_{\min}^0\geqslant\underline\upsilon^0)\to1.$ To simplify notation, without loss of generality, we henceforth argue as if $\P(\widehat\upsilon_{\min}^0\geqslant\underline\upsilon^0)=1$ for all $n$.
Lemma \ref{lem:DeltaIdealVanishes} shows that $\widehat\Delta^0$ in (\ref{eq:DeltaIdealDefn}) satisfies
\begin{align}\label{eq:DeltaIdealSqVanishes}
\widehat\Delta^0=\max_{i\in[p],j\in[pq]}\envert[1]{(\widehat{\upsilon}_{i,j}^{0})^2-(\upsilon_{i,j}^{0})^2}\overset{\P}{\to}0.
\end{align}
Suppose for now that also
\begin{align}\label{eq:DeltaInitialSqVanishes}
\widehat\Delta^{(0)}:=\max_{i\in[p],j\in[pq]}\envert[1]{(\widehat{\upsilon}_{i,j}^{(0)})^2-(\upsilon_{i,j}^{(0)})^2}\overset{\P}{\to}0.
\end{align}
(We establish (\ref{eq:DeltaInitialSqVanishes}) below.) We now construct non-negative
random scalars $\widehat\ell^{(0)}:=\widehat\ell^{(0)}_n$ and $\widehat{u}^{(0)}:=\widehat{u}^{(0)}_n,n\in\N$, and a constant $u^{(0)}\in[1,\infty)$ such that (\ref{eq:AsymptoticallyValidPenaltyLoadings}) holds with probability approaching one and (\ref{eq:AsymptoticallyValidPenaltyLoadings2}) holds. To construct $\widehat\ell^{(0)}$, observe that for any $(i,j)\in[p]\times[pq]$, by the triangle inequality and the lower bound in (\ref{eq:EInitialUpsSqReltoEIdealUpsSqInequalities}),
\[
(\widehat\upsilon_{i,j}^{(0)})^2
\geqslant(\upsilon_{i,j}^{(0)})^2-\widehat\Delta^{(0)}
\geqslant(\upsilon_{i,j}^{0})^2-\widehat\Delta^{(0)}
\geqslant(\widehat\upsilon_{i,j}^{0})^2-(\widehat\Delta^0+\widehat\Delta^{(0)}).
\]
Since $0<\underline\upsilon^0\leqslant\widehat\upsilon_{\min}^0\leqslant\widehat\upsilon_{i,j}^0$, we can continue this string of inequalities to get
\[
(\widehat\upsilon_{i,j}^{(0)})^2\geqslant\left(1-\frac{\widehat\Delta^0+\widehat\Delta^{(0)}}{(\underline\upsilon^0)^2}\right)(\widehat\upsilon_{i,j}^{0})^2.
\]
As penalty loadings are also non-negative, these observations suggest
\begin{equation}\label{eq:ellhatInitDefn}
\widehat{\ell}^{(0)}:=\sqrt{\max\left\{ 0,1-\frac{\widehat{\Delta}^{0}+\widehat{\Delta}^{(0)}}{(\underline{\upsilon}^{0})^{2}}\right\}}
\end{equation}
To construct $\widehat{u}^{(0)}$, observe that for any $(i,j)\in[p]\times[pq]$, by the triangle inequality and the upper bound in (\ref{eq:EInitialUpsSqReltoEIdealUpsSqInequalities}),
\[
(\widehat\upsilon_{i,j}^{(0)})^2\leqslant(\upsilon_{i,j}^{(0)})^2+\widehat\Delta^{(0)}\leqslant(\upsilon_{i,j}^{0})^2 + C_1 + \widehat\Delta^{(0)}\leqslant(\widehat\upsilon_{i,j}^{0})^2 + C_1 + \widehat\Delta^{0} + \widehat\Delta^{(0)}.
\]
Again, since $0<\underline\upsilon^0\leqslant\widehat\upsilon_{\min}^0\leqslant\widehat\upsilon_{i,j}^0$, we can continue this string of inequalities to get
\[
(\widehat\upsilon_{i,j}^{(0)})^2\leqslant\left(1+ \frac{C_1 + \widehat\Delta^{0} + \widehat\Delta^{(0)}}{(\underline\upsilon^0)^2}\right)(\widehat\upsilon_{i,j}^{0})^2.
\]
The previous displays suggests
\begin{equation}\label{eq:uhatInitDefn}
\widehat{u}^{(0)}:=
\sqrt{1+ \frac{C_1 + \widehat\Delta^{0} + \widehat\Delta^{(0)}}{(\underline\upsilon^0)^2}}
\end{equation}
By construction of (\ref{eq:ellhatInitDefn}) and (\ref{eq:uhatInitDefn}), from (\ref{eq:DeltaIdealSqVanishes}) and the working hypothesis (\ref{eq:DeltaInitialSqVanishes}),
we see that (\ref{eq:AsymptoticallyValidPenaltyLoadings}) holds with probability approaching one. We also see that $\widehat\ell^{(0)}\to_\P1$ and $\widehat u^{(0)}\to_\P u^{(0)}$ for the constant $u^{(0)}:=[1+C_1/(\underline\upsilon^0)^2]^{1/2}\in[1,\infty)$, which depends only on $a_1,a_2,\alpha,b_1,b_2,d$ and $\tau$, showing that also (\ref{eq:AsymptoticallyValidPenaltyLoadings2}) holds. Hence, as long as we can establish (\ref{eq:DeltaInitialSqVanishes}), the initial penalty loadings $\{\widehat{\upsilon}_{i,j}^{\left(0\right)}\}_{(i,j)\in[p]\times[pq]}$ are asymptotically valid.

To this end, we set up for an application of Lemma \ref{lem:MaximalInequalityFunctionalDependence}.
Part \ref{enu:ZiMarginalSubWeibullNormUnifBdd} of Lemma \ref{lem:ZandEpsZNormBnds} shows that for some constant $C_2\in(0,\infty)$, depending only on $a_1,b_1,b_2$ and $\tau$,
\begin{equation}\label{eq:repeatstart}
\max_{i\in[pq]}\left\Vert Z_{0,i}\right\Vert_{\psi_{\alpha}}\leqslant C_2.
\end{equation}
Combined with the equivalent definitions of sub-Weibullness in \citet[Theorem 1]{vladimirova_sub-weibull_2020} and the bound in \citet[Proposition D.2]{kuchibhotla_moving_2022}, the previous display shows that for some constant $C_3\in(0,\infty)$, depending only on $a_1,\alpha,b_1,b_2$ and $\tau$,
\[
\max_{(i,j)\in[p]\times[pq]}\enVert[1]{ Y_{0,i}^2Z_{-1,j}^2}_{\psi_{\alpha/4}}\leqslant C_3.
\]
Setting $r=4$, Part \ref{enu:ZiUnifMomentBnd} of Lemma \ref{lem:ZandEpsZNormBnds} shows that there is a constant~$C_4\in(0,\infty)$, depending only on $a_1,\alpha,b_1,b_2$ and $\tau$, such that
\[
\max_{i\in[pq]}\Vert Z_{0,i}\Vert_{4}\leqslant C_4,
\]
and Part \ref{enu:ZiUnifGMC} of Lemma \ref{lem:OutcomeAndScoreProcessesUniformlyGMC} shows that there are constants $C_5,C_6\in(0,\infty)$, depending only on $a_1,\alpha,b_1,b_2,q, \overline{q}$ and $\tau$, such that
\[
\max_{i\in[pq]}\Delta^{Z_{\cdot,i}}_{4}(h)\leqslant C_5\mathrm{e}^{-C_6h^\tau},\quad h\in\N.
\]
Repeated use of Part \ref{enu:Delta_rFromPairToProduct} of Lemma \ref{lem:Delta_r} (with $r=1$ and then $r=2$), the previous two displays, and reasoning as in the proof of Part \ref{enu:EpsiZjUnifGMC} of Lemma \ref{lem:OutcomeAndScoreProcessesUniformlyGMC} (to account for the lag in $\bZ_{t-1}$), we see that there are constants $C_7,C_8\in(0,\infty)$, depending only on $a_1,\alpha,b_1,b_2,q, \overline{q}$ and $\tau$, such that
\[
\max_{(i,j)\in[p]\times[pq]}\Delta_1^{Y_{\cdot,i}^2Z_{\cdot-1,j}^2}(h)\leqslant C_7\mathrm{e}^{-C_8h^{\tau}},\quad h\in\N.
\]
By Assumption \ref{assu:RowSparsity}, we have $(\ln p)^{C(\alpha,\tau)}=o(n)$ with
\[
C(\alpha,\tau)=\max\cbr{\frac{\widetilde{C}(4,\alpha)(1+4\tau)}{4\tau-1},\frac{1+4\tau}{4\tau-2},\frac{1+4\tau}{\tau}}\quad\text{and}\quad\widetilde{C}(4,\alpha)= \max\cbr{\frac{8}{\alpha},\frac{4+\alpha}{\alpha}},
\]
which implies that $(\ln p)^{1/\tau}=o(n)$, so that eventually $\lceil(2\ln(pn))^{1/\tau}/C_8^{1/\tau}\rceil\in[n]$.
Applying Lemma \ref{lem:MaximalInequalityFunctionalDependence} for the vectorization of the strictly stationary and causal joint process $\{Y_{t,i}^2Z_{t-1,j}^2:(i,j)\in[p]\times[pq]\}_{t\in\Z}$ with $r=1$,~$\xi=\alpha/4$,~$K=C_3$,~$T=n$,~$d_T=p^2 q$ and  $m_T=\lceil(2\ln(p n))^{1/\tau}/C_8^{1/\tau}\rceil$, and using that $4/\alpha + 1/\min\{\alpha/4,1\} = \widetilde{C}(4,\alpha)$, we see that, as $n\to\infty$, 
\begin{align}
\widehat\Delta^{(0)}
&=\max_{(i,j)\in[p]\times[pq]}\envert[1]{(\widehat{\upsilon}_{i,j}^{(0)})^2-(\upsilon_{i,j}^{(0)})^2}\notag \\
&=\frac{1}{n}\max_{(i,j)\in[p]\times[pq]}\envert[4]{\sum_{t=1}^{n}\left(Y_{t,i}^{2}Z_{t-1,j}^{2}-\E[Y_{0,i}^{2}Z_{-1,j}^{2}]\right)}\notag \\
&\lesssim_{\P}\frac{1}{n}\left(np^{2}q\mathrm{e}^{-C_{8}m_T^\tau}+m_T\sqrt{l_{T}\ln(p^2qn)}+m_T(\ln(p^2qn))^{\widetilde{C}(4,\alpha)}\right)\tag{Lemma \ref{lem:MaximalInequalityFunctionalDependence}}\\
&\lesssim \frac{1}{n} + \sqrt{\frac{(\ln(pn))^{1/\tau+1}}{n}} + \frac{(\ln(pn))^{1/\tau+\widetilde{C}(4,\alpha)}}{n}\tag{Assumption \ref{assu:Companion}.\ref{enu:lag-order} and choice of $m_T$} \\
&\lesssim \sqrt{\frac{(\ln(pn))^{1/\tau+\widetilde{C}(4,\alpha)}}{n}}\tag{$\widetilde{C}(4,\alpha)\geqslant 1$}\\
&\leqslant \sqrt{\frac{(\ln(pn))^{C(\alpha,\tau)}}{n}}=o(1),\label{eq:repeatend}
\end{align}
where the $\leqslant$ holds as long as we can show $1/\tau+\widetilde{C}(4,\alpha)\leqslant C(\alpha,\tau)$. To this end, note that $\widetilde{C}(4,\alpha) = 8/\alpha$ for $\alpha \in (0,2]$ and $\widetilde{C}(4,\alpha)\leqslant 4$ for $\alpha \geqslant 2$, such that
\[
\frac{1}{\tau} +  \widetilde{C}(4,\alpha) = \frac{1}{\tau} + \frac{8}{\alpha} \leqslant \frac{8(1+4\tau)}{\alpha(4\tau-1)} \leqslant C(\alpha,\tau)\quad \text{for $\alpha \in (0, 2]$}
\]
and 
\[
1/\tau + \widetilde{C}(4,\alpha) \leqslant 1/\tau + 4 = (1+4\tau)/\tau \leqslant C(\alpha,\tau)\quad \text{for $\alpha \geqslant 2$}.
\]

\textbf{Step 2:} If $K=0$, then there is nothing left to show. We therefore presume that the number of updates $K\in\N$, and consider the once updated penalty loadings
$\{\widehat{\upsilon}_{i,j}^{(1)}\}_{(i,j)\in[p]\times[pq]}$ from 
(\ref{eq:PenaltyLoadingsRefined}). As observed in Step 1, Lemma \ref{lem:Ideal-Penalty-Loading-Control} shows that there is a constant $\underline\upsilon^0\in(0,\infty)$, depending only on $a_1,a_2,\alpha,b_1,b_2,d$ and $\tau$, such that $\P(\widehat\upsilon_{\min}^0\geqslant\underline\upsilon^0)\to1.$ As in Step 1, we (without loss of generality) continue to argue as if $\P(\widehat\upsilon_{\min}^0\geqslant\underline\upsilon^0)=1$ for all $n$.
Recall that (\ref{eq:DeltaIdealSqVanishes}) holds. Suppose for now that also
\begin{align}\label{eq:DeltaUpdateSqVanishes}
\widehat\Delta^{(1)}:=\max_{(i,j)\in[p]\times[pq]}\envert[1]{(\widehat{\upsilon}_{i,j}^{(1)})^2-(\upsilon_{i,j}^{0})^2}\overset{\P}{\to}0.
\end{align}
(We establish (\ref{eq:DeltaUpdateSqVanishes}) below.) We now construct 
random scalars $\widehat\ell^{(1)}:=\widehat\ell^{(1)}_n$ and $\widehat{u}^{(1)}:=\widehat{u}^{(1)}_n,n\in\N$, such that (\ref{eq:AsymptoticallyValidPenaltyLoadings}) holds with probability approaching one and (\ref{eq:AsymptoticallyValidPenaltyLoadings2}) holds with $\widehat u^{(1)}\to_\P1$. To construct $\widehat\ell^{(1)}$, observe that for any $(i,j)\in[p]\times[pq]$, by the triangle inequality and $0<\underline\upsilon^0\leqslant\widehat\upsilon_{\min}^0\leqslant\widehat\upsilon_{i,j}^0$,
\[
(\widehat\upsilon_{i,j}^{(1)})^2
\geqslant(\upsilon_{i,j}^{0})^2-\widehat\Delta^{(1)}
\geqslant(\widehat\upsilon_{i,j}^{0})^2-(\widehat\Delta^0+\widehat\Delta^{(1)})
\geqslant\left(1-\frac{\Delta^0+\widehat\Delta^{(1)}}{(\underline\upsilon^0)^2}\right)(\widehat\upsilon_{i,j}^{0})^2.
\]
A penalty loadings are non-negative, these observations suggest
\begin{equation}\label{eq:ellhatUpdDefn}
\widehat{\ell}^{(1)}:=\sqrt{\max\left\{ 0,1-\frac{\widehat{\Delta}^{0}+\widehat{\Delta}^{(1)}}{(\underline{\upsilon}^{0})^{2}}\right\}}.
\end{equation}
Similar arguments lead us to
\begin{equation}\label{eq:uhatUpdDefn}
\widehat{u}^{(1)}:=\sqrt{1+\frac{\widehat{\Delta}^{0}+\widehat{\Delta}^{(1)}}{(\underline{\upsilon}^{0})^{2}}}.
\end{equation}
By construction of (\ref{eq:ellhatUpdDefn}) and (\ref{eq:uhatUpdDefn}), from (\ref{eq:DeltaIdealSqVanishes}) and the working hypothesis (\ref{eq:DeltaUpdateSqVanishes}),
we see that (\ref{eq:AsymptoticallyValidPenaltyLoadings}) holds with probability approaching one. Furthermore, $\widehat\ell^{(1)}\to_\P1$ and $\widehat u^{(1)}\to_\P 1$, showing that also (\ref{eq:AsymptoticallyValidPenaltyLoadings2}) holds, now with $u=1$. Hence, it remains to establish (\ref{eq:DeltaUpdateSqVanishes}).

To this end, define the (infeasible) penalty loadings
\[
\widetilde{\upsilon}_{i,j}^{0}:=\left(\frac{1}{n}\sum_{t=1}^{n}\varepsilon_{t,i}^{2}Z_{t-1,j}^{2}\right)^{1/2},\quad(i,j)\in[p]\times[pq],
\]
which correspond to the ideal penalty loadings without any blocking.
To show $\widehat\Delta^{(1)}\to_\P0$, by the triangle inequality, it suffices that
\[
\widetilde\Delta^0:=\max_{(i,j)\in[p]\times[pq]}\envert[1]{(\widetilde{\upsilon}_{i,j}^{0})^2-(\upsilon_{i,j}^{0})^2}\overset{\P}{\to}0\quad\text{and}\quad
\widetilde\Delta^{(1)}:=\max_{(i,j)\in[p]\times[pq]}\envert[1]{(\widehat{\upsilon}_{i,j}^{(1)})^2-(\widetilde{\upsilon}_{i,j}^{0})^2}\overset{\P}{\to}0.
\]
To show that $\widetilde\Delta^0\to_\P0$, note that
\[
\widetilde\Delta^0=\frac{1}{n}\max_{(i,j)\in[p]\times[pq]}\left|\sum_{t=1}^{n}\left(\varepsilon_{t,i}^{2}Z_{t-1,j}^{2}-\E[\varepsilon_{0,i}^{2}Z_{-1,j}^{2}]\right)\right|.
\]
By Assumption \ref{assu:Innovations}.\ref{enu:EpsSubWeibullBeta} one has~$\max_{i\in[p]}\Vert\varepsilon_{0,i}\Vert_{\psi_{\alpha}}\leqslant a_1$. In addition,~using that $\{\varepsilon_{t}\}_{t\in\Z}$ is $\overline{q}$-dependent (Assumption \ref{assu:Innovations}.\ref{enu:EpsIID}), it trivially holds that $\{\varepsilon_{t,i}\}_{t\in\Z}$ is a geometric moment contraction. Now, replacing~$Y_{t,i}$ with~$\varepsilon_{t,i}$ in the arguments starting at~\eqref{eq:repeatstart} and leading to~\eqref{eq:repeatend} yields~$\widetilde{\Delta}^0\to_\P0$.

To show that $\widetilde\Delta^{(1)}\to_\P0$, recall that $\widehat{\varepsilon}_{t,i}^{(0)}=Y_{t,i}-\bZ_{t-1}^\top\widehat{\bbeta}_{i}^{(0)}=\varepsilon_{t,i}-\bZ_{t-1}^\top\widehat{\bdelta}_{i}^{(0)},$
where $\widehat{\bdelta}_{i}^{(0)}:=\widehat{\bbeta}_{i}^{(0)}-\bbeta_{0i}$ is the estimation error arising from the initial penalty loadings. Hence,
\begin{align*}
\widetilde\Delta^{(1)}
&=\max_{(i,j)\in[p]\times[pq]}\envert[3]{\frac{1}{n}\sum_{t=1}^{n}\sbr[1]{\widehat{\varepsilon}_{ti}^{\left(0\right)})^2-\varepsilon_{ti}^{2}}Z_{t-1,j}^{2}}\\
&=\max_{(i,j)\in[p]\times[pq]}\envert[3]{\frac{1}{n}\sum_{t=1}^{n}\left[(\bZ_{t-1}^\top\widehat{\bdelta}_{i}^{(0)})^{2}-2\varepsilon_{t,i}\bZ_{t-1}^\top\widehat{\bdelta}_{i}^{(0)}\right]Z_{t-1,j}^{2}}\\
&\leqslant2\underbrace{\max_{(i,j)\in[p]\times[pq]}\envert[3]{\frac{1}{n}\sum_{t=1}^{n}\varepsilon_{t,i}Z_{t-1,j}^{2}\bZ_{t-1}^\top\widehat{\bdelta}_{i}^{(0)}}}_{=:\mathrm{I}_{n}}+
\underbrace{\max_{(i,j)\in[p]\times[pq]}\envert[3]{\frac{1}{n}\sum_{t=1}^{n}|\bZ_{t-1}'\widehat{\bdelta}_{i}^{(0)}|^{2}Z_{t-1,j}^{2}}}_{=:\mathrm{II}_{n}}.
\end{align*}
To show $\widetilde\Delta^{(1)}\to_\P0$, we argue $\mathrm{I}_{n}\to_\P0$ and $\mathrm{II}_{n}\to_\P0$, in turn. First,
\begin{align}
\mathrm{I}_{n}
&\leqslant\max_{(i,j)\in[p]\times[pq]}\sqrt{\frac{1}{n}\sum_{t=1}^{n}\varepsilon_{t,i}^{2}Z_{t-1,j}^{4}}\cdot\Vert\widehat{\bdelta}_{i}^{(0)}\Vert_{2,n}\tag{Cauchy-Schwarz}\nonumber\\
&\leqslant\underbrace{\max_{t\in[n]}\max_{j\in[pq]}\left|Z_{t-1,j}\right|}_{=:\mathrm{I}_n^{(a)}}\cdot\underbrace{\max_{(i,j)\in[p]\times[pq]}\widetilde{\upsilon}_{i,j}^{0}}_{=:\mathrm{I}_n^{(b)}}\cdot\underbrace{\max_{i\in[p]}\Vert\widehat{\bdelta}_{i}^{(0)}\Vert_{2,n}}_{=:\mathrm{I}_n^{(c)}}\label{eq:DeltaTildeUpdPartIBnd}.
\end{align}
We next handle each of the terms in (\ref{eq:DeltaTildeUpdPartIBnd}), starting with $\mathrm{I}_n^{(c)}$. From Lemma \ref{lem:Rates-for-LASSO-any-asymptotically-valid-loadings} and the asymptotic validity of the initial loadings (Step 1), we know that
\[
\mathrm{I}_n^{(c)}=\max_{i\in[p]}\Vert\widehat{\bdelta}_{i}^{(0)}\Vert_{2,n}\lesssim_\P\sqrt{s\ln(pn)/n}.
\]
Turning to $\mathrm{I}_n^{(b)}$, as
\[
\widetilde\Delta^0=\max_{(i,j)\in[p]\times[pq]}\envert[1]{(\widetilde{\upsilon}_{i,j}^{0})^2-(\upsilon_{i,j}^{0})^2}\overset{\P}{\to}0
\]
and, using Part \ref{enu:EpsiZjUnifMomentBnd} of Lemma \ref{lem:ZandEpsZNormBnds}, we see that
\[
\max_{(i,j)\in[p]\times[pq]}\upsilon_{i,j}^0 =\max_{(i,j)\in[p]\times[pq]}(\E[\varepsilon_{0,i}^2Z_{-1,j}^2])^{1/2}\leqslant a_14^{2/\alpha} C(a_1,b_1,b_2,\tau)<\infty,
\]
we must have 
\[
\mathrm{I}_n^{(b)}=\max_{(i,j)\in[p]\times[pq]}\widetilde{\upsilon}_{i,j}^{0}\lesssim_{\P}1.
\]
Concerning $\mathrm{I}_n^{(a)}$, Part \ref{enu:ZiMarginalSubWeibullNormUnifBdd} of Lemma \ref{lem:ZandEpsZNormBnds} yields
\[
\max_{j\in[pq]}\Vert Z_{0,j}\Vert_{\psi_{\alpha}}\leqslant C(a_1,b_1,b_2,\tau)<\infty.
\]
Thus, by \citet[Theorem 1]{vladimirova_sub-weibull_2020}, there is a constant $C_9\in(0,\infty)$, depending only on $a_1,\alpha,b_1,b_2$ and $\tau$, such that
\[
\P(|Z_{0,j}|\geqslant x)\leqslant2\mathrm{e}^{-(x/C_{9})^{\alpha}}
\]
holds for all $x\in(0,\infty)$ and all $j\in[pq]$. A union bound argument thus leads to
\[
\max_{t\in[n]}\max_{j\in[pq]}\left|Z_{t-1,j}\right|\lesssim_{\P}(\ln(pn))^{1/\alpha},
\]
so gathering our findings in (\ref{eq:DeltaTildeUpdPartIBnd}), we see that 
\[
\mathrm{I}_n\lesssim_\P (\ln(pn))^{1/\alpha}\cdot\sqrt{\frac{s\ln(pn)}{n}}=\sqrt{\frac{s(\ln(pn))^{1+2/\alpha}}{n}}\leqslant \sqrt{\frac{s(\ln(pn))^{1/\tau+\widetilde{C}(2,\alpha)}}{n}}=o(1).
\] 
Our previous calculations also show that
\begin{align*}
\mathrm{II}_{n}
&=\max_{(i,j)\in[p]\times[pq]}\envert[3]{\frac{1}{n}\sum_{t=1}^{n}|\bZ_{t-1}^\top\widehat{\bdelta}_{i}^{(0)}|^{2}Z_{t-1,j}^{2}}
\leqslant\max_{t\in[n]}\max_{j\in[pq]}Z_{t-1,j}^{2}\cdot\max_{i\in[p]}\Vert\widehat{\bdelta}_{i}^{(0)}\Vert_{2,n}^{2}\\
&\lesssim_{\P}(\ln(pn))^{2/\alpha}\cdot\frac{s\ln(pn)}{n}
=\frac{s(\ln(pn))^{1+2/\alpha}}{n}
=o\left(1\right).
\end{align*}
It follows that $\widetilde\Delta^{(1)}\to_\P0$ and, thus, the once updated penalty loadings $\{\widehat\upsilon_{i,j}^{(1)}\}_{(i,j)\in[p]\times[pq]}$ are asymptotically valid. 

\textbf{Step 3:} To finish the proof, observe that asymptotic validity of the once updated penalty loadings (Step 2) and Lemma \ref{lem:Rates-for-LASSO-any-asymptotically-valid-loadings} imply
\[
\max_{i\in[p]}\Vert\widehat{\bdelta}_{i}^{(1)}\Vert_{2,n}\lesssim_{\P}\sqrt{s\ln(pn)/n},
\]
with $\widehat{\bdelta}_{i}^{(1)}:=\widehat{\bbeta}_{i}^{(1)}-\bbeta_{0i},i\in[p]$,
being the estimation error arising from the once updated penalty loadings. Since the rate of convergence of the resulting $\Vert\widehat{\bdelta}_{i}^{\left(k\right)}\Vert_{2,n}$
remains $\sqrt{s\ln(pn)/n}$ for any finite $k$, we can iterate on the argument given in Step 2 to obtain asymptotic validity
of the penalty loadings constructed from $K\in\N$ updates.
\end{proof}

\subsection{Proof of Convergence Rates for Lasso with Data-Driven Penalization (Theorem \ref{thm:Rates-for-LASSO-data-driven-loadings})}
\begin{proof}[\sc{Proof of Theorem \ref{thm:Rates-for-LASSO-data-driven-loadings}}]
The assumptions of the theorem suffice for asymptotic validity of the penalty loadings $\{\widehat\upsilon_{i,j}^{(K)}\}_{(i,j)\in[p]\times[pq]}$ generated by Algorithm \ref{alg:Data-Driven-Penalization} with any fixed number $K\in\N_0$ of updates, cf.~Lemma \ref{lem:AsymptoticValidityDataDrivenPenaltyLoadings}. The rates in (\ref{eq:lasso_rates_1})--(\ref{eq:lasso_rates_3}) thus follow from Lemma \ref{lem:Rates-for-LASSO-any-asymptotically-valid-loadings}.
\end{proof}

\section{Proof of Theorem~\ref{thm:postLasso} and an Alternative Post-Lasso}
We first state a theorem for least squares estimation following any estimators which (i) satisfy the rates~\eqref{eq:lasso_rates_1}--\eqref{eq:lasso_rates_3} and (ii) do not select ``too many'' irrelevant variables. We then use that the versions of the Lasso studied (differing only in the construction of the loadings) possess these properties to prove Theorems~\ref{thm:postLasso} and \ref{thm:postgeneralestim}, the latter of which is stated in Section~\ref{sec:OLSforweights} below.

For any estimator~$\widehat{\bbeta}_i$ (e.g.~the Lasso) of~$\bbeta_{0i}$, let~$\widehat{T}_i:=\supp{(\widehat{\bbeta}_{i})}$ be the indices of its non-zero coefficients. Denote by
\begin{equation}\label{eq:postestim}
\widetilde{\bbeta}_{i}\in\argmin_{\mathclap{\substack{\bbeta\in\R^{pq}\\ \supp{(\bbeta)\subseteq \widehat{T}_i}}}} \widehat{Q}_{i}\del[0]{\bbeta},\qquad i\in[p].    
\end{equation}
the (set of) least squares estimator(s) including only those variables in~$\widehat{T}_i$ ``selected'' by~$\widehat{\bbeta}_i$. If~$\widehat{\bbeta}_i=\widehat{\bbeta}_i(\lambda,\widehat\bUpsilon_i)$ is a Lasso with penalty~$\lambda$ and loadings~$\widehat\bUpsilon_i$ from~\eqref{eq:LASSOVector}, then~$\widetilde{\bbeta}_i=\widetilde\bbeta_i(\lambda,\widehat\bUpsilon_i)$ is a post-Lasso estimator from~\eqref{eq:PostLassom}. Define~$T_{i}:=\supp{(\bbeta_{0i})}$.
\begin{thm}\label{thm:postgeneralestim}
Let Assumptions \ref{assu:Innovations}, \ref{assu:Companion}, \ref{assu:CovarianceZ} and \ref{assu:RowSparsity} hold, let~$\limsup_{n\to\infty}\phi_{\max}(s\ln(n), \bSigma_{\bZ})<\infty$, and let the estimators~$\widehat{\bbeta}_i,i\in[p]$,  satisfy the rates in \eqref{eq:lasso_rates_1}--\eqref{eq:lasso_rates_3} and~$\max_{i\in[p]}|\widehat{T}_i\setminus T_i|\lesssim_{\P} s$. Then the least squares estimators~$\widetilde{\bbeta}_{i},i\in[p]$, in~\eqref{eq:postestim} following the selections of~$\widehat{\bbeta}_i,i\in[p]$, will also satisfy the rates in~\eqref{eq:lasso_rates_1}--\eqref{eq:lasso_rates_3}.
\end{thm}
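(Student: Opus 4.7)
My plan is to exploit the fact that $\widetilde{\bbeta}_i$ minimizes $\widehat{Q}_i$ over a support containing $\supp(\widehat{\bbeta}_i)$ and reduce everything to a maximal inequality for the noise on sparse vectors together with a sparse minimum eigenvalue bound on $\widehat{\bSigma}_{\bZ}$. Writing $\widetilde{\bdelta}_i:=\widetilde{\bbeta}_i-\bbeta_{0i}$ and $\widehat{\bdelta}_i:=\widehat{\bbeta}_i-\bbeta_{0i}$, the starting point is the basic inequality
\[
\enVert{\widetilde{\bdelta}_i}_{2,n}^2 \leqslant \enVert{\widehat{\bdelta}_i}_{2,n}^2 + \frac{2}{n}\envert[3]{\sum_{t=1}^n \varepsilon_{t,i}\bZ_{t-1}^\top(\widetilde{\bdelta}_i-\widehat{\bdelta}_i)},
\]
which follows from $\widehat{Q}_i(\widetilde{\bbeta}_i)\leqslant \widehat{Q}_i(\widehat{\bbeta}_i)$ upon expanding $\widehat{Q}_i(\bbeta)-\widehat{Q}_i(\bbeta_{0i})=\enVert{\bbeta-\bbeta_{0i}}_{2,n}^2 - 2n^{-1}\sum_{t=1}^n \varepsilon_{t,i}\bZ_{t-1}^\top(\bbeta-\bbeta_{0i})$.

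To bound the cross term, observe that $\widetilde{\bdelta}_i-\widehat{\bdelta}_i$ is supported on $\widehat{T}_i\cup T_i$, and $|\widehat{T}_i\cup T_i|\leqslant 2s+|\widehat{T}_i\setminus T_i|\lesssim_{\mathrm{P}} s$ by the overselection hypothesis together with $|T_i|\leqslant s$. For any vector $\bdelta$ supported on a set of cardinality at most $m$, Cauchy--Schwarz gives $|n^{-1}\sum_t\varepsilon_{t,i}\bZ_{t-1}^\top\bdelta|\leqslant \enVert{\bdelta}_{\ell_2}\sqrt{m}\,\max_{j\in[pq]}|n^{-1}\sum_t\varepsilon_{t,i}Z_{t-1,j}|$. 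The ideal loading control (Lemma~\ref{lem:Ideal-Penalty-Loading-Control}) combined with the deviation bound (Lemma~\ref{lem:Deviation-Bound}) yields $\max_{i\in[p]}\max_{j\in[pq]}|n^{-1}\sum_t\varepsilon_{t,i}Z_{t-1,j}|\lesssim_{\mathrm{P}}\sqrt{\ln(pn)/n}$ uniformly. Passing from $\enVert{\bdelta}_{\ell_2}$ to $\enVert{\bdelta}_{2,n}$ requires a lower bound on the sparse minimum eigenvalue of $\widehat{\bSigma}_{\bZ}$: I will show $\phi_{\min}(Cs,\widehat{\bSigma}_{\bZ})\geqslant d/2$ with probability approaching one by combining the entry-wise covariance consistency of Lemma~\ref{lem:Covariance-Consistency} with Assumption~\ref{assu:CovarianceZ} and the growth condition $s^2(\ln(pn))^{1/\tau+\widetilde{C}(2,\alpha)}=o(n)$ from Assumption~\ref{assu:RowSparsity}, which ensures the max-entry error multiplied by $s$ vanishes.

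Combining these pieces and applying the triangle inequality $\enVert{\widetilde{\bdelta}_i-\widehat{\bdelta}_i}_{2,n}\leqslant \enVert{\widetilde{\bdelta}_i}_{2,n}+\enVert{\widehat{\bdelta}_i}_{2,n}$, the basic inequality becomes
\[
\enVert{\widetilde{\bdelta}_i}_{2,n}^2 \leqslant \enVert{\widehat{\bdelta}_i}_{2,n}^2 + \del[1]{\enVert{\widetilde{\bdelta}_i}_{2,n}+\enVert{\widehat{\bdelta}_i}_{2,n}} O_{\mathrm{P}}\del[1]{\sqrt{s\ln(pn)/n}},
\]
a quadratic inequality in $\enVert{\widetilde{\bdelta}_i}_{2,n}$ whose solution, combined with the hypothesized rate $\max_i\enVert{\widehat{\bdelta}_i}_{2,n}\lesssim_{\mathrm{P}} \sqrt{s\ln(pn)/n}$, yields~\eqref{eq:lasso_rates_1} uniformly in $i\in[p]$. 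The $\ell_2$-rate~\eqref{eq:lasso_rates_3} follows from the $O_{\mathrm{P}}(s)$-sparsity of $\widetilde{\bdelta}_i$ and the same sparse minimum eigenvalue bound via $\enVert{\widetilde{\bdelta}_i}_{\ell_2}\leqslant \enVert{\widetilde{\bdelta}_i}_{2,n}/\sqrt{\phi_{\min}(Cs,\widehat{\bSigma}_{\bZ})}$, and the $\ell_1$-rate~\eqref{eq:lasso_rates_2} follows from Cauchy--Schwarz applied to the sparse $\widetilde{\bdelta}_i$. The main technical obstacle is the sparse minimum eigenvalue bound, since translating the max-entry covariance error to a spectral statement over $O(s)$-sparse vectors costs a factor of $s$ and is precisely why the quadratic sparsity scaling in Assumption~\ref{assu:RowSparsity} is required; uniformity over $i\in[p]$ of the noise term is otherwise automatic from the ideal-loading machinery already developed for the Lasso.
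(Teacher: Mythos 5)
Your proof is correct and takes a genuinely different route from the paper's. The paper works directly with the closed-form normal-equations formula $\widetilde{\bbeta}_{i,\widehat{T}_i} = ((\bfX_{\widehat{T}_i})^\top\bfX_{\widehat{T}_i})^{-1}(\bfX_{\widehat{T}_i})^\top\bm{y}_i$, decomposes $\bm{y}_i$ into a term supported on $\widehat{T}_i$, an omitted-variable bias term from $\widehat{T}_i^c\cap T_i$, and noise, and then controls the operator norms of each ingredient; this requires both an upper bound on $\Lambda_{\max}(\widehat{\bSigma}_{\bZ,\widehat{T}_i,\widehat{T}_i})$ (via the $\phi_{\max}$ hypothesis and covariance consistency) and a lower bound on $\Lambda_{\min}(\widehat{\bSigma}_{\bZ,\widehat{T}_i,\widehat{T}_i})$, and first obtains the $\ell_2$ rate, passing to the prediction norm via $\Lambda_{\max}(\widehat{\bSigma}_{\bZ,\widehat{A}_i,\widehat{A}_i})$. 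You instead compare objective values $\widehat{Q}_i(\widetilde{\bbeta}_i)\leqslant\widehat{Q}_i(\widehat{\bbeta}_i)$ (valid since $\widehat{\bbeta}_i$ is feasible), note that $\widetilde{\bdelta}_i-\widehat{\bdelta}_i$ is $O_{\P}(s)$-sparse, and convert from $\ell_2$ to $\|\cdot\|_{2,n}$ via a sparse \emph{minimum} eigenvalue bound alone, obtaining the prediction rate first and deducing the $\ell_2$ and $\ell_1$ rates. Two consequences of your route are worth noting. First, you never invoke $\limsup_{n\to\infty}\phi_{\max}(s\ln(n),\bSigma_{\bZ})<\infty$: once the over-selection bound $\max_{i}|\widehat{T}_i\setminus T_i|\lesssim_{\P}s$ is \emph{assumed}, the maximum sparse eigenvalue control is superfluous for this theorem (it is of course still needed downstream, in Corollary \ref{cor:Lasso_overselection}, to deliver that over-selection bound). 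Second, your argument only uses the prediction-norm rate \eqref{eq:lasso_rates_1} of the first-stage estimator, not \eqref{eq:lasso_rates_2}--\eqref{eq:lasso_rates_3}, whereas the paper's proof leans on the first-stage $\ell_2$ rate to bound $\|\bbeta_{0i,\widehat{T}_i^c\cap T_i}\|_{\ell_2}$. The only point that could be tightened in your writeup: when you pass from $\|\widetilde{\bdelta}_i-\widehat{\bdelta}_i\|_{\ell_2}$ to $\|\widetilde{\bdelta}_i-\widehat{\bdelta}_i\|_{2,n}$, you should spell out the standard conditioning on the high-probability event $\{\max_i|\widehat{T}_i\cup T_i|\leqslant Cs\}\cap\{\phi_{\min}(Cs,\widehat{\bSigma}_{\bZ})\geqslant d/2\}$ for a suitable constant $C$, since the sparsity level is random; this is routine but worth a sentence.
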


\begin{proof}
Throughout this proof, norms of vectors and matrices indexed by the empty set are interpreted as zero. Let~$i\in[p]$. Since on~$\cbr[0]{\widehat{T}_i=\emptyset}$ it holds   that~$\widetilde{\bbeta}_{i}=\widehat{\bbeta}_{i}=\bm{0}_{pq\times1}$, one has for any norm~$\left\Vert\cdot\right\Vert$ and~$i\in[p]$
\begin{align*}
\enVert[1]{\widetilde{\bbeta}_{i}-\bbeta_{0i}}
&=
\enVert[1]{\widetilde{\bbeta}_{i}-\bbeta_{0i}}\mathbf{1}(\widehat{T}_i\neq \emptyset)
+
\enVert[1]{\widehat{\bbeta}_{i}-\bbeta_{0i}}\mathbf{1}(\widehat{T}_i= \emptyset)\\
&\leqslant
\enVert[1]{\widetilde{\bbeta}_{i}-\bbeta_{0i}}\mathbf{1}(\widehat{T}_i\neq \emptyset)
+
\enVert[1]{\widehat{\bbeta}_{i}-\bbeta_{0i}}.
\end{align*}
By assumption,~$\enVert[0]{\widehat{\bbeta}_{i}-\bbeta_{0i}}$ is of the desired order of magnitude for any of the norms considered in~\eqref{eq:lasso_rates_1}--\eqref{eq:lasso_rates_3}. Thus, it remains to bound~$\enVert[1]{\widetilde{\bbeta}_{i}-\bbeta_{0i}}\mathbf{1}(\widehat{T}_i\neq \emptyset)$ and we can assume~$\widehat{T}_i\neq \emptyset$. To this end, it holds by assumption that
\begin{align}\label{eq:selectedsetcardinality}
|\widehat{T}_i|
\leqslant
|\widehat{T}_i\setminus T_i|+|T_i|
\lesssim_{\mathrm{P}} s
\end{align}
uniformly in~$i\in[p]$. The norm equivalence relations table on p.~314 in \cite{hornjohnson} shows that the~$\ell_2$ operator norm of a square matrix is bounded from above by the (row/column) dimension times its maximum absolute entry. Lemma~\ref{lem:Covariance-Consistency} and \eqref{eq:selectedsetcardinality} thus imply
\begin{align}\label{eq:sigmahat2norm}
    \enVert[1]{(\widehat{\bSigma}_{\bZ}-\bSigma_{\bZ})_{\widehat T_i,\widehat T_i}}_{\ell_2}
\leqslant
|\widehat T_i|\max_{(i,j)\in [pq]^2}\envert[1]{(\widehat{\bSigma}_{\bZ}-\bSigma_{\bZ})_{i,j}}
\lesssim_{\mathrm{P}}
s \sqrt{\frac{(\ln(pn))^{1/\tau+\widetilde{C}(2,\alpha)}}{n}},
\end{align} 
uniformly in~$i\in[p]$. Thus, by Assumptions~\ref{assu:CovarianceZ} and~\ref{assu:RowSparsity},~$(\bfX_{\widehat T_i})^\top \bfX_{\widehat T_i}=n\widehat{\bSigma}_{\bZ,\widehat T_i,\widehat T_i}$ is invertible with probability approaching one. 
In addition, since~$[pq]=\widehat T_i\cup (\widehat T_i^c\cap T_i)\cup (\widehat T_i^c\cap T_i^c)$ and~$\bbeta_{0i, \widehat T_i^c\cap T_i^c}=\bm{0}_{|\widehat T_i^c\cap T_i^c|}$ (when~$\widehat T_i^c\cap T_i^c\neq \emptyset$), one has\footnote{We leave~$\bbeta_{0i, \widehat T_i^c\cap T_i^c}$ undefined when~$\widehat T_i^c\cap T_i^c=\emptyset$ and omit this proviso in the sequel in similar situations.}
\begin{align*}
\bm{y}_{i}
=
\bfX\bbeta_{0i}+\bu_{i}
=
\bfX_{\widehat T_i}\bbeta_{0i,\widehat{T}_i}+\bfX_{\widehat T_i^c\cap T_i}\bbeta_{0i,\widehat T_i^c\cap T_i}+\bu_{i},
\end{align*}
where~$\bu_i=(\varepsilon_{1,i},\hdots,\varepsilon_{n,i})^\top$. Hence, with probability approaching one
\begin{align}
\widetilde{\bbeta}_{i,\widehat T_i}
&=
\del[2]{(\bfX_{\widehat T_i})^\top \bfX_{\widehat T_i}}^{-1}\bfX_{\widehat T_i}^\top \bm{y}_{i}\notag\\
&=
\bbeta_{0i,\widehat T_i}+\del[2]{(\bfX_{\widehat T_i})^\top \bfX_{\widehat T_i}}^{-1}(\bfX_{\widehat T_i})^\top\del[1]{\bfX_{\widehat T_i^c\cap T_i}\bbeta_{0i,\widehat T_i^c\cap T_i}+\bu_{i}}.\label{eq:generic_refitting_decomposition_i}
\end{align}
We proceed by bounding the~$\ell_2$ error terms
\begin{align*}
\enVert[2]{\del[1]{(\bfX_{\widehat T_i})^\top \bfX_{\widehat T_i}}^{-1}(\bfX_{\widehat T_i})^\top \bfX_{\widehat T_i^c\cap T_i}\bbeta_{0i,\widehat T_i^c\cap T_i}}_{\ell_2}\qquad \text{and}\qquad \enVert[2]{\del[1]{(\bfX_{\widehat T_i})^\top \bfX_{\widehat T_i}}^{-1}(\bfX_{\widehat T_i})^\top \bu_i}_{\ell_2}
\end{align*}
in turn. To bound the former, observe that by~\eqref{eq:lasso_rates_3}
\begin{align}\label{eq:normrelnonsel}
	\enVert[1]{\bbeta_{0i,\widehat T_i^c\cap T_i}}_{\ell_2}
	=
		\enVert[1]{\widehat\bbeta_{i,\widehat T_i^c\cap T_i}-\bbeta_{0i,\widehat T_i^c\cap T_i}}_{\ell_2}
		\leqslant
		\max_{i\in[p]}\enVert[1]{\widehat\bbeta_{i}-\bbeta_{0i}}_{\ell_2}
		\lesssim_{\mathrm{P}}
		\sqrt{\frac{s\ln\left(pn\right)}{n}},
\end{align}
uniformly in~$i\in[p]$. Furthermore, from the inequality in~\citet[Equation~(5.8.4)]{hornjohnson} (with their matrices $A$ and $E$ set to $\bSigma_{\bZ,\widehat{T}_i,\widehat{T}_i}$ and $(\widehat{\bSigma}_{\bZ}-\bSigma_{\bZ})_{\widehat T_i,\widehat T_i}$, respectively),~\eqref{eq:sigmahat2norm}, Assumptions \ref{assu:CovarianceZ}, \ref{assu:RowSparsity} and~$\limsup_{n\to\infty}\phi_{\max}(s\ln(n), \bSigma_{\bZ})<\infty$ combine to show that,
\[
\max_{i\in[p]}\enVert[2]{(\widehat{\bSigma}_{\bZ,\widehat{T}_i,\widehat{T}_i})^{-1}-(\bSigma_{\bZ,\widehat{T}_i,\widehat{T}_i})^{-1}}_{\ell_2}
\lesssim_\P
\max_{i\in[p]}\enVert[1]{(\widehat{\bSigma}_{\bZ}-\bSigma_{\bZ})_{\widehat T_i,\widehat T_i}}_{\ell_2}
\overset{\P}{\to}0.
\]
It then follows from the triangle inequality and Assumption \ref{assu:CovarianceZ} that, with probability approaching to one,
\begin{align*}
\max_{i\in[p]}\enVert[2]{\del[1]{(\bfX_{\widehat T_i})^\top \bfX_{\widehat T_i}/n}^{-1}}_{\ell_2}
\leqslant
2\Lambda_{\min}^{-1}(\bSigma_{\bZ})\leqslant2/d.
\end{align*}
We also see that
\begin{align*}
\enVert[1]{\widehat{\bSigma}_{\bZ,\widehat{T}_i,\widehat{T}_i}}_{\ell_2}
&\leqslant 
\enVert[1]{\bSigma_{\bZ,\widehat{T}_i,\widehat{T}_i}}_{\ell_2}
+\enVert[1]{\del[1]{\widehat{\bSigma}_{\bZ}-\bSigma_{\bZ}}_{\widehat{T}_i,\widehat{T}_i}}_{\ell_2}\tag{triangle inequality}\\
&=
\Lambda_{\max}\del[1]{\bSigma_{\bZ,\widehat{T}_i,\widehat{T}_i}}
+\enVert[1]{\del[1]{\widehat{\bSigma}_{\bZ}-\bSigma_{\bZ}}_{\widehat{T}_i,\widehat{T}_i}}_{\ell_2}\tag{symmetric p.s.d.}\\
&\lesssim_\P \phi_{\max}(s\ln(n), \bSigma_{\bZ})\tag{$\max_{i\in[p]}|\widehat{T}_i|\lesssim_\P s$ and \eqref{eq:sigmahat2norm}}
\end{align*}
uniformly over $i\in[p]$, implying $\max_{i\in[p]}\enVert[0]{(\bfX_{\widehat T_i})^\top/\sqrt{n}}_{\ell_2}\lesssim_\P1$.
Since $\max_{i\in[p]}|T_i|\leqslant s$, an argument parallel to that leading to \eqref{eq:sigmahat2norm} shows that also $\enVert[0]{\del[0]{\widehat{\bSigma}_{\bZ}-\bSigma_{\bZ}}_{T_i,T_i}}_{\ell_2}\to_\P0$. Since $\widehat{T}_{i}^{c}\cap T_i\subseteq T_i$, we have $\enVert[0]{\bfX_{\widehat T_i^c\cap T_i}/\sqrt{n}}_{\ell_2}\leqslant \enVert[0]{\bfX_{T_i}/\sqrt{n}}_{\ell_2}$. An argument similar to that yielding the previous display therefore shows $\max_{i\in[p]}\enVert[0]{\bfX_{\widehat T_i^c\cap T_i}/\sqrt{n}}_{\ell_2}\lesssim_\P1$. Combining the above findings, submultiplicativity of the operator norm~$\enVert[0]{\cdot}_{\ell_2}$ implies
\begin{align*}
\enVert[2]{\del[1]{(\bfX_{\widehat T_i})^\top \bfX_{\widehat T_i}}^{-1}(\bfX_{\widehat T_i})^\top \bfX_{\widehat T_i^c\cap T_i}\bbeta_{0i,\widehat T_i^c\cap T_i}}_{\ell_2}
\lesssim_{\mathrm{P}}
\enVert[1]{\bbeta_{0i,\widehat T_i^c\cap T_i}}_{\ell_2}
\lesssim_{\mathrm{P}}
\sqrt{\frac{s\ln\left(pn\right)}{n}},
\end{align*}
uniformly in~$i\in[p]$.

Next, to bound~$\del[1]{(\bfX_{\widehat T_i})^\top \bfX_{\widehat T_i}}^{-1}(\bfX_{\widehat T_i})^\top \bu_i$, note that by definition of an operator norm,
\begin{align*}
	\enVert[2]{\del[1]{(\bfX_{\widehat T_i})^\top \bfX_{\widehat T_i}/n}^{-1}(\bfX_{\widehat T_i})^\top \bu_i/n}_{\ell_2}
	\leqslant
	\enVert[2]{\del[1]{(\bfX_{\widehat T_i})^\top \bfX_{\widehat T_i}/n}^{-1}}_{\ell_2}\enVert[2]{(\bfX_{\widehat T_i})^\top \bu_i/n}_{\ell_2}.
\end{align*}
It has already been established that the first factor on the right-hand side of the previous display is bounded in probability. Concerning the second factor, by Lemmas~\ref{lem:Deviation-Bound} and~\ref{lem:Ideal-Penalty-Loading-Control} as well as~$\lambda_n^\ast\lesssim \sqrt{n\ln(pn)}$
\begin{align*}
\enVert[2]{(\bfX_{\widehat T_i})^\top \bu_i/n}_{\ell_2}
\leqslant
\sqrt{|\widehat T_i|}	\enVert[1]{(\bfX_{\widehat T_i})^\top \bu_i/n}_{\ell_\infty}
\leqslant
\frac{\sqrt{|\widehat T_i|}}{2}\enVert[1]{\widehat{\bm{\Upsilon}}_i^{0}}_{\ell_\infty}\enVert[1]{\bS_{n,i}}_{\ell_\infty}
\lesssim_{\mathrm{P}}
\sqrt{\frac{s\ln\left(pn\right)}{n}},
\end{align*}
uniformly in~$i\in[p]$, such that by \eqref{eq:generic_refitting_decomposition_i} we get
\begin{align*}
\enVert[1]{\widetilde\bbeta_{i,\widehat T_i}-\bbeta_{0i,\widehat T_i}}_{\ell_2}\lesssim_{\mathrm{P}}
\sqrt{\frac{s\ln\left(pn\right)}{n}}, 
\end{align*}
uniformly in~$i\in[p]$. Thus, since~$\widetilde\bbeta_{i,\widehat T_i^c\cap T_i^c}=\bbeta_{0i,\widehat T_i^c\cap T_i^c}=\bm{0}$, $\widetilde\bbeta_{i,\widehat T_i^c\cap T_i}=\bm{0}$, and $\enVert[0]{\bbeta_{0i,\widehat T_i^c\cap T_i}}_{\ell_2}\lesssim_{\mathrm{P}}
\sqrt{s\ln\left(pn\right)/n}$, cf.~\eqref{eq:normrelnonsel},
\begin{align*}
\enVert[1]{\widetilde\bbeta_{i}-\bbeta_{0i}}_{\ell_2}
\leqslant
\enVert[1]{\widetilde\bbeta_{i,\widehat T_i}-\bbeta_{0i,\widehat T_i}}_{\ell_2}
+
\enVert[1]{\widetilde\bbeta_{i,\widehat T_i^c\cap T_i}-\bbeta_{0i,\widehat T_i^c\cap T_i}}_{\ell_2}
\lesssim_{\mathrm{P}}
\sqrt{\frac{s\ln\left(pn\right)}{n}},
\end{align*}
uniformly in~$i\in[p]$. By similar reasoning and using this $\ell_2$ rate, for the $\ell_1$ norm we get
\begin{align*}
\enVert[1]{\widetilde\bbeta_{i}-\bbeta_{0i}}_{\ell_1}
&\leqslant
\enVert[1]{\widetilde\bbeta_{i,\widehat T_i}-\bbeta_{0i,\widehat T_i}}_{\ell_1}
+
\enVert[1]{\widetilde\bbeta_{i,\widehat T_i^c\cap T_i}-\bbeta_{0i,\widehat T_i^c\cap T_i}}_{\ell_1}\\
&\leqslant
\sqrt{|\widehat{T}_i|}\enVert[1]{\widetilde\bbeta_{i,\widehat T_i}-\bbeta_{0i,\widehat T_i}}_{\ell_2}
+
\sqrt{s}\enVert[1]{\bbeta_{0i,\widehat T_i^c\cap T_i}}_{\ell_2}\tag{Cauchy-Schwarz}\\
&\lesssim_{\mathrm{P}}
\sqrt{\frac{s^2\ln\left(pn\right)}{n}}.
\end{align*}
Finally to upper bound the error of~$\widetilde\bbeta_i$ the prediction norm~$\enVert[0]{\cdot}_{2,n}$, observe that by \eqref{eq:selectedsetcardinality} it holds for~$\widehat{A}_i:=\widehat{T}_i\cup T_i$ that~$|\widehat{A}_i|\leqslant |\widehat T_i| + |T_i| \lesssim_{\mathrm{P}} s$ uniformly in $i\in[p]$. Thus, since $\phi_{\max}(s\ln(n),\bSigma_{\bZ})$ is eventually bounded and by Lemma~\ref{assu:RowSparsity} as well as Assumption~\ref{assu:RowSparsity}, by the triangle inequality and symmetric positive semi-definiteness,
\begin{align*}
\Lambda_{\max}\del[1]{\widehat{\bSigma}_{\bZ,\widehat{A}_i,\widehat{A}_i}}
&\leqslant 
\Lambda_{\max}\del[1]{\bSigma_{\bZ,\widehat{A}_i,\widehat{A}_i}}
+\enVert[1]{\del[1]{\widehat{\bSigma}_{\bZ}-\bSigma_{\bZ}}_{\widehat{A}_i,\widehat{A}_i}}_{\ell_2}\\
&\lesssim_{\mathrm{P}}
\phi_{\max}(s\ln(n),\bSigma_{\bZ})+s\max_{(i,j)\in [pq]^2}\envert[1]{(\widehat{\bSigma}_{\bZ}-\bSigma_{\bZ})_{i,j}},
\end{align*}
implying that $\Lambda_{\max}\del[0]{\widehat{\bSigma}_{\bZ,\widehat{A}_i,\widehat{A}_i}}$ is bounded in probability uniformly in $i\in[p]$. Thus,
\begin{align*}
\enVert[1]{\widetilde\bbeta_{i}-\bbeta_{0i}}_{2,n}^2
&=
\frac{1}{n}\sum_{t=1}^n\sbr[1]{\bZ_{t-1}^\top\del[1]{\widetilde\bbeta_{i}-\bbeta_{0i}}}^2\\
&=
\frac{1}{n}\sum_{t=1}^n\sbr[2]{\bZ_{t-1,\widehat{A}_i}^\top\del[1]{\widetilde\bbeta_{i}-\bbeta_{0i}}_{\widehat{A}_i}}^2\tag{$\del[1]{\widetilde\bbeta_{i}-\bbeta_{0i}}_{\widehat{A}_i^c}=\mathbf{0}$}\\
&=
\del[1]{\widetilde\bbeta_{i}-\bbeta_{0i}}_{\widehat{A}_i}^\top
\del[1]{\bfX_{\widehat{A}_i}^\top\bfX_{\widehat{A}_i}/n}\del[1]{\widetilde\bbeta_{i}-\bbeta_{0i}}_{\widehat{A}_i}\\
&\leqslant
\Lambda_{\max}(\widehat{\bSigma}_{\bZ,\widehat{A}_i,\widehat{A}_i})\enVert[1]{\widetilde\bbeta_{i}-\bbeta_{0i}}_{\ell_2}^2\\
&\lesssim_{\mathrm{P}}	
	\frac{s\ln\left(pn\right)}{n},
\end{align*}
uniformly in $i\in[p]$.
\end{proof}

\subsection{Proof of Theorem~\ref{thm:postLasso}}

Abbreviate~$\widehat\phi_{\max}(m):=\phi_{\max}(m,\widehat{\bSigma}_{\bZ})$, which is the empirical analog of~$\phi_{\max}(m,\bSigma_{\bZ})$. To apply Theorem~\ref{thm:postgeneralestim}, the following lemma upper bounds the number of variables~$\widehat{m}_i\del[1]{\widehat{\bbeta}_i(\lambda,\widehat{\bUpsilon}_i)}:=|\widehat{T}_i\del[1]{\widehat{\bbeta}_i(\lambda,\widehat{\bUpsilon}_i)}\setminus T_i|$ wrongly included by the Lasso. The proof of the lemma essentially follows from Lemma 10 in~\cite{belloni_sparse_2012} which, in turn, is similar to arguments in~\cite{belloni_least_2013}.
\begin{lem}\label{lem:Lasso_overselection}
Under the assumptions of Lemma~\ref{lem:MasterLemma} the Lasso estimates $\widehat{\bbeta}_i:=\widehat{\bbeta}_i(\lambda,\widehat\bUpsilon_i),i\in[p]$,
arising from $\lambda$ and $\{\widehat{\bUpsilon}_{i}\}_{i=1}^p$ via (\ref{eq:LASSOVector}) satisfy 	
\begin{align*}
\max_{i\in[p]}\widehat{m}_i\del[1]{\widehat{\bbeta}_i(\lambda,\widehat{\bUpsilon}_i)}
\leqslant
4 s\cdot\min_{m\in\widehat{\mathcal{M}}}\widehat\phi_{\max}(m)\sbr[3]{\frac{\widehat{c}_0\widehat{\mu}_{0}}{\widehat\kappa_{\widehat c_{0}\widehat{\mu}_{0}}}}^2,
\end{align*}
where we define
\begin{align}\label{eq:Mdef}
\widehat{\mathcal{M}}:=\cbr[3]{m\in\N: m>8s\cdot\widehat\phi_{\max}(m)\sbr[3]{\frac{\widehat{c}_0\widehat{\mu}_{0}}{\widehat\kappa_{\widehat c_{0}\widehat{\mu}_{0}}}}^2}.
\end{align}
\end{lem}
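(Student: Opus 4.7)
My plan is to adapt the overselection argument of \citet[Lemma~10]{belloni_sparse_2012} to the dependent-data, weighted-Lasso setting. Fix $i\in[p]$, write $\widehat\bdelta_i:=\widehat\bbeta_i-\bbeta_{0i}$, and let $K:=4s\sbr[3]{\widehat c_0\widehat\mu_0/\widehat\kappa_{\widehat c_0\widehat\mu_0}}^2$. The proof proceeds in two movements: first, the KKT (first-order) conditions for~\eqref{eq:LASSOVector} combined with the master lemma will yield the self-referential bound $\widehat m_i\leqslant K\widehat\phi_{\max}(\widehat m_i)$; second, a standard sublinearity property of sparse eigenvalues will unwind this bound into the stated form, leveraging membership in $\widehat{\mathcal{M}}$.

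For any $j\in\widehat T_i\setminus T_i$, one has $\widehat\beta_{i,j}\neq 0$ and $\beta_{0i,j}=0$, so the subgradient equation associated with~\eqref{eq:LASSOVector} rearranges to
\[
2(\widehat{\bSigma}_{\bZ}\widehat\bdelta_i)_j=\frac{2}{n}\sum_{t=1}^n Z_{t-1,j}\varepsilon_{t,i}-\frac{\lambda}{n}\widehat\upsilon_{i,j}\sgn{\widehat\beta_{i,j}}.
\]
The first term on the right equals $\widehat\upsilon_{i,j}^0(\bS_{n,i})_j$ and is bounded in absolute value by $\widehat\upsilon_{i,j}^0\lambda/(cn)$ per the master-lemma hypothesis $\enVert[0]{\bS_{n,i}}_{\ell_\infty}\leqslant\lambda/(cn)$; the second term has absolute value $(\lambda/n)\widehat\upsilon_{i,j}\geqslant(\lambda/n)\widehat\ell\widehat\upsilon_{i,j}^0$ by~\eqref{eq:AsymptoticallyValidPenaltyLoadings}. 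The reverse triangle inequality together with $\widehat\ell>1/c$ thus gives $(\lambda(\widehat\ell-1/c)/n)\widehat\upsilon_{i,j}^0\leqslant 2|(\widehat{\bSigma}_{\bZ}\widehat\bdelta_i)_j|$. Squaring, using $\widehat\upsilon_{i,j}^0\geqslant\widehat\upsilon_{\min}^0$, and summing over $j\in\widehat T_i\setminus T_i$ yields
\[
\widehat m_i\sbr[2]{\frac{\lambda(\widehat\ell-1/c)\widehat\upsilon_{\min}^0}{n}}^2\leqslant 4\enVert[1]{(\widehat{\bSigma}_{\bZ}\widehat\bdelta_i)_{\widehat T_i\setminus T_i}}_{\ell_2}^2\leqslant 4\widehat\phi_{\max}(\widehat m_i)\enVert[0]{\widehat\bdelta_i}_{2,n}^2,
\]
the last inequality being a Cauchy--Schwarz application paired with the variational characterisation~\eqref{eq:largestmsparseeigenvalue} of $\widehat\phi_{\max}$, applied to the at-most-$\widehat m_i$-sparse direction selecting the coordinates in $\widehat T_i\setminus T_i$. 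Substituting the prediction-norm guarantee $\enVert[0]{\widehat\bdelta_i}_{2,n}\leqslant(\widehat u+1/c)\lambda\widehat\upsilon_{\max}^0\sqrt{s}/(n\widehat\kappa_{\widehat c_0\widehat\mu_0})$ from Lemma~\ref{lem:MasterLemma} and identifying $\widehat c_0=(\widehat u+1/c)/(\widehat\ell-1/c)$ together with $\widehat\mu_0=\widehat\upsilon_{\max}^0/\widehat\upsilon_{\min}^0$ collapses the display to $\widehat m_i\leqslant K\widehat\phi_{\max}(\widehat m_i)$.

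Finally, I invoke the standard sublinearity of sparse eigenvalues: for positive integers $k\geqslant m$, $\widehat\phi_{\max}(k)\leqslant\lceil k/m\rceil\widehat\phi_{\max}(m)$, obtained by partitioning a $k$-sparse direction into $\lceil k/m\rceil$ groups of at most $m$ coordinates and applying Cauchy--Schwarz to the cross terms of the resulting decomposition of $\bfX\bdelta$. Fixing any $m\in\widehat{\mathcal M}$ (so that $m>2K\widehat\phi_{\max}(m)$) and supposing, toward a contradiction, that $\widehat m_i>m$, the baseline recursion and sublinearity combine to give $\widehat m_i\leqslant K(\widehat m_i/m+1)\widehat\phi_{\max}(m)$, i.e., $\widehat m_i[1-K\widehat\phi_{\max}(m)/m]\leqslant K\widehat\phi_{\max}(m)$; since $K\widehat\phi_{\max}(m)/m<1/2$, this forces $\widehat m_i<2K\widehat\phi_{\max}(m)<m$, contradicting $\widehat m_i>m$. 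Hence $\widehat m_i\leqslant m$, which gives $\widehat\phi_{\max}(\widehat m_i)\leqslant\widehat\phi_{\max}(m)$, and the baseline recursion delivers $\widehat m_i\leqslant K\widehat\phi_{\max}(m)$. Since $K$ and $\widehat{\mathcal M}$ are independent of $i$, taking the maximum over $i\in[p]$ and the minimum over $m\in\widehat{\mathcal M}$ completes the proof. The chief bookkeeping obstacle is aligning the factor $4$ out front of the claimed bound with the factor $8$ appearing inside $\widehat{\mathcal M}$; this alignment relies crucially on deploying the contradiction (producing $K\widehat\phi_{\max}(m)$ rather than the looser $2K\widehat\phi_{\max}(m)$) and on the strict inequality defining $\widehat{\mathcal M}$.
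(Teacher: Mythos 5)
Your proof is correct and follows essentially the same route as the paper's: a KKT-based bound on the miss-selected coordinates, passed through the $\widehat m_i$-sparse eigenvalue and the master lemma's prediction-norm bound, yields $\widehat m_i\leqslant 4s\widehat\phi_{\max}(\widehat m_i)[\widehat c_0\widehat\mu_0/\widehat\kappa_{\widehat c_0\widehat\mu_0}]^2$, which sublinearity and a contradiction against membership of $\widehat{\mathcal M}$ unwind to the stated bound. The only deviations are cosmetic: you apply a coordinate-wise reverse triangle inequality and then square-and-sum, where the paper takes the $\ell_2$-norm of the vectorized first-order condition and uses the (forward) triangle inequality; and you use $\lceil a\rceil\leqslant a+1$ where the paper uses $\lceil a\rceil\leqslant 2a$, which changes the algebra of the contradiction but not its conclusion.
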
	
\begin{rem}
As $m\mapsto\widehat{\phi}_{\max}(m)$ is (weakly) increasing, and equal to $\Lambda_{\max}(\widehat{\bSigma}_{\bZ})$ for $m\geqslant pq$, the set $\widehat{\mathcal{M}}$ is non-empty, and $\min\{\widehat{\phi}_{\max}(m):m\in\widehat{\mathcal{M}}\}$ exists.\hfill$\diamondsuit$
\end{rem}
\begin{proof}[\sc{Proof of Lemma \ref{lem:Lasso_overselection}}]
Let~$i\in[p]$. Throughout the proof, we suppress the dependence of~$\widehat{T}_i\del[1]{\widehat{\bbeta}_i(\lambda,\widehat{\bUpsilon}_i)}$ and~$\widehat{m}_i\del[1]{\widehat{\bbeta}_i(\lambda,\widehat{\bUpsilon}_i)}$ on~$\widehat{\bbeta}_i(\lambda,\widehat{\bUpsilon}_i)$ and write~$\widehat M_i:=\widehat{T}_i\setminus T_i$. In case~$\widehat M_i=\emptyset$ we are done. Thus, assume~$\widehat M_i\neq\emptyset$. From the first-order conditions for optimality of~$\widehat\bbeta_i $ one has
\begin{align*}
2\sum_{t=1}^{n}\sbr[2]{\widehat{\bm{\Upsilon}}_i^{-1}\bZ_{t-1}\left(Y_{t,i}-\bZ_{t-1}^{\top}\widehat\bbeta_i\right)}_{\widehat{M_i}}
=
\lambda \sign\del[1]{\widehat\bbeta_{i,\widehat{
M_i}}},
\end{align*}
where the sign function is applied entrywise. Therefore, writing~$\bu_i=(\varepsilon_{1,i},\hdots,\varepsilon_{n,i})^\top$ and inserting~$Y_{t,i}=\bZ_{t-1}^\top \bbeta_{0i}+\varepsilon_{t,i}$, it follows that
\begin{align}
\sqrt{\widehat m_i}\lambda 
&=
2\enVert[1]{\del[1]{\widehat{\bm{\Upsilon}}_i^{-1}\bfX^\top \bfX(\bbeta_{0i}-\widehat{\bbeta}_i)+\widehat{\bm{\Upsilon}}_i^{-1}\bfX^\top\bu_i}_{\widehat{
M_i}}}_{\ell_2}\notag\\ 
&\leqslant 
2\enVert[1]{\del[1]{\widehat{\bm{\Upsilon}}_i^{-1}\bfX^\top \bfX(\bbeta_{0i}-\widehat{\bbeta}_i)}_{\widehat{
M_i}}}_{\ell_2}+2\enVert[1]{(\widehat{\bm{\Upsilon}}_i^{-1}\bfX^\top\bu_i)_{\widehat{
M_i}}}_{\ell_2}.\label{eq:false_pos_intermediate_upper_bnd} 	 
\end{align}
Next, since~$\enVert[1]{\widehat{\bm{\Upsilon}}_i^{-1}\widehat{\bm{\Upsilon}}_i^{0}}_{\ell_2}\leqslant 1/\widehat{\ell}$ and~$\enVert[1]{(\widehat{\bm{\Upsilon}}_i^{0})^{-1}}_{\ell_2}\leqslant 1/\widehat{\upsilon}_{\min}^0$, we see that
\begin{align*}
\enVert[1]{\del[1]{\widehat{\bm{\Upsilon}}_i^{-1}\bfX^\top \bfX(\bbeta_{0i}-\widehat{\bbeta}_i)}_{\widehat{
M_i}}}_{\ell_2}
&\leqslant
\frac{1}{\widehat{\ell}\widehat{\upsilon}_{\min}^0}\enVert[1]{\del[1]{\bfX^\top \bfX(\bbeta_{0i}-\widehat{\bbeta}_i)}_{\widehat{
M_i}}}_{\ell_2}\\
&=\frac{1}{\widehat{\ell}\widehat{\upsilon}_{\min}^0}
\sup_{\substack{\bdelta\in\R^{pq}:\\\enVert[0]{\bdelta}_{\ell_0}\leqslant\widehat{m}_i,\\\enVert[0]{\bdelta}_{\ell_2}\leqslant1}}\envert[1]{\bdelta^{\top}\bfX^\top \bfX(\bbeta_{0i}-\widehat{\bbeta}_i)}\\
&\leqslant
\frac{1}{\widehat{\ell}\widehat{\upsilon}_{\min}^0}
\sup_{\bdelta\in\mathfrak{D}(\widehat{m}_i)}\enVert[1]{\bfX\bdelta}_{\ell_2}\enVert[1]{ \bfX(\bbeta_{0i}-\widehat{\bbeta}_i)}_{\ell_2}\tag{Cauchy-Schwarz}\\
&\leqslant
\frac{1}{\widehat{\ell}\widehat{\upsilon}_{\min}^0}n\sqrt{\widehat{\phi}_{\max}\del[1]{\widehat{m}_i}}\enVert[1]{\widehat{\bbeta}_i-\bbeta_{0i}}_{2,n}\\
&\leqslant
\frac{1}{\widehat{\ell}\widehat{\upsilon}_{\min}^0}\sqrt{\widehat{\phi}_{\max}\del[1]{\widehat{m}_i}}\left(\widehat u+\frac{1}{c}\right)\frac{\lambda\widehat{\upsilon}_{\max}^{0}\sqrt{s}}{\widehat\kappa_{\widehat c_{0}\widehat{\mu}_{0}}},
\end{align*}
where the last inequality follows from the prediction error bound in Lemma~\ref{lem:MasterLemma}. In addition, because~$c\max_{i\in[p]}\Vert \bS_{n,i}\Vert_{\ell_{\infty}} \leqslant \lambda/n$, we see that
\begin{align*}
2\enVert[1]{(\widehat{\bm{\Upsilon}}_i^{-1}\bfX^\top\bu_i)_{\widehat{
M_i}}}_{\ell_2}
\leqslant
n\frac{\sqrt{\widehat{m}_i}}{\widehat{\ell}}\enVert{\bS_{n,i}}_{\ell_\infty}
\leqslant 
\frac{\sqrt{\widehat{m}_i}\lambda}{c\widehat{\ell}}.
\end{align*}
Gathering these upper bounds into \eqref{eq:false_pos_intermediate_upper_bnd} and rearranging yields
\begin{align*}
\sqrt{\widehat{m}_i}	
\leqslant
\left.\frac{2}{\widehat{\ell}\widehat{\upsilon}_{\min}^0}\sqrt{\widehat{\phi}_{\max}\del[1]{\widehat{m}_i}}\left(\widehat u+\frac{1}{c}\right)\frac{\widehat{\upsilon}_{\max}^{0}\sqrt{s}}{\widehat\kappa_{\widehat c_{0}\widehat{\mu}_{0}}}\middle/\del[3]{1-\frac{1}{c\widehat{\ell}}}\right.
=
2 \widehat{c}_0\sqrt{\widehat{\phi}_{\max}\del[1]{\widehat{m}_i}}\frac{\widehat{\mu}_{0}\sqrt{s}}{\widehat\kappa_{\widehat c_{0}\widehat{\mu}_{0}}}
\end{align*}
where we use~$\del[1]{\widehat u+\frac{1}{c}}/\del[1]{1-\frac{1}{c\widehat{\ell}}}=\widehat{c}_0\widehat{\ell}$. Squaring both sides of the inequality, we arrive at
\[
\widehat{m}_i
\leqslant
4s\cdot\widehat{\phi}_{\max}\del[1]{\widehat{m}_i}\sbr[3]{ \frac{\widehat{c}_0\widehat{\mu}_{0}}{\widehat\kappa_{\widehat c_{0}\widehat{\mu}_{0}}}}^2.
\]
Seeking a contradiction, suppose next that there is an element~$m\in\widehat{\mathcal{M}}$, such that~$\widehat{m}_i>m$. Then by monotonicity and sublinearity of maximal sparse eigenvalues of positive semi-definite matrices \citep[Lemma 9]{belloni_sparse_2012}, we get
\begin{align*}
\widehat{\phi}_{\max}\del[1]{\widehat{m}_i}
=\widehat{\phi}_{\max}\del[1]{\del[0]{\widehat{m}_i/m}m}
&\leqslant \widehat{\phi}_{\max}\del[1]{\lceil\widehat{m}_i/m\rceil m}\tag{monotonicity}\\
&\leqslant \lceil\widehat{m}_i/m\rceil\widehat{\phi}_{\max}\del[0]{m}\tag{sublinearity}\\
&\leqslant 2\del[1]{\widehat{m}_i/m}\widehat{\phi}_{\max}\del[0]{m}\tag{$a\in[1,\infty)\implies\lceil a\rceil\leqslant 2a$}.
\end{align*}
The previous two displays combine to yield
\begin{align*}
\widehat{m}_i
\leqslant 
8s\del[1]{\widehat{m}_i/m}\widehat{\phi}_{\max}\del[0]{m}\sbr[3]{ \frac{\widehat{c}_0\widehat{\mu}_{0}}{\widehat\kappa_{\widehat c_{0}\widehat{\mu}_{0}}}}^2.
\end{align*}
This inequality rearranges to
\[
m\leqslant 
8s\cdot\widehat{\phi}_{\max}\del[0]{m}\sbr[3]{ \frac{\widehat{c}_0\widehat{\mu}_{0}}{\widehat\kappa_{\widehat c_{0}\widehat{\mu}_{0}}}}^2,
\]
which contradicts $m\in\widehat{\mathcal M}$. Conclude that $\widehat{m}_i\leqslant m$ for all $m\in\widehat{\mathcal M}$, so that per monotonicity of maximum sparse eigenvalues, we get $\widehat{\phi}_{\max}\del[1]{\widehat{m}_i}\leqslant \min\{\widehat{\phi}_{\max}\del[0]{m}:m\in\widehat{\mathcal M}\}$ and, thus,
\[
\widehat{m}_i
\leqslant
4s\cdot\min_{m\in\widehat{\mathcal M}}\widehat{\phi}_{\max}\del[0]{m}\sbr[3]{ \frac{\widehat{c}_0\widehat{\mu}_{0}}{\widehat\kappa_{\widehat c_{0}\widehat{\mu}_{0}}}}^2.
\]
The asserted claim now follows from the right-hand side not depending on~$i\in[p].$
\end{proof}

Choosing~$\lambda=\lambda_n^*$ and any asymptotically valid loadings~$\widehat\bUpsilon_i$ allows one to upper bound the over-selection by the Lasso if one also imposes~$\limsup_{n\to\infty}\phi_{\max}(s\ln(n), \bSigma_{\bZ})<\infty$.
\begin{cor}\label{cor:Lasso_overselection}
Let Assumptions \ref{assu:Innovations}, \ref{assu:Companion}, \ref{assu:CovarianceZ} and \ref{assu:RowSparsity} hold, and let~$\limsup_{n\to\infty}\phi_{\max}(s\ln(n), \bSigma_{\bZ})<\infty$. Then Lasso estimators $\widehat\bbeta_i(\lambda_n^\ast,\widehat\bUpsilon_i),i\in[p]$, based on the penalty level $\lambda_n^\ast$ in \eqref{eq:PenaltyLevelPractice} and any asymptotically valid loadings~$\{\widehat\bUpsilon_i\}_{i\in[p]}$, satisfy~$\max_{i\in[p]}\widehat{m}_i\del[1]{\widehat{\bbeta}_i(\lambda_n^*,\widehat{\bUpsilon}_i)}\lesssim_{\mathrm{P}} s $.
\end{cor}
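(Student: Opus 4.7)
The plan is to apply Lemma~\ref{lem:Lasso_overselection} directly, showing that its right-hand side is $O_{\mathrm{P}}(s)$. Writing $\widehat R^2 := [\widehat c_0 \widehat\mu_0/\widehat\kappa_{\widehat c_0\widehat\mu_0}]^2$, the bound reads $\max_{i\in[p]}\widehat m_i \leqslant 4s\cdot\min_{m\in\widehat{\mathcal M}}\widehat\phi_{\max}(m)\widehat R^2$, so it suffices to exhibit a single (deterministic) $m$ of order $s$ that with probability tending to one belongs to $\widehat{\mathcal M}$ and, moreover, satisfies $\widehat\phi_{\max}(m)\widehat R^2 \lesssim_{\mathrm P} 1$.

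First I would verify the hypotheses of the master lemma feeding into Lemma~\ref{lem:Lasso_overselection}. The deviation inequality $\lambda_n^\ast/n\geqslant c\max_{i\in[p]}\|\bS_{n,i}\|_{\ell_\infty}$ with probability going to one follows from Lemma~\ref{lem:Deviation-Bound}. Asymptotic validity of the loadings delivers random $\widehat\ell, \widehat u$ with $\widehat\ell\to_{\mathrm P}1$ and $\widehat u\to_{\mathrm P}u\in[1,\infty)$, so $1/c<\widehat\ell\leqslant\widehat u$ eventually and $\widehat c_0=(\widehat uc+1)/(\widehat\ell c-1)\to_{\mathrm P}(uc+1)/(c-1)$. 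Lemma~\ref{lem:Ideal-Penalty-Loading-Control} yields $\widehat\mu_0\lesssim_{\mathrm P} 1$ and Lemma~\ref{lem:Restricted-Eigenvalue-Bound} gives $\widehat\kappa^2_{\widehat c_0\widehat\mu_0}\geqslant d/2$ with probability approaching one. Combining these facts shows $\widehat R^2\lesssim_{\mathrm P}1$; in particular there is a deterministic constant $N\in(0,\infty)$ with $\P(\widehat R^2\leqslant N)\to 1$.

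Next I would control the empirical sparse eigenvalue. A standard Cauchy--Schwarz / H\"older computation gives, for any $m\in\N$,
\begin{equation*}
|\widehat\phi_{\max}(m)-\phi_{\max}(m,\bSigma_{\bZ})|\leqslant m\max_{(i,j)\in[pq]^2}|(\widehat{\bSigma}_{\bZ}-\bSigma_{\bZ})_{i,j}|,
\end{equation*}
since $\bdelta\in\mathfrak D(m)$ forces $\|\bdelta\|_{\ell_1}\leqslant\sqrt{m}$. By Lemma~\ref{lem:Covariance-Consistency}, the max on the right is $\lesssim_{\mathrm P}\sqrt{(\ln(pn))^{1/\tau+\widetilde C(2,\alpha)}/n}$. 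Specializing to $m=Cs$ for a constant $C\in(0,\infty)$, the remainder becomes $\lesssim_{\mathrm P} C\sqrt{s^2(\ln(pn))^{1/\tau+\widetilde C(2,\alpha)}/n}=o(1)$ by Assumption~\ref{assu:RowSparsity}. Moreover, by monotonicity of sparse eigenvalues in $m$ together with $Cs\leqslant s\ln(n)$ for large $n$, the population piece satisfies $\phi_{\max}(Cs,\bSigma_{\bZ})\leqslant\phi_{\max}(s\ln(n),\bSigma_{\bZ})$, whose $\limsup$ is finite by hypothesis. Thus, letting $M:=\limsup_{n}\phi_{\max}(s\ln(n),\bSigma_{\bZ})+1$, we have $\widehat\phi_{\max}(Cs)\leqslant M+1$ with probability tending to one, uniformly in any fixed $C$ (provided $C\leqslant\ln(n)$, which holds eventually).

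Finally, choose the constant $C:=8(M+1)N+1$ and set $m:=\lceil Cs\rceil$. With probability tending to one one has simultaneously $\widehat R^2\leqslant N$ and $\widehat\phi_{\max}(m)\leqslant M+1$, and hence
\begin{equation*}
8s\cdot\widehat\phi_{\max}(m)\widehat R^2\leqslant 8s(M+1)N<Cs\leqslant m,
\end{equation*}
which places $m$ in $\widehat{\mathcal M}$. Lemma~\ref{lem:Lasso_overselection} therefore yields
\begin{equation*}
\max_{i\in[p]}\widehat m_i\del[1]{\widehat\bbeta_i(\lambda_n^\ast,\widehat\bUpsilon_i)}\leqslant 4s\cdot\widehat\phi_{\max}(m)\widehat R^2\leqslant 4s(M+1)N\lesssim s
\end{equation*}
with probability tending to one, establishing the claim. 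The main obstacle is the circular-looking requirement that the candidate $m$ both lies in $\widehat{\mathcal M}$ (which depends on $\widehat\phi_{\max}(m)$) and keeps $\widehat\phi_{\max}(m)$ bounded. Resolving it requires the uniform bound on $\widehat\phi_{\max}(Cs)$ valid for all constants $C$, which is itself the delicate step: it needs both the assumption $\limsup_n\phi_{\max}(s\ln(n),\bSigma_{\bZ})<\infty$ (to handle the population part beyond $Cs$) and the sparsity growth in Assumption~\ref{assu:RowSparsity} combined with Lemma~\ref{lem:Covariance-Consistency} (to absorb the empirical deviation).
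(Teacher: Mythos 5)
Your proposal is correct and follows essentially the same route as the paper's own proof: bound $\widehat c_0\widehat\mu_0/\widehat\kappa_{\widehat c_0\widehat\mu_0}$ by a constant with probability tending to one via Lemmas~\ref{lem:Ideal-Penalty-Loading-Control} and \ref{lem:Restricted-Eigenvalue-Bound}, control $\widehat\phi_{\max}(Cs)$ through the entrywise covariance consistency of Lemma~\ref{lem:Covariance-Consistency} together with the sparse-eigenvalue hypothesis, exhibit a deterministic $m\asymp s$ lying in $\widehat{\mathcal M}$ with probability tending to one, and conclude from Lemma~\ref{lem:Lasso_overselection}. The only differences are cosmetic (the explicit constants and the use of $\lceil Cs\rceil$ in place of the paper's integer multiple $\check Cs$).
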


\begin{proof}
By asymptotic validity of~$\widehat\bUpsilon_i$ as well as Lemmas~\ref{lem:Ideal-Penalty-Loading-Control} and \ref{lem:Restricted-Eigenvalue-Bound} it follows that for the constant
\[
L:=2\cdot\frac{(uc+1)/(c-1)\cdot (\overline\upsilon^0/\underline\upsilon^0)}{\sqrt{d/2}}\in(0,\infty),
\]
one has
\begin{align}\label{eq:overselaux1}
\P\del[3]{\frac{\widehat{c}_0\widehat{\mu}_{0}}{\widehat\kappa_{\widehat c_{0}\widehat{\mu}_{0}}}\leqslant L}
\to 1,
\end{align}
with~$u$ defined in~\eqref{eq:AsymptoticallyValidPenaltyLoadings2} and the remaining quantities as defined in the quoted lemmas. 
Furthermore, let~$\bdelta\in\mathfrak{D}(m)$ [see~\eqref{eq:Dm}] and abbreviate~$I_{\bdelta} := \supp(\bdelta)= \cbr[0]{j\in[pq]:\delta_j\neq 0}$. Then
\begin{align*}
\bdelta ^\top \widehat{\bSigma}_{\bZ}\bdelta
\leqslant
\bdelta^\top 	\bSigma_{\bZ}\bdelta+\envert[1]{\bdelta ^\top (\widehat{\bSigma}_{\bZ}-\bSigma_{\bZ})\bdelta}
\leqslant 
\phi_{\max}(m,\bSigma_{\bZ})+\envert[1]{\bdelta_{I_\delta} ^\top (\widehat{\bSigma}_{\bZ}-\bSigma_{\bZ})_{I_\delta,I_\delta}\bdelta_{I_\delta}}.
\end{align*}
By the Cauchy-Schwarz inequality and the norm equivalence relations table on p.~314 in \cite{hornjohnson}, which shows that the~$\ell_2$ operator norm of a square matrix is bounded from above by the (row/column) dimension times its maximum absolute entry, the second term on the right-hand side is bounded by
\begin{align*}
\enVert[1]{\bdelta_{I_\delta}}_{\ell_2}	\enVert[1]{(\widehat{\bSigma}_{\bZ}-\bSigma_{\bZ})_{I_\delta,I_\delta}\bdelta_{I_\delta}}_{\ell_2}
\leqslant
\enVert[1]{(\widehat{\bSigma}_{\bZ}-\bSigma_{\bZ})_{I_\delta,I_\delta}}_{\ell_2}
\leqslant
m\max_{(i,j)\in [pq]^2}\envert[1]{(\widehat{\bSigma}_{\bZ}-\bSigma_{\bZ})_{i,j}}.
\end{align*}
Therefore, combining the two previous displays with Lemma~\ref{lem:Covariance-Consistency}, 
\begin{align*}
\widehat{\phi}_{\max}(m)
&\leqslant
\phi_{\max}(m,\bSigma_{\bZ})+m\max_{(i,j)\in [pq]^2}\envert[1]{(\widehat{\bSigma}_{\bZ}-\bSigma_{\bZ})_{i,j}}\\
&\lesssim_{\P} 
\phi_{\max}(m,\bSigma_{\bZ})+m\sqrt{\frac{(\ln(pn))^{1/\tau+\widetilde{C}(2,\alpha)}}{n}}.
\end{align*}
By assumption there is a~$K<\infty$, such that for any fixed~$C\in\N$ and defining~$m_C:=Cs$, it eventually holds that~$\phi_{\max}(m_C, \bSigma_{\bZ})\leqslant \phi_{\max}(s\ln(n), \bSigma_{\bZ})\leqslant K$. Thus, by Assumption~\ref{assu:RowSparsity} and the previous display,
\begin{align}\label{eq:auxbound}
\P\del[1]{\widehat{\phi}_{\max}(m_C)\leqslant 2K}\to 1\qquad \text{for any fixed }C\in\N.
\end{align}
Hence, in combination with~\eqref{eq:overselaux1}, specifying the constant~$\check C:=17\lceil KL^2\rceil$, the implied~$m_{\check C}$ must satisfy
\begin{align*}
\P\del[2]{8s\cdot\widehat{\phi}_{\max}(m_{\check C})\sbr[3]{\frac{\widehat{c}_0\widehat{\mu}_{0}}{\widehat\kappa_{\widehat c_{0}\widehat{\mu}_{0}}}}^2
< 	
m_{\check C}}
\geqslant
\P\del[2]{8s\cdot\widehat{\phi}_{\max}(m_{\check C})\sbr[3]{\frac{\widehat{c}_0\widehat{\mu}_{0}}{\widehat\kappa_{\widehat c_{0}\widehat{\mu}_{0}}}}^2 \leqslant 16sKL^2}\to 1,
\end{align*}
which means that~$\P\del[0]{m_{\check C}\in\widehat{\mathcal{M}}}\to 1$, where~$\widehat{\mathcal{M}}$ is defined in~\eqref{eq:Mdef} of Lemma~\ref{lem:Lasso_overselection}. Thus, by Lemma~\ref{lem:Lasso_overselection} with $\lambda=\lambda_n^\ast$ (the lemma assumptions being satisfied with probability tending to one under the imposed Assumptions~\ref{assu:Innovations}--\ref{assu:RowSparsity}) as well as using~\eqref{eq:overselaux1} and \eqref{eq:auxbound} (with~$C=\check C$), it follows that
\[
\max_{i\in[p]}\widehat{m}_i\del[1]{\widehat{\bbeta}\del[1]{\lambda_n^\ast,\widehat{\bUpsilon}_i}}
\lesssim_{\P} 
s\cdot \min_{m\in\widehat{\mathcal M}}\widehat{\phi}_{\max}\del[0]{m}
\lesssim_{\P} s\cdot \widehat{\phi}_{\max}\del[0]{m_{\check C}}
\lesssim_{\P} s.
\]
\end{proof}

\begin{proof}[\sc{Proof of Theorem~\ref{thm:postLasso}}]
Fix~$K\in\N_0$. By Theorem~\ref{thm:Rates-for-LASSO-data-driven-loadings}, Lasso estimators $\widehat\bbeta_i(\lambda_n^\ast,\widehat\bUpsilon_i^{(K)})$, $i\in[p]$, arising from the data-driven penalization in Algorithm \ref{alg:Data-Driven-Penalization} satisfy the rates in (\ref{eq:lasso_rates_1})--(\ref{eq:lasso_rates_3}). By Lemma~\ref{lem:AsymptoticValidityDataDrivenPenaltyLoadings}, the loadings~$\{\widehat\bUpsilon_i^{(K)}
\}_{i\in[p]}$ are asymptotically valid, so that Corollary \ref{cor:Lasso_overselection} yields $\max_{i\in[p]}|\widehat{T}_i(\lambda_n^*,\widehat\bUpsilon_i^{(K)})\setminus T_i|\lesssim_\P s$. Thus, it follows from Theorem~\ref{thm:postgeneralestim} that post-Lasso estimators~$\widetilde\bbeta_i(\lambda_n^*,\widehat{\bUpsilon}_i^{(K)}),i\in[p]$, satisfy the rates in~\eqref{eq:lasso_rates_1}--\eqref{eq:lasso_rates_3}.
\end{proof}

\subsection{Using least squares to construct weights for the Lasso in every step}\label{sec:OLSforweights}
The following algorithm defines an alternative version of the post-Lasso. It differs from the one studied in Theorem~\ref{thm:postLasso} by constructing the loadings of the Lasso in every step by the residuals of the post-Lasso rather than the ones of the Lasso itself.
\begin{mdframed}
\begin{lyxalgorithm}
[\textbf{Data-Driven Penalization with Refitting}]\label{alg:Data-Driven-Penalization-with-Refitting}\ \\
\textbf{Initialize}: Specify the penalty level in (\ref{eq:LASSOVector}) as
\begin{equation*}
\lambda^{\ast}_n :=2c\sqrt{n}\Phi^{-1}\big(1-\gamma_n/(2p^{2}q)\big),
\end{equation*}
and specify the initial penalty loadings as 
\begin{equation*}
\check{\upsilon}_{i,j}^{\left(0\right)} := \sqrt{\frac{1}{n}\sum_{t=1}^{n}Y_{t,i}^{2}Z_{t-1,j}^{2}},\quad i\in[p],\quad j\in[pq].
\end{equation*}
Use $\lambda^{\ast}_n$ and $\check{\bUpsilon}_{i}^{\left(0\right)}:=\mathrm{diag}(\check{\upsilon}_{i,1}^{\left(0\right)},\dotsc,\check{\upsilon}_{i,pq}^{(0)})$
to compute an estimate $\check{\bbeta}_{i}^{(0)}:=\allowbreak\check{\bbeta}_{i}(\lambda^{\ast}_n,\check{\bUpsilon}_{i}^{(0)})$
via post-Lasso (\ref{eq:postestim}) for each $i\in[p]$. Store the residuals $\check{\varepsilon}_{t,i}^{\left(0\right)}:=Y_{t,i}-\bZ_{t-1}^{\top}\check{\bbeta}_{i}^{\left(0\right)},t\in[n],i\in[p]$, and set $k=1$.

\noindent\textbf{Update:}
While $k\leqslant K$, specify the penalty loadings as
\begin{equation*}
\check{\upsilon}_{i,j}^{\left(k\right)}:=\sqrt{\frac{1}{n}\sum_{t=1}^{n}\check{\varepsilon}_{t,i}^{\left(k-1\right)2}Z_{t-1,j}^{2}},\quad i\in[p],\quad j\in[pq].
\end{equation*}
Use $\lambda_n^{\ast}$ and $\check{\bUpsilon}_{i}^{\left(k\right)}:=\mathrm{diag}(\check{\upsilon}_{i,1}^{\left(k\right)},\dotsc,\check{\upsilon}_{i,pq}^{\left(k\right)})$
to compute an estimate $\check{\bbeta}_{i}^{\left(k\right)}:=\check{\bbeta}_{i}(\lambda_n^{\ast},\check{\bUpsilon}_{i}^{\left(k\right)})$
via post-Lasso (\ref{eq:postestim}) for each $i\in[p]$. Store the residuals $\check{\varepsilon}_{t,i}^{\left(k\right)}:=Y_{t,i}-\bZ_{t-1}^{\top}\check{\bbeta}_{i}^{\left(k\right)},t\in[n],i\in[p]$,
and increment $k\leftarrow k+1.$
\end{lyxalgorithm}
\end{mdframed}

\noindent Algorithm \ref{alg:Data-Driven-Penalization-with-Refitting} could include unpenalized intercepts via an initial demeaning of variables---see Section \ref{subsec:IncludingIntercepts} for the necessary adjustments.

\begin{thm}[\textbf{Convergence Rates for Post-Lasso based on Data-Driven Penalization with Refitting}]\label{thm:OLSweights}
Let Assumptions \ref{assu:Innovations}, \ref{assu:Companion}, \ref{assu:CovarianceZ} and \ref{assu:RowSparsity} hold and fix~$K\in\N_0$. Furthermore, assume that~$\limsup_{n\to\infty}\phi_{\max}(s\ln(n), \bSigma_{\bZ})<\infty$. Then post-Lasso estimators~$\check{\bbeta}_{i}(\lambda_n^\ast,\check\bUpsilon_i^{(K)})$, $i\in[p]$, arising from the data-driven penalization in Algorithm \ref{alg:Data-Driven-Penalization-with-Refitting} satisfy the rates in~\eqref{eq:lasso_rates_1}--\eqref{eq:lasso_rates_3}.
\end{thm}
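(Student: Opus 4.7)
The plan is to follow the template of the proof of Theorem \ref{thm:postLasso}, differing only in how we verify that the loadings $\{\check{\upsilon}_{i,j}^{(K)}\}_{(i,j)\in[p]\times[pq]}$ of Algorithm \ref{alg:Data-Driven-Penalization-with-Refitting} are asymptotically valid. Once validity is in hand, the chain Lemma \ref{lem:Rates-for-LASSO-any-asymptotically-valid-loadings} $\to$ Corollary \ref{cor:Lasso_overselection} $\to$ Theorem \ref{thm:postgeneralestim} applies verbatim to transfer the rates \eqref{eq:lasso_rates_1}--\eqref{eq:lasso_rates_3} from the final-stage Lasso $\widehat{\bbeta}_i(\lambda_n^\ast,\check{\bUpsilon}_i^{(K)})$ to its refit $\check{\bbeta}_i^{(K)}$, powered by the hypothesis $\limsup_{n\to\infty}\phi_{\max}(s\ln(n),\bSigma_{\bZ})<\infty$.

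Asymptotic validity will be established by induction on $k$. The base case $k=0$ is immediate since $\check{\upsilon}_{i,j}^{(0)}$ coincides with the initial loading $\widehat{\upsilon}_{i,j}^{(0)}$ in Algorithm \ref{alg:Data-Driven-Penalization}, so Step 1 of the proof of Lemma \ref{lem:AsymptoticValidityDataDrivenPenaltyLoadings} applies without change. For the induction step, assume that $\check{\bUpsilon}_i^{(k-1)}$ is asymptotically valid for some $k\in[K]$. Combining Lemma \ref{lem:Rates-for-LASSO-any-asymptotically-valid-loadings} (Lasso rates from asymptotically valid loadings), Corollary \ref{cor:Lasso_overselection} (over-selection control under the sparse-eigenvalue hypothesis), and Theorem \ref{thm:postgeneralestim} (transfer to the refit), the intermediate post-Lasso $\check{\bbeta}_i^{(k-1)}$ will satisfy $\max_{i\in[p]}\Vert\check{\bdelta}_i^{(k-1)}\Vert_{2,n}\lesssim_\mathrm{P} \sqrt{s\ln(pn)/n}$, where $\check{\bdelta}_i^{(k-1)}:=\check{\bbeta}_i^{(k-1)}-\bbeta_{0i}$. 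I would then plug this rate into the decomposition used in Step 2 of the proof of Lemma \ref{lem:AsymptoticValidityDataDrivenPenaltyLoadings}, replacing $\widehat{\bdelta}_i^{(0)}$ throughout by $\check{\bdelta}_i^{(k-1)}$. The two terms $\mathrm{I}_n$ and $\mathrm{II}_n$ there are controlled by Cauchy--Schwarz together with sub-Weibull tail control of $\max_{t,j}|Z_{t-1,j}|$; both pieces carry over unchanged, yielding $\max_{(i,j)}|(\check{\upsilon}_{i,j}^{(k)})^2-(\widetilde{\upsilon}_{i,j}^0)^2|\to_\mathrm{P} 0$. Together with the (unchanged) $\widetilde{\Delta}^0\to_\mathrm{P} 0$ and Lemma \ref{lem:Ideal-Penalty-Loading-Control}, this delivers asymptotic validity of $\check{\bUpsilon}_i^{(k)}$ with $\widehat{\ell}^{(k)}\to_\mathrm{P} 1$ and $\widehat{u}^{(k)}\to_\mathrm{P} 1$.

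The main (mild) obstacle is confirming that the Step 2 decomposition depends on the first-stage estimator \emph{only} through its $\Vert\cdot\Vert_{2,n}$ rate---not, for instance, through any Lasso-specific first-order optimality condition or $\ell_1$ subgradient identity. Direct inspection of Step 2 in the proof of Lemma \ref{lem:AsymptoticValidityDataDrivenPenaltyLoadings} confirms that this is the case, so the argument indeed transfers from $\widehat{\bdelta}_i^{(0)}$ to $\check{\bdelta}_i^{(k-1)}$. The hypothesis $\limsup_{n\to\infty}\phi_{\max}(s\ln(n),\bSigma_{\bZ})<\infty$ is needed throughout the induction, not only in the final step, because each intermediate invocation of Theorem \ref{thm:postgeneralestim} relies on the accompanying over-selection bound of Corollary \ref{cor:Lasso_overselection}. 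Since $K$ is fixed, the induction terminates after finitely many steps, and one last pass through the aforementioned chain delivers the stated rates for $\check{\bbeta}_i^{(K)}$.
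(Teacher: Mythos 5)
Your proof is correct and follows essentially the same strategy as the paper's: induction on the number of updates, with the base case handled by the observation that $\check{\bUpsilon}_i^{(0)}=\widehat{\bUpsilon}_i^{(0)}$, and the induction step running the chain Lemma~\ref{lem:Rates-for-LASSO-any-asymptotically-valid-loadings} $\to$ Corollary~\ref{cor:Lasso_overselection} $\to$ Theorem~\ref{thm:postgeneralestim} to carry the prediction-error rate into Step 2 of the proof of Lemma~\ref{lem:AsymptoticValidityDataDrivenPenaltyLoadings}. The only cosmetic difference is that you phrase the induction hypothesis in terms of asymptotic validity of $\check{\bUpsilon}_i^{(k-1)}$ while the paper states it in terms of the post-Lasso rates at step $K-1$; the two are linked by the same chain and the arguments are otherwise identical, including your explicit check that Step 2 of Lemma~\ref{lem:AsymptoticValidityDataDrivenPenaltyLoadings} depends on the first-stage estimator only through its $\Vert\cdot\Vert_{2,n}$ rate and your observation that the sparse-eigenvalue hypothesis is used at every induction step.
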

\begin{proof}		
\textbf{Step 1: Induction start.} First, let~$K=0$. For any pair~$(\lambda, \widehat{\bUpsilon}_i)$  as in~\eqref{eq:LASSOVector} and~\eqref{eq:Qihat} define
\begin{align*}
\mathcal{S}(\lambda, \widehat{\bUpsilon}_i):=\argmin_{\bbeta\in\R^{pq}}\left\{ \widehat Q_i(\bbeta) +\frac{\lambda}{n}\Vert\widehat{\bUpsilon}_{i}\bbeta\Vert_{\ell_{1}}\right\},\qquad i\in[p].
\end{align*}
Because~$\check{\bUpsilon}_i^{(0)}=\widehat{\bUpsilon}_i^{(0)}$, the  set of post-Lasso estimators in Algorithm~\ref{alg:Data-Driven-Penalization-with-Refitting} with~$K=0$ are identical to those studied in Theorem~\ref{thm:postLasso} with~$K=0$, i.e.
\begin{align*}
\bigcup_{\bbeta\in \mathcal{S}(\lambda_n^*, \check{\bUpsilon}_i^{(0)})}\argmin_{\substack{\bm{b}\in\R^{pq}:\\
\supp(\bm{b})\subseteq \supp(\bbeta)}} \widehat{Q}_{i}\del[0]{\bm{b}}
=
\bigcup_{\bbeta\in \mathcal{S}(\lambda_n^*, \widehat{\bUpsilon}_i^{(0)})}\argmin_{\substack{\bm{b}\in\R^{pq}:\\
\supp(\bm{b})\subseteq \supp(\bbeta)}} \widehat{Q}_{i}\del[0]{\bm{b}}.	
\end{align*}
Therefore,~$\check{\bbeta}_{i}(\lambda_n^\ast,\check\bUpsilon_i^{(0)}),i\in[p]$, satisfy the rates in~\eqref{eq:lasso_rates_1}--\eqref{eq:lasso_rates_3} by Theorem~\ref{thm:postLasso} with~$K=0$.

\textbf{Step 2: Induction step.} Next, let~$K\in\N$ and suppose that~$\check{\bbeta}_{i}(\lambda_n^\ast,\check\bUpsilon_i^{(K-1)}),i\in[p]$, satisfy the rates in~\eqref{eq:lasso_rates_1}--\eqref{eq:lasso_rates_3}. We first show that the loadings~$\check\bUpsilon_i^{(K)}=\mathrm{diag}(\check{\upsilon}_{i,1}^{(K)},\dots, \check{\upsilon}_{i,pq}^{(K)}),i\in[p]$, are asymptotically valid. To this end, inspection of Step 2 of the proof of Lemma~\ref{lem:AsymptoticValidityDataDrivenPenaltyLoadings} shows that it suffices to establish that
\begin{align}\label{eq:auzloadingsOLS}
\check\Delta^{(K)}:=\max_{(i,j)\in[p]\times[pq]}\big|(\check{\upsilon}_{i,j}^{(K)})^2-(\widetilde{\upsilon}_{i,j}^{0})^2\big|\overset{\P}{\to}0,
\end{align}
where, as in the proof of Step 2 of Lemma~\ref{lem:AsymptoticValidityDataDrivenPenaltyLoadings}, we denote
\[
\widetilde{\upsilon}_{i,j}^{0}:=\left(\frac{1}{n}\sum_{t=1}^{n}\varepsilon_{t,i}^{2}Z_{t-1,j}^{2}\right)^{1/2},\quad(i,j)\in[p]\times[pq].
\]	
From~\eqref{eq:DeltaTildeUpdPartIBnd} of the proof of Lemma~\ref{lem:AsymptoticValidityDataDrivenPenaltyLoadings}, it can be seen that the only property used of the specific estimators~$\widehat{\bbeta}_{i}^{(0)}:=\widehat{\bbeta}_{i}(\lambda_n^*,\widehat\bUpsilon_i^{(0)}),i\in[p]$, there in establishing the asymptotic validity of the resulting loadings is that they satisfy the prediction error rate in~\eqref{eq:lasso_rates_1}. Since any post-Lasso estimators~$\check{\bbeta}_{i}(\lambda_n^\ast,\check\bUpsilon_i^{(K-1)}),i\in[p]$, also satisfy the rate in~\eqref{eq:lasso_rates_1} by supposition, the resulting loadings~$\check{\bUpsilon}_i^{(K)}$ are asymptotically valid. Hence, any Lasso estimators~$\widehat{\bbeta}(\lambda_n^*,\check{\bUpsilon}_i^{(K)}),i\in[p]$, satisfy (\ref{eq:lasso_rates_1})--(\ref{eq:lasso_rates_3}) by Lemma~\ref{lem:Rates-for-LASSO-any-asymptotically-valid-loadings} and, thus,~$\max_{i\in[p]}\widehat{m}_i\del[1]{\widehat{\bbeta}_i(\lambda_n^*,\check{\bUpsilon}_i^{(K)})}\lesssim_{\mathrm{P}} s $ by Corollary~\ref{cor:Lasso_overselection}. It now follows that post-Lasso estimators~$\check{\bbeta}_{i}(\lambda_n^\ast,\check\bUpsilon_i^{(K)}),i\in[p]$, satisfy the rates in~\eqref{eq:lasso_rates_1}--\eqref{eq:lasso_rates_3} by Theorem~\ref{thm:postgeneralestim}.
\end{proof}

\section{Proof of Theorem~\ref{thm:Rates-for-sqrtLASSO-data-driven-loadings}}   
Lemma~\ref{lem:sqrtLasso} below extends \citet[Theorem 1]{belloni2011square} and its proof for sqrt-Lasso to allow for (data-driven) penalty loadings. This lemma is the sqrt-Lasso version of Lemma \ref{lem:MasterLemma} for Lasso. Define~$s^2_{\varepsilon_i}:=n^{-1}\sum_{t=1}^n\varepsilon_{t,i}^2=\widehat{Q}_i(\bbeta_{0i})$ for~$i\in[p]$. Unless stated otherwise, all quantities in Lemma~\ref{lem:sqrtLasso} and its proof are as defined around Lemma \ref{lem:MasterLemma}.      
\begin{lem}[\textbf{Master Lemma for Square-Root Lasso}]\label{lem:sqrtLasso}
With the constant
$c$ given in (\ref{eq:tuning_c_and_gamma}) and the score vectors
$\{\bS_{n,i}\}_{i\in[p]}$ given by (\ref{eq:def_score_ideal}), suppose that the penalty level satisfies $\lambda/n\geqslant 0.5c\max_{i\in[p]}\Vert \bS_{n,i}\Vert_{\ell_{\infty}}$,
and that for some random scalars $\widehat\ell=\widehat\ell_n$ and $\widehat u=\widehat u_n$, the penalty loadings $\{\widehat{\bUpsilon}_{i}\}_{i\in[p]}$ satisfy 
\begin{equation}\label{eq:pensqrtLasso}
\widehat{\ell}\widehat{\upsilon}_{i,j}^{0}\leqslant \widehat{\upsilon}_{i,j}s_{\varepsilon_i} \leqslant\widehat{u}\widehat{\upsilon}_{i,j}^{0}\quad\text{for all}\quad (i,j)\in[p]\times[pq]
\end{equation}
with $1/c<\widehat\ell\leqslant\widehat u$. Furthermore, suppose that
\begin{equation}\label{eq:sqrt_lasso_bound_qualifier}
\frac{\lambda\widehat u \widehat\upsilon^0_{\max}\sqrt{s}}{n\min_{i\in[p]}s_{\varepsilon_i}\widehat\kappa_{\widehat c_{0}\widehat{\mu}_{0}}}\leqslant\frac{1}{\sqrt{2}}.  
\end{equation}
Then the sqrt-Lasso estimates $\dot{\bbeta}_i:=\dot{\bbeta}_i(\lambda,\widehat\bUpsilon_i),i\in[p]$,
arising from $\lambda$ and $\{\widehat{\bUpsilon}_{i}\}_{i\in[p]}$ via (\ref{eq:srtLasso}) satisfy the error bounds
\begin{align*}
\max_{i\in[p]}\Vert\dot{\bbeta}_{i}-\bbeta_{0i}\Vert_{2,n} & \leqslant  4\del[2]{\widehat u+\frac{1}{c}}\frac{\lambda \widehat\upsilon^0_{\max}\sqrt{s}}{n\widehat\kappa_{\widehat c_{0}\widehat{\mu}_{0}}}\\
\max_{i\in[p]}\Vert\dot{\bbeta}_{i}-\bbeta_{0i}\Vert_{\ell_{1}}& \leqslant4\del[2]{\widehat u+\frac{1}{c}}\left(1+\widehat c_{0}\widehat{\mu}_{0}\right)\frac{\lambda \widehat\upsilon^0_{\max}s}{n\widehat\kappa_{\widehat c_{0}\widehat{\mu}_{0}}^2}\quad\text{and}\\
\max_{i\in[p]}\Vert\dot{\bbeta}_{i}-\bbeta_{0i}\Vert_{\ell_{2}} & \leqslant4\del[2]{\widehat u+\frac{1}{c}}\frac{\lambda \widehat\upsilon^0_{\max}\sqrt{s}}{n\widehat\kappa_{\widehat c_{0}\widehat{\mu}_{0}}^2},
\end{align*}
where $\widehat c_{0}:=(\widehat uc+1)/(\widehat\ell c-1)$, and $\widehat \kappa_{\widehat c_{0}\widehat{\mu}_{0}}$ is defined via (\ref{eq:RestrictedEigenvalue}) with $C=\widehat c_{0}\widehat{\mu}_{0}$.
\end{lem}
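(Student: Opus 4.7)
The strategy is to adapt the Lasso master-lemma proof (Lemma~\ref{lem:MasterLemma}) to the square-root loss $\widehat{L}_i(\bbeta) := \widehat{Q}_i(\bbeta)^{1/2}$. The two ingredients are: (i) a cone condition on $\widehat{\bdelta}_i := \dot{\bbeta}_i - \bbeta_{0i}$ placing it in the restricted set $\mathcal{R}_{\widehat{c}_0 \widehat{\mu}_0, T_i}$, and (ii) a prediction-error bound obtained by clearing the square-root in the optimality condition and solving the resulting quadratic inequality in $\|\widehat{\bdelta}_i\|_{2,n}$. The bounds will hold uniformly over $i \in [p]$ because the controlling quantities do not depend on $i$.

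\textbf{Cone condition.} The map $\bbeta \mapsto \widehat{L}_i(\bbeta)$ is convex (a norm of an affine function), with sub-gradient at $\bbeta_{0i}$ equal to $-\widehat{\bUpsilon}_i^{0}\bS_{n,i}/(2 s_{\varepsilon_i})$. Applying the sub-gradient inequality, using H\"older, and invoking the penalty hypothesis $\|\bS_{n,i}\|_{\ell_\infty} \leqslant 2\lambda/(cn)$ yields a lower bound on $\widehat{L}_i(\dot{\bbeta}_i) - s_{\varepsilon_i}$ of the form $-(\lambda/(cns_{\varepsilon_i}))\|\widehat{\bUpsilon}_i^{0}\widehat{\bdelta}_i\|_{\ell_1}$; note how the factor $2$ in the denominator $2 s_{\varepsilon_i}$ cancels the factor $2$ from the penalty hypothesis, recovering exactly the Lasso bookkeeping. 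Combining this with the optimality-based upper bound on $\widehat{L}_i(\dot{\bbeta}_i) - s_{\varepsilon_i}$ and converting the $\widehat{\bUpsilon}_i$-weighted norms into $\widehat{\bUpsilon}_i^{0}$-weighted ones through the loading calibration~\eqref{eq:pensqrtLasso} will yield the cone inequality $(\widehat{\ell}-1/c)\|\widehat{\bUpsilon}_i^{0}\widehat{\bdelta}_{iT_i^c}\|_{\ell_1} \leqslant (\widehat{u}+1/c)\|\widehat{\bUpsilon}_i^{0}\widehat{\bdelta}_{iT_i}\|_{\ell_1}$, identical to the Lasso case, whence $\widehat{\bdelta}_i \in \mathcal{R}_{\widehat{c}_0 \widehat{\mu}_0, T_i}$.

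\textbf{Prediction-error bound and other norms.} The factorization $\widehat{Q}_i(\dot{\bbeta}_i) - s_{\varepsilon_i}^{2} = (\widehat{L}_i(\dot{\bbeta}_i) - s_{\varepsilon_i})(\widehat{L}_i(\dot{\bbeta}_i) + s_{\varepsilon_i})$ combined with the optimality upper bound on the first factor (and a matching upper bound on the second factor obtained by reusing optimality, $\widehat{L}_i(\dot{\bbeta}_i) + s_{\varepsilon_i} \leqslant 2 s_{\varepsilon_i} + (\lambda\widehat{u}/(ns_{\varepsilon_i}))\|\widehat{\bUpsilon}_i^{0}\widehat{\bdelta}_{iT_i}\|_{\ell_1}$) controls $\widehat{Q}_i(\dot{\bbeta}_i) - s_{\varepsilon_i}^{2}$ from above by a quadratic in $\|\widehat{\bUpsilon}_i^{0}\widehat{\bdelta}_{iT_i}\|_{\ell_1}$. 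Equating this with the direct expansion $\widehat{Q}_i(\dot{\bbeta}_i) - s_{\varepsilon_i}^{2} = \|\widehat{\bdelta}_i\|_{2,n}^{2} - \bS_{n,i}^{\top}\widehat{\bUpsilon}_i^{0}\widehat{\bdelta}_i$, bounding the score term using the penalty hypothesis, and converting $\widehat{\bUpsilon}_i^{0}$-weighted $\ell_1$ norms to $\|\widehat{\bdelta}_i\|_{2,n}$ via the cone and the restricted eigenvalue $\widehat{\kappa}_{\widehat{c}_0\widehat{\mu}_0}$, I will arrive (after the Lasso-style cross-term cancellations) at a quadratic inequality roughly of the form $\|\widehat{\bdelta}_i\|_{2,n}^{2} \leqslant 2(\widehat{u}+1/c) A \|\widehat{\bdelta}_i\|_{2,n} + (\widehat{u}A/s_{\varepsilon_i})^{2} \|\widehat{\bdelta}_i\|_{2,n}^{2}$, where $A := \lambda \widehat{\upsilon}_{\max}^{0}\sqrt{s}/(n\widehat{\kappa}_{\widehat{c}_0\widehat{\mu}_0})$. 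The qualifier~\eqref{eq:sqrt_lasso_bound_qualifier} is precisely the assumption that $(\widehat{u}A/s_{\varepsilon_i})^{2} \leqslant 1/2$; absorbing this term on the left and dividing by $1/2$ then delivers the claimed prediction-error bound. The $\ell_2$ and $\ell_1$ conclusions follow from $\|\widehat{\bdelta}_i\|_{\ell_2} \leqslant \|\widehat{\bdelta}_i\|_{2,n}/\widehat{\kappa}_{\widehat{c}_0 \widehat{\mu}_0}$ (from the cone and the restricted eigenvalue) and Cauchy--Schwarz on the cone, respectively, exactly as in the Lasso proof.

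\textbf{Main obstacle.} The structural novelty relative to the Lasso proof is the self-referential feature introduced by the square-root loss: clearing the square-root by multiplying through by $\widehat{L}_i(\dot{\bbeta}_i) + s_{\varepsilon_i}$ yields a factor that itself depends on $\dot{\bbeta}_i$, producing a quadratic (rather than linear) inequality in $\|\widehat{\bdelta}_i\|_{2,n}$. Controlling this feedback loop is precisely the role of the scale-free qualifier~\eqref{eq:sqrt_lasso_bound_qualifier}; without it the quadratic solve would fail, and no analog condition is needed in the Lasso master lemma.
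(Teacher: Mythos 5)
Your proposal follows essentially the same route as the paper's proof: subgradient inequality for the square-root loss plus the penalty/loading hypotheses to establish the cone condition, the factorization $\widehat Q_i(\dot\bbeta_i)-\widehat Q_i(\bbeta_{0i})=(\widehat L_i(\dot\bbeta_i)+s_{\varepsilon_i})(\widehat L_i(\dot\bbeta_i)-s_{\varepsilon_i})$ combined with the direct quadratic expansion to obtain a self-referential inequality in $\Vert\dot\bdelta_i\Vert_{2,n}$, and absorption of the quadratic feedback term via the qualifier~\eqref{eq:sqrt_lasso_bound_qualifier}, after which the $\ell_1$ and $\ell_2$ bounds follow from the cone and restricted eigenvalue as in the Lasso case. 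The plan, the quadratic inequality $\Vert\dot\bdelta_i\Vert_{2,n}^2\leqslant 2(\widehat u+1/c)A\Vert\dot\bdelta_i\Vert_{2,n}+(\widehat uA/s_{\varepsilon_i})^2\Vert\dot\bdelta_i\Vert_{2,n}^2$, and the role you assign to each hypothesis all match the paper's argument, so the approach is sound.
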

\begin{proof}
We first fix~$i\in[p]$ and argue the error bounds for~$\dot\bbeta_i$. Inspection of the proof will reveal that the argument holds uniformly in~$i\in[p]$. Let~$\dot\bdelta_i:=\dot\bbeta_i-\bbeta_{0i}$. By definition of~$\dot{\bbeta}_i$ as a minimizer, the triangle inequality and \eqref{eq:pensqrtLasso},
\begin{align}
\del[1]{\widehat Q_i(\dot\bbeta_i)}^{1/2}-\del[1]{\widehat Q_i(\bbeta_{0i})}^{1/2}
&\leqslant
\frac{\lambda}{n}\Vert\widehat{\bUpsilon}_{i}\bbeta_{0i}\Vert_{\ell_{1}}-\frac{\lambda}{n}\Vert\widehat{\bUpsilon}_{i}\dot\bbeta_{i}\Vert_{\ell_{1}}\notag\\	
&\leqslant
\frac{\lambda}{n}\sbr[2]{\Vert(\widehat{\bUpsilon}_{i}\dot\bdelta_{i})_{T_i}\Vert_{\ell_{1}}-\Vert(\widehat{\bUpsilon}_{i}\dot\bdelta_{i})_{T_i^c}\Vert_{\ell_{1}}}\notag\\
&\leqslant
\frac{\lambda}{ns_{\varepsilon_i}}\sbr[2]{\widehat{u}\Vert(\widehat{\bUpsilon}^0_{i}\dot\bdelta_{i})_{T_i}\Vert_{\ell_{1}}-\widehat{\ell}\Vert(\widehat{\bUpsilon}^0_{i}\dot\bdelta_{i})_{T_i^c}\Vert_{\ell_{1}}}\label{eq:srtLassoaux}.	
\end{align}
In addition, since the negative objective gradient at the true parameters equals
\begin{align*}
-\nabla_{\bbeta}\del[1]{\widehat Q_i(\bbeta)}^{1/2}\Big|_{\bbeta=\bbeta_{0i}}
=
\frac{2n^{-1}\sum_{t=1}^n\bZ_{t-1}\varepsilon_{t,i}}{2\del[1]{\widehat Q_i(\bbeta_{0i})}^{1/2}}
=
\frac{n^{-1}\sum_{t=1}^n\bZ_{t-1}\varepsilon_{t,i}}{ \del[1]{n^{-1}\sum_{t=1}^n\varepsilon_{t,i}^2}^{1/2}}
=
\frac{0.5\cdot\widehat\bUpsilon^0_i\bS_{n,i}}{s_{\varepsilon_i}}
,	
\end{align*}
it follows from convexity of~$\bbeta\mapsto \del[1]{\widehat{Q}_i(\bbeta)}^{1/2}$, the subgradient inequality, and H{\"o}lder's inequality that	%
\begin{align*}
\del[1]{\widehat Q_i(\dot\bbeta_i)}^{1/2}-\del[1]{\widehat Q_i(\bbeta_{0i})}^{1/2}
\geqslant 
\sbr[2]{-\nabla_{\bbeta}\del[1]{\widehat Q_i(\bbeta)}^{1/2}\Big|_{\bbeta=\bbeta_{0i}}}^\top \dot\bdelta_i
\geqslant
-\frac{0.5\cdot\enVert[1]{\bS_{n,i}}_{\ell_\infty}\enVert[1]{\widehat{\bUpsilon}^0_{i}\dot\bdelta_i}_{\ell_1}}{s_{\varepsilon_i}}.
\end{align*}
Therefore, because~$\lambda/n\geqslant 0.5c\max_{i\in[p]}\Vert \bS_{n,i}\Vert_{\ell_{\infty}}$, we arrive at the lower bound 
\begin{align*}
\del[1]{\widehat Q_i(\dot\bbeta_i)}^{1/2}-\del[1]{\widehat Q_i(\bbeta_{0i})}^{1/2}
\geqslant 
-\frac{\lambda}{s_{\varepsilon_i} cn}\del[1]{\enVert[0]{(\widehat{\bUpsilon}^0_{i}\dot\bdelta_{i})_{T_i}}_{\ell_1}+\enVert[0]{(\widehat{\bUpsilon}^0_{i}\dot\bdelta_{i})_{T_i^c}}_{\ell_1}}.	
\end{align*}
Combining this inequality with the upper bound in~\eqref{eq:srtLassoaux} and~$1/c<\widehat\ell\leqslant\widehat u$, it follows that
\begin{align*}
\enVert[0]{(\widehat{\bUpsilon}^0_{i}\dot\bdelta_{i})_{T_i^c}}_{\ell_1}
\leqslant
\frac{\widehat{u}+ 1/c}{\widehat{\ell}-1/c}\enVert[0]{(\widehat{\bUpsilon}^0_{i}\dot\bdelta_{i})_{T_i}}_{\ell_1}.
\end{align*}
As a consequence,
\begin{align*}
\widehat\upsilon^0_{\min}\enVert[0]{\dot\bdelta_{iT_i^c}}_{\ell_1}
\leqslant
\enVert[0]{(\widehat{\bUpsilon}^0_{i}\dot\bdelta_{i})_{T_i^c}}_{\ell_1}
\leqslant
\frac{\widehat{u}+ 1/c}{\widehat{\ell}-1/c}\enVert[0]{(\widehat{\bUpsilon}^0_{i}\dot\bdelta_{i})_{T_i}}_{\ell_1}
\leqslant
\frac{\widehat{u}+ 1/c}{\widehat{\ell}-1/c}\widehat\upsilon^0_{\max}\enVert[0]{\dot\bdelta_{iT_i}}_{\ell_1},	
\end{align*}
and hence
\begin{align*}
\enVert[0]{\dot\bdelta_{iT_i^c}}_{\ell_1}
\leqslant 
\frac{\widehat{u}+ 1/c}{\widehat{\ell}-1/c}\cdot\frac{\widehat\upsilon^0_{\max}}{\widehat\upsilon^0_{\min}}\enVert[0]{\dot\bdelta_{iT_i}}_{\ell_1}
=
\widehat c_{0}\widehat{\mu}_{0}\enVert[0]{\dot\bdelta_{iT_i}}_{\ell_1}.
\end{align*}
Thus,~$\dot{\bdelta}_{i}$ lies in the restricted set $\mathcal{R}_{\widehat c_{0}\widehat{\mu}_{0},T_{i}}$ defined via (\ref{eq:RestrictedSet}) with $C=\widehat c_0\widehat\mu_0$ and $T=T_i$. In addition, the following relations hold:
\begin{align*}
\widehat{Q}_i(\dot\bbeta_i)-\widehat{Q}_i(\bbeta_{0i})
&=
\frac{1}{n}\sum_{t=1}(\bZ_{t-1}^\top\dot{\bdelta}_{i})^{2}-\frac{2}{n}\sum_{t=1}^{n}\varepsilon_{t,i}\bZ_{t-1}^\top\dot{\bdelta}_{i}
\geqslant
\Vert\dot{\bdelta}_{i}\Vert_{2,n}^{2}-\Vert \bS_{n,i}\Vert_{\ell_{\infty}}\Vert\widehat{\bUpsilon}_{i}^{0}\dot{\bdelta}_{i}\Vert_{\ell_{1}},\\
\widehat Q_i(\dot\bbeta_i)-\widehat Q_i(\bbeta_{0i})
&=
\sbr[2]{\del[1]{\widehat Q_i(\dot\bbeta_i)}^{1/2}+\del[1]{\widehat Q_i(\bbeta_{0i})}^{1/2}}\sbr[2]{\del[1]{\widehat Q_i(\dot\bbeta_i)}^{1/2}-\del[1]{\widehat Q_i(\bbeta_{0i})}^{1/2}}. 
\end{align*}
When combined with~\eqref{eq:srtLassoaux}, the previous display yields
\begin{align}
\Vert\dot{\bdelta}_{i}\Vert_{2,n}^{2}
&\leqslant	
\Vert \bS_{n,i}\Vert_{\ell_{\infty}}\Vert\widehat{\bUpsilon}_{i}^{0}\dot{\bdelta}_{i}\Vert_{\ell_{1}}\notag\\
&\qquad+\sbr[2]{\del[1]{\widehat Q_i(\dot\bbeta_i)}^{1/2}+\del[1]{\widehat Q_i(\bbeta_{0i})}^{1/2}}\frac{\lambda}{ns_{\varepsilon_i}}\sbr[2]{\widehat{u}\Vert(\widehat{\bUpsilon}^0_{i}\dot\bdelta_{i})_{T_i}\Vert_{\ell_{1}}-\widehat{\ell}\Vert(\widehat{\bUpsilon}^0_{i}\dot\bdelta_{i})_{T_i^c}\Vert_{\ell_{1}}}.\label{eq:sqrt_lasso_intermediate_bnd_1}
\end{align}
Furthermore, using that~$\dot\bdelta_i\in \mathcal{R}_{\widehat c_{0}\widehat{\mu}_{0},T_{i}}$,
\begin{align}
\enVert[0]{\del[0]{\widehat{\bUpsilon}^0_{i}\dot\bdelta_{i}}_{T_i}}_{\ell_{1}}
\leqslant \widehat\upsilon^0_{\max}\enVert[0]{\dot\bdelta_{i,T_i}}_{\ell_{1}}
\leqslant \widehat\upsilon^0_{\max}\sqrt{s}\enVert[0]{\dot\bdelta_{i,T_i}}_{\ell_{2}}
\leqslant\frac{\widehat\upsilon^0_{\max}\sqrt{s}}{\widehat\kappa_{\widehat c_{0}\widehat{\mu}_{0}}}\enVert[0]{\dot\bdelta_{i}}_{2,n},\label{eq:sqrt_lasso_restricted_set_consequence}
\end{align}
where we use the Cauchy-Schwarz inequality and the definition of $\widehat\kappa_{\widehat c_{0}\widehat{\mu}_{0}}$, cf.~\eqref{eq:RestrictedEigenvalue}. Inequalities \eqref{eq:srtLassoaux} and \eqref{eq:sqrt_lasso_restricted_set_consequence} combine to show
\begin{align*}
\del[1]{\widehat Q_i(\dot\bbeta_i)}^{1/2}
\leqslant
\del[1]{\widehat Q_i(\bbeta_{0i})}^{1/2}+\frac{\lambda\widehat{u}}{ns_{\varepsilon_i}}\Vert(\widehat{\bUpsilon}^0_{i}\dot\bdelta_{i})_{T_i}\Vert_{\ell_{1}}
\leqslant
\del[1]{\widehat Q_i(\bbeta_{0i})}^{1/2}+\frac{\lambda\widehat{u}\widehat\upsilon^0_{\max}\sqrt{s}}{ns_{\varepsilon_i}\widehat\kappa_{\widehat c_{0}\widehat{\mu}_{0}}}\Vert\dot\bdelta_{i}\Vert_{2,n},
\end{align*}
Inserting this last upper bound into \eqref{eq:sqrt_lasso_intermediate_bnd_1} and once more using~$\lambda/n\geqslant 0.5c\max_{i\in[p]}\Vert \bS_{n,i}\Vert_{\ell_{\infty}}$, we arrive at
\begin{align*}
\Vert\dot{\bdelta}_{i}\Vert_{2,n}^{2}	
&\leqslant	
\frac{2\lambda}{cn}\Vert\widehat{\bUpsilon}_{i}^{0}\dot{\bdelta}_{i}\Vert_{\ell_{1}}+\del[3]{2s_{\varepsilon_i}+\frac{\lambda\widehat{u}\widehat\upsilon^0_{\max}\sqrt{s}}{ns_{\varepsilon_i}\widehat\kappa_{\widehat c_{0}\widehat{\mu}_{0}}}\Vert\dot\bdelta_{i}\Vert_{2,n}}\frac{\lambda}{ns_{\varepsilon_i}}\sbr[3]{\frac{\widehat{u} \widehat\upsilon^0_{\max}\sqrt{s}}{\widehat\kappa_{\widehat c_{0}\widehat{\mu}_{0}}}\Vert\dot\bdelta_{i}\Vert_{2,n}-\widehat{\ell}\Vert(\widehat{\bUpsilon}^0_{i}\dot\bdelta_{i})_{T_i^c}\Vert_{\ell_{1}}}\\
&\leqslant
\frac{2\lambda}{cn}\Vert\widehat{\bUpsilon}_{i}^{0}\dot{\bdelta}_{i}\Vert_{\ell_{1}}-\frac{2\lambda}{n}\widehat{\ell}\Vert(\widehat{\bUpsilon}^0_{i}\dot\bdelta_{i})_{T_i^c}\Vert_{\ell_{1}}+\frac{2\lambda\widehat{u}\widehat\upsilon^0_{\max}\sqrt{s}}{n\widehat\kappa_{\widehat c_{0}\widehat{\mu}_{0}}}\Vert\dot\bdelta_{i}\Vert_{2,n}+\del[3]{\frac{\lambda\widehat{u}\widehat\upsilon^0_{\max}\sqrt{s}}{ns_{\varepsilon_i}\widehat\kappa_{\widehat c_{0}\widehat{\mu}_{0}}}\Vert\dot\bdelta_{i}\Vert_{2,n}}^2.
\end{align*}
Since~$1/c<\widehat\ell$ and~$\dot\bdelta_i\in \mathcal{R}_{\widehat c_{0}\widehat{\mu}_{0},T_{i}}$, the inequality \eqref{eq:sqrt_lasso_restricted_set_consequence} produces
\begin{align*}
\frac{\lambda}{cn}\Vert\widehat{\bUpsilon}_{i}^{0}\dot{\bdelta}_{i}\Vert_{\ell_{1}}-\frac{\lambda}{n}\widehat{\ell}\Vert(\widehat{\bUpsilon}^0_{i}\dot\bdelta_{i})_{T_i^c}\Vert_{\ell_{1}}
\leqslant
\frac{\lambda}{cn}\Vert(\widehat{\bUpsilon}_{i}^{0}\dot{\bdelta}_{i})_{T_i}\Vert_{\ell_{1}}
\leqslant
\frac{\lambda \widehat\upsilon^0_{\max} \sqrt{s}}{cn\widehat\kappa_{\widehat c_{0}\widehat{\mu}_{0}}}\enVert[0]{\dot\bdelta_{i}}_{2,n},
\end{align*}
so the previous two displays combine to yield
\begin{align*}
\Vert\dot{\bdelta}_{i}\Vert_{2,n}^{2}	
\leqslant		
2\left(\widehat{u}+\frac{1}{c}\right)\frac{\lambda\widehat\upsilon^0_{\max}\sqrt{s}}{n\widehat\kappa_{\widehat c_{0}\widehat{\mu}_{0}}}\Vert\dot\bdelta_{i}\Vert_{2,n}+\del[3]{\frac{\lambda\widehat{u}\widehat\upsilon^0_{\max}\sqrt{s}}{ns_{\varepsilon_i}\widehat\kappa_{\widehat c_{0}\widehat{\mu}_{0}}}\Vert\dot\bdelta_{i}\Vert_{2,n}}^2.
\end{align*}
From~$\frac{\lambda\widehat{u} \widehat\upsilon^0_{\max}\sqrt{s}}{ns_{\varepsilon_i}\widehat\kappa_{\widehat c_{0}\widehat{\mu}_{0}}}\leqslant\frac{1}{\sqrt{2}}$, the previous display allows us to deduce that
\begin{align*}
\frac{1}{2}\Vert\dot{\bdelta}_{i}\Vert_{2,n}^{2}
\leqslant 
 2\left(\widehat{u}+\frac{1}{c}\right)\frac{\lambda\widehat\upsilon^0_{\max}\sqrt{s}}{n\widehat\kappa_{\widehat c_{0}\widehat{\mu}_{0}}}\Vert\dot\bdelta_{i}\Vert_{2,n},
\end{align*}
which implies the claimed bound on~$\Vert\dot{\bdelta}_{i}\Vert_{2,n}$. Again using that~$\dot\bdelta_i\in \mathcal{R}_{\widehat c_{0}\widehat{\mu}_{0},T_{i}}$, the $\ell_{2}$-error
bound follows from~$\Vert\dot{\bdelta}_{i}\Vert_{\ell_{2}}\leqslant\Vert\dot{\bdelta}_{i}\Vert_{2,n}/\widehat\kappa_{\widehat c_{0}\widehat{\mu}_{0}}$ and the prediction error bound, and the $\ell_{1}$-bound subsequently follows from $\Vert\dot{\bdelta}_{i}\Vert_{\ell_{1}}\leqslant\left(1+\widehat c_{0}\widehat{\mu}_{0}\right)\Vert\dot{\bdelta}_{iT_{i}}\Vert_{\ell_{1}}\leqslant\left(1+\widehat c_{0}\widehat{\mu}_{0}\right)\sqrt{s}\Vert\dot{\bdelta}_{i}\Vert_{\ell_{2}}$ and the $\ell_2$-bound.
\end{proof}

\begin{proof}[\sc{Proof of Theorem~\ref{thm:Rates-for-sqrtLASSO-data-driven-loadings}}]
We verify that the assumptions of Lemma~\ref{lem:sqrtLasso} are satisfied with probability tending to one. By Lemma \ref{lem:Deviation-Bound}~$\P(0.5\lambda^\ast_n/n\geqslant 0.5c\max_{i\in[p]}\Vert \bS_{n,i}\Vert _{\ell_{\infty}})\to1$, and by Lemma \ref{lem:Ideal-Penalty-Loading-Control} there are constants $(\underline{\upsilon}^{0},\overline{\upsilon}^{0})\in(0,\infty)^2$, such that~$\P(\underline{\upsilon}^{0}\leqslant\widehat{\upsilon}_{\min}^{0}\leqslant\widehat{\upsilon}_{\max}^{0}\leqslant\overline{\upsilon}^{0})\to1$ and, thus,~$\P(\widehat{\mu}_{0}\leqslant\overline\upsilon^0/\underline\upsilon^0)\to1$. 

Recall that $\overline q=0$ implies that the innovations $\{\bepsilon_t\}_{t\in\Z}$ are i.i.d. Letting~$\sigma_{\varepsilon_i}^2:=\E[\varepsilon_{0,i}^2]$, the independence of~$\bepsilon_t$ and~$\bZ_{t-1}$ implies $\upsilon_{i,j}^{0}:=(\E[\varepsilon_{0,i}^{2}Z_{-1,j}^{2}])^{1/2}=\sigma_{\varepsilon_i}(\E[Z_{-1,j}^2])^{1/2}$ for $i\in[p]$ and $j\in[pq]$.  We now show that 
\begin{align}\label{eq:sqrtweigthslimit}
\max_{(i,j)\in[p]\times[pq]}\envert[1]{(s_{\varepsilon_i}\dot\upsilon_j)^2-(\upsilon_{i,j}^{0})^2}\overset{\P}{\to}0.
\end{align}
To this end, Lemma \ref{lem:Covariance-Consistency} yields $\max_{j\in[pq]}\envert[0]{\dot\upsilon_j^2-\E[Z_{-1,j}^2]}\to_\P0$,
and $\max_{i\in[p]}\envert[0]{s_{\varepsilon_i}^2-\sigma_{\varepsilon_i}^2}\to_\P0$
follows from arguments parallel to those used in the proof of Lemma~\ref{lem:Covariance-Consistency}. (Replace~$\cbr[0]{\bZ_{t}}_{t\in\Z}$ by~$\cbr[0]{\bepsilon_{t}}_{t\in\Z}$ and use that the latter are i.i.d.) Therefore, since~$\max_{j\in[pq]}\E[Z_{-1,j}^2]$ is bounded by Part \ref{enu:ZiUnifMomentBnd} of Lemma \ref{lem:ZandEpsZNormBnds} and~$\max_{i\in[p]}\sigma_{\varepsilon_i}^2$ is bounded by Assumption \ref{assu:Innovations}.\ref{enu:EpsSubWeibullBeta}, \eqref{eq:sqrtweigthslimit} now follows from a calculation, since~$(\upsilon_{i,j}^{0})^2=\sigma_{\varepsilon_i}^2\E[Z_{-1,j}^2]$. Furthermore, the proof of Lemma~\ref{lem:Ideal-Penalty-Loading-Control} (specifically, the conclusion of Lemma \ref{lem:DeltaIdealVanishes}) establishes
\begin{align*}
\max_{(i,j)\in[p]\times[pq]}\envert[1]{(\widehat{\upsilon}_{i,j}^{0})^2-(\upsilon_{i,j}^{0})^2}\overset{\P}{\to}0.
\end{align*}
The previous display and~$\min_{(i,j)\in[p]\times[pq]} (\upsilon_{i,j}^{0})^2\geqslant a_2d>0$ imply that~$\max_{(i,j)\in[p]\times[pq]}(1/\widehat{\upsilon}_{i,j}^{0})\lesssim_\P1$.
Combined with~\eqref{eq:sqrtweigthslimit}, these observations and a calculation combine to show
\begin{align*}
\max_{(i,j)\in[p]\times[pq]}\envert[3]{\frac{s_{\varepsilon_i}\dot\upsilon_j}{\widehat{\upsilon}_{i,j}^{0}}-1}\overset{\P}{\to}0,
\end{align*}
such that for the purpose of satisfying \eqref{eq:pensqrtLasso} one may choose~$\widehat\ell=1/c+0.5(1-1/c)\in(1/c,1)$ and~$\widehat u=2$, which satify $1/c<\widehat\ell\leqslant\widehat u$ by construction. Careful inspection of the proof of Lemma~\ref{lem:Restricted-Eigenvalue-Bound} shows that the only properties of~$\widehat\ell$ and~$\widehat u$ used there are~$\P(\widehat\ell>1/c)\to1$ and~$\widehat u\lesssim_\P1$, which suffice to show that $\widehat{c}_0=(\widehat{u}c+1)/(\widehat{\ell}c-1)\lesssim_\P1$. The argument presented there therefore shows~$\P(\widehat\kappa^2_{\widehat c_0\widehat\mu_0}\geqslant d/2)\to1$ for the~$\widehat\ell$ and~$\widehat u$ (and the implied~$\widehat{c}_0$) given here.

Next, using~$\widehat u \widehat\upsilon^0_{\max}\lesssim_\P1$, $\max_{i\in[p]}\envert[0]{s_{\varepsilon_i}^2-\sigma_{\varepsilon_i}^2}\to_\P 0$,~$\min_{i\in[p]}\sigma_{\varepsilon_i}^2\geqslant a_2$, $\lambda_n^\ast\lesssim \sqrt{n\ln(pn)}$ [which follows from $\Phi^{-1}(1-z)\leqslant\sqrt{2\ln(1/z)},z\in(0,1)$], and Assumption \ref{assu:RowSparsity}, we see that the qualifier in \eqref{eq:sqrt_lasso_bound_qualifier} holds with probability approaching one.
Using again~$\lambda_n^\ast\lesssim \sqrt{n\ln(pn)}$, the rates in~(\ref{eq:lasso_rates_1})--(\ref{eq:lasso_rates_3}) for sqrt-Lasso then follow from Lemma~\ref{lem:sqrtLasso} and uniformity in $i\in[p]$.
\end{proof}

\section{Further Simulation Output}\label{sec:appsim}
\begin{figure}[H]\caption{Raw average estimation errors for the simulation designs in Section \ref{sec:Simulations}}\label{fig:SimulationsAbsoluteErrors}
\centering{}
\includegraphics[height=.9\textheight]{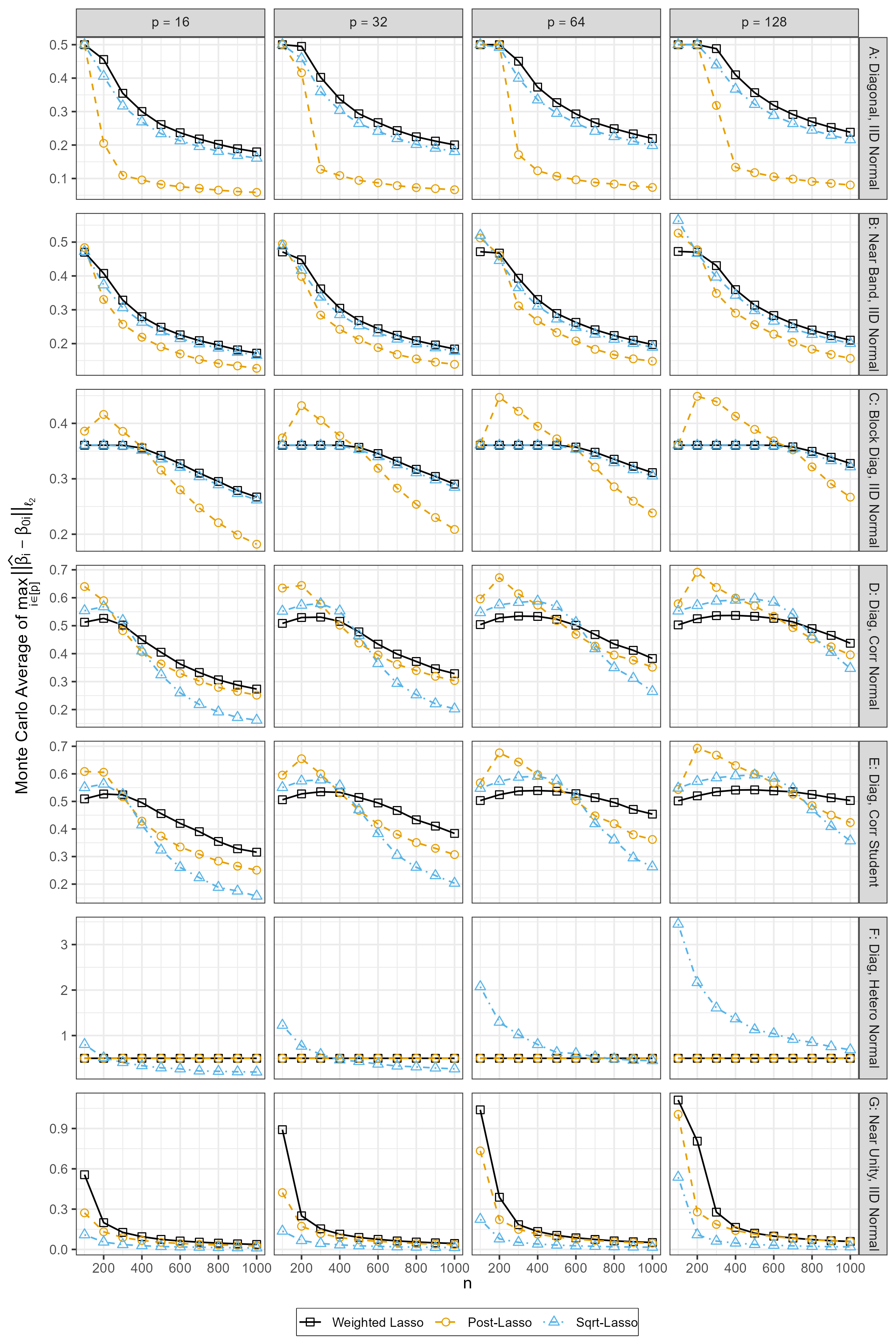}
\end{figure}

\begin{figure}\caption{Raw average estimation errors for simulation Design \protect\hyperlink{Design A}{A} in Section~\ref{sec:Simulations} and their dependence on~$c$. The same pattern is valid for the remaining simulation designs.}\label{fig:cchoice}
    \centering{}
    \includegraphics[width=0.9\linewidth]{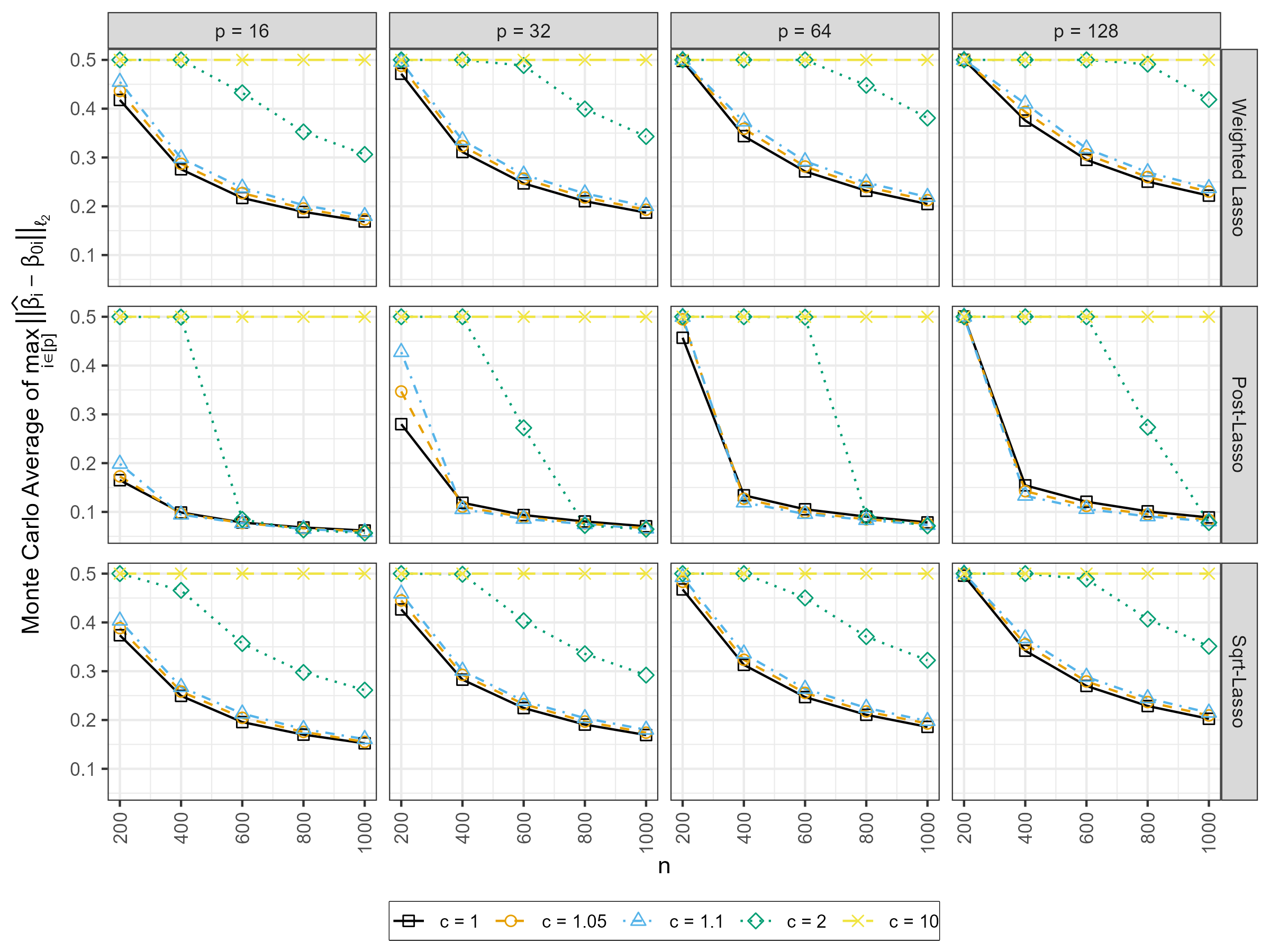}
\end{figure}   

\end{document}